\documentclass[draftclsnofoot,onecolumn]{IEEEtran}
\def\3To1BC{$3-$to$-1$}

\def\define{:{=}~}
\def\naturals{\mathbb{N}}
\def\reals{\mathbb{R}}

\def\UM{\mathscr{U}\!\mathcal{M}-}
\def\PCC{\mathscr{P}\mathcal{C}\mathcal{C}-}

\def\underlinem{\underline{m}}
\def\underlineM{\underline{M}}
\def\underlineY{\underline{Y}}
\def\underliney{\underline{y}}
\def\underlined{\underline{d}}
\def\underlineR{\underline{R}}
\def\fieldpi{\mathcal{F}_{\pi}}
\def\bin{i}
\def\Var{\mbox{Var}}
\def\underlinem{\underline{m}}
\def\tildeQRV{\tilde{\TimeSharingRV}}
\def\tildeQRVSet{\tilde{\TimeSharingRVSet}}
\def\underlineSetY{\underline{\OutputAlphabet}}
\def\underlineCardinalityMessageSet{\underline{\mathcal{M}}}
\def\MessageSetM{\mathcal{M}}
\def\underlineMessageSetM{\underline{\MessageSetM}}
\def\hatm{\hat{m}}
\def\cl{\mbox{cl}}
\def\cocl{\mbox{cocl}}

\def\SetOfDistributions{\mathbb{D}}
\def\AuxiliaryAlphabet{\mathcal{U}}
\def\TimeSharingRVSet{\mathcal{Q}}
\def\TimeSharingRV{Q}
\def\tildeq{\tilde{q}}
\def\PublicRVSet{\mathcal{W}}
\def\PrivateRVSet{\mathcal{V}}
\def\SemiPrivateRVSet{\mathcal{U}}
\def\underlineSemiPrivateRV{\underline{\SemiPrivateRV}}
\def\underlinePrivateRV{\underline{\PrivateRV}}
\def\underlineSemiPrivateRVSet{\underline{\SemiPrivateRVSet}}
\def\underlinePrivateRVSet{\underline{\PrivateRVSet}}
\def\InputRV{X}
\def\PublicRV{W}
\def\PrivateRV{V}
\def\SemiPrivateRV{U}
\def\underlinePrivateRV{\underline{\PrivateRV}}
\def\Expectation{\mathbb{E}}
\def\fieldq{\mathcal{F}_{q}}

\def\GPV{\tilde{V}}
\def\GPX{\tilde{X}}
\def\GPS{\tilde{S}}
\def\GPY{\tilde{Y}}
\def\GPSetV{\tilde{\mathcal{V}}}

\def\ulineInputRV{\underline{X}}
\def\ulineOutputRV{\underline{Y}}
\newcommand{\msout}[1]{\text{\sout{\ensuremath{#1}}}}
\def\fieldpii{\mathcal{F}_{\pi_{i}}}
\def\fieldpik{\mathcal{F}_{\pi_{k}}}
\def\fieldpij{\mathcal{F}_{\pi_{j}}}
\newif\ifProofForORDBC

\usepackage{amssymb}
\usepackage{amsmath}
\usepackage{mathrsfs}
\usepackage{ulem}
\usepackage{epsf,epsfig}
\usepackage{cite}

\newcommand{\BinaryField}{\mathbb{F}_{2}}
\newcommand{\InputAlphabet}{\mathcal{X}}
\newcommand{\OutputAlphabet}{\mathcal{Y}}
\newcommand{\StateAlphabet}{\mathcal{S}}

\newcommand{\comment}[1]{}
\begin{document}
\sloppy
\newtheorem{remark}{\it Remark}
\newtheorem{thm}{Theorem}
\newtheorem{corollary}{Corollary}
\newtheorem{definition}{Definition}
\newtheorem{lemma}{Lemma}
\newtheorem{example}{Example}
\newtheorem{prop}{Proposition}

\title{Achievable rate region for three user discrete broadcast channel based on coset codes}

\author {Arun~Padakandla and S. Sandeep~Pradhan,
~\IEEEmembership{Member,~IEEE}%
\thanks{Arun Padakandla and S. Sandeep Pradhan are with the Department of Electrical
and Computer Engineering, University of Michigan, Ann Arbor
48109-2122, USA. This work was supported by NSF grant CCF-1116021.}}
\maketitle
\vspace{-0.5in}
\begin{abstract}
We consider the problem of developing coding techniques and deriving
achievable rate regions for discrete memoryless broadcast channels
with $3$ receivers ($3-$DBC). We begin by identifying a novel vector
additive $3-$DBC for which we characterize an upper bound on the
the largest  achievable rate region based on unstructured codes, henceforth referred to as
$\UM$region. We propose a coding technique based on coset codes that
yield an achievable rate triple not contained within $\UM$region. We generalize the proposed
coding technique using a new ensemble of codes - \textit{partitioned
  coset codes} (PCC) - containing both empirical and algebraic
properties, and evaluate it's performance to derive an achievable rate region for the general
$3-$DBC. The new elements in this derivation are binning and joint typicality encoding and decoding of statistically correlated PCCs. We validate the
utility of this technique by identifying \textit{non-additive} instances of
$3-$DBC for which the proposed coding techniques based on PCC yield
strictly larger rates.

\end{abstract}
\section{Introduction}
\label{Sec:Introduction}
The problem of characterizing the capacity region of a general
broadcast channel
(BC) was proposed by Cover \cite{197201TIT_Cov} in 1972, and he
introduced a novel coding technique to derive achievable rate regions
for particular degraded BCs. In a seminal work aimed at deriving an
achievable rate region for the general degraded BC, 
Bergmans \cite{197303TIT_Ber} generalized Cover's technique into what
is currently referred to as superposition coding. Gallager
\cite{197403PPI_Gal} and Bergmans \cite{197403TIT_Ber} concurrently
and independently proved optimality of 
superposition coding for the class of degraded BCs. This in particular
yielded capacity region for 
the scalar additive Gaussian BC. However, the case of general discrete
BC (DBC) remained open. This  led to the discovery of another
ingenious coding technique by Gelfand
\cite{197707PIT_Gel}. In 1979, Marton \cite{197905TIT_Mar} generalized
Gelfand's technique \cite{197707PIT_Gel} into what is currently
referred to as \textit{binning}. In conjunction with superposition,
she derived the largest known achievable rate region
\cite{197905TIT_Mar} for the general two user DBC ($2-$DBC). A
generalization \cite[p.391 Problem 10(c)]{CK-IT2011} of superposition
and binning to incorporate a common message yields \textit{Marton's
  rate region}, the current known largest achievable rate region for
the general $2-$DBC and it’s capacity is yet unknown.\footnote{It is
  of interest to note that though superposition and binning were known
  in particular settings \cite{197201TIT_Cov}, \cite{197707PIT_Gel},
  it’s generalization led to fundamentally new ideas. 
}

Though the capacity region has been found for many interesting classes
of BCs  \cite{197201TIT_Cov,197303TIT_Ber,197403PPI_Gal,197403TIT_Ber,197707PIT_Gel,197905TIT_Mar,CK-IT2011,197507TIT_Cov,
  197511TIT_AhlKor,197701TIT_KorMar,197903TIT_Elg,  
1977MMISIT_Mar,197804PPI_Pin,1980MMPPI_GelPin,198001TIT_Elg,198101TIT_GamMeu,197503TIT_Meu,197901TIT_HajPur,201006ISIT_GenNaiShaWan,200910TIT_NaiElg},
the question of whether the techniques of superposition and binning,
in conjunction, is optimal for the general DBC has remained open. 
Gohari and Anantharam
\cite{201202TIT_GohAna} have proved computability of Marton's rate
region. This enabled them identify a class of binary $2-$DBCs for
which Marton's rate region when computed is strictly smaller than the
tightest known outer bound \cite{197805TIT_Sat,200701TIT_NaiGam},
which is due to Nair 
and El Gamal. On the other hand, Weingarten, Steinberg and Shamai
\cite{200609TIT_WeiSteSha} have proved Marton's binning (also referred
to, in the Gaussian setting, as Costa's dirty paper coding
\cite{198305TIT_Cos}) to be optimal for Gaussian MIMO BC with
quadratic cost constraints and  arbitrary number of receivers, and thereby
characterized the capacity region. $3-$DBC with degraded message sets
has been studied in \cite{200910TIT_NaiElg}.

In this article, we begin by characterizing an achievable rate
region, referred to as $\UM$region, for the
  general $3-$DBC incorporating all current known coding techniques,
  i.e., message-splitting, superposition and binning of
  unstructured codes.  We identify a novel additive $3-$DBC (example
\ref{Ex:3-BCExample}) for which we propose a
  technique based on linear codes that yields an achievable rate triple not contained within
$\UM$region. We remark that even within the larger class of BCs that include
continuous valued alphabets, any number of receivers and multiple
antennae, we have, thus far, been unaware of any BC for which the $\UM$region can be strictly improved upon. 
One of the key elements of our work is an analytical proof of
sub-optimality of $\UM$region for this $3-$DBC. 

Motivated by the above findings,  we propose a general coding technique
based on a new ensemble of codes endowed with algebraic
structure- \textit{partitioned coset codes} \cite{201108ISIT_PadPra}
(PCC). We analyze the proposed coding technique and derive 
an achievable rate region\footnote{In general this region neither
  subsumes nor is subsumed by the $\UM$region.}- referred to as
$\PCC$region- for the general $3-$DBC expressed
in terms 
of single-letter information quantities. This region is a continuous function of the channel transition probability matrix. One of the key elements of
this analysis is an interplay of joint typical encoding and 
decoding of statistically correlated algebraic codebooks resulting in new proof
techniques. We identify a non-additive $3-$DBC  (example \ref{Ex:A3To1-OR-BC}) for which we
analytically prove the existence of rate triples that belong to  $\PCC$region
but lie outside the $\UM$region.  Finally, we indicate a way to combine the two coding
techniques that enables one to derive an achievable rate region that
includes the $\UM$region. 

Why do codes endowed with algebraic structure outperform traditional
independent unstructured codes for a BC? The central aspect of a
coding technique designed for a BC is interference
management. Marton's coding incorporates two techniques -
superposition and binning - for tackling interference. Superposition
enables each user decode a \textit{univariate} component of the other
user's signal and thus subtract it off. Binning enables the encoder
counter the component of each user's interfering signal not decoded by
the other, by precoding for the same. Except for particular cases, the
most popular being dirty paper coding, precoding results in a rate
loss, and is therefore less efficient than decoding the interfering
signal at the decoder. The presence of a rate loss motivates each
decoder to decode as large a part of interference as
possible.\footnote{For the Gaussian case, there is no rate loss. Thus
  the encoder can precode all the interference. Indeed, the optimal
  strategy does not require any user to decode a part of signal not
  intended for  it. Thus constraining interference patterns is
  superfluous. This explains why lattices are not necessary to achieve capacity
of Gaussian vector BC.} However decoding a large part of the
interference constrains the individual rates. In a three user BC, each
user's reception is plagued by interference caused by signals intended
for the other two users. The interference is in general a bivariate
function of signals intended for the other users. If the signals of
the two users are endowed with a structure that can help compress the
range of this bivariate function when applied to all possible
signals, then the receivers can decode a larger part of the
interfering signal. This minimizes the component of the interference
precoded, and therefore the rate loss.
This is where codebooks endowed with algebraic
structure outperform unstructured independent codebooks. Indeed,
linear codes constrain the interference pattern to an affine subspace
if the interference is the sum of user 2 and 3's signals. 

As evidenced by the non-additive example (example
\ref{Ex:A3To1-OR-BC}), linear codes provide gain even when the
bivariate function is not a field addition. Furthermore,
we have considered a natural
generalization of linear codes to sets with looser algebraic structure such as groups. Our
investigation of group codes  to improve achievable rate
regions for information theoretic problems has been pursued in concurrent research
threads \cite{201108ISIT_SahPra}. Containing the sum
of transmitted codewords using linear codes is just the first step,
and we envision an achievable rate region involving a union over all
relevant algebraic objects. 

Related Works: The use of structured codes for improving information
theoretic rate regions began with the ingenious technique of K\"orner
and Marton \cite{197903TIT_KorMar}, proposed for the source coding
problem of reconstructing modulo$-2$ sum of distributed binary
sources.  Ahlswede and Han \cite[Section VI]{198305TIT_AhlHan}
proposed a universal coding technique that brings together coding
techniques based on unstructured and structured codes.
More recently, there is a wider interest
\cite{200710TIT_NazGas,200808TIT_CadJaf,200912arXiv_CadJaf} in developing coding
techniques for particular problem instances that perform better than
unstructured codes. In
\cite{200906TIT_PhiZam} nested linear codes 
are employed to communicate over a particular binary doubly dirty multiple access
channel (MAC). The use of structured codes for interference channels (referred
to as interference alignment) toward
improved achievable rate region has been addressed in several works
\cite{201009TIT_BreParTse,200809Allerton_SriJafVisJafSha,201408TIT_HonCai,
201407TIT_KriJaf,201207ISIT_PadSahPra}.

It was shown in \cite{198701TIT_HanKob}, in the setting of distributed
source coding that for any non-trivial bivariate function,
there exists at least
one source distribution for which linear codes outperform random
codes.  However, linear codes were known to be
suboptimal for arbitrary point-to-point (PTP)
communication \cite{197102TAMS_Ahl}, and therefore, the basic
building block in the coding scheme for any multi-terminal
communication problem could not be filled by linear codes. 
The ensemble of nested coset codes was proposed in
\cite{201103TIT_KriPra} as the basic building block of algebraic codes
for distributed lossy compression of general sources subject to arbitrary
distortion criterion. 

This article is organized as follows. We begin with definitions in
section \ref{Sec:BroadcastChannelDefinitionsMartonRateRegion}. 
In section \ref{SubSec:NaturalExtensionOfMartonTo3BC}, we
present the $\UM$ achievable region for $3-$DBC. 
Section \ref{Sec:3To1BCAndTheNeedForStructuredCodes} contains our first main
finding - identification of a vector additive $3-$DBC for which the
$\UM$technique is proved to be strictly sub-optimal. In section
\ref{Sec:AchievableRateRegionsFor3To1BCUsingNestedCosetCodes} we present our second main finding - characterization of $\PCC$region for $3-$DBC - in three pedagogical steps. In section
\ref{Sec:EnlargingMarton'sRateRegionUsingNestedCosetCodes}, we
indicate how to glue together $\UM$technique and the technique based on PCC for general $3-$DBC. 
We conclude in section \ref{Sec:ConcludingRemarks} by
pointing to fundamental connections between several layers of coding
in a three user communication problem and common information of a
triple of random variables. 

\section{Broadcast channel: definitions and Marton's rate region}
\label{Sec:BroadcastChannelDefinitionsMartonRateRegion}


\subsection{Notation}
\label{SubSec:Notation}
We employ notation that has now been widely accepted in the information theory literature
supplemented with the following. The empty sum has value $0$, i.e,
$\sum_{a \in \phi}=0$. For a set $A \subseteq \reals^{k}$, $\cocl
\left( A \right)$ denotes closure of convex hull of $A$. Throughout
this article, $\log$ and $\exp$ functions are taken with respect to
the base $2$. Let
$h_{b}(x) \define -x\log_{2}x - (1-x)\log_{2}(1-x)$ denote binary
entropy function. Let $a * b \define a(1-b)+(1-a)b$ denote binary
convolution. For $K \in \naturals$, we let $[K]\define \left\{
1,2\cdots,K \right\}$. We let $\fieldq$ denote the finite field of
cardinality $q$. While $+$ denotes addition in $\reals$, we let
$\oplus$ 
denote addition in a finite field. The particular finite field,
which is uniquely determined (up-to an isomorphism) by it's cardinality, is clear from
context. When ambiguous, or to enhance clarity, we specify addition in $\fieldq$ using
$\oplus_{q}$. For elements $a,b$, in a finite field, $a \ominus b
\define a \oplus (-b)$, where $(-b)$ is the additive inverse of $b$.In this article, we will need to define multiple objects, mostly triples, of the same type. In order to reduce clutter, we use an \underline{underline} to
denote aggregates of objects of similar type. For example, (i) if
$\OutputAlphabet_{1},\OutputAlphabet_{2}, \OutputAlphabet_{3}$ denote (finite) sets, we
let $\underlineSetY$ either denote the Cartesian product
$\OutputAlphabet_{1} \times \OutputAlphabet_{2} \times \OutputAlphabet_{3}$ or
abbreviate the collection $(\OutputAlphabet_{1},\OutputAlphabet_{2},\OutputAlphabet_{3})$
of sets, the particular reference being clear from context, (ii) if $y_{k} \in \OutputAlphabet_{k}:k=1,2,3$, we let $\underliney \in
\underlineSetY$ abbreviate $(y_{1},y_{2},y_{3}) \in \OutputAlphabet$ (iii) if
$d_{k}:\OutputAlphabet_{k}^{n} \rightarrow \MessageSetM_{k}:k=1,2,3$ denote (decoding)
maps, then we let $\underlined(\underliney^{n})$ denote $(d_{1}(y_{1}^{n}),
d_{2}(y_{2}^{n}), d_{3}(y_{3}^{n}))$.
\subsection{Definitions: Broadcast channel, code, achievability and capacity}
\label{SubSec:DefinitionsBroadcastChannelCodeAchievabilityCapacity}
A $3-$DBC consists of a finite input alphabet
set $\InputAlphabet$ and three finite output alphabet sets $\OutputAlphabet_{1},
\OutputAlphabet_{2}, \OutputAlphabet_{3}$. The discrete time channel is (i) time
invariant, i.e., the probability mass function (PMF) of $\underlineY_{t}=(Y_{1t},Y_{2t},Y_{3t})$, the output at time
$t$, conditioned on $X_{t}$, the input at time $t$, is invariant with $t$, (ii)
memoryless, i.e., conditioned on present input $X_{t}$, the present output $\underlineY_{t}$ is
independent of past inputs $X_{1},\cdots, X_{t-1}$, past outputs $\underline{Y}_{1},\underline{Y}_{2},\cdots,\underline{Y}_{t-1}$, and (iii) used without feedback, i.e.,
the encoder has no information of the symbols received by the decoder. Let
$W_{\underlineY|X}(\underliney|x)=W_{Y_{1}Y_{2}Y_{3}|X}(y_{1},y_{2},y_{3}|x)$ denote
probability of observing $\underliney \in \underlineSetY$ at the respective outputs
conditioned on $x \in \InputAlphabet$ being input. Input is constrained with respect
to a cost function $\kappa : \InputAlphabet \rightarrow [0, \infty)$. The cost function
is assumed additive, i.e., cost of transmitting the vector $x^{n} \in \InputAlphabet^{n}$
is $\bar{\kappa}^{n}(x^{n}) \define \sum_{i=1}^{n}\kappa(x_{i})$. We refer to this $3-$DBC as
$(\InputAlphabet,\underlineSetY,W_{\underlineY|X},\kappa)$.
In this article, we restrict attention to communicating private
messages to the three users. The focus of this article therefore is
the (private message) capacity region of a $3-$DBC, and in particular
corresponding achievable rate regions. The following definitions make
the relevant notions precise. 

\begin{definition}
 \label{Defn:3-BCCode}
 A $3-$DBC code $(n,\underlineMessageSetM,e,\underlined)$ consist of (i) finite index
sets $\MessageSetM_{1}, \MessageSetM_{2},\MessageSetM_{3}$ of messages, (ii) encoder map
$e:\underlineMessageSetM \rightarrow
\InputAlphabet^{n}$, and (iii) three decoder maps $d_{k} : \OutputAlphabet_{k}^{n}
\rightarrow \MessageSetM_{k}:k=1,2,3$.
\end{definition}

\begin{definition}
 \label{Defn:BCCodeErrorProbability}
 The error probability of a $3-$DBC code
$(n,\underlineCardinalityMessageSet,e,\underline{d})$ conditioned on message triple
$(m_{1},m_{2},m_{3}) \in \underlineMessageSetM$ is
\begin{equation}
 \label{Eqn:ErrorProbabilityOf3-BCCode}
 \xi(e,\underlined|\underlinem) \define 1-\sum_{\underliney^{n} : \underlined
(\underliney^{n}) = \underlinem}
W_{\underlineY|X}(\underliney^{n}|e(\underlinem)).\nonumber
\end{equation}
The average error probability of
a $3-$DBC code $(n,\underlineCardinalityMessageSet,e,\underlined)$ is
$\bar{\xi}(e,\underlined)\define \sum_{\underlinem
 \in \underlineMessageSetM}
\frac{1}{|\mathcal{M}_{1}||\mathcal{M}_{2}||\mathcal{M}_{3}|}
\xi(e,\underlined|\underlinem)$. Cost of transmitting message $\underlinem \in
\underlineMessageSetM$ per symbol is $\tau(e|\underlinem) \define
\frac{1}{n}\bar{\kappa}^{n}(e(\underlinem))$ and
average cost of $3-$DBC code $(n,\underlineCardinalityMessageSet,e,\underlined)$ is $\tau(e)
\define \frac{1}{|\mathcal{M}_{1}||\mathcal{M}_{2}||\mathcal{M}_{3}|}\sum_{\underlinem
 \in \underlineMessageSetM}\tau(e|\underlinem)$.
\end{definition}

\begin{definition}
\label{Defn:3-BCAchievabilityAndCapacity}
A rate-cost quadruple $(R_{1},R_{2},R_{3},\tau)\in [0,\infty)^{4}$ is
achievable if for every $\eta > 0$, there exists $N(\eta)\in \naturals$ such that for all
$n > N(\eta)$, there exists a $3-$DBC code $(n, \underlineCardinalityMessageSet^{(n)},
e^{(n)},\underlined^{(n)})$ such that (i)
$\frac{\log_{2}|\mathcal{M}_{k}^{(n)}|}{n} \geq R_{k}-\eta:k=1,2,3$, (ii)
$\bar{\xi}(e^{(n)},\underlined^{(n)})
\leq \eta$, and (iii) average cost $\tau(e^{(n)}) \leq \tau+\eta$. The capacity region
$\mathbb{C}(W_{\underlineY|X},\kappa,\tau)$ ($\mathbb{C}(\tau)$ for
short)  is defined as $\cl{\left\{ \underlineR \in \reals^{3}:
(\underlineR,\tau)\mbox{ is achievable} \right\}}$.
\end{definition}

In some cases, we consider projections of the capacity
region. For any $3-$DBC, if receivers $2$ and $3$ can
simultaneously achieve their respective capacities, then 
$\mathbb{C}_1(\tau)$ is defined as the maximum rate achieved
by receiver 1. Otherwise $\mathbb{C}_1(\tau)=0$. The currently known largest achievable rate region, $\UM$ region,  for
$3-$DBC is obtained via message-splitting, superposition and binning
of unstructured codes. 

\subsection{Marton's rate region}
\label{SubSec:MartonRateRegionFor2BC}
Marton's coding for $2-$DBC incorporates two fundamental techniques -
superposition and precoding - accomplished using a two layer coding
scheme. First layer, which is public, contains a codebook over
$\PublicRVSet$. Second layer is private and contains two codebooks one
each on $\PrivateRVSet_{1}$ and $\PrivateRVSet_{2}$. Precoding is
accomplished by setting aside a \textit{bin} of codewords for each
private message, thus enabling the encoder to choose a compatible pair
of codewords in the indexed bins. User $j$th message is split into two
parts - public and private. The public parts together index a codeword
in $\PublicRVSet-$codebook and the private part of user $j$th message
index a codeword in $\PrivateRVSet_{j}-$codebook. Both users decode
from the public codebook and their respective private codebooks. 
Definition \ref{Defn:MartonTestChannels} and theorem \ref{Thm:MartonRateRegionFor2-BC}
provide a characterization of rate pairs achievable using Marton's
coding technique for $2-$DBC.
We omit restating the definitions analogous to definitions \ref{Defn:3-BCCode},
\ref{Defn:BCCodeErrorProbability}, \ref{Defn:3-BCAchievabilityAndCapacity} for a $2-$DBC.

\begin{definition}
 \label{Defn:MartonTestChannels}
Let $\SetOfDistributions_{M}(W_{\underlineY|X},\kappa,\tau)$ denote the collection of distributions
$p_{\TimeSharingRV\PublicRV \PrivateRV_{1} \PrivateRV_{2}XY_{1}Y_{2}}$ defined on $\TimeSharingRVSet \times \PublicRVSet \times
\PrivateRVSet_{1} \times \PrivateRVSet_{2} \times \InputAlphabet \times
\OutputAlphabet_{1} \times \OutputAlphabet_{2}$, where (i) $\TimeSharingRVSet$, $\PublicRVSet$,
$\PrivateRVSet_{1}$ and $\PrivateRVSet_{2}$ are finite sets of cardinality at most
$|\InputAlphabet|+4$, $|\InputAlphabet|+4$, $|\InputAlphabet|+1$ and $|\InputAlphabet|+1$ respectively, (ii)
$p_{\underlineY|X \underlinePrivateRV \PublicRV \TimeSharingRV}=p_{\underlineY|X} = W_{\underlineY|X}$,
(iii) $\Expectation\left\{\kappa(X)\right\} \leq \tau$. For $p_{\TimeSharingRV\PublicRV
\underlinePrivateRV X \underlineY} \in \SetOfDistributions_{M}(W_{\underlineY|X},\kappa,\tau)$,
let $\alpha_{M}(p_{\TimeSharingRV\PublicRV
\underlinePrivateRV X \underlineY})$ denote the set of $(R_{1},R_{2}) \in \reals^{2}$ that satisfy
\begin{flalign}
 \label{Eqn:MartonAchievableRateRegionForParticularTestChannel}
0 \leq R_{k} &\leq I(\PublicRV
\PrivateRV_{k};Y_{k}|\TimeSharingRV):k=1,2,\nonumber\\ 
R_{1}+R_{2} &\leq\! \min\left\{ I(W;Y_{1}|\TimeSharingRV),I(W;Y_{2}|\TimeSharingRV)  \right\}+I(\PrivateRV_{1};Y_{1}|\TimeSharingRV W )+
I(\PrivateRV_{2};Y_{2}|\PublicRV,\TimeSharingRV)\!-\!I(\PrivateRV_{1};\PrivateRV_{2}|\PublicRV,\TimeSharingRV) \nonumber
\end{flalign}
and
\begin{equation}
 \label{Eqn:MartonRateRegionAsAUnionOfTestChannelRateRegions}
 \alpha_{M}(W_{\underlineY|X},\kappa,\tau) = \cocl\left(\underset{\substack{p_{\TimeSharingRV\PublicRV
\underlinePrivateRV X \underlineY} \\\in \SetOfDistributions_{M}(W_{\underlineY|X},\kappa,\tau)}
}{\bigcup}\alpha_{M}(p_{\TimeSharingRV\PublicRV
\underlinePrivateRV X \underlineY})\right)\nonumber
\end{equation}
\end{definition}

\begin{thm}
 \label{Thm:MartonRateRegionFor2-BC}
For $2-$DBC $(\InputAlphabet,\underlineSetY,W_{\underlineY|X},\kappa)$,
$\alpha(W_{\underlineY|X},\kappa,\tau)$ is achievable, i.e., $\alpha(W_{\underlineY|X},\kappa,\tau) \subseteq \mathbb{C}(W_{\underlineY|X},\kappa,\tau)$.
\end{thm}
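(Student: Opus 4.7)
The plan is to establish achievability via a three-layer random coding scheme that combines superposition, Marton's mutual-covering binning, message splitting, and joint typicality decoding. Fix an arbitrary $p_{Q W V_1 V_2 X \underline{Y}} \in \SetOfDistributions_M(W_{\underline{Y}|X},\kappa,\tau)$ and a rate pair $(R_1,R_2)$ in the interior of $\alpha_M(p_{Q W V_1 V_2 X \underline{Y}})$; by the usual limiting and convex-hull arguments it suffices to construct a sequence of codes with vanishing average error and average cost at most $\tau+\eta$. Introduce auxiliary nonnegative rates $S_1, S_2$ (the public portions of each user's message, with private portions of rate $R_k - S_k \geq 0$) and binning slacks $\tilde{T}_1, \tilde{T}_2$, all to be eliminated later.

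The code is built in three layers. First, a time-sharing sequence $Q^n$ is drawn i.i.d.\ from $p_Q$ and revealed to all parties. Second, conditional on $Q^n$, generate $\exp\{n(S_1+S_2)\}$ public codewords $W^n(j_1,j_2)$ independently with each drawn according to $\prod_i p_{W|Q}(\cdot|Q_i)$. Third, for each $(j_1,j_2)$ and each $k\in\{1,2\}$, generate $\exp\{n(R_k-S_k+\tilde{T}_k)\}$ codewords $V_k^n(j_1,j_2,\ell_k)$ conditionally independently as $\prod_i p_{V_k|WQ}(\cdot|W_i(j_1,j_2),Q_i)$, and partition them into $\exp\{n(R_k-S_k)\}$ equal bins indexed by the private submessage $m_{kk}$. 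Upon observing $(m_1,m_2) = ((m_{10},m_{11}),(m_{20},m_{22}))$, the encoder searches for a pair $(\ell_1,\ell_2)$ in the indicated bins such that $(Q^n,W^n(m_{10},m_{20}),V_1^n(\cdot,\ell_1),V_2^n(\cdot,\ell_2))$ is jointly $\epsilon$-typical under $p_{QWV_1V_2}$, then draws $X^n$ symbolwise from $p_{X|QWV_1V_2}$ and transmits it. Decoder $k$ outputs the unique triple $(\hat{m}_{10},\hat{m}_{20},\hat{m}_{kk})$ for which some codeword in the bin indexed by $(\hat{m}_{10},\hat{m}_{20},\hat{m}_{kk})$ is jointly $\epsilon$-typical with $(Q^n,W^n(\hat{m}_{10},\hat{m}_{20}),Y_k^n)$.

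The error analysis has three ingredients. A mutual-covering argument (second moment / conditional typicality, as in Marton) shows that the encoding step succeeds with high probability provided $\tilde{T}_1+\tilde{T}_2 > I(V_1;V_2|WQ)$. Conditioned on successful encoding, standard joint typicality bounds give vanishing decoder-$k$ error provided $(R_k - S_k)+\tilde{T}_k < I(V_k;Y_k|WQ)$ (private index wrong, public index right) and $(S_1+S_2)+(R_k-S_k)+\tilde{T}_k < I(WV_k;Y_k|Q)$ (public index wrong); by the chain rule $I(WV_k;Y_k|Q) = I(W;Y_k|Q) + I(V_k;Y_k|WQ)$, the latter combines with the former to force $S_1+S_2 < \min\{I(W;Y_1|Q),I(W;Y_2|Q)\}$. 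Fourier-Motzkin elimination of $S_1,S_2,\tilde{T}_1,\tilde{T}_2 \geq 0$ subject to $S_k \leq R_k$ returns exactly the inequalities defining $\alpha_M(p_{Q W V_1 V_2 X \underline{Y}})$; the cost constraint $\Expectation\{\kappa(X)\} \leq \tau$ transfers to $\tau(e^{(n)}) \leq \tau+\eta$ via averaging over typical sequences and truncating the atypical set. The cardinality bounds on $\TimeSharingRVSet,\PublicRVSet,\PrivateRVSet_1,\PrivateRVSet_2$ follow by successive Carath\'eodory-Fenchel-Eggleston reductions on the convex-hull representation, each step preserving $p_X$ and the relevant single-letter information functionals.

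The main obstacle is the careful handling of the mutual-covering event and the statistical dependence it induces: once the encoder has committed to a bin-compatible $(V_1^n,V_2^n)$, their conditional distribution is no longer a product, obstructing a direct union bound in the decoding analysis. The standard remedy, imported essentially verbatim from Marton's original proof for the 2-DBC, is a truncated-typicality / multivariate covering lemma that permits analyzing encoding and decoding error events as if the codewords were independent at the bin level, at the cost of a vanishing additive slack in the rates; the remainder is bookkeeping to accommodate the time-sharing variable $Q$ and the message-splitting rates $S_1,S_2$ in a single-step elimination.
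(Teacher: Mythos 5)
The paper does not prove Theorem~\ref{Thm:MartonRateRegionFor2-BC}; it defers to Marton~\cite{197905TIT_Mar} and the simplified second-moment proof of El Gamal and Meulen~\cite{198101TIT_GamMeu}. Your proposal reconstructs exactly that cited argument---superposition with a public $\PublicRVSet$-codebook, message splitting into public/private parts with rates $S_k$ and $R_k-S_k$, mutual-covering binning with slacks $\tilde T_k$ analyzed via the second moment, joint-typicality decoding of the pair $(W^n,V_k^n)$, and Fourier--Motzkin elimination---so it matches the approach the paper points to rather than providing a genuinely different route. Two small caveats worth tightening if you were to write this out fully: the Fourier--Motzkin step is stated a bit loosely (you should eliminate $\tilde T_1,\tilde T_2$ first using the covering bound, then $S_1,S_2$, and verify that the resulting system is exactly the one in Definition~\ref{Defn:MartonTestChannels}); and the cardinality bounds do not follow from a routine Carath\'eodory--Fenchel--Eggleston reduction---the paper (Remark~\ref{Rem:CardinalityBoundsDerivedByGohariAndAnantharam}) attributes the specific bounds $|\InputAlphabet|+4$, $|\InputAlphabet|+4$, $|\InputAlphabet|+1$, $|\InputAlphabet|+1$ to Gohari and Anantharam~\cite{201202TIT_GohAna}, whose argument is more delicate than the naive support-lemma count. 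Neither caveat affects achievability of $\alpha_M(W_{\underline{Y}|X},\kappa,\tau)$ itself.
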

\begin{remark}
 \label{Rem:CardinalityBoundsDerivedByGohariAndAnantharam}
 The bounds on cardinality of $\PublicRVSet,\PrivateRVSet_{1}$ and $\PrivateRVSet_{2}$
were derived by Gohari and Anantharam in \cite{201202TIT_GohAna}.
\end{remark}
We refer the reader to \cite{197905TIT_Mar} for a proof of achievability. El Gamal and Meulen
\cite{198101TIT_GamMeu} provide a simplified proof using the method of second moment.\subsection{$\mathscr{U}\!\mathcal{M}-$region : Current known largest achievable rate region for $3-$DBC}
\label{SubSec:NaturalExtensionOfMartonTo3BC}
The $\mathscr{U}\!\mathcal{M}-$technique is a 3 layer coding
technique. For simplicity, we describe the coding technique without
referring to the time sharing random variable and employ the same in
characterizing $\mathscr{U}\!\mathcal{M}-$region. User $j$th message
$M_{j}$ is split into four parts - two semi-private parts, and one,
private and public parts each. We let message (i) $M^{W}_{j} \in
\MessageSetM^{W}_{j}$ of rate $K_{j}$ denote it's public part (ii)
$M^{U}_{i\msout{j}} \in
\MessageSetM^{U}_{i\msout{j}},M^{U}_{\msout{j}k} \in
\MessageSetM^{U}_{\msout{j}k}$ of rates $L_{ij},K_{jk}$ respectively,
denote it's semi-private parts, where $(i,j,k)$ is an appropriate
triple in $\left\{ (1,2,3),(2,3,1),(3,1,2) \right\}$, and (iii)
$M^{V}_{j} \in \MessageSetM^{V}_{j}$ of rate $T_{j}$ denote it's
private part. 
The first layer is public with a single codebook
$(w^{n}(\underline{m}^{W}):\underline{m}^{W} \in
\underlineMessageSetM^{W})$ of rate $K_{1}+K_{2}+K_{3}$ over
$\PublicRVSet$. $\underline{M}^{W} \define
(M_{1}^{W},M_{2}^{W},M_{3}^{W})$ indexes a codeword in
$\PublicRVSet-$codebook and each user decodes from 
$\PublicRVSet-$codebook. 

Each codeword in $\PublicRVSet-$codebook is linked to a triple of codebooks - one each on $\SemiPrivateRVSet_{ij}:(i,j) \in \left\{ (1,2),(2,3),(3,1) \right\}$- in the second layer. The second layer is semi-private. Each of the three semi-private codebooks is composed of \textit{bins}, wherein each bin comprises a collection of codewords. For each pair $(i,j) \in \left\{ (1,2),(2,3),(3,1) \right\}$ the following hold. $M^{U}_{i\msout{j}}$ and $M^{U}_{\msout{i}j}$ together index a bin in $\SemiPrivateRVSet_{ij}-$codebook. Each bin in $\SemiPrivateRVSet_{ij}-$codebook is of rate $S_{ij}$. Let $(u_{ij}^{n}(\underlinem^{W},m_{\msout{i}j}^{U},m_{i\msout{j}}^{U},s_{ij}):s_{ij} \in [\exp\{ nS_{ij} \}])$ denote the bin corresponding to semi-private messages $\underline{m^{U}_{ij}}\define (m_{\msout{i}j}^{U},m_{i\msout{j}}^{U}) $ in the $\SemiPrivateRVSet_{ij}-$codebook linked to public message $\underlinem^{W}$. Users $i,j$ decode from $\SemiPrivateRVSet_{ij}-$codebook and it maybe verified that $\SemiPrivateRVSet_{ij}-
$codebook is of rate $K_{ij}+L_{ij}+S_{ij}$.

Let $(i,j)$ and $(j,k)$ be distinct pairs in $\left\{ (1,2),(2,3),(3,1) \right\}$. Every pair of codewords in $\SemiPrivateRVSet_{ij}-$ and $\SemiPrivateRVSet_{jk}-$codebooks is linked to a codebook on $\PrivateRVSet_{j}$. The codebooks over $\PrivateRVSet_{j}:j=1,2,3$ comprise the third layer which is private. $M_{j}^{V}$ indexes a bin in $\PrivateRVSet_{j}-$codebook, each of which is of rate $S_{j}$, and thus $\PrivateRVSet_{j}-$codebook is of rate $T_{j}+S_{j}$. Let $(v_{j}^{n}(\underlinem^{W},\underline{m^{U}_{ij}},s_{ij},\underline{m^{U}_{jk}},s_{jk},m_{j}^{V},s_{j}):s_{j} \in [\exp\{ nS_{j} \}])$ denote bin corresponding to private message $m_{j}^{V}$ in the $\PrivateRVSet_{j}-$codebook linked to codeword pair $(u_{ij}^{n}(\underlinem^{W},\underline{m^{U}_{ij}},s_{ij}),u_{jk}^{n}(\underlinem^{W},\underline{m^{U}_{jk}},s_{jk}))$. User $j$ decodes from the private codebook over $\PrivateRVSet_{j}$.
How does the encoder map messages to a codeword?
Let $p_{W\underlineSemiPrivateRV \underlinePrivateRV \InputRV}$ be a distribution on
$\PublicRVSet \times \underlineSemiPrivateRVSet \times \underlinePrivateRVSet \times \InputAlphabet$ such that $\Expectation\left\{ \kappa(\InputRV) \right\} \leq \tau$. The encoder looks for $(s_{12},s_{23},s_{31},s_{1},s_{2},s_{3})$ such that the septuple
\begin{eqnarray}
 \label{Eqn:JointlyTypicalSeptupleOfCodewords}
\left(\substack{w^{n}(\underlineM^{W}),u_{ij}^{n}(\underlineM^{W},\underline{M^{U}_{ij}},s_{ij}):(i,j)=(1,2),(2,3),(3,1),\\v_{j}^{n}(\underlineM^{W},\underline{M^{U}_{ij}},s_{ij},\underline{M^{U}_{jk}},s_{jk},M_{j}^{V},s_{j}):(i,j,k)=(1,2,3),(2,3,1),(3,1,2)}\right)\nonumber
\end{eqnarray}
of codewords is jointly typical with respect to $p_{W\underlineSemiPrivateRV \underlinePrivateRV}$. If such a septuple is found, this is mapped to a codeword on $\InputAlphabet^{n}$ which is input to the channel. If it does not find any such septuple, an error is declared. 

Decoder $j$ looks for all quadruples $(\underline{\hatm}^{W},\underline{\hatm_{ij}}^{U},\underline{\hatm_{jk}}^{U},\hatm_{j}^{V})$ such that
\begin{equation}
 \label{Eqn:DecodingRuleOfDecoderj}
\left(   w^{n}(\underline{\hatm}^{W}),u_{ij}^{n}(\underline{\hatm}^{W},\underline{\hatm^{U}_{ij}},s_{ij}),u_{jk}^{n}(\underline{\hatm}^{W},\underline{\hatm^{U}_{jk}},s_{jk}),v_{j}^{n}(\underlinem^{W},\underline{m^{U}_{ij}},s_{ij},\underline{m^{U}_{jk}},s_{jk},m_{j}^{V},s_{j}),Y_{j}^{n} \right) \nonumber
\end{equation}
is jointly typical with respect to $p_{W\underlineSemiPrivateRV
  \underlinePrivateRV \InputRV\underlineY}\define
p_{W\underlineSemiPrivateRV \underlinePrivateRV
  \InputRV}W_{\underline{Y}|X}$ for some $(s_{ij},s_{jk},s_{j})$,
where (i) $(i,j,k)$ is the appropriate triple in 
$\{ (1,2,3), (2,3,1), (3,1,2) \}$ and (ii) $Y_{j}^{n}$ is the received vector. If there is a unique such quadruple, it declares $\hatm_{j} \define (\hatm_{j}^{W},\hatm_{i\msout{j}}^{U},\hatm_{\msout{j}k}^{U},\hatm^{V}_{j})$ as user $j$th message. Otherwise, i.e., none or more than one such quadruple is found, it declares an error.

We incorporate the time sharing random variable, average the
error probability over the ensemble of codebooks, and 
provide upper bounds on the same using the second moment method
\cite{198101TIT_GamMeu}.  Let $\TimeSharingRV$, taking values over the finite
alphabet $\TimeSharingRVSet$, denote the time sharing random
variable. Let $p_{\TimeSharingRV}$ be a PMF on $\TimeSharingRVSet$ and
$q^{n} \in \TimeSharingRVSet^{n}$ denote a sequence picked according
to $p^{n}_{Q}$. $q^{n}$ is revealed to the encoder and all
decoders. The codewords in $\PublicRVSet-$codebook are identically and
independently distributed according to
$p^{n}_{W|Q}(\cdot|q^{n})$. Conditioned on entire public codebook
$(W^{n}(\underline{m}^{W})=w^{n}(\underline{m}^{W}):\underline{m}^{W}
\in \underlineMessageSetM^{W})$ and the time sharing sequence $q^{n}$,
each of the codewords
$U_{ij}^{n}(\underlinem^{W},\underline{m^{U}_{ij}},s_{ij}):(\underline{m^{U}_{ij}},s_{ij})
\in \underline{\MessageSetM_{ij}}^{U} \times [\exp\{ nS_{ij} \}]$ are
independent and identically distributed according to
$p_{U_{ij}|WQ}^{n}(\cdot|w^{n}(\underline{m}^{W}),q^{n})$. Conditioned
on a realization of the entire collection of public and semi-private
codebooks, the private codewords 
$(V_{j}^{n}(\underlinem^{W},\underline{m^{U}_{ij}},s_{ij},\underline{m^{U}_{jk}},s_{jk},m_{j}^{V},s_{j}):s_{j} \in [\exp\{ nS_{j} \}])$ are independent and identically distributed according to
\begin{eqnarray}
 \label{Eqn:DistributionOfVCodebookConditionedOnPublicAndSemiPrivate}
p_{V_{j}|U_{ij}U_{jk}WQ}^{n}\left(\cdot|w^{n}(\underline{m}^{W}),u_{ij}^{n}(\underlinem^{W},\underline{m^{U}_{ij}},s_{ij}),u_{jk}^{n}(\underlinem^{W},\underline{m^{U}_{jk}},s_{jk}),q^{n}\right).\nonumber
\end{eqnarray}
The probability of the error event at the encoder decays exponentially with $n$ if for each triple $(i,j,k) \in \{ (1,2,3),(2,3,1),(3,1,2)\} $
\begin{eqnarray}
 \label{Eqn:3BCSourceCodingBoundNonnegativity}
S_{i}\!\!\!\!&>&\!\!\!\! 0\\
\label{Eqn:3BCSourceCodingPairwiseBound}
S_{ij}+S_{jk}\!\!\!\! &>& \!\!\!\!I(U_{ij};U_{jk}|WQ) \\
\label{Eqn:3BCSourceCodingTripleBound}
S_{ij}+S_{jk}+S_{ki} \!\!\!\!&>&\!\!\!\! I(U_{ij};U_{jk};U_{ki}|WQ)\footnotemark\\
\label{Eqn:3BCSourceCodingQuadrapleBound}
S_{i}+S_{ij}+S_{jk}+S_{ki}\!\!\!\!&>&\!\!\!\! I(U_{ij};U_{jk} ;U_{ki}|WQ)
 +I(V_{i};U_{jk}|U_{ij},U_{ki},WQ)\\
S_{i}+S_{j}+S_{ij}+S_{jk}+S_{ki} \!\!\!\!&>& \!\!\!\!I(V_{i};U_{jk}|U_{ij},U_{ki},WQ)+I(V_{j};U_{ki}|U_{ij},U_{jk},WQ)\nonumber\\
\label{Eqn:3BCSourceCodingPentaBound}
&&\!\!\!\!+I(U_{ij};U_{jk};U_{ki}|WQ)+I(V_{i};V_{j}|U_{jk},U_{ij},U_{ki},WQ)\\
S_{1}+S_{2}+S_{3}+S_{12}+S_{23}+S_{31} \!\!\!\!&>& \!\!\!\!I(V_{1};U_{23}|U_{12},U_{31},WQ)+I(V_{2};U_{31}|U_{12},U_{23},WQ)+I(V_{1};V_{2};V_{3}|\TimeSharingRV \PublicRV \underlineSemiPrivateRV)\nonumber\\
\label{Eqn:3BCSourceCodingSextupleBound}
&&\!\!\!\!+I(U_{12};U_{23};U_{31}|WQ)+I(V_{3};U_{12}|U_{23},U_{31},WQ).
\end{eqnarray}
\footnotetext{For three random variables, $A,B,C$, we have $I(A;B;C)=I(A;B)+I(AB;C)$.}
The probability of decoder error event decays exponentially if for each triple $(i,j,k) \in \{ (1,2,3),(2,3,1),(3,1,2)\} $
\begin{flalign}
\label{Eqn:3BCChannelCodingSingleBound}
I(V_i;Y_i|QWU_{ij}U_{ki}) &> T_i+S_i  \\
\label{Eqn:3BCChannelCodingDoubleBound}
I(U_{ij}V_i;Y_i|QWU_{ki})+I(U_{ij};U_{ki}|QW)  &>
K_{ij}+L_{ij}+S_{ij}+T_i+S_i \\
\label{Eqn:3BCChannelCodingDoubleSecondBound}
I(U_{ki}V_i;Y_i|QWU_{ij})+I(U_{ij};U_{ki}|QW)  &>
K_{ki}+L_{ki}+S_{ki}+T_i+S_i \\
\label{Eqn:3BCChannelCodingTripleBound}
I(U_{ij}U_{ki}V_i;Y_i|QW)+I(U_{ij};U_{ki}|QW) &>
K_{ij}+L_{ij}+S_{ij}+K_{ki}+L_{ki}+S_{ki}+T_i+S_i \\ 
\label{Eqn:3BCChannelCodingSextupleBound}
I(WU_{ij}U_{ki}V_i;Y_i|Q)+I(U_{ij};U_{ki}|QW) &>
K_i+K_j+K_k+K_{ij}+L_{ij}+S_{ij}+K_{ki}+L_{ki}+S_{ki}+T_i+S_i
\end{flalign}

For each PMF $p_{\TimeSharingRV W\underlineSemiPrivateRV
  \underlinePrivateRV \InputRV}W_{\underlineY|\InputRV}$ defined on
$\TimeSharingRVSet \times \PublicRVSet \times
\underlineSemiPrivateRVSet \times \underlinePrivateRVSet \times
\InputAlphabet \times \underline{\OutputAlphabet}$, let
$\alpha_{\mathscr{U}}(p_{QW\underlineSemiPrivateRV \underlinePrivateRV
  \InputRV\underlineY})$ denote the set of all triples
$(R_{1},R_{2},R_{3}) \in [0,\infty)^{4}$ such that (i) there exists
  non-negative real numbers $K_{ij},L_{ij},S_{ij},K_{j},T_{j},S_{j}$
  that satisfies
  (\ref{Eqn:3BCSourceCodingBoundNonnegativity})-(\ref{Eqn:3BCChannelCodingSextupleBound})
  for each pair $(i,j) \in \left\{ (1,2),(2,3),(3,1)  \right\}$ and
  (ii) $R_{j}=T_{j}+K_{jk}+L_{ij}+K_{j}$ for each triple $(i,j,k) \in
  \left\{ (1,2,3),(2,3,1),(3,1,2)\right\} $. The $\UM$region is 
\begin{equation}
 \label{Eqn:3BCMartonRateRegionAsAUnionOfTestChannelRateRegions}
 \alpha_{\mathscr{U}}(W_{\underlineY|X},\kappa,\tau) = \cocl\left(\underset{\substack{p_{\TimeSharingRV\PublicRV \underlineSemiPrivateRV
\underlinePrivateRV X \underlineY} \\\in \SetOfDistributions_{\mathscr{U}}(W_{\underlineY|X},\kappa,\tau)}
}{\bigcup}\alpha_{\mathscr{U}}(p_{\TimeSharingRV\PublicRV\underlineSemiPrivateRV
\underlinePrivateRV X \underlineY})\right),\nonumber
\end{equation}
where $\SetOfDistributions_{\mathscr{U}}(W_{\underlineY|X},\kappa,\tau)$ denote the collection of distributions
$p_{\TimeSharingRV\PublicRV \underlineSemiPrivateRV \underlinePrivateRV \InputRV \underlineY}$ defined on $\TimeSharingRVSet \times \PublicRVSet \times
\underlineSemiPrivateRVSet \times \underlinePrivateRVSet \times \InputAlphabet \times
\underlineSetY$, where (i) $\TimeSharingRVSet,\PublicRVSet,
\underlineSemiPrivateRVSet, \underlinePrivateRVSet$ are finite sets, (ii)
$p_{\underlineY|X \underlinePrivateRV \underlineSemiPrivateRV \PublicRV\TimeSharingRV}=p_{\underlineY|X} = W_{\underlineY|X}$,
(iii) $\Expectation\left\{\kappa(X)\right\} \leq \tau$.

\begin{thm}
\label{Thm:MartonRateRegionFor3-BC}
For $3-$DBC $(\InputAlphabet,\underlineSetY,W_{\underlineY|X},\kappa)$,
$\UM$region $\alpha_{\mathscr{U}}(W_{\underlineY|X},\kappa,\tau)$ is achievable, i.e.,
$\alpha_{\mathscr{U}}(W_{\underlineY|X},\kappa,\tau) \subseteq
\mathbb{C}(W_{\underlineY|X},\kappa,\tau)$. 
\end{thm}
\section{Strict sub-optimality of $\UM$technique}
\label{Sec:3To1BCAndTheNeedForStructuredCodes}

In this section, we present our first main finding - strict
sub-optimality of $\UM$technique. In particular, we identify a vector
additive $3-$DBC (example \ref{Ex:3-BCExample}) and propose a linear
coding technique for the same. In section
\ref{Sec:StrictSubOptimalityOfMartonCodingTechnique}, we prove strict
sub-optimality of $\UM$technique for this vector additive $3-$DBC. 

\begin{example}
 \label{Ex:3-BCExample}
Consider the $3-$DBC depicted in figure \ref{Fig:TheThreeUserBroadcastChannel}. Let the input alphabet $\InputAlphabet = \InputAlphabet_{1} \times \InputAlphabet_{2} \times \InputAlphabet_{3}$ be a triple Cartesian product of the binary field $\InputAlphabet_{1}=\InputAlphabet_{2} = \InputAlphabet_{3} = \BinaryField$ and the output alphabets $\OutputAlphabet_{1}=\OutputAlphabet_{2}=\OutputAlphabet_{3}=\BinaryField$ be binary fields. If $X=X_{1}X_{2}X_{3}$ denote the three binary digits input to the channel, then the outputs are $Y_{1}=X_{1}\oplus X_{2}\oplus X_{3} \oplus N_{1}$, $Y_{2} =X_{2} \oplus N_{2}$ and $Y_{3} = X_{3} \oplus N_{3}$, where (i) $N_{1}, N_{2}, N_{3}$ are independent binary random variables with $P(N_{j}=1)=\delta_{j} \in (0,\frac{1}{2})$ and (ii) $(N_{1},N_{2},N_{3})$ is independent of the input $X$. The binary digit $X_{1}$ is constrained to an average Hamming weight of $\tau \in (0,\frac{1}{2})$. In other words, $\kappa (x_{1}x_{2}x_{3}) = 1_{\left\{x_{1}=1\right\}}$ and the average cost 
of input is constrained to $\tau\in (0,\frac{1}{2})$. For the sake of clarity, we provide a formal description of this channel in terms of section \ref{SubSec:DefinitionsBroadcastChannelCodeAchievabilityCapacity}. This $3-$DBC maybe referred to as $(\InputAlphabet,\underlineSetY,W_{\underlineY|X},\kappa)$ where $\InputAlphabet \define \left\{ 0,1 \right\}\times \left\{ 0,1 \right\} \times \left\{ 0,1 \right\}, \OutputAlphabet_{1}=\OutputAlphabet_{2}=\OutputAlphabet_{3}=\left\{ 0,1 \right\}, W_{\underlineY|X}(y_{1},y_{2},y_{3}|x_{1}x_{2}x_{3})=BSC_{\delta_{1}}(y_{1}|x_{1}\oplus x_{2}\oplus x_{3})BSC_{\delta_{2}}(y_{2}|x_{2})BSC_{\delta_{3}}(y_{3}|x_{3})$, where $\delta_{j} \in (0,\frac{1}{2}):j=1,2,3$, $BSC_{\eta}(1|0)=BSC_{\eta}(0|1)=1-BSC_{\eta}(0|0)=1-BSC_{\eta}(1|1)=\eta$ for any $\eta \in (0,\frac{1}{2})$ and the cost function $\kappa (x_{1}x_{2}x_{3}) = 1_{\left\{x_{1}=1\right\}}$.
\end{example}
\begin{figure}
\centering
\includegraphics[height=1.8in,width=1.9in]{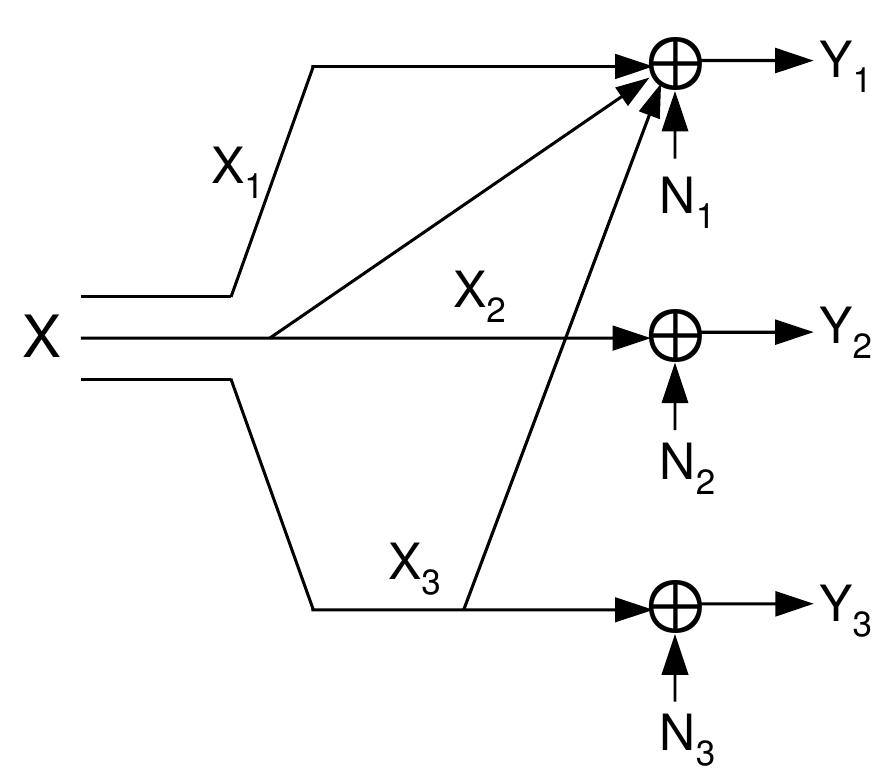}
\caption{A $3-$DBC with  octonary input and  binary outputs described in example \ref{Ex:3-BCExample}.}
\label{Fig:TheThreeUserBroadcastChannel}
\end{figure}
We begin with some observations for the above channel. Users $2$ and $3$ see \textit{interference free point-to-point} (PTP) links from the input. It is therefore possible to communicate to them simultaneously at their PTP capacities using any PTP channel codes achieving their respective capacities. For the purpose of this discussion, let us assume $\delta \define \delta_{2}=\delta_{3}$. This enables us to employ the same capacity achieving code of rate $1-h_{b}(\delta)$ for both users $2$ and $3$. What about user $1$? Three observations are in order. Firstly, if users $2$ and $3$ are being fed at their respective PTP capacities, then information can be pumped to user $1$ only through the first binary digit, henceforth referred to as $X_{1}$. In this case, we recognize that the sum of user $2$ and $3$'s transmissions interferes at receiver $1$. Thirdly, the first binary digit $X_{1}$ is costed, and therefore cannot cancel the interference caused by users $2$ and $3$ at the transmitters.

Since average Hamming weight of $X_{1}$ is restricted to $\tau$, $X_{1}\oplus N_{1}$ is restricted to an average Hamming weight of $\tau * \delta_{1}$. If the rates of users $2$ and $3$ are sufficiently small, receiver $1$ can attempt to decode codewords transmitted to users $2$ and $3$, cancel the interference and decode the desired codeword. This will require $2-2h_{b}(\delta) \leq 1- h_{b}(\delta_{1} * \tau)$ or equivalently $\frac{1+h_{b}(\delta_{1} * \tau)}{2} \leq h_{b}(\delta)$. What if this were not the case?

In the case $\frac{1+h_{b}(\delta_{1} * \tau)}{2} > h_{b}(\delta)$, we
are left with two choices. The first choice is to enable decoder $1$
to decode as large a part of the interference as possible and precode
for the rest of the uncertainty.\footnote{Since $X_{1}$ is costed,
  precoding results in a rate loss, i.e., in terms of rate achieved,
  the technique of precoding is in general inferior to the technique
  of decoding interference. This motivates a preference for decoding
  the interference as against to precoding. } The second choice is to
attempt decoding the sum of user $2$ and $3$'s codewords, instead of
the pair. In the sequel, we pursue the second choice using linear
codes. In section
\ref{Sec:StrictSubOptimalityOfMartonCodingTechnique}, we prove
$\UM$technique is forced to take the first choice which results  
in it's sub-optimality.

Since linear codes achieve the capacity of binary symmetric channels, there exists a single linear code, or a coset thereof, of rate $1-h_{b}(\delta)$ that achieves capacity of both user $2$ and $3$ channels. Let us employ this linear code for communicating to users $2$ and $3$. The code being linear or affine, the collection of sums of all possible pairs of codewords is restricted to a coset of rate $1-h_{b}(\delta)$. This suggests that decoder $1$ decode the sum of user $2$ and $3$ codewords. Indeed, if $1-h_{b}(\delta) \leq 1-h_{b}(\tau * \delta_{1})$, or equivalently $\tau*\delta_{1} \leq \delta$, then user $1$ can first decode the interference, peel it off, and then go on to decode the desired signal. Under this case, a rate $h_{b}(\tau*\delta_{1})-h_{b}(\delta_{1})$ is achievable for user $1$ even while communicating independent information at rate $1-h_{b}(\delta)$ for both users $2$ and $3$. We have therefore proposed a coding technique based on linear codes that achieves the rate triple $(h_{b}(\tau*\delta_{1})-h_{b}(\delta_{1}),1-h_{b}(\delta),1-h_{b}(\delta))$ if $\tau*\delta_{1} \leq \delta=\delta_{2}=\delta_{3}$. 

Let us now consider the general case with respect to $\delta_{2}, \delta_{3}$. Without loss of generality we may assume $\delta_{2} \leq \delta_{3}$. We employ a capacity achieving linear code to communicate to user $2$. This code is sub sampled (uniformly and randomly) to yield a capacity achieving code for user $3$. This construction ensures the sum of all pairs of user $2$ and $3$ codewords to lie within user $2$'s linear code, or a coset thereof, of rate $1-h_{b}(\delta_{2})$. If $1-h_{b}(\delta_{2}) \leq 1-h_{b}(\tau*\delta_{1})$, or equivalently $\tau*\delta_{1} \leq \delta_{2}$, then decoder $1$ can decode the sum of user $2$ and $3$'s codewords, i.e., the interfering signal, peel it off and decode the desired message at rate $h_{b}(\tau*\delta_{1})-h_{b}(\delta_{1})$. The above arguments are summarized in the following lemma.

\begin{lemma}
 \label{Lem:RateTripleAchievableUsingLinearCodes}
Consider the vector additive $3-$DBC in example
\ref{Ex:3-BCExample}. If $\tau*\delta_{1} \leq
\min\left\{\delta_{2},\delta_{3}\right\}$, then 
$(h_{b}(\tau*\delta_{1})-h_{b}(\delta_{1}),1-h_{b}(\delta_{2}),1-h_{b}(\delta_{3}))
\in \mathbb{C}(\tau)$. Moreover
$\mathbb{C}_1(\tau)=h_{b}(\tau*\delta_{1})-h_{b}(\delta_{1})$. 
\end{lemma}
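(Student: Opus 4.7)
The plan is to build the achievability proof around a triple of binary codebooks with precisely the right algebraic nesting: user $2$ and user $3$'s codebooks are \emph{nested} capacity-achieving linear codes for their respective BSCs, while user $1$'s codebook is a coset code of a capacity-achieving input-constrained linear code for $\mathrm{BSC}(\delta_1)$ with input-Hamming-cost $\tau$. The payoff of this architecture is structural: with $\mathcal{C}_3 \subseteq \mathcal{C}_2$, the set of all modulo-$2$ sums $\{c_2 \oplus c_3 : c_2 \in \mathcal{C}_2, c_3 \in \mathcal{C}_3\}$ is itself contained in $\mathcal{C}_2$, so the bivariate interference $X_2^n \oplus X_3^n$ seen by receiver $1$ lives on a single linear codebook of rate $1-h_b(\max\{\delta_2,\delta_3\})$ rather than on a product codebook of rate $(1-h_b(\delta_2))+(1-h_b(\delta_3))$.

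First I would construct the codebooks. Assume without loss of generality $\delta_3 \geq \delta_2$. Pick a sequence (in $n$) of linear codes $\mathcal{C}_2 \subseteq \BinaryField^n$ of rate $1-h_b(\delta_2)-\epsilon$ that achieves capacity on $\mathrm{BSC}(\delta_2)$ under maximum-likelihood decoding; such codes are known to exist. Choose $\mathcal{C}_3 \subseteq \mathcal{C}_2$ as a uniformly random linear subcode of rate $1-h_b(\delta_3)-\epsilon$; a standard averaging argument over the choice of subcode, combined with the fact that a uniformly random linear code of that rate is capacity-achieving for $\mathrm{BSC}(\delta_3)$, shows $\mathcal{C}_3$ is also capacity-achieving on $\mathrm{BSC}(\delta_3)$ with high probability. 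For user $1$, let $\mathcal{C}_1$ be a capacity-achieving input-constrained code for $\mathrm{BSC}(\delta_1)$ with Hamming-cost $\tau$: by the cost-constrained coding theorem for point-to-point channels, such a code of rate $h_b(\tau*\delta_1)-h_b(\delta_1)-\epsilon$ exists with codewords of average Hamming weight $\leq \tau$.

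The encoder maps user $j$'s message to a codeword of $\mathcal{C}_j$ and transmits $X^n=(X_1^n,X_2^n,X_3^n)$. Receivers $2$ and $3$ see clean BSC outputs of $X_2^n$ and $X_3^n$ and recover their messages at rates $1-h_b(\delta_2)-\epsilon$ and $1-h_b(\delta_3)-\epsilon$ respectively using ML decoding on $\mathcal{C}_2,\mathcal{C}_3$. At receiver $1$, write $Y_1^n = Z^n \oplus (X_1^n \oplus N_1^n)$ with $Z^n \define X_2^n \oplus X_3^n \in \mathcal{C}_2$. The effective noise $X_1^n \oplus N_1^n$ has entries that are i.i.d.\ Bernoulli with parameter at most $\tau * \delta_1$ when $X_1^n$ is drawn from a suitable random coset of $\mathcal{C}_1$ with the correct input type; under the hypothesis $\tau*\delta_1 \leq \delta_2$, the rate of $\mathcal{C}_2$ lies below $1-h_b(\tau*\delta_1)$, so $\mathcal{C}_2$ reliably decodes $Z^n$ from $Y_1^n$. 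Receiver $1$ then subtracts $Z^n$ and decodes $X_1^n$ from $X_1^n \oplus N_1^n$ using the $\mathcal{C}_1$-coset code at rate $h_b(\tau*\delta_1)-h_b(\delta_1)-\epsilon$. Letting $\epsilon \downarrow 0$ gives the claimed rate triple.

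The main obstacle I expect is not the architectural idea, which is essentially forced by the algebra, but two technical points in the error analysis. First, one must argue that $\mathcal{C}_3$ inherits capacity-achievability as a \emph{subcode} of the prechosen $\mathcal{C}_2$; the cleanest route is to appeal to the nested-linear-code constructions used in the nested-coset-code literature cited in the paper, where nested pairs achieving any $(R_3,R_2)$ with $R_3 \leq R_2$ are shown to simultaneously achieve capacity on their respective BSCs. Second, when receiver $1$ decodes $Z^n$, the effective noise $X_1^n \oplus N_1^n$ is only i.i.d.\ Bernoulli($\tau*\delta_1$) if $X_1^n$ has the right product-type distribution; this is handled by a random-dither argument, choosing $X_1^n$ uniformly over a coset of $\mathcal{C}_1$ with dither drawn i.i.d.\ Bernoulli($\tau$) independently of the message, so that decoding of $\mathcal{C}_2$ at receiver $1$ reduces to ML decoding on a genuine $\mathrm{BSC}(\tau*\delta_1)$ after averaging over the dither.
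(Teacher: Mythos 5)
Your construction is the same one the paper sketches just above the lemma statement: nest the codebooks of users $2$ and $3$ so that the bivariate interference $X_2^n \oplus X_3^n$ lies inside the larger of the two linear codes (rate $1-h_b(\min\{\delta_2,\delta_3\})$), have receiver $1$ decode that sum, peel it off, and then decode a cost-constrained code for $\mathrm{BSC}(\delta_1)$. The paper phrases the nesting as ``sub-sample user $2$'s capacity-achieving linear code to obtain user $3$'s code''; your version keeps $\mathcal{C}_3$ a linear subcode of $\mathcal{C}_2$, but the only algebraic fact either version needs is $\mathcal{C}_2 \oplus \mathcal{C}_3 \subseteq \mathcal{C}_2$ (modulo coset shifts), which both give. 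The rate bookkeeping and the condition $\tau*\delta_1 \le \min\{\delta_2,\delta_3\}$ match the paper's.

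The one place your argument misfires is the ``random dither'' step for the effective noise at receiver $1$. You propose shifting $\mathcal{C}_1$ by a dither $D^n$ with $D_i \sim \mathrm{Bern}(\tau)$ so that $X_1^n \oplus N_1^n$ becomes ``a genuine $\mathrm{BSC}(\tau*\delta_1)$.'' This does not work: if the codewords of $\mathcal{C}_1$ already have Hamming weight $\approx \tau n$ (which they must, since that is the composition that achieves the cost-constrained capacity $h_b(\tau*\delta_1)-h_b(\delta_1)$), then $c \oplus D$ has expected weight $\approx (\tau*\tau)\,n = 2\tau(1-\tau)\,n > \tau n$, so the dithered transmission violates the average-cost constraint; and the dither does not turn a codeword into an i.i.d.~$\mathrm{Bern}(\tau)$ sequence in any case. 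Fortunately the dither is unnecessary. What the error analysis actually needs is weaker: (i) the codewords of $\mathcal{C}_1$ have composition close to $\mathrm{Bern}(\tau)$ (automatic for a cost-constrained, capacity-achieving codebook), hence $|X_1^n\oplus N_1^n| \approx (\tau*\delta_1)n$ w.h.p.; and (ii) $\mathcal{C}_1$ (and hence $X_1^n\oplus N_1^n$) is drawn independently of the random pair $(\mathcal{C}_2,\mathcal{C}_3)$. With these, a standard joint-typicality (or type-counting) argument shows that a random nested linear code of rate below $1-h_b(\tau*\delta_1)$ decodes $X_2^n\oplus X_3^n$ reliably even though the effective noise is not literally i.i.d. This is exactly the binning/joint-typicality machinery from the nested-coset-code framework the paper cites, and it is also what lets the paper read this lemma off as a corollary of Theorem~\ref{Thm:OneLayerAchievableRateRegionWithOneUserDecodingInterference} by evaluating $\beta_1$ at $p_{V_1}p_{U_{21}}p_{U_{31}}$ with $p_{V_1}(1)=\tau$, $p_{U_{21}}(1)=p_{U_{31}}(1)=\tfrac12$.
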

In the above discussion, we have argued
$(h_{b}(\tau*\delta_{1})-h_{b}(\delta_{1}),1-h_{b}(\delta_{2}),1-h_{b}(\delta_{3}))
\in \mathbb{C}(\tau)$ for the vector additive $3-$DBC in example
\ref{Ex:3-BCExample}. It can be easily argued that $\mathbb{C}_1(\tau)
\leq h_{b}(\tau*\delta_{1})-h_{b}(\delta_{1})$,
and in conjunction with the former statement, the proof of lemma
\ref{Lem:RateTripleAchievableUsingLinearCodes} is complete. 

We now state the conditions under which
$(h_{b}(\tau*\delta_{1})-h_{b}(\delta_{1}),1-h_{b}(\delta_{2}),1-h_{b}(\delta_{3}))
\notin \alpha_{\mathscr{U}}(\tau)$. In particular, we show below in
Theorem
\ref{Thm:ConditionsUnderWhichMartonCodingTechniqueDoesNotAchieveRateTriple}
that if
$1+h_{b}(\delta_{1} * \tau) > h_{b}(\delta_{2})+h_{b}(\delta_{3})$,
then
$(h_{b}(\tau*\delta_{1})-h_{b}(\delta_{1}),1-h_{b}(\delta_{2}),1-h_{b}(\delta_{3}))
\notin \alpha_{\mathscr{U}}(\tau)$. We therefore conclude that if
$\tau,\delta_{1},\delta_{2},\delta_{3}$ are such that
$1+h_{b}(\delta_{1} * \tau) > h_{b}(\delta_{2})+h_{b}(\delta_{3})$ and
$\min\left\{ \delta_{2},\delta_{3}\right\}\geq \delta_{1} * \tau$,
then $\UM$technique is strictly suboptimal for the $3-$DBC presented
in example \ref{Ex:3-BCExample}. 
We prove the theorem in
section \ref{Sec:StrictSubOptimalityOfMartonCodingTechnique}. 

\begin{thm}
 \label{Thm:ConditionsUnderWhichMartonCodingTechniqueDoesNotAchieveRateTriple}
Consider the $3-$DBC in example \ref{Ex:3-BCExample}. If $h_b(\delta_{2})+h_{b}(\delta_{3}) < 1+h_b(\delta_{1}*\tau)
$, then $(h_{b}(\tau*\delta_{1})-h_{b}(\delta_{1}),1-h_{b}(\delta_{2}),1-h_{b}(\delta_{3})) \notin \alpha_{\mathscr{U}}(\tau)$.
\end{thm}

\begin{corollary}
\label{Cor:ConditionsUnderWhichMartonCodingTechniqueIsStrictlySubOptimal}
Consider the $3-$DBC in example \ref{Ex:3-BCExample} with $\delta=\delta_{2}=\delta_{3}$. If $h_{b}(\tau*\delta_{1}) \leq h_b(\delta) < \frac{1+h_b(\delta_{1}*\tau)}{2}$, then $(h_{b}(\tau*\delta_{1})-h_{b}(\delta_{1}),1-h_{b}(\delta),1-h_{b}(\delta)) \notin \alpha_{\mathscr{U}}(\tau)$ but $(h_{b}(\tau*\delta_{1})-h_{b}(\delta_{1}),1-h_{b}(\delta),1-h_{b}(\delta)) \in \mathbb{C}(\tau)$ and thus $\alpha_{\mathscr{U}}(\tau) \neq \mathbb{C}(\tau)$. In particular, if $\delta_{1}=0.01$ and $\delta \in (0.1325,0.21)$, then $\alpha_{\mathscr{U}}(\frac{1}{8}) \neq \mathbb{C}(\frac{1}{8})$.
\end{corollary}

\section{Achievable rate regions for $3-$DBC using partitioned coset codes}
\label{Sec:AchievableRateRegionsFor3To1BCUsingNestedCosetCodes}

In this section we present our second main finding - a new coding technique based on PCC for communicating over an arbitrary $3-$DBC - that enables us to derive $\PCC$region, a new achievable rate region for $3-$DBC. We present this in three pedagogical steps. Step I, presented in section \ref{SubSec:DecodingSumOfCodewordsUsingNestedCosetCodes}, describes all the new elements of our framework in a simple setting. In particular, we employ PCC to manage interference seen by one receiver, and derive a corresponding achievable rate region. For this step, we also provide a complete proof of achievability. Step II (section \ref{SubSec:AnEnlargedAchievableRateRegion}) builds on step I by incorporating private codebooks. Finally in step III (section \ref{Sec:AchievableRateRegionFor3BCUsingNestedCosetCodes}), we employ PCC to manage interference seen by all receivers, and thereby derive $\PCC$region.

\subsection{Step I: Using PCC to manage interference seen by a single receiver}
\label{SubSec:DecodingSumOfCodewordsUsingNestedCosetCodes}

\subsubsection{Description of the coding technique}
\label{SubSubSec:DescriptionOfSingleLayerCodingTechnique}
The essential aspect of the linear coding strategy proposed for example \ref{Ex:3-BCExample} is that users $2$ and $3$ employ a code that is closed under addition, the linear code being the simplest such example. Since linear codes only achieve symmetric capacity, we are forced to bin codewords from a larger linear code in order to find codewords that are typical with respect to a nonuniform distribution. This is akin to binning for channels with state information, wherein $\exp\left\{ nI(U;S) \right\}$ codewords, each picked according to $\prod_{t=1}^{n}p_{U}$, are chosen for each message in order to find a codeword in $T_{\delta}(U|s^{n})$ jointly typical with state sequence $s^{n}$.

We now generalize the coding technique proposed for example \ref{Ex:3-BCExample}. Consider auxiliary alphabet sets $\PrivateRVSet_{1},\SemiPrivateRVSet_{2},\SemiPrivateRVSet_{3}$ where $\SemiPrivateRVSet_{2}=\SemiPrivateRVSet_{3} =\fieldpi$ be the finite field of cardinality $\pi$ and let $p_{\PrivateRV_{1}\SemiPrivateRV_{2}\SemiPrivateRV_{3} X \underlineY}$ be a PMF on $\PrivateRVSet_{1} \times \SemiPrivateRVSet_{2} \times \SemiPrivateRVSet_{3} \times \InputAlphabet \times \underlineSetY$. For $j=2,3$, let $\lambda_{j} \subseteq \SemiPrivateRVSet_{j}^{n}$ be coset of a linear code $\overline{\lambda_{j}}\subseteq\fieldpi^{n}$ of rate $S_{j}\log\pi$. The linear codes are contained in one another, i.e., if $S_{j_{1}} \leq S_{j_{2}}$, then $\overline{\lambda_{j_{1}}} \subseteq \overline{\lambda_{j_{2}}}$. Codewords of $\lambda_{j}$ are partitioned independently and uniformly into $\exp \left\{ nT_{j} \right\}$ bins. A codebook $\mathcal{C}_{1}$ of rate $K_{1}+R_{1}$ is built over $\PrivateRVSet_{1}$. The codewords of $\mathcal{C}_{1}$ are independently and uniformly partitioned into $\exp \left\{ nR_{1} \right\}$ bins. Messages of users $1,2,3$ at rates $L_{1},T_{2}\log\pi,T_{3}\log\pi$ are used to index bins in $\mathcal{C}_{1},\lambda_{2},\lambda_{3}$ respectively. The encoder looks for a jointly typical triple, with respect to $p_{\PrivateRV_{1}\SemiPrivateRV_{2}\SemiPrivateRV_{3}}$, of codewords in the indexed triple of bins. Following a second moment method similar to that employed in \cite[Appendix A]{201403arXiv_PadPra}, it can be proved that the encoder finds at least one jointly typical triple if
\begin{eqnarray}
 \label{Eqn:SingleLayerSourceCodingBound1}
&K_{1} > 0,~~(S_{j}-T_{j})\log\pi > \log \pi - H(U_{j}),~~
(S_{j}-T_{j})\log\pi+K_{1} > \log \pi - H(U_{j})+I(U_{j};V_{1}),: j=2,3\\
\label{Eqn:SingleLayerSourceCodingBound2}
&\underset{j=2}{\overset{3}{\sum}}(S_{j}-T_{j})\log\pi > 2\log \pi - H(U_{2})-H(U_{3})+I(U_{2};U_{3})\\
\label{Eqn:SingleLayerSourceCodingBound3}
&K_{1}+\max\{S_{2},S_{3}\}\log\pi> \log\pi-H(U_{2}\oplus U_{3})+I(V_{1};U_{2}\oplus U_{3}), ~~\max\{S_{2},S_{3}\}\log\pi \geq \log\pi -H(U_{2}\oplus U_{3})\\
\label{Eqn:SingleLayerSourceCodingBound4}
&\underset{j=2}{\overset{3}{\sum}}(S_{j}-T_{j})\log\pi+K_{1} > 2\log \pi - \underset{j=2}{\overset{3}{\sum}}H(U_{j})+I(U_{2};U_{3};V_{1}).
\end{eqnarray}
Having chosen one such jointly typical triple, say $V_{1}^{n},U_{2}^{n},U_{3}^{n}$, it generates a vector $X^{n}$ according to \[p^{n}_{X|\PrivateRV_{1}\SemiPrivateRV_{2}\SemiPrivateRV_{3}}(\cdot | \PrivateRV_{1}^{n},\SemiPrivateRV_{2}^{n},\SemiPrivateRV_{3}^{n}) = \prod_{t=1}^{n}p_{X|\PrivateRV_{1}\SemiPrivateRV_{2}\SemiPrivateRV_{3}}(\cdot | \PrivateRV_{1t},\SemiPrivateRV_{2t},\SemiPrivateRV_{3t})\]and feeds the same as input on the channel.

Decoders $2$ and $3$ perform a standard PTP decoding. For example, decoder $2$ receives $Y_{2}^{n}$ and looks for all codewords in $\Lambda_{2}$ that are jointly typical with $Y_{2}^{n}$. If it finds all such codewords in a unique bin it declares the corresponding bin index as the decoded message. It can be proved by following the technique similar to \cite[Proof of Theorem 1]{201108ISIT_PadPra} that if
\begin{equation}
 \label{Eqn:BoundsOnUser2And3Rates}
S_{j}\log\pi< \log \pi - H(\SemiPrivateRV_{j}|Y_{j})\mbox{ for }j=2,3 
\end{equation}
then probability of decoding error at decoders $2$ and $3$ can be made arbitrarily small for sufficiently large $n$.
Having received $Y_{1}^{n}$, decoder $1$ looks for all codewords $v_{1}^{n} \in \mathcal{C}_{1}$ for which there exists a codeword $u_{2 \oplus 3}^{n} \in \Lambda_{2} \oplus \Lambda_{3}$ such that $(v_{1}^{n},u_{2 \oplus 3}^{n},Y_{1}^{n})$ is jointly typical with respect to $p_{\PrivateRV_{1},\SemiPrivateRV_{2}\oplus \SemiPrivateRV_{3},Y_{1}}$. Here \[\Lambda_{2}\oplus \Lambda_{3} \define \left\{ U_{2}^{n}\oplus U_{3}^{n}: U_{j}^{n} \in \Lambda_{j}^{n}:j=2,3 \right\}.\]If all such codewords in $\mathcal{C}_{1}$ belong to a unique bin, the corresponding bin index is declared as the decoded message. Again following the technique similar to \cite[Proof of Theorem 1]{201108ISIT_PadPra}, it can be proved, that if, for $j=2,3$
\begin{eqnarray}
 \label{Eqn:BoundsOnUser1Rates}
K_{1}\!+\!R_{1} \!< \!H(V_{1})\! - \!H(\PrivateRV_{1}|\SemiPrivateRV_{2}\oplus \SemiPrivateRV_{3},Y_{1}) ,~
K_{1}\!+\!R_{1}\!+\!S_{j}\!\log\pi < \log \pi +H(V_{1})\!-\! H(\PrivateRV_{1},\SemiPrivateRV_{2}\oplus \SemiPrivateRV_{3}|Y_{1}),
\end{eqnarray}
then probability of decoding error at decoder $1$ falls exponentially
with $n$. In the sequel, we provide a formal proof of
achievability. 

\subsubsection{Proof of achievability}

\begin{definition}
 \label{Defn:TestChannelsForSingleLayerWithOneUserDecodingInterference}
For $a=2,3$, let
$\SetOfDistributions_{1a}^{f}(W_{\underlineY|X},\kappa,\tau)$ denote
the collection of PMF's
$p_{Q\SemiPrivateRV_{2}\SemiPrivateRV_{3}\PrivateRV_{1} X
  \underlineY}$ defined on $\mathcal{Q} \times
\SemiPrivateRVSet_{2}\times \SemiPrivateRVSet_{3}\times
\PrivateRVSet_{1} \times \InputAlphabet \times \underlineSetY$, where
(i) $\SemiPrivateRVSet_{2}=\SemiPrivateRVSet_{3}=\fieldpi$ is the
finite field of cardinality $\pi$, $\PrivateRVSet_{1}$ is a finite
set, (ii)
$p_{\underlineY|X\PrivateRV_{1}\underlineSemiPrivateRV}=p_{\underlineY|X}=W_{\underlineY|X}$,
and (iii) $\Expectation\left\{ \kappa(X) \right\} \leq \tau$ and (iv)
$H(U_{a}|Y_{a}Q) < H(U_{2}\oplus U_{3}|Q)$. For
$p_{Q\underlineSemiPrivateRV\PrivateRV_{1} X \underlineY} \in
\SetOfDistributions_{1a}^{f}(W_{\underlineY|X},\kappa,\tau)$, let
$\beta_{1a}(p_{Q\underlineSemiPrivateRV\PrivateRV_{1} X \underlineY})$
be defined as the set of triples $(R_{1},R_{2},R_{3})$ that satisfy 
\begin{align}
\label{Eqn:OneLayerRateRegionForSpecificTestChannel}
0 < R_{1} &< I(V_{1};U_{2} \oplus  U_{3},Y_{1}|Q), ~~~~0 < R_{j} < I(U_{j};Y_{j}|Q):j=2,3,\nonumber\\
R_{1}+R_{a} &< I(U_{a};Y_{a}|Q)-I(U_{a};V_{1}|Q)+I(V_{1},U_{2}\oplus
  U_{3};Y_{1}|Q)+I(V_{1};U_{2}\oplus U_{3}|Q)\nonumber\\ 
R_{2}+R_{3} &< I(U_{2};Y_{2}|Q)+I(U_{3};Y_{3}|Q)-I(U_{2};U_{3}|Q)
 \nonumber\\
R_{1}+R_{j} &< H(V_{1},U_{j}|Q)-H(V_{1},U_{2}\oplus
U_{3}|Y_{1}Q)+\min\{0,H(U_2 \oplus U_3,Y_1|Q)-H(U_j|Y_jQ)\}:j=2,3\nonumber\\
\sum_{k=1}^{3}R_{k} &< H(U_{2},U_{3},V_{1}|Q)-H(V_{1},U_{2}\oplus U_{3}|Y_{1}Q)-\max \left\{H(U_{2}|Y_{2}Q),H(U_{3}|Y_{3}Q)  \right\}\nonumber\\
\sum_{k=1}^{3}R_{k} &< H( U_{2},U_{3},V_{1}|Q)\!-\!H(V_{1}|QU_{2}\!\oplus\! U_{3},Y_{1})-\sum_{k=2}^{3}H(U_{k}|QY_{k})\nonumber\\
R_{1}+\sum_{k=1}^{3}R_{k} &< H(V_{1}|Q)+H(U_{2}U_{3}V_{1}|Q)\!-\!2H(V_{1},U_{2}\!\oplus\! U_{3}|QY_{1}) \nonumber\\
R_{j}+\sum_{k=1}^{3}R_{k} &< H(V_{1},U_{j}|Q)+H(U_{2},U_{3}|Q)-2H(U_{j}|QY_{j})-H(V_{1},U_{2}\oplus U_{3}|QY_{1}):j=2,3\nonumber
\end{align}
and
\begin{eqnarray}
 \label{Eqn:OneLayerAchievableRateRegionWithOneUserDecodingInterference}
 \beta_{1}(W_{\underlineY|X},\kappa,\tau) = \cocl\left(\bigcup_{a=2}^{3}\underset{\substack{p_{\underlineSemiPrivateRV\PrivateRV_{1} X \underlineY} \\\in \SetOfDistributions_{1a}^{f}(W_{\underlineY|X},\kappa,\tau)}
}{\bigcup}\beta_{1a}(p_{Q\underlineSemiPrivateRV\PrivateRV_{1} X \underlineY})\right).\nonumber
\end{eqnarray}
\end{definition}
\begin{thm}
 \label{Thm:OneLayerAchievableRateRegionWithOneUserDecodingInterference}
For a $3-$DBC $(\mathcal{X},\underlineSetY,W_{\underlineY|X},\kappa)$, $\beta_{1}(W_{\underlineY|X},\kappa,\tau)$ is achievable, i.e., $\beta_{1}(W_{\underlineY|X},\kappa,\tau) \subseteq \mathbb{C}(W_{\underlineY|X},\kappa,\tau)$.
\end{thm}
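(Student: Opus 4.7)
The plan is to realize the three-layer coding scheme sketched in Section \ref{SubSec:DecodingSumOfCodewordsUsingNestedCosetCodes} as a formal random coding argument built on the nested coset code ensemble of \cite{201108ISIT_PadPra}. Fix a test channel $p_{\underlineSemiPrivateRV V_{1} X \underlineY}\in \SetOfDistributions_{1}(W_{\underlineY|X},\kappa,\tau)$, non-negative auxiliary rates $(K_{1},L_{1},S_{21},T_{21},S_{31},T_{31})$ to be chosen later, and a blocklength $n$. First I would construct, for $j\in\{2,3\}$, a random coset $\lambda_{j}\subseteq\fieldq^{n}$ of a linear code $\overline{\lambda_{j}}$ of rate $S_{j1}+T_{j1}$ with the nesting property that whichever of $\overline{\lambda_{2}},\overline{\lambda_{3}}$ has smaller rate is a subcode of the other; this guarantees $\lambda_{2}\oplus\lambda_{3}$ sits inside a coset of the larger linear code and inherits its rate bound. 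Each coset $\lambda_{j}$ is then uniformly and independently partitioned into $\exp\{nT_{j1}\}$ bins, and an i.i.d.\ unstructured codebook $\mathcal{C}_{1}\subseteq \PrivateRVSet_{1}^{n}$ of rate $K_{1}+L_{1}$ is drawn according to $\prod_{t} p_{V_{1}}$ and partitioned uniformly into $\exp\{nL_{1}\}$ bins.

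Next I would analyze the encoder, which on the message triple indexes one bin in each of $\mathcal{C}_{1},\lambda_{2},\lambda_{3}$ and searches for a codeword triple $(v_{1}^{n},u_{21}^{n},u_{31}^{n})$ jointly typical with respect to $p_{V_{1}U_{21}U_{31}}$. Because the codewords in the linear cosets are only pairwise independent and uniform on $\fieldq^{n}$, one cannot use the standard mutually-independent covering lemma; instead I would apply the second-moment / conditional Chebyshev argument in the style of \cite{198101TIT_GamMeu} and \cite{201108ISIT_PadPra}, tracking the additional correlation induced by the linear structure. This delivers existence of a jointly typical triple with probability approaching one as long as inequalities (\ref{Eqn:SingleLayerSourceCodingBounds}) are satisfied strictly. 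The channel input $X^{n}$ is then generated symbol-by-symbol from $p_{X|V_{1}U_{21}U_{31}}$, so the joint typicality propagates to $(V_{1}^{n},U_{21}^{n},U_{31}^{n},X^{n},\underlineY^{n})$ and the cost constraint holds by standard typicality. Decoders $2$ and $3$ then perform ordinary typical-set decoding inside their respective coset codes $\lambda_{2},\lambda_{3}$; the error analysis is a direct invocation of \cite[Thm.~1]{201108ISIT_PadPra}, where pairwise independence of the codewords within a coset already suffices for a union bound, and yields (\ref{Eqn:BoundsOnUser2And3Rates}).

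Decoder $1$ is the delicate piece. It searches for $v_{1}^{n}\in\mathcal{C}_{1}$ such that for some $u^{n}\in\lambda_{2}\oplus\lambda_{3}$ the triple $(v_{1}^{n},u^{n},y_{1}^{n})$ is jointly typical with respect to the induced marginal $p_{V_{1},U_{21}\oplus U_{31},Y_{1}}$. Here I would use the structural fact $|\lambda_{2}\oplus\lambda_{3}|\leq \exp\{n\min(S_{21}+T_{21},S_{31}+T_{31})\}$ to bound the number of incorrect sum-sequences, and then split the error event into (i) a wrong $v_{1}^{n}$ together with the correct sum codeword, and (ii) a wrong $v_{1}^{n}$ together with a wrong sum. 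The calculation follows the template of \cite[Proof of Thm.~1]{201108ISIT_PadPra}, giving (\ref{Eqn:BoundsOnUser1Rates}) via the two corresponding union bounds after accounting for the linear structure.

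Finally, identifying $R_{1}=L_{1}$, $R_{2}=T_{21}$, $R_{3}=T_{31}$ and Fourier--Motzkin eliminating the auxiliary rates $K_{1},S_{21},S_{31}$ from the combined system (\ref{Eqn:SingleLayerSourceCodingBounds})--(\ref{Eqn:BoundsOnUser1Rates}) together with non-negativity produces exactly the inequality list defining $\beta_{1}(p_{\underlineSemiPrivateRV V_{1} X\underlineY})$; taking the convex-closure union over $\SetOfDistributions_{1}$ and invoking a standard time-sharing argument concludes $\beta_{1}(W_{\underlineY|X},\kappa,\tau)\subseteq \mathbb{C}(W_{\underlineY|X},\kappa,\tau)$. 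The main obstacle I expect is the decoder-$1$ analysis: the candidate $v_{1}^{n}$ is drawn from an unstructured i.i.d.\ codebook while $u_{21}^{n}\oplus u_{31}^{n}$ is constrained to a linear coset \emph{and} statistically coupled to $v_{1}^{n}$ through the encoder's joint-typicality selection. The standard independent-codebook error analysis therefore fails, and one must use the conditional typicality and second-moment techniques developed in \cite[proof of theorem 1]{201108ISIT_PadPra} and \cite[proof of theorem 2]{201301arXivMACDSTx_PadPra} to obtain sharp bounds on the probability of atypical joint codeword collisions.
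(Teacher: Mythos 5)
Your proposal follows essentially the same route as the paper's own (very brief) proof: an i.i.d.\ $\PrivateRVSet_{1}$-codebook drawn from $\prod_t p_{V_1}$ and partitioned uniformly at random, nested random cosets $\lambda_2\subseteq\fieldq^{n}$, $\lambda_3\subseteq\fieldq^{n}$ built by appending rows to a common random generator matrix and shifting by independent uniform biases (so $\lambda_2\oplus\lambda_3$ lies in a coset of the larger linear code), a second-moment analysis of joint-typical encoding to get (\ref{Eqn:SingleLayerSourceCodingBounds}), ordinary coset-code decoding at receivers $2,3$ to get (\ref{Eqn:BoundsOnUser2And3Rates}), sum-decoding at receiver $1$ to get (\ref{Eqn:BoundsOnUser1Rates}), and Fourier--Motzkin elimination of $K_1, S_{21}, S_{31}$ under $R_1=L_1$, $R_2=T_{21}$, $R_3=T_{31}$. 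The paper itself explicitly omits the detailed error analysis for this theorem, deferring to \cite[Proof of Theorem 2]{201301arXivMACDSTx_PadPra} and \cite[Proof of Theorem 1]{201108ISIT_PadPra}, and you correctly identify the same obstacle the authors highlight --- the interplay of joint-typical encoding/decoding with the statistical dependence between $V_1^n$ and $U_{21}^n\oplus U_{31}^n$ induced both by the linear structure and by the encoder's selection step --- so your outline is at the same level of rigor as the paper's and is consistent with it.
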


\begin{proof}
Given $p_{QV_{1}\underline{U}X\underline{Y}} \in \mathbb{D}_{1a}^{f}(W_{\underlineY|X},\kappa,\tau)$, for some $a=2,3$, $\underline{R} \in \beta_{1}(p_{QV_{1}\underlineSemiPrivateRV X \underlineY}), \tilde{\eta} >0$, our task is to identify a $3-$DBC code $(n,\underlineMessageSetM,e,\underlined)$ of rate $\frac{\log \mathcal{M}_{j}}{n} \geq R_{j}-\tilde{\eta}:j =1,2,3$, average error probability $\overline{\xi}(e,\underlined) \leq \tilde{\eta}$, and average cost $\tau(e) \leq \tau+\tilde{\eta}$.
Taking a cue from the above coding technique, we begin with an
alternate characterization of
$\beta_{1a}(p_{QV_{1}\underlineSemiPrivateRV X \underlineY})$ in terms
of the parameters of the code. 

\begin{definition}
 \label{Defn:3To1DBCOneLayerAchievabilityAlternateCharacterizationOfRateRegion}
Consider $p_{QV_{1}\underlineSemiPrivateRV X \underlineY} \in \mathbb{D}_{1a}^{f}(W_{\underlineY|X},\kappa,\tau)$ and let $\pi \define |\mathcal{U}_{2}|=|\mathcal{U}_{3}|$. For $a=2,3$, let $\tilde{\beta_{1a}}(p_{QV_{1}\underlineSemiPrivateRV X \underlineY})$ be defined as the set of rate triples $\underlineR \define (R_{1},R_{2},R_{3}) \in [0,\infty)^{3}$ for which $\underset{\delta > 0}{\cup}\mathcal{S}_{a}(\underlineR,p_{QV_{1}\underlineSemiPrivateRV X \underlineY},\delta)$ is non-empty, where, for any $\delta > 0$, $\mathcal{S}_{a}(\underlineR,p_{QV_{1}\underlineSemiPrivateRV X \underlineY},\delta)$ is defined as the set of vectors $(K_{1},R_{1},S_{2},T_{2},S_{3},T_{3}) \in [0,\infty)^{6}$ that satisfy $R_{j}=T_{j}\log \pi$,
 \begin{eqnarray}
  \label{Eqn:3To1BCAchievabilityProofPreFourierMotzkinSourceCodingBound1}
 &K_{1}>\delta,~~(S_{j}-T_{j})\log \pi > \log \pi - H(U_{j}|Q)+\delta,\\ &K_{1}+(S_{j}-T_{j})\log \pi > \log \pi - H(U_{j}|Q,V_{1})+\delta,~~
 \label{Eqn:3To1BCAchievabilityProofPreFourierMotzkinSourceCodingBound2}
 \underset{l=2}{\overset{3}{\sum}}(S_{l}-T_{l})\log \pi > 2\log \pi -H(\underline{U}|Q)+\delta,  \\
 \label{Eqn:3To1BCAchievabilityProofPreFourierMotzkinSourceCodingBound3}
 &K_{1}+\underset{l=2}{\overset{3}{\sum}}(S_{l}-T_{l})\log \pi  > 2\log \pi - H(\underline{U}|Q,V_{1})+\delta,~~S_{a}\log\pi > \log\pi-H(U_{2}\oplus U_{3}|Q)+\delta,
\end{eqnarray}
 \begin{eqnarray}
 \label{Eqn:3To1BCAchievabilityProofPreFourierMotzkinSourceCodingBound4}
&K_{1}+S_{a}\log\pi \overset{(a)}{>} \log\pi-H(U_{2}\oplus U_{3}|Q,V_{1})+\delta,~~
 \label{Eqn:3To1BCAchievabilityProofPreFourierMotzkinChannelCodingBound1}
  K_{1}+R_{1} < I(V_{1};Y_{1},U_{2}\oplus U_{3}|Q)-\delta,\\
 \label{Eqn:3To1BCAchievabilityProofPreFourierMotzkinChannelCodingBound2}
 &K_{1}+R_{1} + \max \left\{ S_{2},S_{3} \right\}\log \pi< \log \pi +H(V_{1}|Q)-H(V_{1},U_{2}\oplus U_{3}|Q,Y_{1})-\delta\nonumber\\
 \label{Eqn:3To1BCAchievabilityProofPreFourierMotzkinChannelCodingBound3}
 &S_{j}\log\pi < \log \pi - H(U_{j}|Q,Y_{j})-\delta,\nonumber
 \end{eqnarray}
for $j=2,3$.
\end{definition}
\begin{lemma}
 \label{Lem:AltCharRateRegnIsSame}
$\tilde{\beta_{1a}}(p_{QV_{1}\underlineSemiPrivateRV X \underlineY})=\beta_{1a}(p_{QV_{1}\underlineSemiPrivateRV X \underlineY})$ for every $p_{QV_{1}\underlineSemiPrivateRV X \underlineY} \in \mathbb{D}_{1a}^{f}(W_{\underlineY|X},\kappa,\tau)$ and $a=2,3$.
\end{lemma}
\begin{proof}
The proof follows by substituting $R_{j}=T_{j}\log \pi$ for $j=2,3$ in
the bounds characterizing
$\mathcal{S}_{a}(\underlineR,p_{QV_{1}\underlineSemiPrivateRV X
  \underlineY},\delta)$ and eliminating $K_{1},S_{j}:j=2,3$ via the
technique proposed in \cite{chaharsooghi2011new}. The
presence of strict inequalities in the bounds characterizing
${\beta_{1a}}(p_{QV_{1}\underlineSemiPrivateRV X \underlineY})$ and
$\mathcal{S}_{a}(\underlineR,p_{QV_{1}\underlineSemiPrivateRV X
  \underlineY},\delta)$ enables one to prove $\underset{\delta >
  0}{\cup}\mathcal{S}_{a}(\underlineR,p_{QV_{1}\underlineSemiPrivateRV
  X \underlineY},\delta)$ is non-empty for every $\underline{R} \in
{\beta_{1a}}(p_{QV_{1}\underlineSemiPrivateRV X \underlineY})$. 
\end{proof}
For the given rate triple $\underlineR \in \beta_{1a}(p_{QV_{1}\underlineSemiPrivateRV X \underlineY})$, we have $\delta_{1}>0$ and $(K_{1},R_{1},S_{2},T_{2},S_{3},T_{3}) \in \mathcal{S}_{a}(\underlineR,p_{QV_{1}\underlineSemiPrivateRV X \underlineY},\delta_{1})$. Set $\eta \define \min \left\{ \tilde{\eta},\delta_{1} \right\}$. Consider a codebook $\mathcal{C}_{1} = ( v_{1}^{n}(m_{1},b_{1}):m_{1} \in \mathcal{M}_{1}, b_{1} \in \mathcal{B}_{1} )$ built over $\mathcal{V}_{1}$ consisting of $|\mathcal{M}_{1}|$ bins, each consisting of $|\mathcal{B}_{1}|$ codewords. We let $\mathcal{M}_{1}=[\lfloor \exp \left\{ n(R_{1}-\frac{\eta}{2}) \right\} \rfloor]$ and $\mathcal{B}_{1} = [\lceil\exp \left\{ n(K_{1}+\frac{\eta}{8}) \right\}\rceil]$. $\mathcal{C}_{1}$ is employed to encode user $1$'s message. Codebooks employed to encode user $2$ and $3$'s messages are partitioned coset codes which are described in the sequel. Henceforth, we let $\pi \define |\SemiPrivateRVSet_{2}|=|\SemiPrivateRVSet_{3}|$ and therefore $\fieldpi = \SemiPrivateRVSet_{2}=\SemiPrivateRVSet_{3}$. Consider a linear code $\overline{\lambda} \subseteq \fieldpi^{n}$ with generator matrix $g \in \fieldpi^{s \times n}$ and let ${\lambda} \subseteq \fieldpi^{n}$ denote the coset of $\overline{\lambda}$ with respect to shift $b^{n} \in \fieldpi^{n}$. Clearly, the codewords of ${\lambda}$ are given by $u(a^{s}) \define a^{s}g \oplus b^{n}: a^{s} \in \fieldpi^{s}$. Consider a partition of ${\lambda}$ into $\pi^{t}$ bins. Each codeword $u(a^{s})$ is assigned a bin index $\bin(a^{s}) \in \fieldpi^{t}$. For every $m^{t} \in \fieldpi^{t}$, $c(m^{t}) \define \left\{ a^{s}:\bin(a^{s}) = m^{t} \right\}$ denotes the set of indices whose codewords are assigned to bin $m^{t}$. The coset code $\lambda$ with it's partitions is called a \textit{partitioned coset code} (PCC) and is referred to as the PCC $(n,s,t,g,b^{n},\bin)$.
 
For $j=2,3$, user $j$ is provided the PCC $(n,s_{j},t_{j},g_{j},b_{j}^{n},i_{j})$, where $s_{j}= \lfloor nS_{j} \rfloor, t_{j} \define \lceil n(T_{j}-\frac{\eta}{4\log \pi}) \rceil $. Let $u_{j}^{n}(a_{j}^{s_{j}}) \define a_{j}^{s_{j}}g_{j}\oplus b_{j}^{n}$ denote a generic codeword in $\lambda_{j}$ and $c_{j}(m_{j}^{t_{j}}) \define \left\{ a_{j}^{s_{j}}:i_{j}(a_{j}^{s_{j}})=m_{j}^{t_{j}} \right\}$ denote the indices of codewords in bin corresponding to message $m_{j}^{t_{j}}$. These codes are such that if $s_{j_{1}} \leq s_{j_{2}}$, then $g_{j_{2}}^{t} = \left[ g_{j_{1}}^{t} ~~ g_{j_{2}/j_{1}}^{t} \right]$. In other words, the linear code corresponding to the larger coset code contains the linear code corresponding to the smaller coset code. Without loss of generality, we henceforth assume $s_{2} \leq s_{3}$ and therefore $g_{3}^{t} = \left[ g_{2}^{t} ~~ g_{3/2}^{t} \right]$. It is now appropriate to derive some relationships between the code parameters that would be of use at a later time. There exists $N_{1}(\eta) \in \naturals$ such that for all $n \geq N_{1}(\eta)$
\begin{eqnarray}
\label{Eqn:3To1DBCOneLayerAchievabilityLowerAndUpperBoundsOnSj}
& nS_{j}-1 \leq s_{j} \leq nS_{j}\mbox{ and therefore } S_{j}-\frac{\eta}{8\log\pi}\leq S_{j}-\frac{1}{n} \leq \frac{s_{j}}{n} \leq S_{j}, \\
 \label{Eqn:3To1DBCOneLayerAchievabilityLowerAndUpperBoundsOnTj}
 &n\left(T_{j}-\frac{\eta}{4\log \pi}\right) \leq t_{j} \leq n\left(T_{j}-\frac{\eta}{4\log \pi}\right)+1\mbox{ and therefore }T_{j}-\frac{\eta}{4\log \pi} \leq \frac{t_{j}}{n} \leq T_{j}-\frac{\eta}{8\log \pi}+\frac{1}{n},\\
 \label{Eqn:3To1DBCOneLayerAchievabilityLowerAndUpperBoundsOnR1AndK1}
&R_{1}-\eta \leq \frac{\log |\mathcal{M}_{1}|}{n} \leq R_{1}-\frac{\eta}{2}\mbox{ and }K_{1}+\frac{\eta}{8} \leq \frac{\log |\mathcal{B}_{1}|}{n} \leq K_{1}+\frac{\eta}{4}.
\end{eqnarray}
We now describe the encoding and decoding rules. A vector $q^{n} \in T_{\eta_{2}}(Q)$ is chosen to be the time-sharing vector, where $\eta_{2}$ will be specified in due course. Without loss of generality, we assume the message sets are $\mathcal{M}_{j}\define \fieldpi^{t_{j}}$ for $j=2,3$ and as stated before $\mathcal{M}_{1} \define [\lfloor \exp \left\{ n(R_{1}-\frac{\eta}{2}) \right\} \rfloor]$. Let $(M_{1},M_{2}^{t_{2}},M_{3}^{t_{3}}) \in \underline{\mathcal{M}}$ denote the uniformly distributed triple of message random variables to be communicated to the respective users. The encoder looks for a triplet $(b_{1},a_{2}^{s_{2}},a_{3}^{s_{3}}) \in \mathcal{B}_{1} \times c_{2}(M_{2}^{t_{2}}) \times c_{3}(M_{3}^{t_{3}})$ such that $(v_{1}^{n}(M_{1},b_{1}),u_{2}^{n}(a_{2}^{s_{2}}),u_{3}^{n}(a_{3}^{s_{3}})) \in 
T_{2\eta_{2}}(V_{
1},U_{2},U_{3}|q^{n})$.\footnote{Here, the typicality is with respect to $p_{QV_{1}\underline{U}X\underlineY}$.} If it finds at least one such triple, one of them is chosen according to a predefined rule. Otherwise, i.e, if it finds no triple of codewords in the indexed triple of bins that is jointly typical, it chooses a fixed triple of codewords in $\mathcal{C}_{1}\times \lambda_{2} \times \lambda_{3}$. In either case, let $(v_{1}^{n}(M_{1},B_{1}),u_{2}^{n}(A_{2}^{s_{2}}),u_{3}(A_{3}^{s_{3}}))$ denote the chosen triple of codewords. In the former case, the encoder maps the triple to a vector in $T_{4\eta_{2}}(X|v_{1}^{n}(M_{1},B_{1}),u_{2}^{n}(A_{2}^{s_{2}}),u_{3}(A_{3}^{s_{3}}))$ and feeds the same as input on the channel. In the latter case, it picks a fixed vector in $\mathcal{X}^{n}$ and feeds the same as input on the channel. In either case, let $x^{n}(M_{1},M_{2}^{t_{2}},M_{3}^{t_{3}})$ denote the vector input on the channel.

The operations of decoders $2$ and $3$ are identical and we describe the same through the generic index $j$. Having received vector $Y_{j}^{n}$, it looks for all messages $\hat{m}_{j}^{t_{j}} \in \mathcal{M}_{j}$ such that for some $a_{j}^{s_{j}} \in c_{j}(\hat{m}_{j}^{t_{j}})$, $u_{j}(a_{j}^{s_{j}}) \in T_{8\eta_{2}}(U_{j}|q^{n},Y_{j}^{n})$. If it finds exactly one such message, this is declared as the decoded message. Otherwise, an error is declared.
Decoder $1$ is provided with the codebook $\lambda_{2} \oplus \lambda_{3} \define \left\{ u_{2}^{n}(a_{2}^{s_{2}}) \oplus u_{3}^{n}(a_{3}^{s_{3}}): a_{j}^{s_{j}} \in \fieldpi^{s_{j}}:j=2,3 \right\}$. Note that $\lambda_{2}\oplus \lambda_{3} = \left\{ u_{\oplus}(a_{3}^{s_{3}}) \define a_{3}^{s_{3}}g_{3}\oplus b_{2}^{n}\oplus b_{3}^{n}: a_{3}^{s_{3}} \in \fieldpi^{s_{3}}  \right\}$.
Having received $Y_{1}^{n}$, decoder $1$ looks for all messages $\hat{m}_{1} \in \mathcal{M}_{1}$ such that $(v_{1}^{n}(\hat{m}_{1},b_{1}),u_{\oplus}(a_{3}^{s_{3}})) \in T_{8\eta_{2}}(V_{1},U_{2}\oplus U_{3}|q^{n},Y_{1}^{n})$ for some $(b_{1},a_{3}^{s_{3}}) \in \mathcal{B}_{1} \times \fieldpi^{s_{3}}$. If it finds exactly one such $\hat{m}_{1} \in \mathcal{M}_{1}$, this is declared as the decoded message. Otherwise, an error is declared.

The above encoding and decoding rules map a triplet $\mathcal{C}_{1},\lambda_{2},\lambda_{3}$ of codebooks into a $3-$DBC code\footnote{This map also relies on a `predefined' rule to choose among many jointly typical triples within an indexed pair of bins and furthermore, a rule to decide among many input sequences that is conditionally typical with this chosen triple of codewords.}. Moreover, (\ref{Eqn:3To1DBCOneLayerAchievabilityLowerAndUpperBoundsOnTj}) and (\ref{Eqn:3To1DBCOneLayerAchievabilityLowerAndUpperBoundsOnR1AndK1}) imply that the rates of the corresponding $3-$DBC code satisfy $\frac{\log \mathcal{M}_{1}}{n} \geq R_{1}-\eta$, $\frac{t_{j}\log\pi}{n} \geq R_{j}-\frac{\tilde{\eta}}{4}$ for $j=2,3$. Since every triple $\mathcal{C}_{1},\lambda_{2},\lambda_{3}$ of codebooks, and a choice for the predefined rules map to a corresponding $3-$DBC code, we have characterized an ensemble of $3-$DBC codes, one for each $n \in \naturals$. We now induce a distribution over this ensemble of $3-$DBC codes.

Consider a random triple $\mathcal{C}_{1},\Lambda_{2},\Lambda_{3}$ of codebooks, where $\mathcal{C}_{1}= \left( V_{1}^{n}(m_{1},b_{1}):(m_{1},b_{1}) \in \mathcal{M}_{1}\times \mathcal{B}_{1} \right)$ and $\Lambda_{j}$ is the random PCC $(n,s_{j},t_{j},G_{j},B_{j}^{n},I_{j})$. Note that the joint distribution of $V_{1}^{n}(m_{1},b_{1}):(m_{1},b_{1}) \in \mathcal{M}_{1}\times \mathcal{B}_{1}, G_{2},G_{3/2},B_{2}^{n},B_{3}
^{n},I_{2}(a_{2}^{s_{2}}):a_{2}^{s_{2}} \in \fieldpi^{s_{2}},I_{3}(a_{3}^{s_{3}}): a_{3}^{s_{3}} \in \fieldpi^{s_{3}}$ uniquely characterizes the distribution of $\mathcal{C}_{1},\Lambda_{2},\Lambda_{3}$. We let $V_{1}^{n}(m_{1},b_{1}):(m_{1},b_{1}) \in \mathcal{M}_{1}\times \mathcal{B}_{1}, G_{2},G_{3/2},B_{2}^{n},B_{3}^{n},I_{2}(a_{2}^{s_{2}}):a_{2}^{s_{2}} \in \fieldpi^{s_{2}},I_{3}(a_{3}^{s_{3}}): a_{3}^{s_{3}} \in \fieldpi^{s_{3}}$ be mutually independent. For every $(m_{1},b_{1}) \in \mathcal{M}_{1}\times \mathcal{B}_{1}$, $v_{1}^{n} \in \mathcal{V}_{1}^{n}$, let $P(V_{1}^{n}(m_{1})=v_{1}^{n}) = \prod_{t=1}^{n}p_{V_{1}|Q}(v_{1t}|q_{t})$. The rest of the random objects $G_{2},G_{3/2},B_{2}^{n},B_{3}^{n},I_{2}(a_{2}^{s_{2}}
):a_{2}^{s_{2}} \in \fieldpi^{s_{2}},I_{
3}(a_{3}^{s_{3}}): a_{3}^{s_{3}} \in \fieldpi^{s_{3}}$ are uniformly distributed over their respective range spaces. We have therefore specified the distribution of the random triple $\mathcal{C}_{1},\Lambda_{2},\Lambda_{3}$ of codebooks. For $j=2,3$, we let $U_{j}^{n}(a_{j}^{s_{j}})=a_{j}^{s_{j}}G_{j}\oplus B_{j}^{n}$ denote a generic random codeword in the random codebook $\Lambda_{j}$. Likewise, we let $U_{\oplus}^{n}(a_{3}^{s_{3}})=a_{3}^{s_{3}}G_{3}\oplus B_{2}^{n}\oplus B_{3}^{n}$ denote a generic codeword in $\Lambda_{2}\oplus \Lambda_{3}$. Let $(V_{1}^{n}(M_{1},B_{1}),U_{2}^{n}(A_{2}^{s_{2}}),U_{3}^{n}(A_{3}^{s_{3}}))$ denote the triple of codewords chosen by the encoder and $X^{n}(M_{1},M_{2}^{t_{2}},M_{3}^{t_{3}})$ denote the vector input on the channel.

While the above specifies the distribution of the random triple of $\mathcal{C}_{1} , \Lambda_{2} , \Lambda_{3}$ of codebooks, the predefined rules that map it to a $3-$DBC code is yet unspecified. In other words, the distribution of $(V_{1}^{n}(M_{1},B_{1}),U_{2}^{n}(A_{2}^{s_{2}}),U_{3}^{n}(A_{3}^{s_{3}}))$ and $X^{n}(M_{1},M_{2}^{t_{2}},M_{3}^{t_{3}})$ need to be specified. All the $3-$DBC codes that a particular triplet of codebooks $\mathcal{C}_{1},\lambda_{2},\lambda_{3}$ map to, are uniformly distributed. Alternatively, the encoder picks a triple in \[\left\{ (V_{1}^{n}(M_{1},b_{1}),U_{2}(a_{2}^{s_{2}}),U_{3}(a_{3}^{s_{3}})) \in T_{2\eta_{2}}(V_{1},\underline{U}|q^{n}) : (b_{1},a_{2}^{s_{2}},a_{3}^{s_{3}}) \in \mathcal{B}_{1} \times C_{2}(M_{2}^{t_{2}}) \times C_{3}(M_{3}^{t_{3}})  \right\}\] uniformly at random and independent of other choices. Denoting this random triple as $(V_{1}^{n}(M_{1},B_{1}),U_{2}^{n}(A_{2}^{s_{2}}),$ $U_{3}^{n}(A_{3}^{s_{3}}))$, the encoder picks an input sequence in $T_{2\eta_{2}}(X_{}|(V_{1}^{n}(M_{1},B_{1}),U_{2}^{n}(A_{2}^{s_{2}}),U_{3}^{n}(A_{3}^{s_{3}})))$ uniformly at random and independent of other choices. We have therefore specified the distribution induced on the corresponding ensemble of $3-$DBC codes. In the sequel, we characterize error events associated with this random $3-$DBC code.

If
\begin{eqnarray}
\epsilon_{1} &\define& \underset{\substack{(b_{1},a_{2}^{s_{2}},a_{3}^{s_{3}}) \\ \mathcal{B}_{1} \times C_{2}(M_{2}^{t_{2}}) \times C_{3}(M_{3}^{t_{3}}) }}{\bigcap} \left\{ (V_{1}(M_{1},b_{1}),U_{2}(a_{2}^{s_{2}}),U_{3}(a_{3}^{s_{3}})) \notin T_{2\eta_{2}}(V_{1},U_{2},U_{3}|q^{n})  \right\} \nonumber\\
\epsilon_{31} &\define& \underset{ \substack{(b_{1},a_{3}^{s_{3}}) \\\in \mathcal{B}_{1}\times\fieldpi^{s_{3}}}}{\bigcap} \left\{  (V_{1}(M_{1},b_{1}),U_{\oplus}^{n}(a_{3}^{s_{3}}),Y_{1}^{n}) \notin T_{8\eta_{2}}(V_{1},U_{2}\oplus U_{3},Y_{1}|q^{n}) \right\}, \nonumber\\
\epsilon_{3j} &\define& \underset{\substack{a_{j}^{s_{j}} \in C_{j}(M_{j}^{t_{j}}) }}{\bigcap}\left\{ (U_{j}(a_{j}^{s_{j}}),Y_{j}^{n}) \notin T_{8\eta_{2}}(U_{j},Y_{j}|q^{n}) \right\} \nonumber\\
\epsilon_{41} &\define& \underset{\substack{(b_{1},a_{3}^{s_{3}}) \\\in \mathcal{B}_{1}\times\fieldpi^{s_{3}} }}{\bigcup} \underset{\substack{\hat{m}_{1} \neq M_{1}}}{\bigcup} \left\{  (V_{1}(\hat{m}_{1},b_{1}),U_{\oplus}^{n}(a_{3}^{s_{3}}),Y_{1}^{n}) \in T_{8\eta_{2}}(V_{1},Y_{1}|q^{n}) \right\}, \nonumber\\
\epsilon_{4j} &\define& \underset{\substack{a_{j}^{s_{j}} \in C_{j}(\hat{m}_{j}^{t_{j}}) \\\hat{m}_{j}^{t_{j}}\neq M_{j}^{t_{j}}}}{\bigcup}\left\{ (U_{j}(a_{j}^{s_{j}}),Y_{j}^{n}) \in T_{8\eta_{2}}(U_{j},Y_{j}|q^{n}) \right\}, \nonumber
\end{eqnarray}
then $\epsilon \define \underset{j=1}{\overset{3}{\cup}} \left( \epsilon_{1}\cup \epsilon_{3j}\cup \epsilon_{4j} \right)$ contains the error event. Our next task is to derive an upper bound on $P(\epsilon)$.

Let
\begin{eqnarray}
 \label{Eqn:3DBCProofListIndicator}
 \phi(m_{1},m_{2}^{t_{2}},m_{3}^{t_{3}}) &\define &\!\!\!\!\sum_{\substack{(b_{1},a^{s_{2}},a^{s_{3}}) \in \\ \mathcal{B}_{1} \times \fieldpi^{s_{2}} \times \fieldpi^{s_{3}}}} 1_{\left\{ (V_{1}^{n}(m_{1},b_{1}),U_{2}(a^{s_{2}}),U_{3}(a^{s_{3}})) \in T_{2\eta_{2}}(V_{1},U_{2},U_{3}|q^{n}),I(a^{s_{j}})=m_{j}^{t_{j}}:j=2,3  \right\}}, \nonumber\\
 \epsilon_{l} &\define &\!\!\!\!\left\{ \phi(M_{1},M_{2}^{t_{2}},M_{3}^{t_{3}}) < \mathcal{L}(n) \right\}, \mbox{ where } \mathcal{L}(n) \define \frac{1}{2} \Expectation\left\{ \phi(M_{1},M_{2}^{t_{2}},M_{3}^{t_{3}})\right\}.\nonumber
\end{eqnarray}
Clearly $P(\epsilon) \leq P(\epsilon_{l})+P(\epsilon_{l}^{c}\cap \epsilon)$, and it therefore suffices to derive upper bounds on each of these terms.\newline

\textit{Upper bound on $P(\epsilon_{l})$:-} Substituting for $\mathcal{L}(n)$, we have
\begin{eqnarray}
 \label{Eqn:3DBCProofSecondMomentMethod}
 P(\epsilon_{l}) &\leq& P(\left\{|\phi(M_{1},M_{2}^{t_{2}},M_{3}^{t_{3}})-\Expectation \left\{ \phi(M_{1},M_{2}^{t_{2}},M_{3}^{t_{3}}) \right\}|\geq \frac{\Expectation \left\{ \phi(M_{1},M_{2}^{t_{2}},M_{3}^{t_{3}}) \right\}}{2}\right\})\nonumber\\
 \label{Eqn:3To1DBCOneLayerEncoderErrorEventUseOfCheybyshevIneq}
 &\leq& \frac{4\Var\left\{ \phi(M_{1},M_{2}^{t_{2}},M_{3}^{t_{3}}) \right\}}{\left(\Expectation\left\{ \phi(M_{1},M_{2}^{t_{2}},M_{3}^{t_{3}}) \right\}\right)^{2}}
\end{eqnarray}
from the Cheybyshev inequality. In appendix \ref{AppSec:AnalysisOfEncoderErrorEventFor3To1DBC}, we evaluate the variance and expectation of $\phi(M_{1},M_{2}^{t_{2}},M_{3}^{t_{3}})$ and derive an upper bound on $P(\epsilon_{l})$. In particular, we prove for $n \geq \max\{ N_{1}(\eta), N_{2}(\eta_{2})\}$,
\begin{equation}
\label{Eqn:3To1DBCOneLayerAchievabilityFinalupperBoundOnEncoderErrorEvent}
 P(\epsilon_{1}) \leq (28+8\pi)\exp \left\{ -n\left( \delta_{1}-\frac{\eta}{8}-48\eta_{2}\right) \right\}.
\end{equation}

Now consider $\epsilon_{l}^{c}\cap \epsilon_{1}$. Note that $P(\epsilon_{1}) = P(\phi(M_{1},M_{2}^{t_{2}},M_{3}^{t_{3}})=0)$, and hence $\epsilon_{l}^{c}\cap \epsilon_{1} = \phi$, the empty set, if $\mathcal{L}(n)> 1$. At the end of appendix \ref{AppSec:AnalysisOfEncoderErrorEventFor3To1DBC}, we prove $\mathcal{L}(n)> 1$ for sufficiently large $n$. We are left to derive an upper bound on $P(\epsilon_{l}^{c} \cap \underset{j=1}{\overset{3}{\cup}} \left( \epsilon_{3j}\cup \epsilon_{4j} \right))$.

Since $\mathcal{L}(n)> 1$, $\epsilon_{l}^{c} \subseteq \epsilon_{1}^{c}$, it suffices to derive an upper bound on the terms $P(\epsilon_{1}^{c}\cap \left( \epsilon_{31}\cup \epsilon_{32}\cup\epsilon_{33}\right))$, $P(\epsilon_{l}^{c}\cap \left( \epsilon_{31}\cup \epsilon_{32}\cup\epsilon_{33}\right)^{c}\cap\epsilon_{4j}):j=1,2,3$.

\textit{Upper bound on $P(\epsilon_{1}^{c}\cap \left( \epsilon_{31}\cup \epsilon_{32}\cup\epsilon_{33}\right))$:-} Consider $P(\epsilon_{1}^{c} \cap \epsilon_{2})$, where
\begin{eqnarray}
 \label{Eqn:3To1DBCOneLayerAchievabilityInputIsTypical}
 \epsilon_{2} \define \left\{ (V_{1}(M_{1},B_{1}),U_{2}(A_{2}^{s_{2}}),U_{3}(A_{3}^{s_{3}}),X^{n}) \notin T_{4\eta_{2}}(V_{1},\underline{U},X|q^{n})  \right\}.\nonumber
\end{eqnarray}
By the encoding rule $P(\epsilon_{1}^{c}\cap \epsilon_{2}) = 0$. Since the encoding rule also ensures $\epsilon_{1}^{c}\cap (\epsilon_{31}\cup \epsilon_{32}\cup \epsilon_{33}) \subseteq \epsilon_{1}^{c} \cap \epsilon_{3}$, where
\begin{eqnarray}
 \label{Eqn:3To1DBCOneLayerLegitimateCodewordsNotBeingJointlyTypical}
 \epsilon_{3} \define \left\{ (V_{1}^{n}(M_{1},B_{1}),U_{2}^{n}(A_{2}^{s_{2}}),U_{3}^{n}(A_{3}^{s_{3}}),X^{n}(M_{1},M_{2}^{t_{2}},M_{3}^{t_{3}}),\underline{Y}^{n}) \notin T_{8\eta_{2}}(V_{1},\underline{U},X,\underline{Y})  \right\}, \nonumber
\end{eqnarray}
it suffices to derive an upper bound on $P((\epsilon_{1}\cup\epsilon_{2})^{c}\cap \epsilon_{3})$. This follows from conditional frequency typicality and $p_{\underlineY|X\PrivateRV_{1}\underlineSemiPrivateRV\TimeSharingRV}=p_{\underlineY|X}=W_{\underlineY|X}$. We conclude the existence of $N_{3}(\eta_{2})$ such that for all $n\geq N_{4}(\eta_{2})$, $P((\epsilon_{1}\cup\epsilon_{2})^{c}\cap \epsilon_{3}) \leq \frac{\eta}{32}$.

\textit{Upper bound on $P((\epsilon_{l}\cup \epsilon_{2}\cup \epsilon_{3})^{c}\cap\epsilon_{41})$} : We refer the reader to appendix \ref{AppSec:AnalysisOfDecoder1ErrorEventFor3To1DBC} for the derivation of an upper bound on $P((\epsilon_{1}\cup \epsilon_{2}\cup \epsilon_{3})^{c}\cap\epsilon_{41})$. Therein, we prove existence of $N_{4}(\eta_{2}) \in \naturals$ such that for all $n \geq \max \left\{ N_{1}(\eta) , N_{4}(\eta_{2})\right\}$, we have
\begin{equation}
 \label{Eqn:3DBCOneLayerAchievabilityUpperBoundOnDecoder1ErrorEventCopiedFromAppendix}
 P((\epsilon_{l}\cup \epsilon_{2}\cup \epsilon_{3})^{c}\cap \epsilon_{41}) 
  \leq  4 \exp \left\{ -n\left(\delta_{1}+\frac{\eta}{4}-56\eta_{2} \right) \right\}.
\end{equation}

\textit{Upper bound on $P((\epsilon_{l}\cup \epsilon_{2}\cup \epsilon_{3})^{c}\cap\epsilon_{4j})$} : For $j=2,3$, decoder $j$ performs a simple PTP decoding and therefore the reader might expect the analysis here to be quite standard. The partitioned coset code structure of user $j$'s codebook that involves correlated codewords and bins lends some technical complexities. We flesh out the details in appendix \ref{AppSec:AnalysisOfDecoder2And3ErrorEventsFor3To1DBC}. In particular, we prove (\ref{Eqn:3DBCProofFinalInequalityInDecoder2ErrorEvent}) existence of $N_{5}(\eta_{2}) \in \naturals$ such that for all $n \geq \max \{ N_{1}(\eta), N_{5}(\eta_{2})\}$
\begin{equation}
 \label{Eqn:3To1DBCOneLayerAchievabilityFinalUpperBoundOnDecoder2And3ErrorEventsCopiedFromAppendix}
 P((\epsilon_{1}\cup\epsilon_{2}\cup\epsilon_{3})^{c}\cap \epsilon_{4j}) \leq 2\exp \left\{ -n \left(\delta_{1}-32\eta_{2} \right) \right\}.
\end{equation}
Let us now compile the upper bounds derived in (\ref{Eqn:3To1DBCOneLayerAchievabilityFinalupperBoundOnEncoderErrorEvent}), (\ref{Eqn:3DBCOneLayerAchievabilityUpperBoundOnDecoder1ErrorEventCopiedFromAppendix}) and (\ref{Eqn:3To1DBCOneLayerAchievabilityFinalUpperBoundOnDecoder2And3ErrorEventsCopiedFromAppendix}). For $n \geq \max \{ N_{1}(\eta), N_{2}(\eta_{2}) N_{3}(\eta_{2}),N_{4}(\eta_{2}),$ $N_{5}(\eta_{2})\}$, we have
\begin{eqnarray}
 \label{Eqn:3To1BCOneLayerAchievabilityCompilingAllUpeerBounds}
P(\epsilon_{1}\cup\epsilon_{2}\cup\epsilon_{3}\cup\epsilon_{41}\cup\epsilon_{42}) \leq \frac{\eta}{32}+(34+8\pi)\exp \left\{ -n\left(\delta_{1}-\frac{\eta}{8}-56\eta_{2}\right) \right\}.
\end{eqnarray}
Recall that $\eta $ is chosen to be $\min \left\{ \tilde{\eta}, \delta_{1} \right\}$. By choosing $\eta_{2}=\frac{\eta}{56 \times 8}$, we have $\delta_{1}-\frac{\eta}{8}-\frac{\eta}{8} > \frac{3\eta}{4}$ and we can drive the probability of error below $\tilde{\eta}$ by choosing $n$ sufficiently large.

The only element left to argue is the random code satisfies the cost constraint. Since $P(\epsilon_{1}\cup \epsilon_{2})$ is lesser than $\frac{\tilde{\eta}}{2}$ for sufficiently large $n$, the encoder inputs a vector on the channel that is typical with respect $p_{X}$ with probability $1-\frac{\tilde{\eta}}{2}$. Since $\Expectation\left\{\kappa(X)\right\} \leq \tau$, a standard argument proves that the expected cost of the input vector can be made arbitrarily close to $\tau$ by choosing $n$ sufficiently large and $\eta_{2}$ sufficiently small. We leave the details to the reader.
\end{proof}

For example \ref{Ex:3-BCExample}, if $\tau * \delta_{1} \leq \min \left\{ \delta_{2},\delta_{3} \right\}$, then $(h_{b}(\tau*\delta_{1})-h_{b}(\delta_{1}),1-h_{b}(\delta_{2}),1-h_{b}(\delta_{3})) \in \beta_{1}(W_{\underlineY|X},\kappa,\tau)$. Indeed, it can be verified that if $\tau * \delta_{1} \leq \min \left\{ \delta_{2},\delta_{3} \right\}$, then $(h_{b}(\tau*\delta_{1})-h_{b}(\delta_{1}),1-h_{b}(\delta_{2}),1-h_{b}(\delta_{3})) \in \beta_{1}(p_{\underlineSemiPrivateRV\PrivateRV_{1} X \underlineY})$, where $p_{\underlineSemiPrivateRV \PrivateRV_{1}X}=p_{V_{1}}p_{U_{21}}p_{U_{31}}1_{\left\{X_{1}=V_{1}\right\}}1_{\left\{X_{2}=U_{21}\right\}}1_{\left\{X_{3}=U_{31}\right\}}$, $p_{U_{21}}(1)=p_{U_{31}}(1)=\frac{1}{2}$ and $p_{V_{1}}(1)=\tau$.

\subsubsection{Non-additive example} 
We now present a non-additive
example for which we analytically prove strict sub-optimality of
$\UM$technique. 
\begin{example}
\label{Ex:A3To1-OR-BC}
Consider the $3-$DBC $(\InputAlphabet,\underlineSetY,W_{\underlineY|X},\underline{\kappa})$ depicted in figure \ref{Fig:ORBC}, where $\InputAlphabet \define \left\{ 0,1 \right\}\times \left\{ 0,1 \right\} \times \left\{ 0,1 \right\}, \OutputAlphabet_{1}=\OutputAlphabet_{2}=\OutputAlphabet_{3}=\left\{ 0,1 \right\}, W_{\underlineY|X}(y_{1},y_{2},y_{3}|x_{1}x_{2}x_{3})=BSC_{\delta_{1}}(y_{1}|x_{1}\oplus (x_{2}\vee x_{3}))BSC_{\delta_{2}}(y_{2}|x_{2})BSC_{\delta_{3}}(y_{3}|x_{3})$, where $\delta_{j} \in (0,\frac{1}{2}):j=1,2,3$, $BSC_{\eta}(1|0)=BSC_{\eta}(0|1)=1-BSC_{\eta}(0|0)=1-BSC_{\eta}(1|1)=\eta$ for any $\eta \in (0,\frac{1}{2})$ and the cost function $\underline{\kappa} = (\kappa_{1},\kappa_{2},\kappa_{3})$, where $\kappa_{j} (x_{1}x_{2}x_{3}) = 1_{\left\{x_{j}=1\right\}}$.
\end{example}
\begin{figure}
\centering
 \includegraphics[width=1.65in]{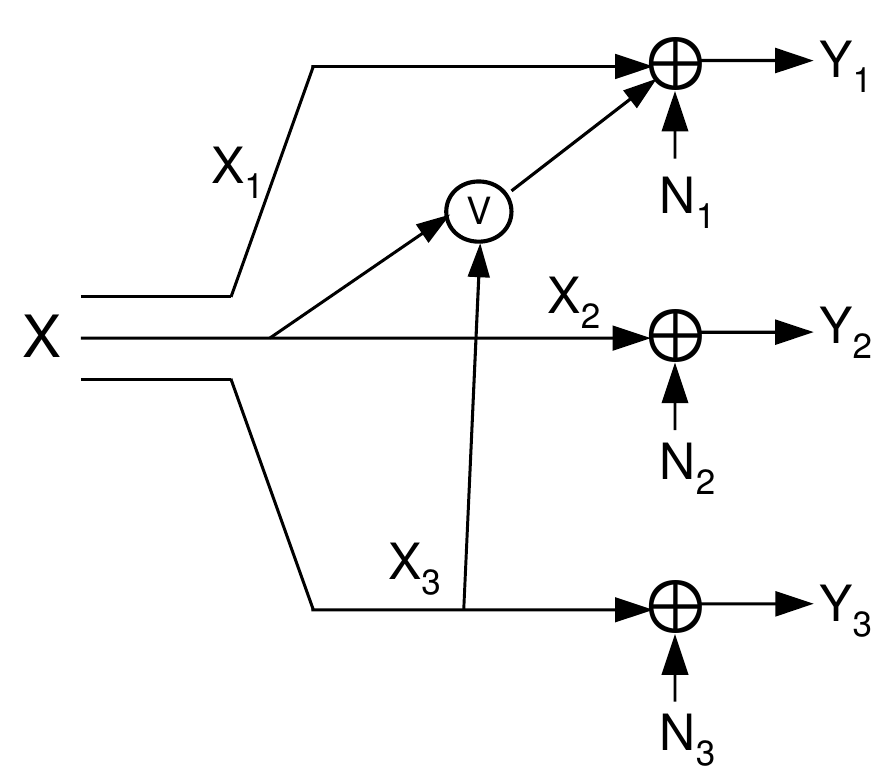}
\caption{The $3-$BC described in example \ref{Ex:A3To1-OR-BC}.}
\label{Fig:ORBC}
\end{figure}

We begin by stating the conditions for sub-optimality of $\UM$technique.
\begin{lemma}
 \label{Lem:ConditionsForSub-OptimalityOfMartonTechnique}
Consider example \ref{Ex:A3To1-OR-BC} with $\delta \define \delta_{2}=\delta_{3} \in (0,\frac{1}{2})$ and $\tau \define \tau_{2}=\tau_{3} \in (0,\frac{1}{2})$. Let $\beta \define \delta_{1}*(2\tau -\tau^{2})$. The rate triple $(h_{b}(\tau_{1} * \delta_{1})-h_{b}(\delta_{1}),h_{b}(\tau * \delta)-h_{b}(\delta),h_{b}(\tau * \delta)-h_{b}(\delta)) \notin \alpha_{\mathscr{U}}(\underline{\tau})$ if
\begin{equation}
 \label{Eqn:3To1-OR-ICConditionForStrictSubOptimalityOfMarton}
h_{b}(\tau_{1} * \delta_{1})-h_{b}(\delta_{1})+2(h_{b}(\tau * \delta)-h_{b}(\delta))> h_{b}(\tau_{1}(1-\beta)+(1-\tau_{1})\beta)-h_{b}(\delta_{1}).
\end{equation}
\end{lemma}
\begin{proof}
Please refer to appendix \ref{AppSec:UpperBoundMartonCodingTechniuque}
 \end{proof}
We now derive conditions under which $(h_{b}(\tau_{1} * \delta_{1})-h_{b}(\delta_{1}),h_{b}(\tau * \delta)-h_{b}(\delta),h_{b}(\tau * \delta)-h_{b}(\delta)) \in \beta_{1}(W_{\underline{Y}|X},\underline{\kappa},\underline{\tau})$.
\begin{lemma}
 \label{Lem:CndsRteTrplAchvblUsingCosets}
Consider example \ref{Ex:A3To1-OR-BC} with $\delta \define
\delta_{2}=\delta_{3} \in (0,\frac{1}{2})$ and $\tau \define
\tau_{2}=\tau_{3} \in (0,\frac{1}{2})$. Let $\beta \define
\delta_{1}*(2\tau -\tau^{2})$. 
The rate triple $(h_{b}(\tau_{1} *
\delta_{1})-h_{b}(\delta_{1}),h_{b}(\tau *
\delta)-h_{b}(\delta),h_{b}(\tau * \delta)-h_{b}(\delta)) \in
\beta_{1}(W_{\underline{Y}|X},\underline{\kappa},\underline{\tau})$
i.e., achievable using coset codes, if, 
\begin{eqnarray}
 \label{Eqn:3To1-OR-BCConditionForAchievabilityUsingCosetCodes}
h_{b}(\tau * \delta)-h_{b}(\delta)\leq  \theta ,
\end{eqnarray}
where $\theta
=h_{b}(\tau)-h_{b}((1-\tau)^{2})-(2\tau-\tau^{2})h_{b}(\frac{\tau^{2}}{2\tau-\tau^{2}})-h_{b}(\tau_{1}
* \delta_{1})+h_{b}(\tau_{1}*\beta)$. Moreover $\mathbb{C}_1(\underline{\tau})=h_b(\tau_1*\delta_1)-h_b(\delta_1)$.
\end{lemma}
\begin{proof}
 The proof only involves identifying the appropriate test channel
 $p_{\underline{U}V_{1}} \in
 \mathbb{D}_{1}(W_{\underline{Y}|X},\underline{\kappa},\underline{\tau})$. Let
 $\TimeSharingRVSet = \phi$ be empty,
 $\SemiPrivateRVSet_{21}=\SemiPrivateRVSet_{31}=\mathcal{F}_{3}$.
 Let $p_{X_{1}}(1)=1-p_{X_{1}}(0)=\tau_{1}$. Let
 $p_{U_{j1}X_{j}}(0,0)=1-p_{U_{j1}X_{j}}(1,1)=1-\tau$ and therefore
 $P(U_{j1}=2)=P(X_{j} \neq U_{j})=0$ for $j=2,3$. It is easily
 verified that $p_{\underline{U}V_{1}\ulineInputRV\ulineOutputRV} \in
 \mathbb{D}_{1}(W_{\underline{Y}|X},\underline{\kappa},\underline{\tau})$,
 i.e, in particular respects the cost constraints. 

The choice of this test channel, particularly the ternary field, is motivated by $H(X_{2}\vee X_{3}|U_{21}\oplus_{3} U_{31})=0$. The decoder $1$ can reconstruct the interfering pattern after having decoded the ternary sum of the codewords. It maybe verified that for this test channel $p_{QU_{21}U_{31}\ulineInputRV\ulineOutputRV}$, $\beta_{1}(p_{QU_{21}U_{31}\ulineInputRV\ulineOutputRV})$ is defined as the set of rate triples $(R_{1},R_{2},R_{3}) \in [0,\infty)^{3}$ that satisfy
\begin{eqnarray}
 \label{Eqn:3To1ORICDescriptionOfTheRateRegionForTernaryTestChannelChosenForCosetCodes}
&R_{1} < \min\left\{0,\theta \right\}+h_{b}(\tau_{1}*\delta_{1})-h_{b}(\delta_{1}), ~~~
R_{j} < h_{b}(\tau * \delta)-h_{b}(\delta):j=2,3\nonumber\\
&R_{1}+R_{j} < h_{b}(\tau_{1}*\delta_{1})-h_{b}(\delta_{1})+\theta,
\end{eqnarray}
where $\theta$ is as defined in the statement of the lemma. Clearly, $(h_{b}(\tau_{1} * \delta_{1})-h_{b}(\delta_{1}),h_{b}(\tau * \delta)-h_{b}(\delta),h_{b}(\tau * \delta)-h_{b}(\delta)) \in \cocl(\beta_{1}(p_{\underlineSemiPrivateRV\PrivateRV_{1} X \underlineY}))$ if (\ref{Eqn:3To1-OR-BCConditionForAchievabilityUsingCosetCodes}) is satisfied.
Using standard information-theoretic arguments, one can easily
establish that $\mathbb{C}_1(\underline{\tau}) \leq h_b(\tau_1 *
\delta_1)-h_b(\delta_1)$. This completes the proof.
\end{proof}

Conditions (\ref{Eqn:3To1-OR-ICConditionForStrictSubOptimalityOfMarton}) and (\ref{Eqn:3To1-OR-BCConditionForAchievabilityUsingCosetCodes}) are \textit{not} mutually exclusive. It maybe verified that the choice $\tau_{1}=\frac{1}{90}$, $\tau=0.15$, $\delta_{1}=0.01$ and $\delta=0.067$ satisfies both conditions. We therefore conclude the existence of non-additive $3-$DBC's for which PCC yield strictly larger achievable rate regions.
We extract the key elements of lemmas
\ref{Lem:ConditionsForSub-OptimalityOfMartonTechnique} and
\ref{Lem:CndsRteTrplAchvblUsingCosets} in the following theorem. 

\begin{thm}
 \label{Thm:3To1ORICConcludingStatement}
For a vector $3-$DBC studied in example \ref{Ex:A3To1-OR-BC} that
satisfies
(\ref{Eqn:3To1-OR-ICConditionForStrictSubOptimalityOfMarton}) and
(\ref{Eqn:3To1-OR-BCConditionForAchievabilityUsingCosetCodes}),
linear coding technique achieves
$\mathbb{C}_1(\underline{\tau})=h_b(\tau_1 *\delta_1)-h_b(\delta_1)$,
and  $\UM$technique cannot achieve this performance. 
In particular, for the choice $\tau_{1}=\frac{1}{90}$, $\tau=0.15$,
$\delta_{1}=0.01$ and $\delta=0.067$, these conditions are satisfied. 
\end{thm}

\subsection{Step II: Incorporating private codebooks}
 \label{SubSec:AnEnlargedAchievableRateRegion}
We revisit the coding technique proposed in section \ref{SubSec:DecodingSumOfCodewordsUsingNestedCosetCodes}. Observe that (i) user $1$ decodes a sum of the entire codewords transmitted to users $2$ and $3$ and (ii) users $2$ and $3$ decode only their respective codewords. This technique may be enhanced in the following way. User $1$ can decode the sum of \textit{one component} of user $2$ and $3$ signals each. In other words, we may include private codebooks for users $2$ and $3$.

Specifically, in addition to auxiliary alphabet sets $\PrivateRVSet_{1},\SemiPrivateRVSet_{2},\SemiPrivateRVSet_{3}$ introduced in section \ref{SubSec:DecodingSumOfCodewordsUsingNestedCosetCodes}, let $\PrivateRVSet_{2},\PrivateRVSet_{3}$ denote arbitrary finite sets and $p_{U_{2}U_{3}V_{1}V_{2}V_{3}}$ denote a PMF on $\SemiPrivateRVSet_{2} \times \SemiPrivateRVSet_{3} \times \PrivateRVSet_{1}\times \PrivateRVSet_2 \times \PrivateRVSet_3$. For $j=2,3$, consider a random codebook $\mathcal{C}_{j} \subseteq \mathcal{V}_{j}^{n}$ of rate $K_{j}+L_{j}$ whose codewords are independently chosen according to $p^{n}_{V_{j}}$. Codewords of $\mathcal{C}_{j}$ are independently and uniformly partitioned into $\exp \left\{ nL_{j}\right\}$ bins. The distribution induced on $\mathcal{C}_{1},\Lambda_{2},\Lambda_{3}$ is identical to that in section \ref{SubSec:DecodingSumOfCodewordsUsingNestedCosetCodes}. Moreover, the triplet $\mathcal{C}_{2},\mathcal{C}_{3}, (\mathcal{C}_{1},\Lambda_{2},\Lambda_{3})$ are mutually independent.\footnote{Here $(\mathcal{C}_{1},\Lambda_{2},\Lambda_{3})$ is treated as a single random object.} Having specified the distribution of codewords of $\mathcal{C}_{j}:j=2,3$, we have thus specified the distribution of quintuple of random codebooks. Messages of users' $2$ and $3$ are split into two parts each. One part of user $2$'s ($3$'s) message, of rate $T_{2}\log\pi$ ($T_{3}\log\pi$), index a bin in $\Lambda_{2}$ ($\Lambda_{3}$), and the other part, of rate $L_{2}$ ($L_{3}$), index a bin in $\mathcal{C}_{2}$ ($\mathcal{C}_{3}$). User $1$'s message indexes a bin in $\mathcal{C}_{1}$. The encoder looks for a quintuple of jointly typical codewords with respect to $p_{\underline{U}\underlinePrivateRV}$, in the quintuple of indexed bins. Following a second moment method similar to that employed in appendix \ref{AppSec:AnalysisOfEncoderErrorEventFor3To1DBC}, it can be proved that the encoder finds at least one jointly typical triple if
\begin{eqnarray}
 \label{Eqn:ManyToOneSourceCodingBounds}
(S_{A}-T_{A})\log\pi+K_{B} &>& |A|\log_{2} \pi + \sum_{b \in B} H(V_{b}) - H(U_{A},V_{B})\footnotemark \\
\label{Eqn:SourceCodingBoundInvolvingThatSillySumTerm}
\max \{ S_{2},S_{3}\}\log\pi +K_{B} &>& \log\pi +  \sum_{b \in B} H(V_{b}) - \min_{\theta \in \fieldpi\setminus\{0\}}H(U_{2}\oplus \theta U_{3},V_{B})
\end{eqnarray}\footnotetext{We remind the reader that the empty sum has value $0$, i.e, $\sum_{a \in \phi}=0$}
for all $A \subseteq \left\{2,3\right\}, B \subseteq \left\{ 1,2,3 \right\}$, where $S_{A} = \sum _{j \in A}S_{j}$, $K_{B} = \sum_{b \in B}K_{b}$, $U_{A} = (U_{j}:j \in A)$ and $V_{B}=(V_{b}:b \in B)$.\footnote{Recall that $\fieldpi=\mathcal{U}_{2}=\mathcal{U}_{3}$.} Having chosen one such jointly typical quintuple, say $(\SemiPrivateRV_{2}^{n},\SemiPrivateRV_{3}^{n},\underlinePrivateRV^{n})$, the encoder generates a vector $X^{n}$ according to $p^{n}_{X|\underlinePrivateRV\SemiPrivateRV_{2}\SemiPrivateRV_{3}}(\cdot | \underlinePrivateRV^{n},\SemiPrivateRV_{2}^{n},\SemiPrivateRV_{3}^{n})$ and inputs the same on the channel.

The operations of decoders $2$ and $3$ are identical and we describe one of them. Decoder $3$ receives $Y_{3}^{n}$ and looks for all pairs of codewords in the Cartesian product $\Lambda_{3} \times \mathcal{C}_{3}$ that are jointly typical with $Y_{3}^{n}$ with respect to $p_{U_{3}V_{3}Y_{3}}$. If all such pairs belong to a unique pair of bins, the corresponding pair of bin indices is declared as the decoded message of user $3$. Else an error is declared. It can be proved that if
\begin{eqnarray}
 \label{Eqn:UpperboundOnPublicCodebook}
S_{j}\log\pi < \log_{2} \pi-H(U_{j}|V_{j},Y_{j}),&&
K_{j}+L_{j} < H(V_{j})-H(V_{j}|Y_{j},U_{j})\\
\label{Eqn:UpperboundOnPublicAndPrivateCodebook}
S_{j}\log\pi+K_{j}+L_{j} &<& \log_{2} \pi+H(V_{j})-H(V_{j},U_{j}|Y_{j})
\end{eqnarray}
for $j=2,3$, then probability of users $2$ or $3$ decoding into an incorrect message falls exponentially with $n$.

Operation of decoder $1$ is identical to that described in section \ref{SubSec:DecodingSumOfCodewordsUsingNestedCosetCodes}. If (\ref{Eqn:BoundsOnUser1Rates}) holds, then probability of error at decoder 1 falls exponentially with $n$. Substituting $R_{1}=K_{1}, R_{2}=T_{2}\log\pi+L_{2}, R_{3}=T_{3}\log\pi+L_{3}$ and eliminating $S_{2}\log\pi,S_{3}\log\pi,K_{1},K_{2},K_{3}$ in (\ref{Eqn:BoundsOnUser1Rates}), (\ref{Eqn:ManyToOneSourceCodingBounds})-(\ref{Eqn:UpperboundOnPublicAndPrivateCodebook}) yields an achievable rate region. We provide a mathematical characterization of this achievable rate region.
\begin{definition}
 \label{Defn:TestChannelsForTwoLayersWithOneUserDecodingInterference}
Let $\SetOfDistributions_{2}^{f}(W_{\underlineY|X},\kappa,\tau)$ denote the collection of PMFs $p_{Q\SemiPrivateRV_{2}\SemiPrivateRV_{3}\PrivateRV_{1}\PrivateRV_{2}\PrivateRV_{3} X \underlineY}$ defined on $\mathcal{Q}\times \SemiPrivateRVSet_{2}\times \SemiPrivateRVSet_{3}\times \PrivateRVSet_{1} \times \PrivateRVSet_{2} \times\PrivateRVSet_{3} \times \InputAlphabet \times \underlineSetY$, where (i) $\SemiPrivateRVSet_{2}=\SemiPrivateRVSet_{3}=\fieldpi$ is the finite field of cardinality $\pi$, $\mathcal{Q},\PrivateRVSet_{1},\PrivateRVSet_{2}, \PrivateRVSet_{3}$ are finite sets, (ii) $p_{\underlineY|X\underlinePrivateRV\underlineSemiPrivateRV Q}=p_{\underlineY|X}=W_{\underlineY|X}$, and (iii) $\Expectation\left\{ \kappa(X) \right\} \leq \tau$. For $p_{Q\underlineSemiPrivateRV\underlinePrivateRV X \underlineY} \in \SetOfDistributions_{2}^{f}(W_{\underlineY|X},\kappa,\tau)$, let $\beta_{2}(p_{Q\underlineSemiPrivateRV\underlinePrivateRV X \underlineY})$ be defined as the set of triples $(R_{1},R_{2},R_{3}) \in [0,\infty)^{3}$ for which there exists 
nonnegative numbers $S_{2},T_{2},S_{3},T_{3},K_{j},L_{j}:j=1,2,3$ such that $R_{1}=K_{1}, R_{2}=T_{2}\log\pi+L_{2}, R_{3}=T_{3}\log\pi+L_{3}$,
\begin{eqnarray}
& (S_{A}-T_{A})\log\pi+K_{B} > |A|\log_{2} \pi + \sum_{b \in B} H(V_{b}|Q) - H(U_{A},V_{B}|Q),\footnotemark \nonumber\\
&\max \{ S_{2},S_{3}\}\log\pi +K_{B} > \log\pi +  \sum_{b \in B} H(V_{b}|Q) - \min_{\theta \in \fieldpi\setminus\{0\}}H(U_{2}\oplus \theta U_{3},V_{B}|Q), \nonumber\\
&K_{1}\!+\!R_{1} \!< \!I(V_{1};U_{2}\oplus U_{3},Y_{1}|Q),~~~
K_{1}\!+\!R_{1}\!+\!S_{j}\!\log\pi < \log \pi +H(V_{1}|Q)\!-\! H(\PrivateRV_{1},\SemiPrivateRV_{2}\oplus \SemiPrivateRV_{3}|Q,Y_{1}):j=2,3,\nonumber\\
&S_{j}\log\pi < \log_{2} \pi-H(U_{j}|Q,V_{j},Y_{j}):j=2,3,~~
K_{j}+L_{j} < H(V_{j}|Q)-H(V_{j}|Q,Y_{j},U_{j}):j=2,3\nonumber\\
&S_{j}\log\pi+K_{j}+L_{j} < \log_{2} \pi+H(V_{j}|Q)-H(V_{j},U_{j}|Q,Y_{j}):j=2,3\nonumber
\end{eqnarray}
for all $A \subseteq \left\{2,3\right\}, B \subseteq \left\{ 1,2,3 \right\}$, where $S_{A} = \sum _{j \in A}S_{j}$, $K_{B} = \sum_{b \in B}K_{b}$, $U_{A} = (U_{j}:j \in A)$ and $V_{B}=(V_{b}:b \in B)$. Let
\begin{eqnarray}
 \label{Eqn:TwoLayerAchievableRateRegionWithOneUserDecodingInterference}
 \beta_{2}(W_{\underlineY|X},\kappa,\tau) = \cocl\left(\underset{\substack{p_{Q\underlineSemiPrivateRV\underlinePrivateRV X \underlineY} \\\in \SetOfDistributions_{2}^{f}(W_{\underlineY|X},\kappa,\tau)}
}{\bigcup}\beta_{2}(p_{Q\underlineSemiPrivateRV\underlinePrivateRV X \underlineY})\right).\nonumber
\end{eqnarray}
\end{definition}
\begin{thm}
 \label{Thm:AchievableRateRegionsFor3To1BCUsingNestedCosetCodes}
For a $3-$DBC $(\mathcal{X},\underlineSetY,W_{\underlineY|X},\kappa)$, $\beta_{2}(W_{\underlineY|X},\kappa,\tau)$ is achievable, i.e., $\beta_{2}(W_{\underlineY|X},\kappa,\tau) \subseteq \mathbb{C}(W_{\underlineY|X},\kappa,\tau)$.
\end{thm}
The proof is similar to that of theorem \ref{Thm:OneLayerAchievableRateRegionWithOneUserDecodingInterference}. The only differences being (i) the encoder looks for a quintuple of codewords instead of a triple, and (ii) decoders $2$ and $3$ decode from a pair of codebooks. These can be handled using the techniques developed in proof of \ref{Thm:OneLayerAchievableRateRegionWithOneUserDecodingInterference}. The reader in need of an elaboration is referred to \cite[Thm. 5]{201403arXiv_PadPra}.

\subsection{Step III: $\PCC$region : Using PCC to manage interference over a $3-$DBC}
\label{Sec:AchievableRateRegionFor3BCUsingNestedCosetCodes}
Here we employ PCC to manage interference seen by each receiver. In the sequel, we propose a simple extension of the technique presented in section \ref{SubSec:AnEnlargedAchievableRateRegion} to enable each user decode a bivariate interference component.
Throughout the following discussion $i,j,k$ denote distinct indices in $\left\{ 1,2,3 \right\}$. Let $\SemiPrivateRVSet_{ji}=\fieldpii, \SemiPrivateRVSet_{jk}=\fieldpik$ be finite fields and $\PrivateRVSet_{j}$ be an arbitrary finite set. User $j$ splits it's message $M_{j}$ into three parts $(M_{ji}^{U},M_{jk}^{U},M_{j}^{V})$ of rates $T_{ji}\log\pi_{i}, T_{jk}\log\pi_{k},L_{j}$ respectively. User $j$'s message indexes three codebooks - $\mathcal{C}_{j}, \Lambda_{ji},\Lambda_{jk}$ - whose structure is described in the following. Consider a random codebook $\mathcal{C}_{j} \subseteq \mathcal{V}_{j}^{n}$ of rate $K_{j}+L_{j}$ whose codewords are independently chosen according to $p^{n}_{V_{j}}$. Codewords of $\mathcal{C}_{j}$ are independently and uniformly partitioned into $\exp \left\{ nL_{j}\right\}$ bins. Consider random PCC $(n,nS_{ji},nT_{ji},G_{ji},B_{ji}^{n},I_{ji})$ and $(n,nS_{jk},nT_{jk},G_{jk},B_{jk}^{n},I_{jk})$ denoted $\Lambda_{ji}$ and $\Lambda_{jk}$ respectively. Observe that PCC $\Lambda_{ji}$ and $\Lambda_{ki}$ are built over the same finite field $\fieldpii$. The corresponding linear codes are nested, i.e., if $S_{ji} \leq S_{ki}$, then $G_{ki}^{t} = \left[ G_{ji}^{t} ~ G_{ki/ji}^{t}\right]$ where $G_{ki/ji} \in \fieldpi^{n(S_{ji}-S_{ki}) \times n}$, and vice versa. We have thus specified the structure of $9$ random codebooks. We now specify the distribution of these random codebooks.

The random PCCs are independent of $\mathcal{C}_{j}:j=1,2,3$. $\mathcal{C}_{1},\mathcal{C}_{2},\mathcal{C}_{3}$ are mutually independent. We now specify the distribution of the PCCs. The triplet $(\Lambda_{12},\Lambda_{32}),(\Lambda_{21},\Lambda_{31}),(\Lambda_{23},\Lambda_{13})$ are mutually independent. All of the bias vectors are mutually independent and uniformly distributed. The collection of generator matrices is independent of the collection of bias vectors. We only need to specify the distribution of the generator matrices. The rows of the larger of the two generator matrices $G_{ji}$ and $G_{ki}$ are uniformly and independently distributed. This specifies the distribution of the $9$ random codebooks.

$M_{ji}^{U}$,$M_{jk}^{U}$ and $M_{j}^{V}$ index bins in $\Lambda_{ji}$, $\Lambda_{jk}$ and $\mathcal{C}_{j}$ respectively. The encoder looks for a collection of $9$ codewords from the indexed bins that are jointly typical with respect to a PMF $p_{\underlineSemiPrivateRV \underlinePrivateRV}$ defined on $\underlineSemiPrivateRVSet \times \underlinePrivateRVSet$.\footnote{$\underlineSemiPrivateRV$ abbreviates $\SemiPrivateRV_{12}\SemiPrivateRV_{13}\SemiPrivateRV_{21}\SemiPrivateRV_{23}\SemiPrivateRV_{31}\SemiPrivateRV_{32}$.} We now state the bounds that ensure the probability of encoder not finding a jointly typical collection of codewords from the indexed bins. We introduce some notation to aid reduce clutter. Throughout the following, in every instance $i,j,k$ will denote distinct indices in $\{1,2,3\}$. For every $A \subseteq \{ 12,13,21,23,31,32 \}, B \subseteq \{1,2,3\}, C\subseteq \{1,2,3\}$, let $S_{A} = \sum _{jk \in A}S_{jk}, M_{B}\define \sum_{j \in B} \max\{ S_{ij}+T_{ij},S_{kj}+T_{kj}\}, K_{C} = \sum_{c \in C}K_{c}$. For every $B \subseteq \{1,2,3\}$, let $A({B}) = \cup_{j \in B}\{ ji,jk\}$.
Following a second moment method similar to that employed in appendix \ref{AppSec:AnalysisOfEncoderErrorEventFor3To1DBC}, it can be proved that the encoder finds at least one jointly typical collection if (\ref{Eqn:ManyToManySourceCodingBounds}) is satisfied for all $A \subseteq \left\{12,13,21,23,31,32\right\}, B \subseteq \left\{ 1,2,3 \right\}, C \subseteq \left\{ 1,2,3 \right\}$, that satisfy $A \cap A({B}) = \phi$, where $U_{A} = (U_{jk}:jk \in A)$ and $V_{C}=(V_{c}:c \in C)$. Having chosen one such jointly typical collection, say $(\underlineSemiPrivateRV^{n},\underlinePrivateRV^{n})$, the encoder generates a vector $X^{n}$ according to $p^{n}_{X|\underlineSemiPrivateRV\underlinePrivateRV}(\cdot | \underlineSemiPrivateRV^{n},\underlinePrivateRV^{n})$ and feeds the same as input on the channel.

Decoder $j$ receives $Y_{j}^{n}$ and looks for all triples $(u_{ji}^{n},u_{jk}^{n},v_{j}^{n})$ of codewords in $\lambda_{ji} \times \lambda_{jk} \times \mathcal{C}_{j}$ such that there exists a $u^{n}_{ij\oplus kj} \in (\lambda_{ij} \oplus \lambda_{kj})$ such that $(u_{ij\oplus kj}^{n},u_{ji}^{n},u_{jk}^{n},v_{j}^{n},Y_{j}^{n})$ are jointly typical with respect to $p_{U_{ij}\oplus U_{kj},U_{ji},U_{jk},V_{j},Y_{j}}$. If it finds all such triples in a unique triple of bins, the corresponding triple of bin indices is declared as decoded message of user $j$. Else an error is declared. The probability of error at decoder $j$ can be made arbitrarily small for sufficiently large block length if (\ref{Eqn:ManyToManyBCChannelCodingBounds}) holds for every $\mathcal{A}_{j} \subseteq \left\{ ji,jk\right\}$ with distinct indices $i,j,k$ in $\left\{ 1,2,3 \right\}$, where $S_{\mathcal{A}_{j}} \define \sum_{a \in \mathcal{A}_{j}}S_{a}, T_{\mathcal{A}_{j}} \define \sum_{a \in \mathcal{A}_{j}}T_{a}, U_{\mathcal{A}_{j}} = (U_{a}:a \in \mathcal{A}_{j})$.
.
Recognize that user $j$'s rate
$R_{j}=T_{ji}\log\pi_{i}+T_{jk}\log\pi_{k}+L_{j}$. We are now equipped
to state $\PCC$region for a general $3-$DBC. 

\begin{definition}
 \label{Defn:CollectionOfTestChannelsForCommunicatingOverGeneral3BCUsingNCC}
Let $\mathbb{D}^{f}(W_{\underlineY|X},\kappa,\tau)$ denote the collection of probability mass functions $p_{Q\underlineSemiPrivateRV\underlinePrivateRV X \underlineY}$ defined on $\mathcal{Q} \times \underlineSemiPrivateRVSet \times \underlinePrivateRVSet \times \mathcal{X} \times \underlineSetY$, where (i) $\mathcal{Q}, \mathcal{V}_{1},\mathcal{V}_{2}, \mathcal{V}_{3}$ are arbitrary finite sets,  $\underlinePrivateRVSet \define \PrivateRVSet_{1} \times \PrivateRVSet_{2} \times \PrivateRVSet_{3}$, (ii) $\SemiPrivateRVSet_{ij}=\fieldpij$\footnote{Recall $\fieldpij$ is the finite field of cardinality $\pi_{j}$.} for each $1\leq  i,j\leq3$, and $\underlineSemiPrivateRVSet \define \SemiPrivateRVSet_{12} \times \SemiPrivateRVSet_{13} \times \SemiPrivateRVSet_{21} \times \SemiPrivateRVSet_{23} \times \SemiPrivateRVSet_{31} \times \SemiPrivateRVSet_{32}$,
(iii) $\underlinePrivateRV\define (\PrivateRV_{1},\PrivateRV_{2},\PrivateRV_{3})$ and $\underlineSemiPrivateRV\define (\SemiPrivateRV_{12},\SemiPrivateRV_{13},\SemiPrivateRV_{21},\SemiPrivateRV_{23},\SemiPrivateRV_{31},\SemiPrivateRV_{32})$, such that (i) $p_{\underlineY| X \underlinePrivateRV \underlineSemiPrivateRV}=p_{\underlineY |X}=W_{\underlineY |X}$, (ii) $\Expectation\left\{ \kappa(X) \right\} \leq \tau$.

For $p_{\underlineSemiPrivateRV\underlinePrivateRV X \underlineY} \in \mathbb{D}^{f}(W_{\underlineY|X},\kappa,\tau)$, let $\beta(p_{\underlineSemiPrivateRV\underlinePrivateRV X \underlineY})$ be defined as the set of rate triples $(R_{1},R_{2},R_{3}) \in [0,\infty)^{3}$ for which there exists nonnegative numbers $S_{ij},T_{ij}:ij \in \left\{12,13,21,23,31,32 \right\}, K_{j},L_{j}:j\in \left\{ 1,2,3\right\}$ such that $R_{1}=T_{12}\log\pi_{2}+T_{13}\log\pi_{3}+L_{1}, R_{2}=T_{21}\log\pi_{1}+T_{23}\log\pi_{3}+L_{2}, R_{3}=T_{31}\log\pi_{1}+T_{32}\log\pi_{2}+L_{3}$ and
\begin{equation}\begin{aligned}\label{Eqn:ManyToManySourceCodingBounds}
 \lefteqn{S_{A}+M_{B}+K_{C} >\Theta(A,B,C)\mbox{ where,}}\\
 &\displaystyle\!\!\!\!\Theta(A,B,C) \define \max_{(\theta_{j}:j \in B) \in \underset{j \in B}{\prod} \fieldpij} \{\sum_{a \in A}\log |\mathcal{U}_{a}| + \sum_{j \in B}\log \pi_{j}+\sum_{c \in C} H(V_{c}|Q) - H(U_{A},U_{ji}\oplus \theta_{j}U_{jk}:j \in B,V_{C}|Q)\}\!\!\!\!\!
\end{aligned} \nonumber\end{equation}
for all $A \subseteq \left\{12,13,21,23,31,32\right\}, B \subseteq \left\{ 1,2,3 \right\}, C \subseteq \left\{ 1,2,3 \right\}$, that satisfy $A \cap A(B) = \phi$, where $A({B}) = \cup_{j \in B}\{ ji,jk\}$, $U_{A} = (U_{jk}:jk \in A)$, $V_{C}=(V_{c}:c \in C)$, $S_{A} = \sum _{jk \in A}S_{jk}, M_{B}\define \sum_{j \in B} \max\{ S_{ij}+T_{ij},S_{kj}+T_{kj}\}, K_{C} = \sum_{c \in C}K_{c}$, and
\begin{equation}
\begin{aligned}
\label{Eqn:ManyToManyBCChannelCodingBounds}
\lefteqn{S_{\mathcal{A}_{j}}+T_{\mathcal{A}_{j}} \leq \sum_{a \in \mathcal{A}_{j}}\log |\mathcal{U}_{a}| - H(U_{\mathcal{A}_{j}}|Q,U_{\mathcal{A}_{j}^{c}},U_{ij}\oplus U_{kj},V_{j},Y_{j})}
\\
\lefteqn{S_{\mathcal{A}_{j}}+T_{\mathcal{A}_{j}}+S_{ij}+T_{ij} \leq \sum_{a \in \mathcal{A}_{j}}\log |\mathcal{U}_{a}| + \log \pi_{j} - H(U_{\mathcal{A}_{j}},U_{ij}\oplus
U_{kj}|Q,U_{\mathcal{A}_{j}^{c}},V_{j},Y_{j})} \\
&S_{\mathcal{A}_{j}}+T_{\mathcal{A}_{j}}+S_{kj}+T_{kj} \leq \sum_{a \in \mathcal{A}_{j}}\log |\mathcal{U}_{a}| + \log \pi_{j} - H(U_{\mathcal{A}_{j}},U_{ij}\oplus
U_{kj}|Q,U_{\mathcal{A}_{j}^{c}},V_{j},Y_{j}) \\
\lefteqn{S_{\mathcal{A}_{j}}+T_{\mathcal{A}_{j}}+K_{j}+L_{j} \leq \sum_{a \in \mathcal{A}_{j}}\log |\mathcal{U}_{a}|+H(V_{j}) - H(U_{\mathcal{A}_{j}},V_{j}|Q,U_{\mathcal{A}_{j}^{c}},U_{ij}\oplus
U_{kj},Y_{j})} 
\end{aligned} \nonumber
\end{equation}
\begin{equation}
\begin{aligned}
\lefteqn{S_{\mathcal{A}_{j}}+T_{\mathcal{A}_{j}}+K_{j}+L_{j}+S_{ij}+T_{ij} \leq \sum_{a \in \mathcal{A}_{j}}\log |\mathcal{U}_{a}| + \log \pi_{j}+H(V_{j}) -
H(U_{\mathcal{A}_{j}},V_{j},U_{ij}\oplus U_{kj}|Q,U_{\mathcal{A}_{j}^{c}},Y_{j})} \\
&S_{\mathcal{A}_{j}}+T_{\mathcal{A}_{j}}+K_{j}+L_{j}+S_{kj}+T_{kj} \leq \sum_{a \in \mathcal{A}_{j}}\log |\mathcal{U}_{a}| + \log \pi_{j}+H(V_{j}) - H(U_{\mathcal{A}_{j}},V_{j},U_{ij}\oplus
U_{kj}|Q,U_{\mathcal{A}_{j}^{c}},Y_{j}), 
\end{aligned} \nonumber
\end{equation}
for every $\mathcal{A}_{j} \subseteq \left\{ ji,jk\right\}$ with
distinct indices $i,j,k$ in $\left\{ 1,2,3 \right\}$, where
$S_{\mathcal{A}_{j}} \define \sum_{a \in \mathcal{A}_{j}}S_{a},
T_{\mathcal{A}_{j}} \define \sum_{a \in \mathcal{A}_{j}}T_{a},
U_{\mathcal{A}_{j}} = (U_{a}:a \in \mathcal{A}_{j})$. Let $\PCC$rate
region be defined as 
\begin{eqnarray}
 \label{Eqn:AchievableRateRegionFor3BCUsingNestedCosetCodes}
 \beta(W_{\underlineY|X},\kappa,\tau) = \cocl\left(\underset{\substack{p_{Q\underlineSemiPrivateRV\underlinePrivateRV X \underlineY} \in\\ \mathbb{D}^{f}(W_{\underlineY|X},\kappa,\tau)}
}{\bigcup}\beta(p_{Q\underlineSemiPrivateRV\underlinePrivateRV X \underlineY})\right).\nonumber
\end{eqnarray}
\end{definition}
\begin{thm}
 \label{Thm:AchievableRateRegionFor3BCUsingNestedCosetCodes}
For $3-$DBC $(\mathcal{X},\underlineSetY,W_{\underlineY|X},\kappa)$,
$\PCC$region $\beta(W_{\underlineY|X},\kappa,\tau)$ is achievable, i.e.,
$\beta(W_{\underlineY|X},\kappa,\tau) \subseteq
\mathbb{C}(W_{\underlineY|X},\kappa,\tau)$. 
\end{thm}
All the non-trivial elements of this proof being illustrated in considerable detail in the context of proof of theorem \ref{Thm:OneLayerAchievableRateRegionWithOneUserDecodingInterference}, we omit a proof of theorem \ref{Thm:AchievableRateRegionFor3BCUsingNestedCosetCodes}.
\begin{remark}
The $\PCC$region is a continuous function of the channel transition probability matrix. Therefore, gains obtained by the proposed coding technique are robust to small perturbations of $3-$DBC.
\end{remark}

\section{Enlarging $\UM$region using partitioned coset codes}
\label{Sec:EnlargingMarton'sRateRegionUsingNestedCosetCodes}
The natural question that arises is whether $\PCC$region $\beta(W_{\underlineY|X},\kappa,\tau)$
contains $\UM$region
$\alpha_{\mathscr{U}}(W_{\underlineY|X},\kappa,\tau)$. The coding
techniques based on structured codes do not substitute those based on unstructured codes, but
enhance the latter. Indeed, the technique proposed by K\"orner and
Marton \cite{197903TIT_KorMar}, in the context of distributed source
coding, is strictly suboptimal to that studied by Berger and Tung
\cite{Berger-MSC} if the function is not sufficiently compressive,
i.e., entropy of the sum is larger than one half of the joint entropy
of the sources.\footnote{If $X$ and $Y$ are the distributed binary
  sources whose modulo$-2$ sum is to be reconstructed at the decoder,
  then K\"orner and Marton technique is strictly suboptimal if $H(X
  \oplus Y) > \frac{H(X,Y)}{2}$.} The penalty paid in terms of the
binning rate for endowing structure is not sufficiently compensated
for by the function. This was recognized by Ahlswede and Han
\cite[Section VI]{198305TIT_AhlHan} for the problem studied by
K\"orner and Marton. 

We follow the approach of Ahlswede and Han \cite[Section
  VI]{198305TIT_AhlHan} to build upon $\UM$region by gluing to it the
coding technique proposed herein. In essence the coding techniques
studied in section \ref{SubSec:NaturalExtensionOfMartonTo3BC} and
\ref{Sec:AchievableRateRegionFor3BCUsingNestedCosetCodes} are glued
together.\footnote{This is akin to the use of superposition and
  binning in Marton's coding.} Indeed, a description of the resulting
rate region is quite involved and we do not provide it's characterization. The resulting coding technique will involve each user split
it's message into six parts - one public and private part each, two
semi-private and \textit{bivariate} parts each. This can be understood
by splitting the message as proposed in sections
\ref{SubSec:NaturalExtensionOfMartonTo3BC} and
\ref{Sec:AchievableRateRegionFor3BCUsingNestedCosetCodes} and
identifying the private parts. In essence each user decodes a
univariate component of every other user's transmission particularly
set apart for it, and furthermore decodes a bivariate component of the
other two user's transmissions.\footnote{An informed and inquisitive
  reader may begin to see a relationship emerge between the several
  layers of coding and common parts of a collection of random
  variables. Please refer to section \ref{Sec:ConcludingRemarks} for a
  discussion.} Please refer to figure
\ref{Fig:3BCMapOfRandomVariables} for an illustration of the coding
technique. Herein, $V$ denotes the private part, $U$, the
bivariate part,  $T$, the semi-private part and $W$, the public part.

\begin{figure}
\centering
\includegraphics[width=6.5in]{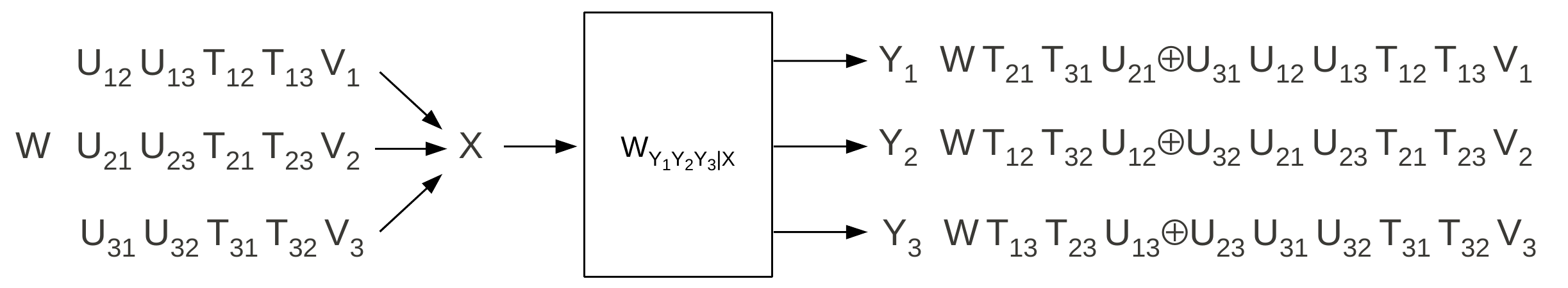}
\caption{Illustration of coding technique that incorporates unstructured and coset codes.}
\label{Fig:3BCMapOfRandomVariables}
\end{figure}

\section{Concluding remarks : Common parts of random variables and the need for structure}
\label{Sec:ConcludingRemarks}
Let us revisit Marton's coding technique for $2-$BC. Define the pair $\overline{\PrivateRV_{j}}\define (\PublicRV, \PrivateRV_{j}):j=1,2$ of random variables decoded by the two users and let $\overline{\PrivateRVSet_{j}}\define \PublicRVSet \times \PrivateRVSet_{j}:j=1,2$. Let us stack the collection of compatible codewords as $\overline{\PrivateRVSet_{1}}^{n} \times \overline{\PrivateRVSet_{2}}^{n}$. The encoder can work with this stack, being  oblivious to the distinction between $\PublicRVSet$ and $\PrivateRVSet_{j}:j=1,2$. In other words, it does not recognize that a symbol over $\overline{V_{j}}$ is indeed a pair of symbols. A few key observations of this stack of codewords is in order. Recognize that many pairs of compatible codewords agree in their `$\PublicRVSet-$coordinate'. In other words, they share the same codeword on the $\PublicRVSet-$codebook. $\PublicRV$ is the common part \cite{1972MMPCT_GacKor} of the pair $(\overline{\PrivateRV_{1}},\overline{\PrivateRV_{2}})$. Being a common part, it can be realized through univariate functions. Let us say $W=f_{1}(V_{1})=f_{2}(V_{2})$. This indicates that \textit{$\PublicRVSet-$codebook is built such that, the range of these univariate functions when applied on the collection of codewords in this stack, is contained.}

How did Marton accomplish this containment? Marton proposed building
the $\PublicRV-$codebook first, followed by conditional codebooks over
$\PrivateRV_{1},\PrivateRV_{2}$. Conditional coding with a careful
choice of order therefore contained the range under the action of
univariate function. How is all of this related to the need for
containing bivariate functions of a pair of random variables? The
fundamental underlying thread is the notion of common part
\cite{1972MMPCT_GacKor}. What are the common parts of a triple of
random variables? Clearly, one can simply extend the notion of common
part defined for a pair of random variables. This yields four common
parts - one part that is simultaneously common to all three random
variables and one common part corresponding to each pair in the
triple. 
Indeed, if
$\overline{V_{1}}=(W,U_{12},U_{31},V_{1}),\overline{V_{2}}=(W,U_{12},U_{23},V_{2}),\overline{V_{3}}=(W,U_{23},U_{31},V_{3})$,
then $W$ is the part simultaneously to common to
$\overline{V_{1}},\overline{V_{2}},\overline{V_{3}}$ and
$U_{ij}:ij\in\left\{12,23,31\right\}$ are the pairwise common
parts. The $\UM$technique suggests a way to handle these common parts.

This does not yet answer the need for containment under bivariate
function. We recognize a richer notion of common part for
a triple of random variables. Indeed, three nontrivial binary random
variables $X,Y, Z=X\oplus Y$ have no common parts as defined earlier.
Yet, the degeneracy in the joint probability
matrix hints at a common part. Indeed, they possess a
\textit{conferencing} common part. For example, the pair $(X,Y),Z$
have a common part. In other words, there exists a \textit{bivariate}
function of $X,Y$ and a univariate function of $Z$ that agree with
probability $1$. Containment of this bivariate function brings in the
need for structured codes. Indeed, the resemblance to the problem
studied by K\"orner and Marton \cite{197903TIT_KorMar} is striking. We
therefore believe the need for structured codes for three (multi) user
communication problems is closely linked to the notion of common parts
of a triple (collection) of random variables. Analogous to conditional
coding that contained univariate functions, endowing codebooks with
structure is an inherent need to carefully handle additional degrees
of freedom prevalent in larger dimensions. 

\section{Proof of theorem \ref{Thm:ConditionsUnderWhichMartonCodingTechniqueDoesNotAchieveRateTriple}}
\label{Sec:StrictSubOptimalityOfMartonCodingTechnique}

In this section, we prove strict sub-optimality of $\UM$technique for the $3-$DBC presented in example \ref{Ex:3-BCExample}. In particular, we prove that if parameters $\tau,\delta_{1},\delta_{2},\delta_{3}$ are such that $1+h_{b}(\delta_{1} * \tau) > h_{b}(\delta_{2})+h_{b}(\delta_{3})$ and $(R_{1},1-h_{b}(\delta_{2}),1-h_{b}(\delta_{3})) \in \alpha_{\mathscr{U}}(\tau)$, then $R_{1} < h_{b}(\tau*\delta_{1})-h_{b}(\delta_{1})$.

Why is $\UM$technique suboptimal for the case described above. As mentioned in section \ref{Sec:3To1BCAndTheNeedForStructuredCodes}, in this case, receiver $1$ is unable to decode the pair of codewords transmitted to users $2$ and $3$. Furthermore, based on unstructured independent coding, it does not attempt to decode a function of transmitted codewords - in this case the modulo$-2$ sum. This forces decoder $1$ to be content by decoding only individual components of user $2$ and $3$'s transmissions, leaving residual uncertainty in the interference. The encoder helps out by precoding for this residual uncertainty. However, as a consequence of the cost constraint on $X_{1}$, it is forced to live with a rate loss.

Our proof traces through the above arguments in three stages and is therefore instructive. In the first stage, we characterize all test channels $p_{\TimeSharingRV\PublicRV \underlineSemiPrivateRV \underlinePrivateRV X \underlineY}$ for which $(R_{1},1-h_{b}(\delta_{2}),1-h_{b}(\delta_{3})) \in \alpha_{\mathscr{U}}(p_{\TimeSharingRV\PublicRV \underlineSemiPrivateRV \underlinePrivateRV X \underlineY})$. This stage enables us identify `active' codebooks, their corresponding rates and characterize two upper bounds on $R_{1}$. One of these contains the rate loss due to precoding. In the second stage, we therefore characterize the condition under which there is no rate loss. As expected, it turns out that there is no rate loss only if decoder $1$ has decoded codewords of users $2$ and $3$. This gets us to the third stage, where we conclude that $1+h_{b}(\delta_{1} * \tau) > h_{b}(\delta_{2})+h_{b}(\delta_{3})$ precludes this possibility. The first stage is presented in lemma \ref{Lem:CharacterizationOfTestChannelThatContainsTheRateTriple}, second stage is stated in lemma \ref{Lem:CharacterizationOfConditionThatEnsuresNoRateLoss} and proved in appendices \ref{AppSec:CharacterizationForNoRateLossInPTP-STx} and \ref{AppSec:TheBinaryAdditiveDirtyPointToPointChannelSuffersARateLoss}. Third stage can be found in arguments following lemma \ref{Lem:CharacterizationOfConditionThatEnsuresNoRateLoss}.

We begin with a characterization of a test channel
$p_{\TimeSharingRV\PublicRV\underlineSemiPrivateRV\underlinePrivateRV
  X\underlineY}$ for which
$(R_{1},1-h_{b}(\delta_{2}),1-h_{b}(\delta_{3})) \in
\alpha_{\mathscr{U}}(p_{\TimeSharingRV\PublicRV\underlineSemiPrivateRV\underlinePrivateRV
  X\underlineY})$. Since independent information needs to be
communicated to users $2$ and $3$ at their respective PTP capacities,
it is expected that their codebooks are not precoded for each other's
signal, and moreover none of users $2$ and $3$ decode a part of the
other users' signal. The following lemma establishes this. We remind
the reader that $X_{1}X_{2}X_{3}=X$ denote the three digits at
the input, where $Y_{j}$, the output at receiver $j$ is obtained by
passing $X_{j}$ through a BSC with cross over probability
$\delta_{j}$ for $j=2,3$.
$Y_{1}$ is obtained by passing $X_{1}\oplus X_{2}\oplus X_{3}$
through a BSC with cross over probability $\delta_{1}$. Moreover, the
binary symmetric channels (BSCs) are independent. Input symbol
$X_{1}$ is constrained with respect to a Hamming cost function and the
constraint on the average cost per symbol is $\tau$. Formally,
$\kappa(x_{1}x_{2}x_{3})=1_{\left\{ x_{1}=1 \right\}}$ is the cost
function and the average cost per symbol is not to exceed $\tau$. 

\begin{lemma}
 \label{Lem:CharacterizationOfTestChannelThatContainsTheRateTriple}
If there exists a test channel $p_{\TimeSharingRV\PublicRV\underlineSemiPrivateRV\underlinePrivateRV X\underlineY} \in \SetOfDistributions_{\mathscr{U}}(\tau)$ and nonnegative numbers $K_{i},S_{ij},K_{ij},L_{ij},S_{i},T_{i}$ that satisfy (\ref{Eqn:3BCSourceCodingBoundNonnegativity})-(\ref{Eqn:3BCChannelCodingSextupleBound}) for each triple $(i,j,k) \in \left\{ (1,2,3),(2,3,1),(3,1,2) \right\}$ such that $R_{2}=K_{2}+K_{23}+L_{12}+T_{2}=1-h_{b}(\delta_{2}),R_{3}=K_{3}+K_{31}+L_{23}+T_{3}=1-h_{b}(\delta_{3})$, then
\begin{enumerate}
 \item \label{Item:RatesOfPublicAndSemiPrivateCodebooks}$K_1=K_2=K_3=K_{23}=L_{23}=K_{12}=L_{31}=S_2=S_3=0$ and $I(\SemiPrivateRV_{31} V_{1}V_{3};Y_{2}|\TimeSharingRV WU_{23}U_{12}V_{2})=0$,
\item \label{Item:BinningRatesOfSemiPrivateCodebooks}$S_{31}=I(U_{31};U_{23}|\TimeSharingRV W),S_{12}=I(U_{12};U_{23}|\TimeSharingRV W)$, $S_{23}=I(U_{12};U_{31}|\TimeSharingRV WU_{23})=0$,
\item \label{Item:WU23IsConditionallyIndependentOfY2Y3GivenQ}$I(V_{2}U_{12};V_{3}U_{31}|\TimeSharingRV WU_{23})=0$, $I(WU_{23};Y_{j}|\TimeSharingRV )=0:j=2,3,I(V_{2}U_{12};Y_{2}|\TimeSharingRV W U_{23})=1-h_{b}(\delta_{2})$ and $I(V_{3}U_{31};Y_{3}|\TimeSharingRV W U_{23})=1-h_{b}(\delta_{3})$, 
\item \label{ItemNumber:MarkovChains} $(V_3,X_3,V_1,U_{31}) - (\TimeSharingRV WU_{23}U_{12}V_2) - (X_2,Y_2) $ and $(V_2,X_2,V_1,U_{12}) - (\TimeSharingRV WU_{23}U_{31}V_3) - (X_3,Y_3)$ are Markov chains,
\item \label{Item:X2AndX3AreConditionallyIndependentGivenQWU12U23U31}$X_{2} - \TimeSharingRV \PublicRV \SemiPrivateRV_{12} \SemiPrivateRV_{23} \SemiPrivateRV_{31} - X_{3}$ is a Markov chain,
\item \label{Item:U31AndX2AreConditionallyIndependentGivenQWU12U23}$\SemiPrivateRV_{12}-\TimeSharingRV \PublicRV \SemiPrivateRV_{23} \SemiPrivateRV_{31}-X_{3}$ and $\SemiPrivateRV_{31}-\TimeSharingRV \PublicRV \SemiPrivateRV_{23} \SemiPrivateRV_{12}-X_{2}$ are Markov chains.
\end{enumerate}
\end{lemma}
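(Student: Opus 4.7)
The plan is to exploit tightness of the channel-coding constraints forced by the hypothesis $R_2=1-h_b(\delta_2)$ and $R_3=1-h_b(\delta_3)$, which coincide with the point-to-point capacities of users~2 and~3. Since $Y_2=X_2\oplus N_2$ with $N_2\sim \mathrm{Bernoulli}(\delta_2)$ independent of all auxiliaries, data processing yields $I(A;Y_2|\TimeSharingRV)\leq 1-h_b(\delta_2)$ for every aggregate $A$ of auxiliaries, and analogously for $Y_3$. Plugging this into the sextuple channel-coding bound \eqref{Eqn:3BCChannelCodingSextupleBound} at $i=2$ together with the pairwise source-coding bound $I(\SemiPrivateRV_{12};\SemiPrivateRV_{23}|\TimeSharingRV\PublicRV)\leq S_{12}+S_{23}$, and cancelling $R_2=K_2+K_{23}+L_{12}+T_2$, the bound collapses to $0\geq K_1+K_3+K_{12}+L_{23}+S_2$ and simultaneously pins $S_{12}+S_{23}=I(\SemiPrivateRV_{12};\SemiPrivateRV_{23}|\TimeSharingRV\PublicRV)$. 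The symmetric manipulation at user~3 gives $K_2=K_{23}=L_{31}=S_3=0$ and $S_{23}+S_{31}=I(\SemiPrivateRV_{23};\SemiPrivateRV_{31}|\TimeSharingRV\PublicRV)$. Combining with the triple source-coding bound \eqref{Eqn:3BCSourceCodingTripleBound}, after expanding $I(\SemiPrivateRV_{12};\SemiPrivateRV_{23};\SemiPrivateRV_{31}|\TimeSharingRV\PublicRV)$ via the chain rule, yields $S_{23}+I(\SemiPrivateRV_{12};\SemiPrivateRV_{31}|\TimeSharingRV\PublicRV\SemiPrivateRV_{23})\leq 0$, and hence $S_{23}=I(\SemiPrivateRV_{12};\SemiPrivateRV_{31}|\TimeSharingRV\PublicRV\SemiPrivateRV_{23})=0$. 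This settles items~1 and~2 apart from the terminal mutual-information claim in item~1.

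Next, tightness in the sextuple bound enforces $I(\PublicRV\SemiPrivateRV_{23}\SemiPrivateRV_{12}\PrivateRV_2;Y_2|\TimeSharingRV)=1-h_b(\delta_2)$, which forces $H(Y_2|\TimeSharingRV)=1$ and $H(Y_2|\TimeSharingRV\PublicRV\SemiPrivateRV_{23}\SemiPrivateRV_{12}\PrivateRV_2)=h_b(\delta_2)$. Splitting the first equality via the triple \eqref{Eqn:3BCChannelCodingTripleBound} and double-second \eqref{Eqn:3BCChannelCodingDoubleSecondBound} channel-coding bounds isolates $I(\PublicRV\SemiPrivateRV_{23};Y_2|\TimeSharingRV)=0$ and $I(\PrivateRV_2\SemiPrivateRV_{12};Y_2|\TimeSharingRV\PublicRV\SemiPrivateRV_{23})=1-h_b(\delta_2)$, with analogues for $Y_3$; this covers all of item~3 except its first clause. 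The remaining clause $I(\PrivateRV_2\SemiPrivateRV_{12};\PrivateRV_3\SemiPrivateRV_{31}|\TimeSharingRV\PublicRV\SemiPrivateRV_{23})=0$ follows from the pentuple source-coding bound \eqref{Eqn:3BCSourceCodingPentaBound} at $(i,j,k)=(2,3,1)$: after substituting the values of $S_{12},S_{23},S_{31}$ from item~2 and expanding $I(\SemiPrivateRV_{23};\SemiPrivateRV_{31};\SemiPrivateRV_{12}|\TimeSharingRV\PublicRV)$ via the chain rule together with $I(\SemiPrivateRV_{12};\SemiPrivateRV_{31}|\TimeSharingRV\PublicRV\SemiPrivateRV_{23})=0$, the inequality reduces to $0\geq I(\PrivateRV_2;\SemiPrivateRV_{31}|\TimeSharingRV\PublicRV\SemiPrivateRV_{23}\SemiPrivateRV_{12})+I(\PrivateRV_3;\SemiPrivateRV_{12}|\TimeSharingRV\PublicRV\SemiPrivateRV_{23}\SemiPrivateRV_{31})+I(\PrivateRV_2;\PrivateRV_3|\TimeSharingRV\PublicRV\SemiPrivateRV_{12}\SemiPrivateRV_{23}\SemiPrivateRV_{31})$; each term vanishes and the chain rule for $I(\PrivateRV_2\SemiPrivateRV_{12};\PrivateRV_3\SemiPrivateRV_{31}|\TimeSharingRV\PublicRV\SemiPrivateRV_{23})$ assembles them into the claim.

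The structural items 4--6 follow from the entropy identities just produced. For any additional aggregate $A$ of auxiliaries, data processing sandwiches $h_b(\delta_2)=H(Y_2|X_2)\leq H(Y_2|\TimeSharingRV\PublicRV\SemiPrivateRV_{23}\SemiPrivateRV_{12}\PrivateRV_2 A)\leq H(Y_2|\TimeSharingRV\PublicRV\SemiPrivateRV_{23}\SemiPrivateRV_{12}\PrivateRV_2)=h_b(\delta_2)$, so the middle entropy also equals $h_b(\delta_2)$; taking $A=(\SemiPrivateRV_{31},\PrivateRV_1,\PrivateRV_3)$ completes item~1, and taking $A=X_2$ yields $I(X_2;Y_2|\TimeSharingRV\PublicRV\SemiPrivateRV_{23}\SemiPrivateRV_{12}\PrivateRV_2)=0$. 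Because $\delta_2\in(0,1/2)$, this identity forces $X_2$ to be a deterministic function of $(\TimeSharingRV,\PublicRV,\SemiPrivateRV_{23},\SemiPrivateRV_{12},\PrivateRV_2)$, and symmetrically $X_3$ a deterministic function of $(\TimeSharingRV,\PublicRV,\SemiPrivateRV_{23},\SemiPrivateRV_{31},\PrivateRV_3)$. Item~4 is immediate: once $X_2$ is constant given the conditioning set and $N_2$ is independent of all other variables, $(X_2,Y_2)$ is independent of $(\PrivateRV_3,X_3,\PrivateRV_1,\SemiPrivateRV_{31})$ given $\TimeSharingRV\PublicRV\SemiPrivateRV_{23}\SemiPrivateRV_{12}\PrivateRV_2$. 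Items 5 and 6 reduce, via these deterministic expressions, to conditional independences of the residual variables that follow from the Marton test-channel factorization ($\SemiPrivateRV_{12},\SemiPrivateRV_{23},\SemiPrivateRV_{31}$ conditionally independent given $\TimeSharingRV\PublicRV$, and each $\PrivateRV_j$ conditionally independent of the non-parent auxiliaries given its parents $\TimeSharingRV\PublicRV\SemiPrivateRV_{ij}\SemiPrivateRV_{jk}$). The principal obstacle is the bookkeeping in the first paragraph: extracting each individual zero claim of items 1 and 2 cleanly from the coupled sextuple, triple, and pairwise bounds and repeating the manoeuvre symmetrically at user~3, before the entropy-based arguments can be set in motion.
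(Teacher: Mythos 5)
Your argument for items 1--3 follows the same route as the paper: combine the sextuple channel-coding bound with the pairwise source-coding bound and the data-processing inequality at users $2$ and $3$ to sandwich a chain of equalities, then use the triple bound to extract $S_{23}=I(U_{12};U_{31}|\TimeSharingRV\PublicRV U_{23})=0$ and the pentuple bound to extract the vanishing conditional mutual-information terms assembled into $I(V_2U_{12};V_3U_{31}|\TimeSharingRV\PublicRV U_{23})=0$.

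For items 4--6 your route is genuinely different, and in fact sharper. The paper invokes its Lemma~\ref{Lem:ForBSCMarokovChainWithOutputImpliesMarkovChainWithInput} (if $A-B-Y$ and $AB-X-Y$ are Markov and $Y$ is a BSC$(\eta)$ perturbation of $X$, then $A-B-X$ is Markov), applying it twice for each direction, and then chases further conditional independences through item 4 to obtain items 5 and 6. You instead observe that the same tightness that produced the sandwich forces $I(X_2;Y_2|\TimeSharingRV\PublicRV\SemiPrivateRV_{23}\SemiPrivateRV_{12}\PrivateRV_2)=0$, and since $Y_2 = X_2\oplus N_2$ with $N_2\sim\mathrm{Bernoulli}(\delta_2)$, $\delta_2\in(0,\tfrac12)$, independent of $(X_2,\text{aux})$, this forces $p_{X_2|\TimeSharingRV\PublicRV\SemiPrivateRV_{23}\SemiPrivateRV_{12}\PrivateRV_2}\in\{0,1\}$, i.e.\ $X_2$ is deterministic given that tuple (symmetrically for $X_3$). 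That determinism makes all three structural items corollaries of the conditional independences already extracted from the source-coding bounds, bypassing the paper's technical lemma entirely and producing a strictly stronger (and more informative) intermediate conclusion than the Markov chain that lemma delivers. It is a nice simplification.

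One genuine imprecision you should fix: you attribute the conditional independences used for items 5 and 6 (for instance $I(V_2;V_3|\TimeSharingRV\PublicRV\underlineSemiPrivateRV)=0$ and $I(V_3;U_{12}|\TimeSharingRV\PublicRV U_{23}U_{31})=0$) to "the Marton test-channel factorization." That is wrong as stated: $\SetOfDistributions_{NEM}$ imposes no factorization among the auxiliaries --- the semi-private variables are not assumed conditionally independent given $\TimeSharingRV\PublicRV$, nor is $\PrivateRV_j$ assumed independent of the non-parent auxiliaries. Those independences are consequences of the tight source-coding bounds, which you do correctly derive in your second paragraph; the argument still closes once you cite them from there rather than from a nonexistent factorization assumption.
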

\begin{proof}
 Substituting (i) $(2,3,1)$ for $(i,j,k)$ in (\ref{Eqn:3BCChannelCodingSextupleBound}), (ii) $(1,2,3)$ for $(i,j,k)$ in (\ref{Eqn:3BCSourceCodingPairwiseBound}) and combining the resulting bounds yields 
\begin{eqnarray}
I(WU_{23}U_{12}V_{2};Y_{2}|\TimeSharingRV) &\geq& I(WU_{23}U_{12}V_{2};Y_{2}|\TimeSharingRV)+I(U_{12};U_{23}|W,Q)-S_{12}-S_{23} \nonumber\\&\geq& R_{2}+K_{3}+K_{1}+L_{23}+K_{12}+S_{2}\geq R_{2}=1-h_{b}(\delta_{2}),\label{Eqn:UpperBoundOnR_2PlusABunchOfTerms}
\end{eqnarray}
where the second inequality follows from non-negativity of $K_{3},K_{1},L_{23},K_{12},S_{2}$. Moreover, 
\begin{eqnarray}
\label{Eqn:UsingDataProcessingInequalityInDerivingUpperBoundOnR2}
1-h_{b}(\delta_{2}) &\geq&  I(X_{2};Y_{2}) = I(\TimeSharingRV \PublicRV \underlineSemiPrivateRV\underlinePrivateRV X_{1}Y_{1}X_{3}Y_{3}X_{2};Y_{2}) \geq I(W \SemiPrivateRV_{23} \SemiPrivateRV_{12} \PrivateRV_{2} ; Y_{2}|\TimeSharingRV) \\
\label{Eqn:SubstitutingUpperBoundOnR_2}
&\geq& R_{2}\!+\! K_{3}\!+\! K_{1}\!+\! L_{23}\!+\! K_{12}\!+\! S_{2} \geq R_{2}=1-h_{b}(\delta_{2}),
\end{eqnarray}
where (i) equality in (\ref{Eqn:UsingDataProcessingInequalityInDerivingUpperBoundOnR2}) follows from Markov chain $\TimeSharingRV \PublicRV \underlineSemiPrivateRV \underlinePrivateRV X_{1}Y_{1}X_{3}Y_{3}- X_{2} -Y_{2}$. Since all the terms involved are non-negative, equality holds through the above chain of inequalities to yield
\begin{eqnarray}
\label{Eqn:BinningRatesOfUser2SemiPrivateCodebooks}
&S_{12}+S_{23}=I(U_{12};U_{23}|\TimeSharingRV W),K_{1}\!=\!K_{3}\!=\!L_{23}\!=\!K_{12}\!=\!S_{2}\!=\!
I(\TimeSharingRV;Y_{2})\!=\!0\\
\label{Eqn:K3K1L23K12S2AreZero}
&I(U_{31}V_{1}X_{1}Y_{1}V_{3}X_{3}Y_{3}X_{2};Y_{2}|QWU_{12}U_{23}V_{2})\!=\!0\!\\
\label{Eqn:PartOfmarkovChainUsedInGettingToRateLoss}
&\mbox{ and therefore }
(V_1,V_3,X_3,U_{31}) - (\TimeSharingRV WU_{12}U_{23}V_2) - Y_2\mbox{ is a Markov chain}
\end{eqnarray}
where the first equality in (\ref{Eqn:BinningRatesOfUser2SemiPrivateCodebooks}) follows from condition for equality in the first inequality of (\ref{Eqn:UpperBoundOnR_2PlusABunchOfTerms}). The above sequence of steps are repeated by substituting (i) $(3,1,2)$ for $(i,j,k)$ in (\ref{Eqn:3BCChannelCodingSextupleBound}), (ii) $(2,3,1)$ for $(i,j,k)$ in (\ref{Eqn:3BCSourceCodingPairwiseBound}). It can be verified that
\begin{eqnarray} 
\label{Eqn:BinningRatesOfUser3SemiPrivateCodebooks}
&S_{31}+S_{23}=I(U_{31};U_{23}|\TimeSharingRV W), K_{1}\!=\!K_{2}\!=\!L_{31}\!=\!K_{23}\!=\!S_{3}\!=\!I(\TimeSharingRV; Y_{3})\!=\!0,\\
\label{Eqn:K2K1L31K23S3AreZero}
&I(\SemiPrivateRV_{12} V_{1}X_{1}Y_{1}V_{2}X_{2}Y_{2}X_{3};Y_{3}|\TimeSharingRV WU_{23}U_{31}V_{3})\!=\!0\!\\
\label{Eqn:MarkovChainFoundInTestChannelThatIncludesY2ForUser2}
&\mbox{ and therefore }
(V_1,V_2,X_2,U_{12}) - (\TimeSharingRV WU_{23}U_{31}V_3) - Y_3\mbox{ is a Markov chain}.
\end{eqnarray}
The second set of equalities in (\ref{Eqn:BinningRatesOfUser2SemiPrivateCodebooks}), (\ref{Eqn:BinningRatesOfUser3SemiPrivateCodebooks}) lets us conclude
\begin{equation}
 \label{Eqn:RatesOfUsers12And3InTermsOfSimplifiedCodebooks}
R_{1}=T_{1},R_{2}=L_{12}+T_{2} \mbox{ and }R_{3}=K_{31}+T_{3}.
\end{equation}
From $I(U_{12};U_{23}|\TimeSharingRV W) + I(U_{31};U_{23}|\TimeSharingRV W) = S_{12}+S_{23}+S_{31}+S_{23}$, and (\ref{Eqn:3BCSourceCodingTripleBound}), we have $I(U_{12};U_{23}|\TimeSharingRV W) + I(U_{31};U_{23}|\TimeSharingRV W) \geq I(U_{12};U_{23};U_{31}|\TimeSharingRV W)+S_{23}$. The non-negativity of $S_{23}$ implies $S_{23}=0$ and $I(U_{31};U_{12}|\TimeSharingRV WU_{23})=0$. We therefore conclude
\begin{eqnarray}
\label{Eqn:SemiPrivateLayerBinningRates}
S_{12}=I(U_{12};U_{23}|\TimeSharingRV W),S_{31}=I(U_{31};U_{23}|\TimeSharingRV W),S_{23}=0,
I(U_{31};U_{12}|\TimeSharingRV W U_{23})=0
\end{eqnarray}
Substituting (\ref{Eqn:BinningRatesOfUser2SemiPrivateCodebooks}), (\ref{Eqn:BinningRatesOfUser3SemiPrivateCodebooks}), (\ref{Eqn:SemiPrivateLayerBinningRates}) in (\ref{Eqn:3BCSourceCodingQuadrapleBound}) for $(i,j,k)=(2,3,1)$ and $(i,j,k)=(3,1,2)$ and (\ref{Eqn:3BCSourceCodingPentaBound}) for $(i,j,k)=(2,3,1)$, we obtain
\begin{eqnarray}
\label{Eqn:IndependenceOfV2WithU31AndV3WithU12ConditionedOnWU23}
I(V_{2};U_{31}|\TimeSharingRV WU_{12}U_{23})=I(V_{3};U_{12}|\TimeSharingRV WU_{23}U_{31})
=I(V_{2};V_{3}|\TimeSharingRV WU_{12}U_{23}U_{31})=0.
\end{eqnarray}
(\ref{Eqn:IndependenceOfV2WithU31AndV3WithU12ConditionedOnWU23}) and last equality in (\ref{Eqn:SemiPrivateLayerBinningRates}) yield 
\begin{equation}
\label{Eqn:V2U12AndV3U31AreIndependentConditionedOnWU23}
I(V_{2}U_{12};V_{3}U_{31}|\TimeSharingRV WU_{23})=0.
\end{equation}
Substituting (\ref{Eqn:RatesOfUsers12And3InTermsOfSimplifiedCodebooks}), (\ref{Eqn:SemiPrivateLayerBinningRates}) in (\ref{Eqn:3BCChannelCodingDoubleBound})with $(i,j,k)=(2,3,1)$ yields the upper bound $R_2 \leq I(U_{12}V_2;Y_2|\TimeSharingRV WU_{23})$. Since
\begin{eqnarray}
&\!\!\!\!1-h_{b}(\delta_{2}) \!=\! R_{2} \!\leq\! I(\SemiPrivateRV_{12} \PrivateRV_{2};Y_{2}|\TimeSharingRV\PublicRV \SemiPrivateRV_{23}) \!\leq\! I(\PublicRV \SemiPrivateRV_{12}\SemiPrivateRV_{23} \PrivateRV_{2};Y_{2}|\TimeSharingRV) \leq 1-h_{b}(\delta_{2}),\nonumber
\end{eqnarray}
where the last inequality follows from (\ref{Eqn:UsingDataProcessingInequalityInDerivingUpperBoundOnR2}), equality holds in all of the above inequalities to yield $I(W\SemiPrivateRV_{23};Y_{2}|\TimeSharingRV )=0$ and $I(\SemiPrivateRV_{12} \PrivateRV_{2};Y_{2}|\TimeSharingRV\PublicRV \SemiPrivateRV_{23})=1-h_{b}(\delta_{2})$. A similar argument proves $I(W\SemiPrivateRV_{23};Y_{3}|\TimeSharingRV )=0$ and $I(\SemiPrivateRV_{31} \PrivateRV_{3};Y_{3}|\TimeSharingRV\PublicRV \SemiPrivateRV_{23})=1-h_{b}(\delta_{3})$.

We have proved the Markov chains in items (1)-(3). In order to prove Markov chains in item \ref{ItemNumber:MarkovChains}, we prove the following lemma.
\begin{lemma}
\label{Lem:ForBSCMarokovChainWithOutputImpliesMarkovChainWithInput}
If $A,B,X,Y$ are discrete random variables such that (i) $X,Y$ take values in $\left\{0,1\right\}$ with $P(Y=0|X=1)=P(Y=1|X=0)=\eta \in (0,\frac{1}{2})$, (ii) $A-B-Y$ and $AB-X-Y$ are Markov chains, then $A-B-XY$ is also a Markov chain.
\end{lemma}
Please refer to appendix
\ref{AppSec:ForBSCMarokovChainWithOutputImpliesMarkovChainWithInput}
for a proof. Markov chains in
(\ref{Eqn:PartOfmarkovChainUsedInGettingToRateLoss}),
(\ref{Eqn:MarkovChainFoundInTestChannelThatIncludesY2ForUser2}) in
conjunction with lemma
\ref{Lem:ForBSCMarokovChainWithOutputImpliesMarkovChainWithInput}
establishes Markov chains in item \ref{ItemNumber:MarkovChains}. 

(\ref{Eqn:V2U12AndV3U31AreIndependentConditionedOnWU23}) and (\ref{Eqn:K3K1L23K12S2AreZero}) imply $I(\SemiPrivateRV_{31}\PrivateRV_{3};\SemiPrivateRV_{12}\PrivateRV_{2}Y_{2}|\TimeSharingRV\PublicRV \SemiPrivateRV_{23})=0$. This in conjunction with (\ref{Eqn:K2K1L31K23S3AreZero}) implies
\begin{equation}
 \label{Eqn:U31V3Y3AndU12V2Y2AreIndependentConditionedOnWU23}
I(\SemiPrivateRV_{31}\PrivateRV_{3}Y_{3};\SemiPrivateRV_{12}\PrivateRV_{2}Y_{2}|\TimeSharingRV \PublicRV \SemiPrivateRV_{23})=0\mbox{ and thus }\SemiPrivateRV_{31}\PrivateRV_{3}Y_{3}-\TimeSharingRV \PublicRV \SemiPrivateRV_{23}-\SemiPrivateRV_{12}\PrivateRV_{2}Y_{2}\mbox{ is a Markov chain.}
\end{equation}
(\ref{Eqn:U31V3Y3AndU12V2Y2AreIndependentConditionedOnWU23}) implies that $\SemiPrivateRV_{31}Y_{3}-\TimeSharingRV \PublicRV \SemiPrivateRV_{23}-\SemiPrivateRV_{12}Y_{2}$ is a Markov chain, and therefore $Y_{3}-\TimeSharingRV \PublicRV \SemiPrivateRV_{12}\SemiPrivateRV_{23}\SemiPrivateRV_{31}-Y_{2}$ is a Markov chain. Employing lemma \ref{Lem:ForBSCMarokovChainWithOutputImpliesMarkovChainWithInput} twice we observe $Y_{3}X_{3}-\TimeSharingRV\PublicRV \SemiPrivateRV_{12}\SemiPrivateRV_{23}\SemiPrivateRV_{31}-X_{2}Y_{2}$ is a Markov chain and furthermore $X_{3}-\TimeSharingRV \PublicRV \SemiPrivateRV_{12}\SemiPrivateRV_{23}\SemiPrivateRV_{31}-X_{2}$ is a Markov chain, thus proving item \ref{Item:X2AndX3AreConditionallyIndependentGivenQWU12U23U31}.

Finally, we prove Markov chains in item \ref{Item:U31AndX2AreConditionallyIndependentGivenQWU12U23}. From Markov chain $(V_3,X_3,V_1,U_{31}) - (\TimeSharingRV WU_{23}U_{12}V_2) - (X_2,Y_2)$ proved in item \ref{ItemNumber:MarkovChains}, we have $I(X_{2};U_{31}|\TimeSharingRV WU_{23}U_{12}V_{2})=0$. From (\ref{Eqn:V2U12AndV3U31AreIndependentConditionedOnWU23}), we have $I(V_{2};U_{31}|\TimeSharingRV WU_{23}U_{12})=0$. Summing these two, we have $I(X_{2}V_{2};U_{31}|\TimeSharingRV WU_{23}U_{12})=0$ and therefore $I(X_{2};U_{31}|\TimeSharingRV WU_{23}U_{12})=0$ implying the Markov chain $X_{2}-\TimeSharingRV WU_{23}U_{12}-U_{31}$. Similarly, we get the Markov chain $X_{3}-\TimeSharingRV WU_{23}U_{31}-U_{12}$.
\end{proof}
Lemma \ref{Lem:CharacterizationOfTestChannelThatContainsTheRateTriple} enables us to simplify the bounds (\ref{Eqn:3BCSourceCodingBoundNonnegativity})-(\ref{Eqn:3BCChannelCodingSextupleBound}) for the particular test channel under consideration. Substituting (\ref{Eqn:BinningRatesOfUser2SemiPrivateCodebooks})-(\ref{Eqn:IndependenceOfV2WithU31AndV3WithU12ConditionedOnWU23}) in (\ref{Eqn:3BCSourceCodingBoundNonnegativity})-(\ref{Eqn:3BCChannelCodingSextupleBound}) and employing statements of lemma \ref{Lem:CharacterizationOfTestChannelThatContainsTheRateTriple}, we conclude that if $(R_{1},1-h_{b}(\delta_{2}),1-h_{b}(\delta_{3})) \in \alpha_{\mathscr{U}}(p_{\TimeSharingRV\PublicRV\underlineSemiPrivateRV\underlinePrivateRV X\underlineY})$, then there exists nonnegative numbers $S_{1},T_{1},L_{12},K_{31}$ that satisfy $R_{1}=T_{1},R_{2}=L_{12}+T_{2}=1-h_{b}(\delta_{2}),R_{3}=K_{31}+T_{3}=1-h_{b}(\delta_{3})$,
%
\begin{eqnarray}
 \label{Eqn:BoundsOnUser1CodebookAsAResultOfUser1Decoding}
&S_1  \geq I(V_1;U_{23}V_2V_3|\TimeSharingRV WU_{12}U_{31}),~~~T_1+S_1\leq I(V_1;Y_1|\TimeSharingRV WU_{12}U_{31})\\
\label{Eqn:BoundsOnRatesOfUser123CodebooksAsAResultOfUser1Decoding}
&L_{12}+K_{31}+T_1+S_1 \leq I(U_{12};U_{31}|\TimeSharingRV W)-I(U_{23};U_{12}|\TimeSharingRV W)+I(V_1U_{12}U_{31};Y_1|\TimeSharingRV W)-I(U_{23};U_{31}|\TimeSharingRV W)\\
\label{Eqn:BoundsOnRatesOfUser2CodebookAsAResultOfUser2Decoding}
&0 \leq T_{2} \leq I(V_2;Y_2|\TimeSharingRV WU_{12}U_{23}),~~~1-h_{b}(\delta_{2})=T_{2}+L_{12} = I(U_{12}V_2;Y_2|\TimeSharingRV WU_{23})\\
\label{Eqn:BoundsOnRatesOfUser3CodebookAsAResultOfUser3Decoding}
&0 \leq T_{3}\leq I(V_3;Y_3|\TimeSharingRV WU_{31}U_{23}),~~~1-h_{b}(\delta_{3})=T_{3}+K_{31}= I(U_{31}V_3;Y_3|\TimeSharingRV WU_{23}).
\end{eqnarray}
(\ref{Eqn:BoundsOnRatesOfUser2CodebookAsAResultOfUser2Decoding}), (\ref{Eqn:BoundsOnRatesOfUser3CodebookAsAResultOfUser3Decoding}) imply
\begin{equation}
 \label{Eqn:LowerBoundsOnRateOfCodebooksSharedWithUser1}
L_{12}\geq I(U_{12};Y_{2}|\TimeSharingRV \PublicRV U_{23}),~~~~~~~~~~~ K_{31} \geq I(U_{31};Y_3|\TimeSharingRV WU_{23}),
\end{equation}
(\ref{Eqn:BoundsOnUser1CodebookAsAResultOfUser1Decoding}) implies
\begin{eqnarray}
T_{1}\!\!\!\!&=&\!\!\!\!R_1 \leq I(V_1;Y_1|\TimeSharingRV W\SemiPrivateRV_{12}\SemiPrivateRV_{31})
 -I(V_1;U_{23}V_2V_3|\TimeSharingRV WU_{12}U_{31}), \nonumber\\
\label{Eqn:UpperBoundOnRateOfUser1ThatContainsRateLoss}
&\leq&\!\!\!\! I(V_1;Y_1U_{23}|\TimeSharingRV W\SemiPrivateRV_{12}\SemiPrivateRV_{31})
 -I(V_1;U_{23}V_2V_3|\TimeSharingRV WU_{12}U_{31})=I(V_1;Y_1|\TimeSharingRV W\underlineSemiPrivateRV)
 -I(V_1;V_2V_3|\TimeSharingRV W\underlineSemiPrivateRV),
\end{eqnarray}
and (\ref{Eqn:BoundsOnRatesOfUser123CodebooksAsAResultOfUser1Decoding}) in conjunction with (\ref{Eqn:LowerBoundsOnRateOfCodebooksSharedWithUser1}), and the lower bound on $S_{1}$ in (\ref{Eqn:BoundsOnUser1CodebookAsAResultOfUser1Decoding}) imply
\begin{eqnarray}
R_1 \!\!\!\!\!\!&\leq&\!\!\!\! I(U_{12}U_{31}V_1;Y_1|\TimeSharingRV W)
-I(V_1;U_{23}V_2V_3|\TimeSharingRV WU_{12}U_{31})-I(U_{12};Y_2|\TimeSharingRV WU_{23})-I(U_{31};Y_3|\TimeSharingRV WU_{23}) \nonumber\\
&&\!\!\!\!+I(U_{12};U_{31}|\TimeSharingRV W)-I(U_{23};U_{12}|\TimeSharingRV W)-I(U_{23};U_{31}|\TimeSharingRV W)\nonumber\\
\!\!\!\!&\leq&\!\!\!\! I(U_{12}U_{31}V_{1};Y_1 U_{23}|\TimeSharingRV W)
-I(V_1;U_{23}V_2V_3|\TimeSharingRV WU_{12}U_{31})-I(U_{12};Y_2|\TimeSharingRV WU_{23})-I(U_{31};Y_3|\TimeSharingRV WU_{23})\nonumber\\
&&\!\!\!\!+I(U_{12};U_{31}|\TimeSharingRV W)-I(U_{23};U_{12}|\TimeSharingRV W)-I(U_{23};U_{31}|\TimeSharingRV W)\nonumber\\\label{Eqn:BoundOnUser1RateAsAConsequenceOfDecoderDecoding3Codebooks}
\!\!\!\!\!\!\!\!&=&\!\!\!\!I(V_1;Y_1|\TimeSharingRV
W\underlineSemiPrivateRV) \!-\!I(V_1;V_2V_3|\TimeSharingRV
W\underlineSemiPrivateRV)\! +\!I(U_{12}U_{31};Y_1|\TimeSharingRV W
U_{23}) \!-\!I(U_{12};Y_2|\TimeSharingRV WU_{23}) \nonumber \\
&&  - I(U_{31};Y_3|\TimeSharingRV WU_{23}\!),
\end{eqnarray}
where (\ref{Eqn:BoundOnUser1RateAsAConsequenceOfDecoderDecoding3Codebooks}) follows from the last equality in (\ref{Eqn:SemiPrivateLayerBinningRates}). Combining (\ref{Eqn:UpperBoundOnRateOfUser1ThatContainsRateLoss}) and (\ref{Eqn:BoundOnUser1RateAsAConsequenceOfDecoderDecoding3Codebooks}), we have
\begin{equation}
 \label{Eqn:3To1DBCAnalyticalConvTwoUpperBoundsCombined}
R_{1} \leq I(V_1;Y_1|\TimeSharingRV W\underlineSemiPrivateRV)
-I(V_1;V_2V_3|\TimeSharingRV W\underlineSemiPrivateRV) + \min
\left\{ \begin{array}{c} 0,I(U_{12}U_{31};Y_1|\TimeSharingRV W U_{23})
  -I(U_{12};Y_2|\TimeSharingRV WU_{23})\\-I(U_{31};Y_3|\TimeSharingRV WU_{23})
\end{array} 
\right\}.
\end{equation}
We have thus obtained (\ref{Eqn:UpperBoundOnRateOfUser1ThatContainsRateLoss}) and (\ref{Eqn:BoundOnUser1RateAsAConsequenceOfDecoderDecoding3Codebooks}), two upper bounds on $R_{1}$ we were seeking, and this concludes the first stage of our proof. In the sequel, we prove the minimum of the above upper bounds on $R_{1}$ is strictly lesser than $h_{b}(\tau * \delta_{1})-h_{b}(\delta_{1})$. Towards, that end, note that upper bound (\ref{Eqn:UpperBoundOnRateOfUser1ThatContainsRateLoss}) contains the rate loss due to precoding. In the second stage, we work on (\ref{Eqn:UpperBoundOnRateOfUser1ThatContainsRateLoss}) and derive conditions under which there is \textit{no} rate loss.

Markov chains of item (\ref{ItemNumber:MarkovChains}) in lemma \ref{Lem:CharacterizationOfTestChannelThatContainsTheRateTriple} imply $V_{1}-QW\underlineSemiPrivateRV V_{2}V_{3}-X_{2}$ and $V_{1}-QW\underlineSemiPrivateRV V_{2}V_{3}X_{2}-X_{3}$ are Markov chains. Therefore, $I(V_{1};X_{2}|QW\underlineSemiPrivateRV V_{2}V_{3})=0$ and $I(V_{1};X_{3}|QW\underlineSemiPrivateRV V_{2}V_{3}X_{2})=0$. Summing these, we have $I(V_{1};X_{2}X_{3}|QW\underlineSemiPrivateRV V_{2}V_{3})=0$. Employing this in (\ref{Eqn:UpperBoundOnRateOfUser1ThatContainsRateLoss}),we note
\begin{eqnarray}
\label{eq:inequality1}
R_1 &\leq& I(V_1;Y_1|\TimeSharingRV W\underlineSemiPrivateRV) -I(V_1;V_2V_3|\TimeSharingRV W\underlineSemiPrivateRV) =I(V_1;Y_1|\TimeSharingRV W\underlineSemiPrivateRV) -I(V_1;V_2V_3X_2X_3|\TimeSharingRV W\underlineSemiPrivateRV) \\
&\leq& I(V_1;Y_1|\TimeSharingRV W\underlineSemiPrivateRV) - I(V_1;X_2,X_3|\TimeSharingRV W\underlineSemiPrivateRV) \leq I(V_1;Y_1 |\TimeSharingRV W\underlineSemiPrivateRV) -I(V_1;X_2\oplus X_3|\TimeSharingRV W\underlineSemiPrivateRV) \label{eq:inequality3}
\end{eqnarray}
By now, an informed reader must have made the connection to capacity of the PTP channel with non-causal state \cite{1980MMPCT_GelPin}. In the sequel, we state the import of this connection.\footnote{The proof is relegated to appendix \ref{AppSec:CharacterizationForNoRateLossInPTP-STx}} This will require us to define a few mathematical objects that may initially seem unrelated to a reader unaware of findings in \cite{1980MMPCT_GelPin}. Very soon, we argue the relevance. An informed reader will find the following development natural.

Let $\mathbb{D}_{T}(\tau,\delta,\epsilon)$ denote the collection of all probability mass functions $p_{\GPV\GPS\GPX\GPY}$ defined on $\GPSetV \times \left\{ 0,1 \right\}\times \left\{ 0,1 \right\}\times \left\{ 0,1 \right\}$, where $\GPSetV$ is an arbitrary finite set such that (i) $p_{\GPY|\GPX\GPS\GPV}(x\oplus s|x,s,v)=p_{\GPY|\GPX\GPS}(x\oplus s|x,s)=1-\delta$, where $\delta \in (0,\frac{1}{2})$, (ii) $p_{\GPS}(1)=\epsilon \in [0,1]$, and (iii) $p_{\GPX}(1)\leq \tau \in (0,\frac{1}{2})$. For $p_{\GPV\GPS\GPX\GPY} \in \mathbb{D}_{T}(\tau,\delta,\epsilon)$, let \begin{equation}\alpha_{T}(p_{\GPV\GPS\GPX\GPY})=I(\GPV;\GPY)-I(\GPV;\GPS)\mbox{ and }\alpha_{T}(\tau,\delta,\epsilon)=\sup_{p_{\GPV\GPS\GPX\GPY} \in \mathbb{D}_{T}(\tau,\delta,\epsilon)}\alpha_{T}(p_{\GPV\GPS\GPX\GPY})\nonumber.\end{equation} For every $(q,w,\underline{u}) \in \mathcal{Q} \times \mathcal{W} \times \underlineSemiPrivateRV$ that satisfies $p_{\TimeSharingRV\PublicRV\underlineSemiPrivateRV}(q,w,\underline{u})>0$, we note $p_{Y_{1}|X_{1},X_{2}\oplus X_{3}\PrivateRV_{1}\TimeSharingRV\PublicRV\underlineSemiPrivateRV}(x_{1}\oplus x_{2}\oplus x_{3}|x_{1},x_{2}\oplus x_{3},v_{1},q,w,\underline{u})=p_{Y_{1}|X_{1},X_{2}\oplus X_{3}\TimeSharingRV\PublicRV\underlineSemiPrivateRV}(x_{1}\oplus x_{2}\oplus x_{3}|x_{1},x_{2}\oplus x_{3},q,w,\underline{u})=1-\delta_{1}$. In other words, conditioned on the event $\left\{ (\TimeSharingRV,\PublicRV,\underlineSemiPrivateRV)= (q,w,\underline{u})\right\}$, $V_{1}-X_{1},X_{2}\oplus X_{3}-Y_{1}$ is a Markov chain. We conclude $p_{V_{1}X_{2}\oplus X_{3}X_{1}Y_{1}|QW\underlineSemiPrivateRV}(\cdots|q,w,\underline{u}) \in \mathbb{D}_{T}(\tau_{q,w,\underline{u}},\delta_{1},\epsilon_{q,w,\underline{u}})$, where $\tau_{q,w,\underline{u}}=p_{X_{1}|QW\underlineSemiPrivateRV}(1|q,w,\underline{u})$, $\epsilon_{q,w,\underline{u}}=p_{X_{2}\oplus X_{3}|QW\underlineSemiPrivateRV}(1|q,w,\underline{u})$. Hence
\begin{equation}
 \label{Eqn:CharacterizationOfAchievableRateForUser1InTermsOfGelfandPinskerCapacity}
I(V_1;Y_1|(\TimeSharingRV ,\PublicRV, \underlineSemiPrivateRV)=(q,w,\underline{u}))
\!-\!I(V_1;X_{2}\oplus X_{3}|(\TimeSharingRV ,\PublicRV ,\underlineSemiPrivateRV)=(q,w,\underline{u}))\leq \alpha_{T}(\tau_{q,w,\underline{u}},\delta_{1},\epsilon_{q,w,\underline{u}}).
\end{equation}
We now characterize $\alpha_{T}(\tau,\delta,\epsilon)$. Verify that
$\alpha_{T}(\tau,\delta,0)=\alpha_{T}(\tau,\delta,1)= h_b(\tau *
\delta)-h_b(\delta)$. The following lemma states that
$\alpha_{T}(\tau,\delta,\epsilon)$ is strictly lower for non-trivial
values of $\epsilon$. Please refer to appendices
\ref{AppSec:CharacterizationForNoRateLossInPTP-STx} and \ref{AppSec:TheBinaryAdditiveDirtyPointToPointChannelSuffersARateLoss} for a proof.
\begin{lemma} 
\label{lem:GP}
If $\tau ,\delta \in (0,\frac{1}{2})$ and $\epsilon \in (0,1)$, then $\alpha_{T}(\tau,\delta,\epsilon) < h_b(\tau * \delta)-h_b(\delta)$. Alternatively, if $\tau ,\delta \in (0,\frac{1}{2})$ and $\epsilon \in [0,1]$, then either $\alpha_{T}(\tau,\delta,\epsilon) < h_b(\tau * \delta)-h_b(\delta)$ or $\epsilon \in \left\{ 0,1\right\}$.\end{lemma}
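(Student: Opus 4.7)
My plan is to invoke the Gelfand--Pinsker characterization \cite{1980MMPCT_GelPin} to write
\begin{equation}
\alpha_{T}(\tau,\delta,\epsilon)=\max_{p_{\tildeV\tildeS\tildeX\tildeY}\in\mathbb{D}_{T}(\tau,\delta,\epsilon)}\left[I(\tildeV;\tildeY)-I(\tildeV;\tildeS)\right],\nonumber
\end{equation}
with the maximum attained by a standard Carath\'eodory-type bound on $|\tildeSetV|$ and, without loss of generality, $\tildeX=f(\tildeV,\tildeS)$ deterministic. The opening move is the upper bound. The identity $I(\tildeV;\tildeY)-I(\tildeV;\tildeS)=I(\tildeV;\tildeY|\tildeS)-I(\tildeV;\tildeS|\tildeY)\leq I(\tildeV;\tildeY|\tildeS)$ combined with the Gelfand--Pinsker Markov chain $\tildeV-(\tildeX,\tildeS)-\tildeY$ and determinism of $\tildeX$ yields $I(\tildeV;\tildeY|\tildeS)=I(\tildeX;\tildeY|\tildeS)$. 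Writing $\tildeN$ for the channel noise, $p_{s}\define P(\tildeX=1|\tildeS=s)$ and $\bar{p}=(1-\epsilon)p_{0}+\epsilon p_{1}=\mathbb{E}[\tildeX]\leq\tau$, strict concavity of $h_{b}$ together with injectivity of $p\mapsto p*\delta$ (since $\delta\neq 1/2$) gives $I(\tildeX;\tildeY|\tildeS)=(1-\epsilon)h_{b}(p_{0}*\delta)+\epsilon h_{b}(p_{1}*\delta)-h_{b}(\delta)\leq h_{b}(\bar{p}*\delta)-h_{b}(\delta)\leq h_{b}(\tau*\delta)-h_{b}(\delta)$, the last bound by monotonicity of $h_{b}$ on $[0,1/2]$.

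Suppose next that $\alpha_{T}(\tau,\delta,\epsilon)=h_{b}(\tau*\delta)-h_{b}(\delta)$; tracing strictness in the above inequalities, the optimizer must satisfy (i) $\tildeV-\tildeY-\tildeS$ is a Markov chain (from $I(\tildeV;\tildeS|\tildeY)=0$), and (ii) $p_{0}=p_{1}=\tau$, i.e., $\tildeX\sim\mathrm{Bernoulli}(\tau)$, $\tildeX\perp\tildeS$, with the cost tight. Condition (ii), together with binary $\tildeX=f(\tildeV,\tildeS)$, suggests partitioning the support of $\tildeV$ by the signature $(f(v,0),f(v,1))\in\{0,1\}^{2}$, which I label \emph{A} ($\tildeX=0$), \emph{B} ($\tildeX=\tildeS$), \emph{C} ($\tildeX=1\oplus\tildeS$), \emph{D} ($\tildeX=1$).

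Setting $\pi_{v,s}\define P(\tildeV=v,\tildeS=s)$ and $q(y|v,s)\define P(\tildeY=y|\tildeV=v,\tildeS=s)$, the Markov chain (i) is equivalent to $\pi_{v,0}q(y|v,0)/[\pi_{v,1}q(y|v,1)]$ depending only on $y$; a short auxiliary argument using (i) also forces $\pi_{v,s}>0$ for every $v$ in the support and each $s$ (otherwise $P(\tildeV=v|\tildeY=y)=0$ for all $y$, contradicting $v$ being in the support). Eliminating $\pi_{v,0}/\pi_{v,1}$ between the $y=0$ and $y=1$ equations leaves the invariant $\rho(v)\define q(1|v,1)q(0|v,0)/[q(0|v,1)q(1|v,0)]$, which must therefore coincide across the support. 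Direct calculation gives $\rho\equiv(1-\delta)^{2}/\delta^{2},\,1,\,1,\,\delta^{2}/(1-\delta)^{2}$ for Types A, B, C, D respectively. Since $\delta\in(0,1/2)$: mixing Types A and D forces $\delta=1/2$; mixing either of $\{\mathrm{A},\mathrm{D}\}$ with either of $\{\mathrm{B},\mathrm{C}\}$ forces $1=(1-\delta)^{2}/\delta^{2}$; singleton support on A or D contradicts $\mathbb{E}[\tildeX]=\tau\in(0,1/2)$; and support inside $\mathrm{B}\cup\mathrm{C}$ makes $\tildeY$ conditionally independent of $\tildeS$ given $\tildeV$, which the Markov chain elevates to $\tildeY\perp\tildeS$, forcing $\tau*\delta=1-\tau*\delta=1/2$ and hence $\delta=1/2$. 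Every case yields a contradiction, establishing the first claim.

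I expect the main obstacle to be this type-by-type bookkeeping --- in particular the subsidiary $\pi_{v,s}>0$ argument and the exhaustive combination of types --- while the upper bound and extraction of equality conditions are routine. For the second assertion, when $\epsilon\in\{0,1\}$ the state $\tildeS$ is degenerate and the channel reduces to $\mathrm{BSC}(\delta)$ with input cost $\tau$, so that $\alpha_{T}(\tau,\delta,\epsilon)=h_{b}(\tau*\delta)-h_{b}(\delta)$ is achieved by $\tildeV=\tildeX\sim\mathrm{Bernoulli}(\tau)$; the case analysis above does not exclude this scenario, precisely reflecting the alternative in the statement.
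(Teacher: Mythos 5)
Your proof is correct and takes a genuinely different route from the paper. Both derivations start from the same upper bound $I(\tildeV;\tildeY)-I(\tildeV;\tildeS)\leq I(\tildeX;\tildeY|\tildeS)\leq h_b(\tau*\delta)-h_b(\delta)$ and extract the same equality conditions: $\tildeX\perp\tildeS$ with $P(\tildeX=1)=\tau$, the Markov chains $\tildeV-\tildeY-\tildeS$ and $\tildeX-(\tildeV,\tildeS)-\tildeY$, and $\tildeX$ a deterministic function of $(\tildeV,\tildeS)$. Where you diverge is the contradiction. The paper proves a Carath\'eodory bound $|\tildeSetV|\leq 4$, encodes $f$ in eight bits $z_0,\ldots,z_7$, and carries out a tabulated, exhaustive case analysis over the weight pair $(m,l)$ (Appendix B, Proposition~\ref{Prop:PropositionRegardindRateLoss}). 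You instead partition the support of $\tildeV$ by the signature $(f(v,0),f(v,1))\in\{0,1\}^2$ and observe that the Markov chain $\tildeV-\tildeY-\tildeS$ forces the cross-ratio $\rho(v)=q(1|v,1)q(0|v,0)/[q(0|v,1)q(1|v,0)]$ to be constant over the support; the type-dependent values $(1-\delta)^2/\delta^2,\,1,\,1,\,\delta^2/(1-\delta)^2$ then rule out every configuration. This is cleaner in two concrete ways: it needs the Carath\'eodory step only to guarantee attainment of the supremum, not to cap $|\tildeSetV|$ at four, and the type classification scales automatically to arbitrary auxiliary alphabets, so there is nothing to enumerate. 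Two cosmetic points worth tightening. In the $B\cup C$ case it is slightly cleaner to note that for $v$ of type $B$ or $C$ one has $q(y|v,0)=q(y|v,1)$, so the Markov identity $\pi_{v,0}q(y|v,0)/[\pi_{v,1}q(y|v,1)]=g(y)$ degenerates to $\pi_{v,0}/\pi_{v,1}=g(y)$ for both $y$, forcing $g(0)=g(1)$, i.e.\ $\tildeY\perp\tildeS$ directly; the route via the two conditional-independence statements alone fails in degenerate examples, so one must keep the non-degeneracy of $q(\cdot|v,\cdot)$ in view. And the resulting identity $\tau*\delta=1-\tau*\delta$ already contradicts $\tau,\delta\in(0,\tfrac12)$ since then $\tau*\delta<\tfrac12$; there is no need to extract ``$\delta=\tfrac12$'' from it.
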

(\ref{eq:inequality3}), (\ref{Eqn:CharacterizationOfAchievableRateForUser1InTermsOfGelfandPinskerCapacity}) and lemma \ref{lem:GP} in conjunction with Jensen's inequality enables us to conclude
\begin{eqnarray}
 \label{Eqn:PluggingInImportOfRateLossLemma}\!\!\!\!\!\!\!\!\!
\lefteqn{\!\!\!\!\!\!\!\!\!\!\!\!\!\!\!\!\!\!\!\!\!\!R_{1} \leq I(V_1;Y_1 |\TimeSharingRV W\underlineSemiPrivateRV) -I(V_1;X_2\oplus X_3|\TimeSharingRV W\underlineSemiPrivateRV)\overset{(i)}{\leq} \!\!\sum_{(q,w,\underline{u})} p_{QW\underline{U}}(q,w,\underline{u})h_{b}(\tau_{q,w,\underline{u}} * \delta_{1})-h_{b}(\delta_{1})}\\ \lefteqn{\!\!\!\!\!\!\!\!\!\!\!\!\!\!\! \overset{(ii)}{\leq} h_{b}[\delta_{1}+(1-2\delta_{1})\!\!\!\sum_{(q,w,\underline{u})}\!\!\!\!p_{QW\underline{U}}(q,w,\underline{u})\tau_{q,w,\underline{u}}]-h_{b}(\delta_{1}) }\nonumber\\
\label{Eqn:PluggingInImportOfRateLossLemmaAndJensenIneq}
\!\!\!\!\!\!\!\!\!\!\!\!&&\!\!\!\!\!\!\!\!\!\!\!\!\!\!\!\!\!\!\!\!\!\!\!\!\!\!\!\!\!\!\!\!\!\!\!\!= h_{b}\left(\delta_{1}+(1-2\delta_{1})p_{X_{1}}(1)\right)-h_{b}(\delta_{1}) \overset{(iii)}{\leq} h_{b}\left(\delta_{1}+(1-2\delta_{1})\tau\right)-h_{b}(\delta_{1}) = h_{b}\left(\tau  *\delta_{1}\right)-h_{b}(\delta_{1}), \nonumber
\end{eqnarray}
where equality holds in (\ref{Eqn:PluggingInImportOfRateLossLemma})(i), (ii) and (iii) only if $\epsilon_{q,w,\underline{u}} \in \{0,1\}$ and $\tau_{q,w,\underline{u}}=p_{X_{1}|QW\underlineSemiPrivateRV}(1|q,w,\underline{u})=p_{X_{1}}(1)=\tau$ for every $(q,w,\underline{u})$ for which $p_{QW\underline{U}}(q,w,\underline{u})>0$. We conclude that $R_{1}=h_{b}\left(\tau  *\delta_{1}\right)-h_{b}(\delta_{1})$ only if $\tau_{q,w,\underline{u}}=\tau$ for every such $(q,w,\underline{u})$, and 
\begin{equation}
 \label{Eqn:CaseOfNoRateLoss}
I(V_1;Y_1|\TimeSharingRV W\underlineSemiPrivateRV)
-I(V_1;X_2\oplus X_3|\TimeSharingRV W\underlineSemiPrivateRV)  = h_b(\tau * \delta_{1}) -h_b(\delta_{1})\mbox{ and }H(X_2 \oplus  X_3|\TimeSharingRV W\underlineSemiPrivateRV)=0.
\end{equation}
This has got us to the third and final stage. Here we argue (\ref{Eqn:CaseOfNoRateLoss}) implies RHS of (\ref{Eqn:BoundOnUser1RateAsAConsequenceOfDecoderDecoding3Codebooks}) is strictly smaller than $h_b(\tau * \delta_{1}) -h_b(\delta_{1})$. Towards that end, note that Markov chain $X_{2}-\TimeSharingRV WU_{23}U_{12}U_{31}-X_{3}$ proved in lemma \ref{Lem:CharacterizationOfTestChannelThatContainsTheRateTriple}, item \ref{Item:X2AndX3AreConditionallyIndependentGivenQWU12U23U31}
and (\ref{Eqn:CaseOfNoRateLoss}) imply $H(X_{2}|\TimeSharingRV W\underlineSemiPrivateRV)=H(X_{3}|\TimeSharingRV W\underlineSemiPrivateRV)=0$.\footnote{Indeed, for any $(q,w,\underline{u}) \in \TimeSharingRVSet \times \PublicRVSet \times \underline{\SemiPrivateRVSet}$ that satisfies $P((\TimeSharingRV,\PublicRV,\underlineSemiPrivateRV)=(q,w,\underline{u}))>0$, if $P(X_{j}=1|(\TimeSharingRV,\PublicRV,\underlineSemiPrivateRV)=(q,w,\underline{u}))=\alpha_{j}:j=2,3$, then $0=H(X_{2}\oplus X_{3}|(\TimeSharingRV,\PublicRV,\underlineSemiPrivateRV)=(q,w,\underline{u}))=h_{b}(\alpha_{2}*\alpha_{3}) \geq \alpha_{2}h_{b}(1-\alpha_{3})+(1-\alpha_{2})h_{b}(\alpha_{3})=\alpha_{2}h_{b}(\alpha_{3})+(1-\alpha_{2})h_{b}(\alpha_{3})=h_{b}(\alpha_{3})\geq 0$, where the first inequality follows from concavity of binary entropy function, and similarly, interchanging the roles of $\alpha_{2}, \alpha_{3}$, we obtain $0=H(X_{2}\oplus X_{3}|(\TimeSharingRV,\PublicRV,\underlineSemiPrivateRV)=(q,w,\underline{u})) \geq h_{b}(\alpha_{2})\geq 0$.} Furthermore, Markov chains $\SemiPrivateRV_{12}-\TimeSharingRV\PublicRV \SemiPrivateRV_{23} \SemiPrivateRV_{31}-X_{3}$ and $\SemiPrivateRV_{31}-\TimeSharingRV\PublicRV \SemiPrivateRV_{23} \SemiPrivateRV_{12}-X_{2}$ proved in lemma \ref{Lem:CharacterizationOfTestChannelThatContainsTheRateTriple} item \ref{Item:U31AndX2AreConditionallyIndependentGivenQWU12U23} imply
\begin{equation}
\label{Eqn:InTheCaseOfNoRateLossReeciever1KnowsX2AndX3}
H(X_2|\TimeSharingRV WU_{23}U_{12})=H(X_3|\TimeSharingRV WU_{23}U_{31})=0.
\end{equation}
Observe that
\begin{eqnarray}
\label{Eqn:SubstitutingForNoRateLossCaseInUpperBound1}
h_b(\tau * \delta_{1}) -h_b(\delta_{1})\!\!&=&\!\!I(V_1;Y_1|\TimeSharingRV W\underlineSemiPrivateRV) -I(V_1;X_2\oplus X_3|\TimeSharingRV  W\underlineSemiPrivateRV) = I(V_1;Y_1|\TimeSharingRV  W \underlineSemiPrivateRV ) =I(V_1;Y_1|\TimeSharingRV  W \underlineSemiPrivateRV ,X_2,X_3) \\
\!\!&=&\!\! H(Y_{1}|\TimeSharingRV W\underlineSemiPrivateRV X_{2}X_{3})-H(Y_{1}|\TimeSharingRV W\underlineSemiPrivateRV V_{1}X_{2}X_{3})
\leq H(Y_{1}|\TimeSharingRV W\underlineSemiPrivateRV X_{2}X_{3})-H(Y_{1}|\TimeSharingRV W\underlineSemiPrivateRV V_{1}X_{1}X_{2}X_{3})\nonumber\\
\label{Eqn:EntropyOfY1ConditionedOnX2AndX3}
\!\!&=&\!\!H(Y_{1}|\TimeSharingRV  W \underlineSemiPrivateRV ,X_2,X_3)-h_b(\delta_{1})
\end{eqnarray}
where the first two equalities in (\ref{Eqn:SubstitutingForNoRateLossCaseInUpperBound1}) follows from (\ref{Eqn:CaseOfNoRateLoss}) and the last equality follows from (\ref{Eqn:InTheCaseOfNoRateLossReeciever1KnowsX2AndX3}). (\ref{Eqn:EntropyOfY1ConditionedOnX2AndX3}) and first equality in (\ref{Eqn:SubstitutingForNoRateLossCaseInUpperBound1}) enables us to conclude
\begin{equation}
 \label{Eqn:LowerBoundOnEntropyOfY1ConditionedOnX2AndX3}
H(Y_{1}|\TimeSharingRV  W \underlineSemiPrivateRV ,X_2,X_3) \geq h_b(\tau * \delta_{1})
\end{equation}
We now provide an upper bound on the right hand side of (\ref{Eqn:BoundOnUser1RateAsAConsequenceOfDecoderDecoding3Codebooks}). Note that it suffices to prove $I(U_{12}U_{31};Y_1|\TimeSharingRV WU_{23})-I(U_{12};Y_2|\TimeSharingRV WU_{23}) -I(U_{31};Y_3|\TimeSharingRV WU_{23})$ is negative. Observe that
\begin{eqnarray}
\lefteqn{I(U_{12}U_{31};Y_1|\TimeSharingRV WU_{23})-I(U_{12};Y_2|\TimeSharingRV WU_{23})
-I(U_{31};Y_3|\TimeSharingRV WU_{23})}\nonumber\\ 
&&=H(Y_{1}|\TimeSharingRV WU_{23})-H(Y_{1}|\TimeSharingRV W\underlineSemiPrivateRV)-H(Y_{2}|\TimeSharingRV WU_{23})+H(Y_{2}|\TimeSharingRV WU_{23}U_{12})-H(Y_{3}|\TimeSharingRV WU_{23})+H(Y_{3}|\TimeSharingRV WU_{23}U_{31})
\nonumber\\
\label{Eqn:X2IsAFunctionOfWU23U12AndX3IsAFunctionOfWU23U31}
&&=H(Y_{1}|\TimeSharingRV WU_{23})-H(Y_{1}|\TimeSharingRV WX_{2}X_{3}\underlineSemiPrivateRV)-H(Y_{2})+H(Y_{2}|\TimeSharingRV WU_{23}U_{12}X_{2})-H(Y_{3})+H(Y_{3}|\TimeSharingRV WU_{23}U_{31}X_{3})
\\
&&= H(Y_{1}|\TimeSharingRV WU_{23})-H(Y_{1}|\TimeSharingRV WX_{2}X_{3}\underlineSemiPrivateRV)-2+h_{b}(\delta_{2})+h_{b}(\delta_{3})
\nonumber\\
\label{Eqn:SubstitutingLowerBoundOnEntropyOfY1}
&&\leq 1-H(Y_{1}|\TimeSharingRV WX_{2}X_{3}\underlineSemiPrivateRV)-2+h_{b}(\delta_{2})+h_{b}(\delta_{3})
\leq h_{b}(\delta_{2})+h_{b}(\delta_{3})-h_b(\delta_{1}*\tau)-1
\end{eqnarray}
where (\ref{Eqn:X2IsAFunctionOfWU23U12AndX3IsAFunctionOfWU23U31}) follows from (\ref{Eqn:CaseOfNoRateLoss}) and (\ref{Eqn:InTheCaseOfNoRateLossReeciever1KnowsX2AndX3}), second inequality in (\ref{Eqn:SubstitutingLowerBoundOnEntropyOfY1}) follows from (\ref{Eqn:LowerBoundOnEntropyOfY1ConditionedOnX2AndX3}).
If $\tau,\delta_{1},\delta_{2},\delta_{3}$ are such that $h_b(\delta_{2})+h_{b}(\delta_{3}) < 1+h_b(\delta_{1}*\tau)
$, then right hand side of
(\ref{Eqn:SubstitutingLowerBoundOnEntropyOfY1}) is negative. This
concludes the proof. 

\appendices
\section{Upper bound on $P(\epsilon_{l})$}
\label{AppSec:AnalysisOfEncoderErrorEventFor3To1DBC}
From (\ref{Eqn:3To1DBCOneLayerEncoderErrorEventUseOfCheybyshevIneq}), it suffices to derive upper and lower bounds on $\Var\left\{ \phi(M_{1},M_{2}^{t_{2}},M_{3}^{t_{3}})\right\}$ and $\Expectation\left\{ \phi(M_{1},M_{2}^{t_{2}},M_{3}^{t_{3}}) \right\}$ respectively. Note that $\Expectation\left\{\phi^{2}(m_{1},m_{2}^{t_{2}},m_{3}^{t_{3}})\right\} = \sum_{l=0}^{7}\mathscr{T}_{l}$, where
\begin{eqnarray}
 \label{Eqn:3To1DBCOneLayerEncoderErrorEventVarianceOfPhi}
\mathscr{T}_{0}&=&\Expectation\left\{ \phi(M_{1},M_{2}^{t_{2}},M_{3}^{t_{3}})  \right\} = \sum_{\substack{(b_{1},a_{2}^{s_{2}},a_{3}^{s_{3}}) \in \\ \mathcal{B}_{1} \times \fieldpi^{s_{2}} \times \fieldpi^{s_{3}}}}\sum_{\substack{(v_{1}^{n},u_{2}^{n},u_{3}^{n}) \in\\T_{2\eta_{2}}(V_{1},U_{2},U_{3}|q^{n})}} P\left( \substack{V_{1}^{n}(m_{1},b_{1})=v_{1}^{n},U_{j}(a_{j}^{s_{j}})=u_{j}^{n},I(a_{j}^{s_{j}})=m_{j}^{t_{j}}:j=2,3  }\right),\\
 \mathscr{T}_{1} &=& \sum_{\substack{(b_{1},a_{2}^{s_{2}},a_{3}^{s_{3}}) \in \\ \mathcal{B}_{1} \times \fieldpi^{s_{2}} \times \fieldpi^{s_{3}}}}\sum_{\substack{\tilde{a}_{3}^{s_{3}}\in \fieldpi^{s_{3}}\\\tilde{a}_{3}^{s_{3}}\neq a_{3}^{s_{3}}}}\sum_{\substack{(v_{1}^{n},u_{2}^{n},u_{3}^{n}) \in\\T_{2\eta_{2}}(V_{1},\underline{U}|q^{n})}}\sum_{\substack{\tilde{u}_{3}^{n} \in\\T_{2\eta_{2}}(U_{3}|q^{n},v_{1}^{n},u_{2}^{n})}} P\left(\substack{ V_{1}^{n}(m_{1},b_{1})=v_{1}^{n},U_{j}(a_{j}^{s_{j}})=u_{j}^{n},I(a_{j}^{s_{j}})=m_{j}^{t_{j}}:j=2,3,\\U_{3}^{n}(\tilde{a}_{3}^{s_{3}})=\tilde{u}_{3}^{n},I(\tilde{a}_{3}^{s_{3}})=m_{3}^{t_{3}}  }\right) \nonumber
 \end{eqnarray}
\begin{eqnarray}
\mathscr{T}_{2} &=& \sum_{\substack{(b_{1},a_{2}^{s_{2}},a_{3}^{s_{3}}) \in \\ \mathcal{B}_{1} \times \fieldpi^{s_{2}} \times \fieldpi^{s_{3}}}}\sum_{\substack{\tilde{a}_{2}^{s_{2}}\in \fieldpi^{s_{2}}\\\tilde{a}_{2}^{s_{2}}\neq a_{2}^{s_{2}}}}\sum_{\substack{(v_{1}^{n},u_{2}^{n},u_{3}^{n}) \in\\T_{2\eta_{2}}(V_{1},\underline{U}|q^{n})}}\sum_{\substack{\tilde{u}_{2}^{n} \in\\T_{2\eta_{2}}(U_{2}|q^{n},v_{1}^{n},u_{3}^{n})}} P\left(\substack{ V_{1}^{n}(m_{1},b_{1})=v_{1}^{n},U_{j}(a_{j}^{s_{j}})=u_{j}^{n},I(a_{j}^{s_{j}})=m_{j}^{t_{j}}:j=2,3,\\U_{2}^{n}(\tilde{a}_{2}^{s_{2}})=\tilde{u}_{2}^{n},I(\tilde{a}_{2}^{s_{2}})=m_{2}^{t_{2}}  }\right) \nonumber\\
 \mathscr{T}_{3} &=&
 \sum_{\substack{(b_{1},a_{2}^{s_{2}},a_{3}^{s_{3}}) \in
     \\ \mathcal{B}_{1} \times \fieldpi^{s_{2}} \times
     \fieldpi^{s_{3}}}}\sum_{\substack{(\tilde{a}_{2}^{s_{2}},\tilde{a}_{3}^{s_{3}})\in
     \fieldpi^{s_{2}}\times
     \fieldpi^{s_{3}}\\\tilde{a}_{2}^{s_{2}}\neq
     a_{2}^{s_{2}},\tilde{a}_{3}^{s_{3}}\neq
     a_{3}^{s_{3}}}}\sum_{\substack{(v_{1}^{n},u_{2}^{n},u_{3}^{n})
     \in\\T_{2\eta_{2}}(V_{1},\underline{U}|q^{n})}}\sum_{\substack{(\tilde{u}_{2}^{n},\tilde{u}_{3}^{n})
     \in\\T_{2\eta_{2}}(\underline{U}|q^{n},v_{1}^{n})}}
 P\left(\substack{
   V_{1}^{n}(m_{1},b_{1})=v_{1}^{n},U_{j}(a_{j}^{s_{j}})=u_{j}^{n},I(a_{j}^{s_{j}})=m_{j}^{t_{j}}:j=2,3,\\U_{j}(\tilde{a}_{j}^{s_{j}})=\tilde{u}_{j}^{n},I(\tilde{a}_{j}^{s_{j}})=m_{j}^{t_{j}}:j=2,3
 }\right)\nonumber
\end{eqnarray}
\begin{eqnarray}
 \mathscr{T}_{4} &=& \sum_{\substack{(b_{1},a_{2}^{s_{2}},a_{3}^{s_{3}}) \in \\ \mathcal{B}_{1} \times \fieldpi^{s_{2}} \times \fieldpi^{s_{3}}}}\sum_{\substack{\tilde{b}_{1} \in \mathcal{B}_{1}\\\tilde{b}_{1}\neq b_{1}}}\sum_{\substack{(v_{1}^{n},u_{2}^{n},u_{3}^{n}) \in\\T_{2\eta_{2}}(V_{1},\underline{U}|q^{n})}}\sum_{\substack{\tilde{v}_{1}^{n} \in\\T_{2\eta_{2}}(V_{1}|q^{n},\underline{u}^{n})}} P\left( \substack{V_{1}^{n}(m_{1},b_{1})=v_{1}^{n},U_{j}(a_{j}^{s_{j}})=u_{j}^{n},I(a_{j}^{s_{j}})=m_{j}^{t_{j}}:j=2,3,V_{1}^{n}(m_{1},\tilde{b}_{1})=\tilde{v}_{1}^{n}  }\right) \nonumber\\
 \mathscr{T}_{5} &=& \sum_{\substack{(b_{1},a_{2}^{s_{2}},a_{3}^{s_{3}}) \in \\ \mathcal{B}_{1} \times \fieldpi^{s_{2}} \times \fieldpi^{s_{3}}}}\sum_{\substack{(\tilde{b}_{1},\tilde{a}_{3}^{s_{3}})\in \mathcal{B}_{1}\times\fieldpi^{s_{3}}\\\tilde{b}_{1}\neq b_{1}\tilde{a}_{3}^{s_{3}}\neq a_{3}^{s_{3}}}}\sum_{\substack{(v_{1}^{n},u_{2}^{n},u_{3}^{n}) \in\\T_{2\eta_{2}}(V_{1},\underline{U}|q^{n})}}\sum_{\substack{\tilde{v}_{1}^{n},\tilde{u}_{3}^{n} \in\\T_{2\eta_{2}}(V_{1},U_{3}|q^{n},u_{2}^{n})}} P\left(\substack{ V_{1}^{n}(m_{1},b_{1})=v_{1}^{n},U_{j}(a_{j}^{s_{j}})=u_{j}^{n},I(a_{j}^{s_{j}})=m_{j}^{t_{j}}:j=2,3,\\V_{1}(m_{1},\tilde{b}_{1})=\tilde{v}_{1}^{n},U_{3}^{n}(\tilde{a}_{3}^{s_{3}})=\tilde{u}_{3}^{n},I(\tilde{a}_{3}^{s_{3}})=m_{3}^{t_{3}} } \right)  \nonumber\\
 \mathscr{T}_{6} &=& \sum_{\substack{(b_{1},a_{2}^{s_{2}},a_{3}^{s_{3}}) \in \\ \mathcal{B}_{1} \times \fieldpi^{s_{2}} \times \fieldpi^{s_{3}}}}\sum_{\substack{(\tilde{b}_{1},\tilde{a}_{2}^{s_{2}})\in \mathcal{B}_{1}\times\fieldpi^{s_{2}}\\\tilde{b}_{1}\neq b_{1}\tilde{a}_{2}^{s_{2}}\neq a_{2}^{s_{2}}}}\sum_{\substack{(v_{1}^{n},u_{2}^{n},u_{3}^{n}) \in\\T_{2\eta_{2}}(V_{1},\underline{U}|q^{n})}}\sum_{\substack{\tilde{v}_{1}^{n},\tilde{u}_{2}^{n} \in\\T_{2\eta_{2}}(V_{1},U_{2}|q^{n},u_{3}^{n})}} P\left(\substack{ V_{1}^{n}(m_{1},b_{1})=v_{1}^{n},U_{j}(a_{j}^{s_{j}})=u_{j}^{n},I(a_{j}^{s_{j}})=m_{j}^{t_{j}}:j=2,3,\\V_{1}(m_{1},\tilde{b}_{1})=\tilde{v}_{1}^{n},U_{2}^{n}(\tilde{a}_{2}^{s_{2}})=\tilde{u}_{2}^{n},I(\tilde{a}_{2}^{s_{2}})=m_{2}^{t_{2}} } \right)\nonumber\\  
\mathscr{T}_{7} &=& \sum_{\substack{(b_{1},a_{2}^{s_{2}},a_{3}^{s_{3}}) \in \\ \mathcal{B}_{1} \times \fieldpi^{s_{2}} \times \fieldpi^{s_{3}}}}\sum_{\substack{(\tilde{b}_{1},\tilde{a}_{2}^{s_{2}},\tilde{a}_{3}^{s_{3}})\in \mathcal{B}_{1}\times\fieldpi^{s_{2}}\times \fieldpi^{s_{3}}\\\tilde{b}_{1}\neq b_{1},\tilde{a}_{2}^{s_{2}}\neq a_{2}^{s_{2}},\tilde{a}_{3}^{s_{3}}\neq a_{3}^{s_{3}}}}\sum_{\substack{(v_{1}^{n},u_{2}^{n},u_{3}^{n}) \in\\T_{2\eta_{2}}(V_{1},\underline{U}|q^{n})}}\sum_{\substack{(\tilde{v}_{1}^{n},\tilde{u}_{2}^{n},\tilde{u}_{3}^{n}) \in\\T_{2\eta_{2}}(V_{1},\underline{U}|q^{n})}} P\left(\substack{ V_{1}^{n}(m_{1},b_{1})=v_{1}^{n},U_{j}(a_{j}^{s_{j}})=u_{j}^{n},I(a_{j}^{s_{j}})=m_{j}^{t_{j}}:j=2,3,\\V_{1}(m_{1},\tilde{b}_{1})=\tilde{v}_{1}^{n},U_{j}^{n}(\tilde{a}_{j}^{s_{j}})=\tilde{u}_{j}^{n},I(\tilde{a}_{j}^{s_{j}})=m_{j}^{t_{j}}:j=2,3 } \right) .\nonumber
\end{eqnarray}
We have
\begin{equation}
 \label{Eqn:3To1DBCOneLayerEncoderErrorEventInTermsOfExpectationsOfT1ToT7}
 \frac{4\Var\left\{ \phi(M_{1},M_{2}^{t_{2}},M_{3}^{t_{3}}) \right\}}{\left(\Expectation\left\{ \phi(M_{1},M_{2}^{t_{2}},M_{3}^{t_{3}}) \right\}\right)^{2}} = 4\frac{\left(\sum_{l=0}^{7} \mathscr{T}_{l}\right) - \mathscr{T}_{0}^{2}}{\mathscr{T}_{0}^{2}}.\nonumber
\end{equation}
We take a closer look at $\mathscr{T}_{7}$. For $\theta \in \fieldpi$, let \[\mathscr{D}_{\theta}(a_{2}^{s_{2}},a_{3}^{s_{3}}) \define \left\{ (\tilde{a}_{2}^{s_{2}},\tilde{a}_{3}^{s_{3}}): \tilde{a}_{3l}^{s_{3}}-{a}_{3l}^{s_{3}} = \theta (\tilde{a}_{2l}^{s_{2}}-a_{2l}^{s_{2}}) \mbox{ for }1\leq l \leq s_{2}\mbox{ and }\tilde{a}_{3l}^{s_{3}}-{a}_{3l}^{s_{3}} =0 \mbox{ for } s_{2}+1 \leq l \leq s_{3} \right\},\]
$\mathscr{D}(a_{2}^{s_{2}},a_{3}^{s_{3}}) \define \underset{\theta \in \fieldpi}{\cup}\mathscr{D}_{\theta}(a_{2}^{s_{2}},a_{3}^{s_{3}})$ and $\mathscr{I}(a_{2}^{s_{2}},a_{3}^{s_{3}}) = \fieldpi^{s_{2}}\times \fieldpi^{s_{3}} \setminus \mathscr{D}(a_{2}^{s_{2}},a_{3}^{s_{3}})$. The reader may verify that for $(\tilde{a}_{2}^{s_{2}},\tilde{a}_{3}^{s_{3}}) \in \mathscr{D}_{\theta}(a_{2}^{s_{2}},a_{3}^{s_{3}})$
\begin{eqnarray}
 \label{Eqn:3To1DBCOneLayerEncoderErrorEventTheNonZeroTermsInT7}
 P\left( \substack{ V_{1}^{n}(m_{1},b_{1})=v_{1}^{n},U_{j}(a_{j}^{s_{j}})=u_{j}^{n},I(a_{j}^{s_{j}})=m_{j}^{t_{j}}:j=2,3,\\V_{1}(m_{1},\tilde{b}_{1})=\tilde{v}_{1}^{n},U_{j}^{n}(\tilde{a}_{j}^{s_{j}})=\tilde{u}_{j}^{n},I(\tilde{a}_{j}^{s_{j}})=m_{j}^{t_{j}}:j=2,3  } \right) = \left\{ \begin{array}{lr} \frac{P(V_{1}^{n}(m_{1},b_{1})=v_{1}^{n},V_{1}(m_{1},\tilde{b}_{1})=\tilde{v}_{1}^{n})}{\pi^{3n+2t_{2}+2t_{3}}}&\mbox{ if }\tilde{u}_{3}^{n}\ominus\theta \tilde{u}_{2}^{n} = u_{3}^{n}\ominus\theta u_{2}^{n}\\0&\mbox{otherwise} \end{array}\right.\nonumber
\end{eqnarray}
For $(\tilde{a}_{2}^{s_{2}},\tilde{a}_{3}^{s_{3}}) \in \mathscr{I}(a_{2}^{s_{2}},a_{3}^{s_{3}})$, we claim
\begin{eqnarray}
 \label{Eqn:3To1DBCOneLayerEncoderErrorEventTheNonZeroTermsInT7WithIndependentPicks}
 P\left( \substack{ V_{1}^{n}(m_{1},b_{1})=v_{1}^{n},U_{j}(a_{j}^{s_{j}})=u_{j}^{n},I(a_{j}^{s_{j}})=m_{j}^{t_{j}}:j=2,3,\\V_{1}(m_{1},\tilde{b}_{1})=\tilde{v}_{1}^{n},U_{j}^{n}(\tilde{a}_{j}^{s_{j}})=\tilde{u}_{j}^{n},I(\tilde{a}_{j}^{s_{j}})=m_{j}^{t_{j}}:j=2,3  } \right) = P\left( \substack{V_{1}^{n}(m_{1},b_{1})=v_{1}^{n},U_{j}(a_{j}^{s_{j}})=u_{j}^{n}\\I(a_{j}^{s_{j}})=m_{j}^{t_{j}}:j=2,3}  \right) P\left( \substack{V_{1}(m_{1},\tilde{b}_{1})=\tilde{v}_{1}^{n},U_{j}^{n}(\tilde{a}_{j}^{s_{j}})=\tilde{u}_{j}^{n}\\I(\tilde{a}_{j}^{s_{j}})=m_{j}^{t_{j}}:j=2,3} \right). \nonumber
\end{eqnarray}
In order to prove this claim, it suffices to prove
\begin{eqnarray}
 \label{Eqn:3To1DBCOneLayerEncoderErrorEventTheNonZeroTermsInT7WithIndependentPicks}
 P\left( \substack{ V_{1}^{n}(m_{1},b_{1})=v_{1}^{n},U_{j}(a_{j}^{s_{j}})=u_{j}^{n},I(a_{j}^{s_{j}})=m_{j}^{t_{j}}:j=2,3,\\V_{1}(m_{1},\tilde{b}_{1})=\tilde{v}_{1}^{n},U_{j}^{n}(\tilde{a}_{j}^{s_{j}})=\tilde{u}_{j}^{n},I(\tilde{a}_{j}^{s_{j}})=m_{j}^{t_{j}}:j=2,3  } \right) = \frac{P(V_{1}^{n}(m_{1},b_{1})=v_{1}^{n},V_{1}(m_{1},\tilde{b}_{1})=\tilde{v}_{1}^{n})}{\pi^{4n+2t_{2}+2t_{3}}}. \nonumber
\end{eqnarray}
which can be verified through a counting process. We therefore have $\mathscr{T}_{7}=\mathscr{T}_{7I}+\mathscr{T}_{7D}$, where
\begin{eqnarray}
 \label{Eqn:3To1DBCEncoderErrorEventTermsT7I}
 \mathscr{T}_{7I} =\!\!\!\! \sum_{\substack{(b_{1},a_{2}^{s_{2}},a_{3}^{s_{3}}) \in \\ \mathcal{B}_{1} \times \fieldpi^{s_{2}} \times \fieldpi^{s_{3}}}}\sum_{\substack{(\tilde{b}_{1},\tilde{a}_{2}^{s_{2}},\tilde{a}_{3}^{s_{3}})\in \\\mathcal{B}_{1} \times\mathscr{I}(a_{2}^{s_{2}},a_{3}^{s_{3}})}}\sum_{\substack{(v_{1}^{n},u_{2}^{n},u_{3}^{n}) \in\\T_{2\eta_{2}}(V_{1},\underline{U}|q^{n})}}\sum_{\substack{(\tilde{v}_{1}^{n},\tilde{u}_{2}^{n},\tilde{u}_{3}^{n}) \in\\T_{2\eta_{2}}(V_{1},\underline{U}|q^{n})}} \!\!\!\!\!P\left( \substack{V_{1}^{n}(m_{1},b_{1})=v_{1}^{n},U_{j}(a_{j}^{s_{j}})=u_{j}^{n}\\I(a_{j}^{s_{j}})=m_{j}^{t_{j}}:j=2,3}  \right) P\left( \substack{V_{1}(m_{1},\tilde{b}_{1})=\tilde{v}_{1}^{n},U_{j}^{n}(\tilde{a}_{j}^{s_{j}})=\tilde{u}_{j}^{n}\\I(\tilde{a}_{j}^{s_{j}})=m_{j}^{t_{j}}:j=2,3} \right) \\
 \mathscr{T}_{7D} =\!\!\!\!\!\!\! \sum_{\substack{(b_{1},a_{2}^{s_{2}},a_{3}^{s_{3}}) \in \\ \mathcal{B}_{1} \times \fieldpi^{s_{2}} \times \fieldpi^{s_{3}}}}\sum_{\substack{(\tilde{b}_{1},\tilde{a}_{2}^{s_{2}},\tilde{a}_{3}^{s_{3}})\in \\\mathcal{B}_{1}\times \mathscr{D}(a_{2}^{s_{2}},a_{3}^{s_{3}})}}\sum_{\substack{u^{n} \in\\T_{2\eta_{2}}(U_{3}\ominus\theta U_{2}|q^{n})}}\sum_{\substack{({v}_{1}^{n},{u}_{2}^{n},u^{n}\oplus\theta{u}_{2}^{n}) \in\\T_{2\eta_{2}}(V_{1},\underline{U}|q^{n})}}\sum_{\substack{(\tilde{v}_{1}^{n},\tilde{u}_{2}^{n},u^{n}\oplus\theta\tilde{u}_{2}^{n}) \in\\T_{2\eta_{2}}(V_{1},\underline{U}|q^{n})}} \!\!\!\!\!\frac{P(V_{1}^{n}(m_{1},b_{1})=v_{1}^{n},V_{1}(m_{1},\tilde{b}_{1})=\tilde{v}_{1}^{n})}{\pi^{3n+2t_{2}+2t_{3}}}. \nonumber
\end{eqnarray}
Verify that $\mathscr{T}_{7I} \leq \mathscr{T}_{0}^{2}$. We therefore have
\begin{equation}
 \label{Eqn:3To1DBCOneLayerEncoderErrorEventInTermsOfT1ToT7D}
 \frac{4\Var\left\{ \phi(M_{1},M_{2}^{t_{2}},M_{3}^{t_{3}}) \right\}}{\Expectation\left\{ \phi(M_{1},M_{2}^{t_{2}},M_{3}^{t_{3}}) \right\}} \leq 4\frac{\left(\sum_{l=0}^{6} \mathscr{T}_{l}\right) + \mathscr{T}_{7D}}{\mathscr{T}_{0}^{2}}.
\end{equation}
and it suffices to derive lower bound on $\mathscr{T}_{0}$ and upper bounds on $\mathscr{T}_{l}:l \in [6]$ and $\mathscr{T}_{7D}$. 

Just as we split $\mathscr{T}_{7}$, we split $\mathscr{T}_{3}$ as $\mathscr{T}_{3} = \mathscr{T}_{3I} + \mathscr{T}_{3D} $. We let the reader fill in the details and confirm the following bounds. From conditional typicality results, there exists $N_{2}(\eta_{2}) \in \naturals$, such that for all $n \geq N_{2}(\eta_{2})$,
\begin{eqnarray}
 \label{Eqn:3To1DBCEncoderErrorEventBoundsOnT0ToT7}
 \mathscr{T}_{0} & \geq & \frac{|\mathcal{B}_{1}|\pi^{s_{2}+s_{3}}\exp\left\{ nH(V_{1},\underline{U}|Q)-4n\eta_{2} \right\}}{\pi^{2n+t_{2}+t_{3}}\exp\left\{ nH(V_{1}|Q)+4n\eta_{2} \right\}}\nonumber\\
\mathscr{T}_{1} & \leq & \frac{|\mathcal{B}_{1}|\pi^{s_{2}+2s_{3}}\exp\left\{ nH(V_{1},\underline{U}|Q)+4n\eta_{2}+nH(U_{3}|Q,V_{1},U_{2})+8n\eta_{2} \right\}}{\pi^{3n+t_{2}+2t_{3}}\exp\left\{ nH(V_{1}|Q)-4n\eta_{2} \right\}}\nonumber\\
 \mathscr{T}_{2} & \leq & \frac{|\mathcal{B}_{1}|\pi^{2s_{2}+s_{3}}\exp\left\{ nH(V_{1},\underline{U}|Q)+4n\eta_{2}+nH(U_{2}|Q,V_{1},U_{3})+8n\eta_{2} \right\}}{\pi^{3n+2t_{2}+t_{3}}\exp\left\{ nH(V_{1}|Q)-4n\eta_{2} \right\}}\nonumber\\
 \mathscr{T}_{3I} & \leq & \frac{|\mathcal{B}_{1}|\pi^{2s_{2}+2s_{3}}\exp\left\{ nH(V_{1},\underline{U}|Q)+4n\eta_{2}+nH(U_{2},U_{3}|Q,V_{1})+8n\eta_{2} \right\}}{\pi^{4n+2t_{2}+2t_{3}}\exp\left\{ nH(V_{1}|Q)-4n\eta_{2} \right\}}\nonumber\\
 \mathscr{T}_{3D} & \leq & \pi\frac{|\mathcal{B}_{1}|\pi^{2s_{2}+s_{3}}\exp\left\{ nH(V_{1},\underline{U}|Q,U_{3}\ominus \theta U_{2})+8n\eta_{2}+nH(U_{3}\ominus \theta U_{2}|Q)+4n\eta_{2} \right\}}{\pi^{3n+2t_{2}+2t_{3}}\exp\left\{ nH(V_{1}|Q)-4n\eta_{2}-nH(\underline{U}|Q,V_{1},U_{3}\ominus \theta U_{2})-16n\eta_{2} \right\}}\nonumber\\
 \mathscr{T}_{4} & \leq &\frac{|\mathcal{B}_{1}|^{2}\pi^{s_{2}+s_{3}}\exp\left\{ nH(V_{1},\underline{U}|Q)+4n\eta_{2}+nH(V_{1}|Q,U_{2},U_{3})+8n\eta_{2} \right\}}{\pi^{2n+t_{2}+t_{3}}\exp\left\{ 2nH(V_{1}|Q)-8n\eta_{2} \right\}}\nonumber\\
 \mathscr{T}_{5} & \leq &\frac{|\mathcal{B}_{1}|^{2}\pi^{s_{2}+2s_{3}}\exp\left\{ nH(V_{1},\underline{U}|Q)+4n\eta_{2}+nH(V_{1},U_{3}|Q,U_{2})+8n\eta_{2} \right\}}{\pi^{3n+t_{2}+2t_{3}}\exp\left\{ 2nH(V_{1}|Q)-8n\eta_{2} \right\}}\nonumber\\
 \mathscr{T}_{6} & \leq &\frac{|\mathcal{B}_{1}|^{2}\pi^{2s_{2}+s_{3}}\exp\left\{ nH(V_{1},\underline{U}|Q)+4n\eta_{2}+nH(V_{1},U_{2}|Q,U_{3})+8n\eta_{2} \right\}}{\pi^{3n+2t_{2}+t_{3}}\exp\left\{ 2nH(V_{1}|Q)-8n\eta_{2} \right\}}\nonumber\\
  \mathscr{T}_{7D} & \leq & \frac{|\mathcal{B}_{1}|^{2}\pi^{2s_{2}+s_{3}}\exp\left\{ 2nH(V_{1},\underline{U}|Q,U_{3}\ominus \theta U_{2})+16n\eta_{2}+nH(U_{3}\ominus \theta U_{2}|Q)+4n\eta_{2} \right\}}{\pi^{3n+2t_{2}+2t_{3}}\exp\left\{ 2nH(V_{1}|Q)-8n\eta_{2} \right\}}\nonumber
\end{eqnarray}
We now employ the bounds on the parameters of the code ((\ref{Eqn:3To1DBCOneLayerAchievabilityLowerAndUpperBoundsOnSj}) - (\ref{Eqn:3To1DBCOneLayerAchievabilityLowerAndUpperBoundsOnR1AndK1})). It maybe verified that, for $n \geq \max\{ N_{1}(\eta), N_{2}(\eta_{2})\}$,
\begin{eqnarray}
 \label{Eqn:3DBCProofTwiceTheListSize}
 \frac{\mathscr{T}_{0}}{\mathscr{T}_{0}^{2}} &\leq &\exp \left\{\! -n \left( \frac{\log|\mathcal{B}_{1}|}{n}+\left(\sum_{l=2}^{3}\frac{s_{l}-t_{l}}{n}\right)\log\pi  - \left[ 2\log\pi-H(\underline{U}|Q,V_{1})+16\eta_{2}\right]\right)  \right\} \leq \exp \left\{\! -n\left( \substack{\delta_{1}+\frac{\eta}{8}\\-16\eta_{2}}\right) \right\}\\
 \frac{\mathscr{T}_{1}}{\mathscr{T}_{0}^{2}} & \leq & \exp \left\{ -n \left( \frac{\log|\mathcal{B}_{1}|}{n}+\frac{s_{2}-t_{2}}{n}\log\pi - \left[ \log\pi-H(U_{2}|Q,V_{1})+32\eta_{2}\right]\right)  \right\}\leq \exp \left\{ -n\left( \delta_{1}+\frac{\eta}{8}-32\eta_{2}\right) \right\} \nonumber\\
 \frac{\mathscr{T}_{2}}{\mathscr{T}_{0}^{2}} & \leq & \exp \left\{ -n \left( \frac{\log|\mathcal{B}_{1}|}{n}+\frac{s_{3}-t_{3}}{n}\log\pi - \left[ \log\pi-H(U_{3}|Q,V_{1})+32\eta_{2}\right]\right)  \right\}\leq \exp \left\{ -n\left( \delta_{1}+\frac{\eta}{8}-32\eta_{2}\right) \right\}\nonumber\\
 \frac{\mathscr{T}_{3I}}{\mathscr{T}_{0}^{2}} & \leq & \exp \left\{ -n \left( \frac{\log|\mathcal{B}_{1}|}{n}- 32\eta_{2}\right)  \right\}\leq \exp \left\{ -n\left( \delta_{1}+\frac{\eta}{8}-32\eta_{2}\right) \right\}\nonumber
\end{eqnarray}
\begin{eqnarray}
 \frac{\mathscr{T}_{3D}}{\mathscr{T}_{0}^{2}} & \leq & \max_{\theta \neq 0}\exp \left\{ -n \left( \frac{\log|\mathcal{B}_{1}|}{n}+\frac{s_{3}}{n}\log\pi - \left[ \log\pi-H(U_{3}\ominus \theta U_{2}|Q,V_{1})+48\eta_{2}\right]\right)  \right\}\leq \pi\exp \left\{ -n\left( \delta_{1}-48\eta_{2}\right) \right\}\nonumber\\
 \frac{\mathscr{T}_{4}}{\mathscr{T}_{0}^{2}} & \leq & \exp \left\{ -n \left( \left(\sum_{l=2}^{3}\frac{s_{l}-t_{l}}{n}\right)\log\pi - \left[ \log\pi-H(\underline{U}|Q)+36\eta_{2}\right]\right)  \right\}\leq \exp \left\{ -n\left( \delta_{1}-36\eta_{2}\right) \right\}\nonumber
\end{eqnarray}
\begin{eqnarray}
 \frac{\mathscr{T}_{5}}{\mathscr{T}_{0}^{2}} & \leq & \exp \left\{ -n \left( \frac{s_{2}-t_{2}}{n}\log\pi - \left[ \log\pi-H(U_{2}|Q)+36\eta_{2}\right]\right)  \right\}\leq \exp \left\{ -n\left( \delta_{1}-36\eta_{2}\right) \right\}\nonumber\\
 \frac{\mathscr{T}_{6}}{\mathscr{T}_{0}^{2}} & \leq & \exp \left\{ -n \left( \frac{s_{3}-t_{3}}{n}\log\pi - \left[ \log\pi-H(U_{3}|Q)+36\eta_{2}\right]\right)  \right\}\leq \exp \left\{ -n\left( \delta_{1}-36\eta_{2}\right) \right\}\nonumber\\
 \frac{\mathscr{T}_{7D}}{\mathscr{T}_{0}^{2}} & \leq & \max_{\theta \neq 0}\exp \left\{ -n \left( \frac{s_{3}}{n}\log\pi - \left[ \log\pi-H(U_{3}\ominus \theta U_{2}|Q)+48\eta_{2}\right]\right)  \right\}\leq \exp \left\{ -n\left( \delta_{1}-\frac{\eta}{8}-48\eta_{2}\right) \right\}.\nonumber
\end{eqnarray}
Substituting, the above bounds in (\ref{Eqn:3To1DBCOneLayerEncoderErrorEventInTermsOfT1ToT7D}), we conclude $P(\epsilon_{l}) \leq (28+8\log\pi)\exp \left\{ -n\left( \delta_{1}-\frac{\eta}{8}-48\eta_{2}\right) \right\}$ for $n \geq \max \{ N_{1}(\eta),N_{2}(\eta_{2}) \}$. In the sequel, we derive a lower bound on $\mathcal{L}(n)$ and prove that for large $n$, $\mathcal{L}(n) > 1$, thereby establishing $\epsilon_{1} \subseteq \epsilon_{l}$. From the definition of $\mathcal{L}(n)$, (\ref{Eqn:3To1DBCOneLayerEncoderErrorEventVarianceOfPhi}), we have
\begin{eqnarray}
\label{Eqn:3DBCProofValueOfListSize}
 \mathcal{L}(n)=\frac{\mathscr{T}_{0}}{2} \geq \frac{|\mathcal{B}_{1}|\pi^{s_{2}+s_{3}}|T_{2\eta_{2}}(V_{1},\underline{U}|q^{n})|}{2\pi^{2n+t_{2}+t_{3}}\exp\left\{ nH(V_{1}|Q)+4n\eta_{2} \right\}},
\end{eqnarray}
for sufficiently large $n$. Moreover, from (\ref{Eqn:3DBCProofTwiceTheListSize}), we note that $\mathcal{L}(n) \geq \frac{1}{2}\exp \left\{ n\left( \delta_{1}+\frac{\eta}{8}-16\eta_{2}\right)\right\}$ for $n \geq \max \{ N_{1}(\eta),$ $N_{2}(\eta_{2}) \}$. By our choice of $\eta, \eta_{2}$, for sufficiently large $n$, we have $\mathcal{L}(n)>1$.

\section{Upper bound on $P((\epsilon_{1}\cup \epsilon_{2}\cup \epsilon_{3})^{c}\cap\epsilon_{41})$}
\label{AppSec:AnalysisOfDecoder1ErrorEventFor3To1DBC}

We begin by introducing some compact notation. We let $\underline{M}^{\underline{t}}$ denote the pair $(M_{2}^{t_{2}},M_{3}^{t_{3}})$ of message random variables. We let $\underline{m}^{\underline{t}}$ denote a generic element $(m_{2}^{t_{2}},m_{3}^{t_{3}}) \in \fieldpi^{\underline{t}}\define \fieldpi^{t_{2}} \times \fieldpi^{t_{3}}$, and similarly $\underline{a}^{\underline{s}}$ denote $(a_{2}^{s_{2}},a_{3}^{s_{3}})\in \fieldpi^{\underline{s}} \define  \fieldpi^{s_{2}} \times \fieldpi^{s_{3}}$. We abbreviate $T_{8\eta_{2}}(V_{1},U_{2}\oplus U_{3}|q^{n},y_{1}^{n})$ as $T_{8\eta_{2}}(V_{1},\oplus|q^{n},y_{1}^{n})$ and the vector $X^{n}(M_{1},M_{2}^{t_{2}},M_{3}^{t_{3}})$ input on the channel as $X^{n}$. Let
\begin{eqnarray}
 \label{Eqn:3To1DBCOneLayerAchievabilityTypicalitySet}
 &\!\!\!\tilde{T}_{\eta_{2}}(q^{n}) \define \left\{ (v_{1}^{n},\underline{u}^{n},x^{n},y_{1}^{n}) \in T_{8\eta_{2}}(V_{1},\underline{U},X,Y_{1}|q^{n}): (v_{1}^{n},\underline{u}^{n}) \in T_{2\eta_{2}}(V_{1},\underline{U}|q^{n}) , (v_{1}^{n},\underline{u}^{n},x^{n}) \in T_{4\eta_{2}}(V_{1},\underline{U},X|q^{n})\right\},
  \nonumber\\
 \lefteqn{\tilde{T}_{\eta_{2}}(q^{n}|v_{1}^{n},\underline{u}^{n}) = \left\{  (x^{n},y_{1}^{n}) : (v_{1}^{n},\underline{u}^{n},x^{n},y_{1}^{n}) \in \tilde{T}_{\eta_{2}}(q^{n})\right\}}.
  \nonumber
\end{eqnarray}
We begin by characterizing the event under question. Denoting $\tilde{\epsilon}_{41} = (\epsilon_{l}\cup\epsilon_{2}\cup\epsilon_{3})^{c}\cap \epsilon_{41}$, we have
\begin{eqnarray}
 \label{Eqn:3To1DBCOneLayerAchievabilityDecoder1ErrorEventCharacterization}
 P(\tilde{\epsilon}_{41}) \leq \sum_{m_{1}}\sum_{\hat{m}_{1}\neq m_{1} } \sum_{\hat{b}_{1} \in \mathcal{B}_{1}}\sum_{\hat{a}_{3}^{s_{3}} } \sum_{\substack{(v_{1}^{n},\underline{u}^{n},x^{n},y_{1}^{n})\\\in \tilde{T}_{\eta_{2}}(q^{n})}} \sum_{\substack{(\hat{v}_{1}^{n},\hat{u}^{n})\in \\ T_{8\eta_{2}}(V_{1},\oplus|q^{n},y_{1}^{n}) }}\!\!\!\!\!\!\!\!P\left(\left\{ \substack{M_{1}=m_{1}, V_{1}^{n}(m_{1},B_{1})=v_{1}^{n},U_{l}^{n}(A_{l}^{s_{l}})=u_{l}^{n}\\I_{l}(A_{l}^{s_{l}})=M_{l}^{t_{l}}:l=2,3,Y_{1}^{n}=y_{1}^{n},X^{n}=x^{n}\\U_{\oplus}^{n}(\hat{a}_{3}^{s_{3}})=\hat{u}^{n},V_{1}^{n}(\hat{m}_{1},\hat{b}_{1})=\hat{v}_{1}^{n}   }\right\}\cap \epsilon_{l}^{c}  \right)
\end{eqnarray}
We consider a generic term in the above sum. Observe that
\begin{eqnarray}
 \label{Eqn:3To1DBCOneLayerAchievabilityGenericTermInSum}
 &\!\!\!\!\!\!\!\!\!P\left( \substack{Y_{1}^{n}=y_{1}^{n}\\X^{n}=x^{n}}\middle|\left\{\substack{M_{1}=m_{1},V_{1}^{n}(m_{1},B_{1})=v_{1}^{n},U_{l}^{n}(A_{l}^{s_{l}})=u_{l}^{n}\\I_{l}(A_{l}^{s_{l}})=M_{l}^{t_{l}}:l=2,3,U_{\oplus}^{n}(\hat{a}_{3}^{s_{3}})=\hat{u}^{n},V_{1}^{n}(\hat{m}_{1},\hat{b}_{1})=\hat{v}_{1}^{n}   }\right\} \cap \epsilon_{l}^{c} \right) = P\left(\substack{Y_{1}^{n}=y_{1}^{n}\\X^{n}=x^{n}}\middle|\substack{V_{1}^{n}(M_{1},B_{1})=v_{1}^{n}\\U_{l}^{n}(A_{l}^{s_{l}})=u_{l}^{n}:l=2,3}\right) =: \theta(y_{1}^{n},x^{n}|v_{1}^{n},\underline{u}^{n}),\\
 \label{Eqn:3To1DBCOneLayerAchievabilityGenericTermInTheSumEvent}
 \lefteqn{P\left( \left\{\substack{M_{1}=m_{1},V_{1}^{n}(m_{1},B_{1})=v_{1}^{n}\\U_{l}^{n}(A_{l}^{s_{l}})=u_{l}^{n},I_{l}(A_{l}^{s_{l}})=M_{l}^{t_{l}}:l=2,3\\U_{\oplus}^{n}(\hat{a}_{3}^{s_{3}})=\hat{u}^{n},V_{1}^{n}(\hat{m}_{1},\hat{b}_{1})=\hat{v}_{1}^{n}   }\right\} \cap \epsilon_{l}^{c}\right) = \sum_{\underline{m}^{\underline{t}}\in \fieldpi^{\underline{t}}}\sum_{\substack{(b_{1},\underline{a}^{\underline{s}})\in\\ \mathcal{B}_{1} \times \fieldpi^{\underline{s}}}}P \left(\left\{\substack{M_{1}=m_{1},V_{1}^{n}(m_{1},b_{1})=v_{1}^{n},U_{l}^{n}(a_{l}^{s_{l}})=u_{l}^{n}\\M_{l}^{t_{l}}=m_{l}^{t_{l}},A_{l}^{s_{l}}=a_{l}^{s_{l}},I_{l}(a_{l}^{s_{l}})=m_{l}^{t_{l}}:l=2,3\\B_{1}=b_{1},U_{\oplus}^{n}(\hat{a}_{3}^{s_{3}})=\hat{u}^{n},V_{1}^{n}(\hat{m}_{1},\hat{b}_{1})=\hat{v}_{1}^{n}   }\right\} \cap \epsilon_{l}^{c} \right),}
\end{eqnarray}
and the product of left hand sides of (\ref{Eqn:3To1DBCOneLayerAchievabilityGenericTermInSum}) and (\ref{Eqn:3To1DBCOneLayerAchievabilityGenericTermInTheSumEvent}) is a generic term in (\ref{Eqn:3To1DBCOneLayerAchievabilityDecoder1ErrorEventCharacterization}). We now consider a generic term on the right hand side of (\ref{Eqn:3To1DBCOneLayerAchievabilityGenericTermInTheSumEvent}). Note that
\begin{eqnarray}
 P \left(E \cap \left\{\substack{B_{1}=b_{1},A_{l}^{s_{l}}=a_{l}^{s_{l}} }\right\} \cap \epsilon_{l}^{c} \right) \leq P(E)P(\left\{\substack{B_{1}=b_{1},A_{l}^{s_{l}}=a_{l}^{s_{l}} }\right\}| E \cap  \epsilon_{l}^{c} ) \leq \frac{P(E)}{\mathcal{L}(n)},\nonumber
 \end{eqnarray}
where $E$ abbreviates the event $
 \left\{\substack{M_{1}=m_{1},V_{1}^{n}(m_{1},b_{1})=v_{1}^{n},U_{l}^{n}(a_{l}^{s_{l}})=u_{l}^{n},M_{l}^{t_{l}}=m_{l}^{t_{l}},I_{l}(a_{l}^{s_{l}})=m_{l}^{t_{l}}:l=2,3,U_{\oplus}^{n}(\hat{a}_{3}^{s_{3}})=\hat{u}^{n},V_{1}^{n}(\hat{m}_{1},\hat{b}_{1})=\hat{v}_{1}^{n}   }\right\}$.
Substituting the above in (\ref{Eqn:3To1DBCOneLayerAchievabilityGenericTermInTheSumEvent}), we have
\begin{eqnarray}
 \label{Eqn:3DBCProofTheListEntersIntoPicture}
 P\left( \left\{\substack{M_{1}=m_{1},V_{1}^{n}(m_{1},B_{1})=v_{1}^{n}\\U_{l}^{n}(A_{l}^{s_{l}})=u_{l}^{n},I_{l}(A_{l}^{s_{l}})=M_{l}^{t_{l}}:l=2,3\\U_{\oplus}^{n}(\hat{a}_{3}^{s_{3}})=\hat{u}^{n},V_{1}^{n}(\hat{m}_{1},\hat{b}_{1})=\hat{v}_{1}^{n}   }\right\} \cap \epsilon_{l}^{c}\right) \leq \frac{1}{\mathcal{L}(n)}\sum_{\underline{m}^{\underline{t}}\in \fieldpi^{\underline{t}}}\sum_{\substack{(b_{1},\underline{a}^{\underline{s}})\\\in \mathcal{B}_{1} \times \mathscr{D}(\hat{a}^{s_{3}})}} P\left(  \substack{M_{1}=m_{1},V_{1}^{n}(m_{1},b_{1})=v_{1}^{n},U_{l}^{n}(a_{l}^{s_{l}})=u_{l}^{n},M_{l}^{t_{l}}=m_{l}^{t_{l}}\\I_{l}(a_{l}^{s_{l}})=m_{l}^{t_{l}}:l=2,3,U_{\oplus}^{n}(\hat{a}_{3}^{s_{3}})=\hat{u}^{n},V_{1}^{n}(\hat{m}_{1},\hat{b}_{1})=\hat{v}_{1}^{n}   } \right)
\nonumber\\
  \label{Eqn:3To1DBCOneLayerAchievabilityGenericTermInTheSumEventRestated}
  +\frac{1}{\mathcal{L}(n)}\sum_{\underline{m}^{\underline{t}}\in \fieldpi^{\underline{t}}}\sum_{\substack{(b_{1},\underline{a}^{\underline{s}})\\\in \mathcal{B}_{1} \times \mathscr{I}(\hat{a}^{s_{3}})}}\!\!\!\!\!\!\!\! P\left(  \substack{M_{1}=m_{1},V_{1}^{n}(m_{1},b_{1})=v_{1}^{n},U_{l}^{n}(a_{l}^{s_{l}})=u_{l}^{n},M_{l}^{t_{l}}=m_{l}^{t_{l}}\\I_{l}(a_{l}^{s_{l}})=m_{l}^{t_{l}}:l=2,3,U_{\oplus}^{n}(\hat{a}_{3}^{s_{3}})=\hat{u}^{n},V_{1}^{n}(\hat{m}_{1},\hat{b}_{1})=\hat{v}_{1}^{n}  } \right).
\end{eqnarray}
where $\mathscr{D}(\hat{a}^{s_{3}}) \define \left\{ \underline{a}^{\underline{s}}:(a_{2}^{s_{2}}0^{s_{+}})\oplus a_{3}^{s_{3}}=\hat{a}^{s_{3}} \right\}$, $s_{+} = s_{3}-s_{2}$ and $\mathscr{I}(\hat{a}^{s_{3}}) \define \fieldpi^{s_{2}} \times \fieldpi^{s_{3}} \setminus \mathscr{D}(\hat{a}^{s_{3}})$. Let us evaluate a generic term in the right hand side of (\ref{Eqn:3To1DBCOneLayerAchievabilityGenericTermInTheSumEventRestated}). The collection $M_{1},M_{2}^{t_{2}},M_{3}^{t_{3}},V_{1}^{n}(m_{1},b_{1}),I_{2}(a^{s_{2}}),I_{3}(a^{s_{3}}),(U_{l}(a_{l}^{s_{l}}):l=2,3,U_{\oplus}(\hat{a}_{3}^{s_{3}})),V_{1}^{n}(\hat{m}_{1},\hat{b}_{1})$ are mutually independent, where $(U_{l}(a_{l}^{s_{l}}):l=2,3,U_{\oplus}(\hat{a}_{3}^{s_{3}}))$ is treated as a single random object. If $(a_{2}^{s_{2}},a_{3}^{s_{3}}) \in \mathscr{D}(\hat{a}^{s_{3}})$, then \begin{eqnarray}P(U_{l}(a_{l}^{s_{l}})=u_{l}^{n}:l=2,3,U_{\oplus}(\hat{a}_{3}^{s_{3}})=\hat{u}^{n})=\left\{ \begin{array}{ll} \frac{1}{\pi^{2n}}&\mbox{ if } u_{2}^{n}\oplus u_{3}^{n}=\hat{u}^{n}\\0&\mbox{ otherwise.}\end{array}.\right.\nonumber\end{eqnarray}
Otherwise, i.e., $(a_{2}^{s_{2}},a_{3}^{s_{3}}) \in \mathscr{I}(\hat{a}^{s_{3}})$, a counting argument similar to that employed in appendix \ref{AppSec:AnalysisOfDecoder2And3ErrorEventsFor3To1DBC} proves $P(U_{l}(a_{l}^{s_{l}})=u_{l}^{n}:l=2,3,U_{\oplus}(\hat{a}_{3}^{s_{3}})=\hat{u}^{n})=\frac{1}{\pi^{3n}}$. We therefore have
\begin{eqnarray}
 \label{Eqn:3To1DBCOneLayerAchievabilityDecoder1ErrorEventOneCoreElement}
 P\left(  \substack{M_{1}=m_{1},V_{1}^{n}(m_{1},b_{1})=v_{1}^{n},U_{l}^{n}(a_{l}^{s_{l}})=u_{l}^{n},M_{l}^{t_{l}}=m_{l}^{t_{l}}\\I_{l}(a_{l}^{s_{l}})=m_{l}^{t_{l}}:l=2,3,U_{\oplus}^{n}(\hat{a}_{3}^{s_{3}})=\hat{u}^{n},V_{1}^{n}(\hat{m}_{1},\hat{b}_{1})=\hat{v}_{1}^{n}} \right) = \left\{  \begin{array}{ll}
\frac{P\left(\substack{M_{1}=m_{1},V_{1}^{n}(m_{1},b_{1})=v_{1}^{n}\\\underline{M}^{\underline{t}}=\underline{m}^{\underline{t}},V_{1}^{n}(\hat{m}_{1},\hat{b}_{1})=\hat{v}_{1}^{n}}\right)}{\pi^{2n+t_{2}+t_{3}}} &\mbox{ if }(a_{2}^{s_{2}},a_{3}^{s_{3}}) \in \mathscr{D}(\hat{a}^{s_{3}})\\&\mbox{ and }u_{2}^{n}\oplus u_{3}^{n}=\hat{u}^{n}\\
\frac{P\left(\substack{M_{1}=m_{1},V_{1}^{n}(m_{1},b_{1})=v_{1}^{n}\\\underline{M}^{\underline{t}}=\underline{m}^{\underline{t}},V_{1}^{n}(\hat{m}_{1},\hat{b}_{1})=\hat{v}_{1}^{n}}\right)}{\pi^{3n+t_{2}+t_{3}}} &\mbox{ if }(a_{2}^{s_{2}},a_{3}^{s_{3}}) \in \mathscr{I}(\hat{a}^{s_{3}})
\end{array}
  \right.
\end{eqnarray}
Substituting (\ref{Eqn:3To1DBCOneLayerAchievabilityDecoder1ErrorEventOneCoreElement}) in (\ref{Eqn:3To1DBCOneLayerAchievabilityGenericTermInTheSumEventRestated}) and recognizing that product of right hand sides of (\ref{Eqn:3To1DBCOneLayerAchievabilityGenericTermInTheSumEvent}), (\ref{Eqn:3To1DBCOneLayerAchievabilityGenericTermInSum}) is a generic term in the sum (\ref{Eqn:3To1DBCOneLayerAchievabilityDecoder1ErrorEventCharacterization}), we have
\begin{eqnarray}
 \label{Eqn:3To1DBCOneLayerAchievabilityDecoder1SumBlownUp}
P(\tilde{\epsilon}_{41})\leq \sum_{(m_{1},\underline{m}^{\underline{t}})}\sum_{\hat{m}_{1}\neq m_{1} } \sum_{\hat{b}_{1} \in \mathcal{B}_{1}}\sum_{\hat{a}_{3}^{s_{3}} } \sum_{\substack{(b_{1},\underline{a}^{\underline{s}})\\\in \mathcal{B}_{1} \times \mathscr{D}(\hat{a}^{s_{3}})}}\sum_{\substack{(v_{1}^{n},\underline{u}^{n},x^{n},y_{1}^{n})\\\in \tilde{T}_{\eta_{2}}(q^{n})}} \!\!\!\!\!\!\theta(y_{1}^{n},x^{n}|v_{1}^{n},\underline{u}^{n})\!\!\!\!\!\!\sum_{\substack{(\hat{v}_{1}^{n},u_{2}^{n}\oplus u_{3}^{n})\in \\ T_{8\eta_{2}}(V_{1},\oplus|q^{n},y_{1}^{n}) }}\!\!\!\!\!\!\!\!\!\frac{P\left(\substack{M_{1}=m_{1},V_{1}^{n}(m_{1},b_{1})=v_{1}^{n}\\\underline{M}^{\underline{t}}=\underline{m}^{\underline{t}},V_{1}^{n}(\hat{m}_{1},\hat{b}_{1})=\hat{v}_{1}^{n}}\right)}{\pi^{2n+t_{2}+t_{3}}\mathcal{L}(n)}
\nonumber\\+
\sum_{(m_{1},\underline{m}^{\underline{t}})}\sum_{\hat{m}_{1}\neq m_{1} } \sum_{\hat{b}_{1} \in \mathcal{B}_{1}}\sum_{\hat{a}_{3}^{s_{3}} } \sum_{\substack{(b_{1},\underline{a}^{\underline{s}})\\\in \mathcal{B}_{1} \times \mathscr{I}(\hat{a}^{s_{3}})}}\sum_{\substack{(v_{1}^{n},\underline{u}^{n},x^{n},y_{1}^{n})\\\in \tilde{T}_{\eta_{2}}(q^{n})}} \!\!\!\!\!\!\theta(y_{1}^{n},x^{n}|v_{1}^{n},\underline{u}^{n})\!\!\!\!\!\!\sum_{\substack{(\hat{v}_{1}^{n},\hat{u}^{n})\in \\ T_{8\eta_{2}}(V_{1},\oplus|q^{n},y_{1}^{n}) }}\!\!\!\!\!\!\!\!\!\frac{P\left(\substack{M_{1}=m_{1},V_{1}^{n}(m_{1},b_{1})=v_{1}^{n}\\\underline{M}^{\underline{t}}=\underline{m}^{\underline{t}},V_{1}^{n}(\hat{m}_{1},\hat{b}_{1})=\hat{v}_{1}^{n}}\right)}{\pi^{3n+t_{2}+t_{3}}\mathcal{L}(n)}
\nonumber
\end{eqnarray}
The codewords over $\mathcal{V}^{n}$ are picked independently and identically with respect to $p^{n}_{V_{1}|Q}(\cdot|q^{n})$ and hence by conditional frequency typicality, we have \[P\left(M_{1}=m_{1},V_{1}^{n}(m_{1},b_{1})=v_{1}^{n},\underline{M}^{\underline{t}}=\underline{m}^{\underline{t}},V_{1}^{n}(\hat{m}_{1},\hat{b}_{1})=\hat{v}_{1}^{n}\right) \leq \exp\left\{ -n(2H(V_{1}|Q)-20\eta_{2}) \right\}P(M_{1}=m_{1},\underline{M}^{\underline{t}}=\underline{m}^{\underline{t}})\]for the pairs $(v_{1}^{n},\hat{v}_{1}^{n})$ in question. This upper bound being independent of the arguments in the summation, we only need to compute the number of terms in the summations. For a fixed pair $(u_{2}^{n},u_{3}^{n})$, conditional frequency typicality results guaranty existence of $N_{4}(\eta_{2}) \in \naturals$ such that for all $n \geq N_{4}(\eta_{2})$, we have $|\left\{v_{1}^{n} : (v_{1}^{n},u_{2}\oplus u_{3}^{n}) \in T_{8\eta_{2}}(V_{1},U_{2}\oplus U_{3}|q^{n},y_{1}^{n}) \right\}| \leq \exp \left\{ n(H(V_{1}|Q,U_{2}\oplus U_{3},Y_{1})+32\eta_{2}) \right\}$ and $|T_{8\eta_{2}}(V_{1},U_{2}\oplus U_{3}|q^{n},y_{1}^{n})| \leq \exp \left\{ n(H(V_{1},U_{2}\oplus U_{3}|Q,Y_{1})+32\eta_{2}) \right\}$.  Substituting this upper bound, the inner most summation turns out to be
\begin{eqnarray}
\sum_{\substack{(\hat{v}_{1}^{n},u_{2}^{n}\oplus u_{3}^{n})\in \\ T_{8\eta_{2}}(V_{1},\oplus|q^{n},y_{1}^{n}) }}\!\!\!\!\!\!\!\!\!\frac{P\left(\substack{M_{1}=m_{1},V_{1}^{n}(m_{1},b_{1})=v_{1}^{n}\\\underline{M}^{\underline{t}}=\underline{m}^{\underline{t}},V_{1}^{n}(\hat{m}_{1},\hat{b}_{1})=\hat{v}_{1}^{n}}\right)}{\pi^{2n+t_{2}+t_{3}}} \leq \exp\left\{ -n\left(\substack{2H(V_{1}|Q)-52\eta_{2}\\-H(V_{1}|Q,U_{2}\oplus U_{3},Y_{1})}\right) \right\}\frac{P(M_{1}=m_{1},\underline{M}^{\underline{t}}=\underline{m}^{\underline{t}})}{\pi^{2n+t_{2}+t_{3}}\mathcal{L}(n)}=:\beta_{1},
 \nonumber\\
 \sum_{\substack{(\hat{v}_{1}^{n},\hat{u}^{n})\in \\ T_{8\eta_{2}}(V_{1},\oplus|q^{n},y_{1}^{n}) }}\!\!\!\!\!\!\!\!\!\frac{P\left(\substack{M_{1}=m_{1},V_{1}^{n}(m_{1},b_{1})=v_{1}^{n}\\\underline{M}^{\underline{t}}=\underline{m}^{\underline{t}},V_{1}^{n}(\hat{m}_{1},\hat{b}_{1})=\hat{v}_{1}^{n}}\right)}{\pi^{3n+t_{2}+t_{3}}} \leq \exp\left\{ -n\left(\substack{2H(V_{1}|Q)-52\eta_{2}\\-H(V_{1},U_{2}\oplus U_{3}|Q,Y_{1})}\right) \right\}\frac{P(M_{1}=m_{1},\underline{M}^{\underline{t}}=\underline{m}^{\underline{t}})}{\pi^{3n+t_{2}+t_{3}}\mathcal{L}(n)}=:\beta_{2}
 \nonumber
\end{eqnarray}
Substituting $\beta_{1}$ and $\beta_{2}$, we have
\begin{eqnarray}
 & \!\!\!\!\!\!\!\!\!\!\!\!\!\!\!\!\!\!\!\!\!\!\!\!\!\!\!\!\!\!\!\!\!\!\!\!\!\!\!\!\!\!\!\!\!\!\!\!\!\!\!\!\!\!\!\!\!\!\!\!\!\!\!\!\!\!\!\!\!\!\displaystyle \!\!\!\!\!\!\!\!\!\!P(\tilde{\epsilon}_{41}) \leq \sum_{(m_{1},\underline{m}^{\underline{t}})}\sum_{\hat{m}_{1}\neq m_{1} } \sum_{\substack{\hat{b}_{1} \in \mathcal{B}_{1}\\\hat{a}^{s_{3}}\in\fieldpi^{s_{3}}}} \sum_{\substack{(b_{1},\underline{a}^{\underline{s}})\in \\\mathcal{B}_{1} \times \mathscr{D}(\hat{a}^{s_{3}})}}\sum_{\substack{(v_{1}^{n},\underline{u}^{n})\in\\ T_{2\eta_{2}}(V_{1},\underline{U}|q^{n})}}\sum_{\substack{(x^{n},y_{1}^{n}) \in \\ \tilde{T}_{\eta_{2}}(q^{n}|v_{1}^{n},\underline{u}^{n})}}\theta(y_{1}^{n},x^{n}|v_{1}^{n},\underline{u}^{n})\beta_{1}\nonumber\\
 \lefteqn{~~~~~~~~~~~~~~+\sum_{(m_{1},\underline{m}^{\underline{t}})}\sum_{\hat{m}_{1}\neq m_{1} } \sum_{\hat{b}_{1} \in \mathcal{B}_{1}}\sum_{\hat{a}_{3}^{s_{3}} } \sum_{\substack{(b_{1},\underline{a}^{\underline{s}})\in \\\mathcal{B}_{1} \times \mathscr{I}(\hat{a}^{s_{3}})}}\sum_{\substack{(v_{1}^{n},\underline{u}^{n})\in\\ T_{2\eta_{2}}(V_{1},\underline{U}|q^{n})}}\sum_{\substack{(x^{n},y_{1}^{n}) \in \\ \tilde{T}_{\eta_{2}}(q^{n}|v_{1}^{n},\underline{u}^{n})}}\theta(y_{1}^{n},x^{n}|v_{1}^{n},\underline{u}^{n})\beta_{2}}
 \nonumber\\
&\displaystyle \!\!\!\!\!\leq \sum_{(m_{1},\underline{m}^{\underline{t}})}\sum_{\hat{m}_{1}\neq m_{1} } \sum_{\substack{\hat{b}_{1} \in \mathcal{B}_{1}\\\hat{a}^{s_{3}}\in\fieldpi^{s_{3}}}} \sum_{\substack{(b_{1},\underline{a}^{\underline{s}})\in \\\mathcal{B}_{1} \times \mathscr{D}(\hat{a}^{s_{3}})}}\sum_{\substack{(v_{1}^{n},\underline{u}^{n})\in\\ T_{2\eta_{2}}(V_{1},\underline{U}|q^{n})}}\beta_{1}
 +\sum_{(m_{1},\underline{m}^{\underline{t}})}\sum_{\hat{m}_{1}\neq m_{1} } \sum_{\hat{b}_{1} \in \mathcal{B}_{1}}\sum_{\hat{a}_{3}^{s_{3}} } \sum_{\substack{(b_{1},\underline{a}^{\underline{s}})\in \\\mathcal{B}_{1} \times \mathscr{I}(\hat{a}^{s_{3}})}}\sum_{\substack{(v_{1}^{n},\underline{u}^{n})\in\\ T_{2\eta_{2}}(V_{1},\underline{U}|q^{n})}}\beta_{2}\nonumber
\end{eqnarray}
The terms in the first and second summation are identical to $\beta_{1}$ and $\beta_{2}$ respectively. Multiplying each with the corresponding number of terms, employing the lower bound for $\mathcal{L}(n)$ derived in (\ref{Eqn:3DBCProofValueOfListSize}), it maybe verified that $P(\tilde{\epsilon}_{41}) \leq \mathscr{T}_{1}+\mathscr{T}_{2}$, where 
\begin{eqnarray}
 \mathscr{T}_{1} &=& 2\exp \left\{ -n \left( \left[ {I(V_{1};U_{2}\oplus U_{3},Y_{1}|Q)-56\eta_{2}}\right] - \left[ \frac{\log |\mathcal{B}_{1}|}{n}+\frac{\log |\mathcal{M}_{1}|}{n} \right]\right)\right\} \nonumber\\
 \mathscr{T}_{2}&=& 2\exp \left\{ -n \left( \left[\log\pi+H(V_{1}|Q)-H(V_{1},U_{2}\oplus U_{3}|Q,Y_{1})-56\eta_{2} \right] - \left[ \frac{\log |\mathcal{B}_{1}|}{n}+\frac{\log |\mathcal{M}_{1}|}{n} +\frac{s_{3}\log\pi}{n}\right]  \right)  \right\}.
 \nonumber
\end{eqnarray}
From bounds on the parameters of the code ((\ref{Eqn:3To1DBCOneLayerAchievabilityLowerAndUpperBoundsOnSj}) - (\ref{Eqn:3To1DBCOneLayerAchievabilityLowerAndUpperBoundsOnR1AndK1})), it maybe verified that for $n\geq \max \{ N_{1}(\eta),N_{j}(\eta_{2}):j=2,3,4 \}$, $P((\epsilon_{l}\cup \epsilon_{2}\cup \epsilon_{3})^{c}\cap \epsilon_{41})   \leq  4 \exp \left\{ -n\left(\delta_{1}+\frac{\eta}{4}-56\eta_{2} \right) \right\}$.

\section{Upper bound on $P((\epsilon_{1}\cup \epsilon_{2}\cup \epsilon_{3})^{c}\cap\epsilon_{4j})$ for $3-$DBC}
\label{AppSec:AnalysisOfDecoder2And3ErrorEventsFor3To1DBC}
We begin by introducing some compact notation similar to that introduced in appendix \ref{AppSec:AnalysisOfDecoder1ErrorEventFor3To1DBC}. We let $\underline{M}^{\underline{t}}$ denote the pair $(M_{2}^{t_{2}},M_{3}^{t_{3}})$ of message random variables. We let $\underline{m}^{\underline{t}}$ denote a generic element $(m_{2}^{t_{2}},m_{3}^{t_{3}}) \in \fieldpi^{\underline{t}}\define \fieldpi^{t_{2}} \times \fieldpi^{t_{3}}$, and similarly $\underline{a}^{\underline{s}}$ denote $(a_{2}^{s_{2}},a_{3}^{s_{3}})\in \fieldpi^{\underline{s}} \define  \fieldpi^{s_{2}} \times \fieldpi^{s_{3}}$. We let
\begin{eqnarray}
 \label{Eqn:3To1DBCOneLayerAchievabilityTypicalitySet}
 &\!\!\!\hat{T}_{\eta_{2}}(q^{n}) \define \left\{ (v_{1}^{n},\underline{u}^{n},x^{n},y_{j}^{n}) \in T_{8\eta_{2}}(V_{1},\underline{U},X,Y_{j}|q^{n}): (v_{1}^{n},\underline{u}^{n}) \in T_{2\eta_{2}}(V_{1},\underline{U}|q^{n}) , (v_{1}^{n},\underline{u}^{n},x^{n}) \in T_{4\eta_{2}}(V_{1},\underline{U},X|q^{n})\right\},
  \nonumber\\
 \lefteqn{~\hat{T}_{\eta_{2}}(q^{n}|v_{1}^{n},\underline{u}^{n}) = \left\{  (x^{n},y_{j}^{n}) : (v_{1}^{n},\underline{u}^{n},x^{n},y_{j}^{n}) \in \hat{T}_{\eta_{2}}(q^{n})\right\}}
  \nonumber
\end{eqnarray}
We begin by characterizing the event under question. For $j=2,3$, denoting $\tilde{\epsilon}_{4j}\define (\epsilon_{l}\cup\epsilon_{2}\cup\epsilon_{3})^{c}\cap \epsilon_{4j}$, we have
\begin{eqnarray}
 \label{Eqn:3To1DBCOneLayerAchievabilityDecoder2ErrorEventCharacterization}
 P(\tilde{\epsilon}_{4j}) \leq \sum_{(m_{1},\underline{m}^{\underline{t}})}\sum_{\hat{m}_{j}^{t_{j}} \neq M_{j}^{t_{j}} } \sum_{\hat{a}_{j}^{s_{j}} } \sum_{\substack{(v_{1}^{n},\underline{u}^{n},x^{n},y_{j}^{n})\\\in \hat{T}_{\eta_{2}}(q^{n})}} \sum_{\substack{\hat{u}_{j}^{s_{j}}\in \\ T_{8\eta_{2}}(U_{j}|q^{n},y_{j}^{n}) }}\!\!\!\!\!\!P\left( \left\{\substack{M_{1}=m_{1},\underline{M}^{\underline{t}}=\underline{m}^{\underline{t}},V_{1}^{n}(m_{1},B_{1})=v_{1}^{n}\\U_{l}^{n}(A_{l}^{s_{l}})=u_{l}^{n},I_{l}(A^{s_{l}})=m_{l}^{t_{l}}:l=2,3,Y_{j}^{n}=y_{j}^{n}\\X^{n}=x^{n},U_{j}^{n}(\hat{a}_{j}^{s_{j}})=\hat{u}_{j}^{n},I_{j}(\hat{a}_{j}^{s_{j}})=\hat{m}_{j}^{t_{j}}   } \right\} \cap \epsilon_{l}^{c} \right),
\end{eqnarray}
where $X^{n}$ abbreviates $X^{n}(M_{1},\underline{M}^{\underline{t}})$, the random vector input on the channel. We consider a generic term in the above sum. Observe that
\begin{eqnarray}
 \label{Eqn:3To1DBCOneLayerAchievabilityGenericTermInTheSum}
 P\left( \substack{Y_{j}^{n}=y_{j}^{n}\\X^{n}=x^{n}}\middle|\left\{\substack{M_{1}=m_{1},\underline{M}^{\underline{t}}=\underline{m}^{\underline{t}},V_{1}^{n}(m_{1},B_{1})=v_{1}^{n}\\U_{l}^{n}(A_{l}^{s_{l}})=u_{l}^{n},I_{l}(A^{s_{l}})=m_{l}^{t_{l}}:l=2,3\\U_{j}^{n}(\hat{a}_{j}^{s_{j}})=\hat{u}_{j}^{n},I_{j}(\hat{a}_{j}^{s_{j}})=\hat{m}_{j}^{t_{j}}   }  \right\} \cap \epsilon_{l}^{c}\right) = P\left(\substack{Y_{j}^{n}=y_{j}^{n}\\X^{n}=x^{n}}\middle|\substack{V_{1}^{n}(M_{1},B_{1})=v_{1}^{n}\\U_{l}^{n}(A_{l}^{s_{l}})=u_{l}^{n}:l=2,3}\right) =: \theta(y^{n},x^{n}|v_{1}^{n},\underline{u}^{n}),
 \\
 \label{Eqn:3To1DBCOneLayerAchievabilityGenericTermInTheSumConditioningEvent}
 P\left(  \left\{\substack{M_{1}=m_{1},\underline{M}^{\underline{t}}=\underline{m}^{\underline{t}},V_{1}^{n}(m_{1},B_{1})=v_{1}^{n}\\U_{l}^{n}(A_{l}^{s_{l}})=u_{l}^{n},I_{l}(A^{s_{l}})=m_{l}^{t_{l}}:l=2,3\\U_{j}^{n}(\hat{a}_{j}^{s_{j}})=\hat{u}_{j}^{n},I_{j}(\hat{a}_{j}^{s_{j}})=\hat{m}_{j}^{t_{j}}   }  \right\} \cap \epsilon_{l}^{c} \right) = \sum_{\substack{(b_{1},\underline{a}^{\underline{s}})\\\in \mathcal{B}_{1} \times \fieldpi^{\underline{s}}}} \!\!\!P\left( E \cap \left\{\substack{B_{1}=b_{1}\\\underline{A}^{\underline{s}}=\underline{a}^{\underline{s}}}\right\}\cap \epsilon_{l}^{c}  \right) \leq \sum_{\substack{(b_{1},\underline{a}^{\underline{s}})\\\in \mathcal{B}_{1} \times \fieldpi^{\underline{s}}}}\!\!\!P(E)P\left( \left\{\substack{B_{1}=b_{1}\\\underline{A}^{\underline{s}}=\underline{a}^{\underline{s}} }\right\}| E \cap \epsilon_{l}^{c}\right),
\end{eqnarray}
where $E$ abbreviates the event $\left\{\substack{M_{1}=m_{1},~\underline{M}^{\underline{t}}=\underline{m}^{\underline{t}},~V_{1}^{n}(m_{1},b_{1})=v_{1}^{n},~U_{l}^{n}(a_{l}^{s_{l}})=u_{l}^{n},~I_{l}(a^{s_{l}})=m_{l}^{t_{l}}:l=2,3,~U_{j}^{n}(\hat{a}_{j}^{s_{j}})=\hat{u}_{j}^{n},~I_{j}(\hat{a}_{j}^{s_{j}})=\hat{m}_{j}^{t_{j}}   } \right\}$. We now focus on the terms on the right hand side of (\ref{Eqn:3To1DBCOneLayerAchievabilityGenericTermInTheSumConditioningEvent}). By the encoding rule, $P\left( \left\{\substack{B_{1}=b_{1},~\underline{A}^{\underline{s}}~=\underline{a}^{\underline{s}} }\right\}| E \cap \epsilon_{l}^{c}\right)= \frac{1}{\mathcal{L}(n)}$. We are left to evaluate $P(E)$. The collection $M_{1},M_{2}^{t_{2}},M_{3}^{t_{3}},V_{1}^{n}(m_{1},b_{1}),I_{2}(a^{s_{2}}),I_{3}(a^{s_{3}}),I_{j}(\hat{a}^{s_{j}}),(U_{l}(a_{l}^{s_{l}}):l=2,3,U_{j}(\hat{a}_{j}^{s_{j}}))$ are mutually independent, where $(U_{l}(a_{l}^{s_{l}}):l=2,3,U_{j}(\hat{a}_{j}^{s_{j}}))$ is treated as a single random object. The following counting argument proves the triplet $U_{l}(a_{l}^{s_{l}}):l=2,3,U_{j}(\hat{a}_{j}^{s_{j}})$ also to be mutually independent. Let $\left\{ j,\msout{j} \right\}=\left\{2,3\right\}$. For any $u_{j}^{n},u_{\msout{j}}^{n}$ and $\hat{u}_{j}^{n}$, let us study
\begin{eqnarray}
 \label{Eqn:3To1DBCOneLayerAchievabilityCountingArgument}
 \left| \left\{ (g_{2},g_{3/2},b_{2}^{n},b_{3}^{n}): a_{j}^{s_{j}}g_{j}\oplus b_{j}^{n}=u_{j}^{n}, a_{\msout{j}}^{s_{\msout{j}}}g_{\msout{j}}\oplus b_{\msout{j}}^{n}=u_{\msout{j}}^{n}, (\hat{a}_{j}^{s_{j}}\ominus a_{j}^{s_{j}})g_{j}=\hat{u}_{j}^{n}-u_{j}^{n} \right\}  \right|.
 \nonumber
\end{eqnarray}
There exists a $t$ such that $\hat{a}_{jt}^{s_{j}} \neq a_{jt}^{s_{j}}$. For any choice of rows $1,2,\cdots, t-1, t+1,\cdots,s_{3}$ of $g_{3}$, one can choose the $t$th row of $g_{j}$ and $b_{2}^{n},b_{3}^{n}$ such that the above conditions are satisfied. The cardinality of the above set is $\pi^{(s_{3}-1)n}$. The uniform distribution and mutual independence guarantee $P(U_{l}(a_{l}^{s_{l}})=u_{l}^{n}:l=2,3,U_{j}(\hat{a}_{j}^{s_{j}})=\hat{u}_{j}^{n})=\frac{1}{\pi^{3n}}$.

We therefore have
\begin{eqnarray}
\label{Eqn:3To1DBCOneLayerAchievabilityGenTermInTheSumConditioningEventSimplified}
 P\left(  \substack{M_{1}=m_{1},\underline{M}^{\underline{t}}=\underline{m}^{\underline{t}},V_{1}^{n}(m_{1},b_{1})=v_{1}^{n},\\U_{l}^{n}(a_{l}^{s_{l}})=u_{l}^{n},I_{l}(a^{s_{l}})=m_{l}^{t_{l}}:l=2,3,\\U_{j}^{n}(\hat{a}_{j}^{s_{j}})=\hat{u}_{j}^{n},I_{j}(\hat{a}_{j}^{s_{j}})=\hat{m}_{j}^{t_{j}}   } \right)=\frac{P(M_{1}=m_{1},\underline{M}^{\underline{t}}=\underline{m}^{\underline{t}},V_{1}^{n}(m_{1},b_{1})=v_{1}^{n})}{\pi^{3n+t_{2}+t_{3}+t_{j}}}
\end{eqnarray}
Substituting (\ref{Eqn:3To1DBCOneLayerAchievabilityGenTermInTheSumConditioningEventSimplified}), (\ref{Eqn:3To1DBCOneLayerAchievabilityGenericTermInTheSumConditioningEvent}) and (\ref{Eqn:3To1DBCOneLayerAchievabilityGenericTermInTheSum}) in (\ref{Eqn:3To1DBCOneLayerAchievabilityDecoder2ErrorEventCharacterization}), we have
\begin{eqnarray}
 \label{Eqn:Eqn:3To1DBCOneLayerAchievabilityDecoder2ErrorEventAfterUnionBound}
P(\tilde{\epsilon}_{4j}) \leq  \nonumber \!\!\!\! \sum_{\left(m_{1},\underline{m}^{\underline{t}}\right)} \sum_{(b_{1},\underline{a}^{\underline{s}})}\sum_{\hat{m}_{j}^{t_{j}} \neq m_{j}^{t_{j}} }\sum_{\substack{\hat{a}_{j}^{s_{j}}}}\sum_{\substack{(v_{1}^{n},\underline{u}^{n},x^{n},y_{j}^{n})\\\in \hat{T}_{\eta_{2}}(q^{n})}}\!\!\!\!\!\!\!\!\!\!\!\!\theta(y^{n},x^{n}|v_{1}^{n},\underline{u}^{n})\!\!\!\!\!\!\!\!\!\!\!\! \sum_{\substack{\hat{u}_{j}^{n} \in \\ T_{16\eta_{2}}(U_{j}|q^{n},y_{j}^{n})}}\!\!\!\!\!\!\!\!\!\!\!\frac{P(M_{1}=m_{1},\underline{M}^{\underline{t}}=\underline{m}^{\underline{t}},V_{1}^{n}(m_{1},b_{1})=v_{1}^{n})}{\pi^{3n+t_{2}+t_{3}+t_{j}}\mathcal{L}(n)}. \nonumber
\end{eqnarray}
Note that terms in the innermost sum do not depend on the arguments of the sum. We now employ the bounds on the cardinality of conditional typical sets. There exists $N_{5}(\eta_{2}) \in \naturals$ such that for all $n \geq N_{5}(\eta_{2})$, we have $|T_{16\eta_{2}}(U_{j}|q^{n},y_{j}^{n})| \leq \exp \{ n(H(U_{j}|Q,Y_{j})+32\eta_{2}) \}$ for all $(q^{n},y_{j}^{n}) \in T_{8\eta_{2}}(Q,Y_{j})$. For $n \geq \max \{ N_{1}(\eta),N_{5}(\eta_{2}) \}$, we therefore have
\begin{eqnarray}
&&\!\!\!\!\!\!\!\!\!\!\!\!\!\!\!\!\!\!\!\!\!P(\tilde{\epsilon}_{4j})\leq \nonumber \!\!\!\!\sum_{\left(m_{1},\underline{m}^{\underline{t}}\right)} \sum_{(b_{1},\underline{a}^{\underline{s}})}\sum_{\hat{m}_{j}^{t_{j}} \neq m_{j}^{t_{j}} }\sum_{\substack{\hat{a}_{j}^{s_{j}}}}\sum_{\substack{(v_{1}^{n},\underline{u}^{n})\\\in T_{2\eta_{2}}(V_{1},\underline{U}|q^{n})}}\frac{P\left( \substack{V_{1}(m_{1},b_{1})=v_{1}^{n},M_{1}=m_{1}\\M_{l}^{t_{l}}=m_{l}^{t_{l}}:l=2,3}\right)\exp \left\{ n32\eta_{2} \right\}}{\pi^{3n+t_{2}+t_{3}+t_{j}}\exp \left\{ -nH(U_{j}|Q,Y_{j}) \right\}} \sum_{\substack{(x^{n},y_{j}^{n}) \in \\ \hat{T}_{\eta_{2}}(q^{n}|v_{1}^{n},\underline{u}^{n})}}\!\!\!\!\!\!\frac{\theta(y^{n},x^{n}|v_{1}^{n},\underline{u}^{n})}{\mathcal{L}(n)}\nonumber\\
&&\leq \nonumber \sum_{\left(m_{1},\underline{m}^{\underline{t}}\right)} \sum_{(b_{1},\underline{a}^{\underline{s}})}\sum_{\hat{m}_{j}^{t_{j}} \neq m_{j}^{t_{j}} }\sum_{\substack{\hat{a}_{j}^{s_{j}}}}\sum_{\substack{(v_{1}^{n},\underline{u}^{n})\\\in T_{2\eta_{2}}(V_{1},\underline{U}|q^{n})}}\frac{P\left( \substack{V_{1}(m_{1},b_{1})=v_{1}^{n},M_{1}=m_{1}\\M_{l}^{t_{l}}=m_{l}^{t_{l}}:l=2,3}\right)\exp \left\{ n32\eta_{2} \right\}}{\pi^{3n+t_{2}+t_{3}+t_{j}}\exp \left\{ -nH(U_{j}|Q,Y_{j}) \right\}}\frac{1}{\mathcal{L}(n)}\nonumber\\
\label{Eqn:3DBCProofFinalInequalityInDecoder2ErrorEvent}
\lefteqn{~~~~~~\leq  2\exp \left\{ s_{j}\log\pi -n\left( \log\pi-H(U_{j}|Q,Y_{j})-32\eta_{2} \right)\right\} \leq 2\exp \left\{ -n(\delta_{1}-32\eta_{2})\right\},}
\end{eqnarray}
where (\ref{Eqn:3DBCProofFinalInequalityInDecoder2ErrorEvent}) follows from definition of $\mathcal{L}(n)$, (\ref{Eqn:3To1DBCOneLayerEncoderErrorEventVarianceOfPhi}) and the bounds on the parameters of the code derived in (\ref{Eqn:3To1DBCOneLayerAchievabilityLowerAndUpperBoundsOnSj}) - (\ref{Eqn:3To1DBCOneLayerAchievabilityLowerAndUpperBoundsOnR1AndK1}).

\section{Characterization for no rate loss in point-to-point channels with channel state information}
\label{AppSec:CharacterizationForNoRateLossInPTP-STx}

We now develop the connection between upper bound
(\ref{Eqn:UpperBoundOnRateOfUser1ThatContainsRateLoss}) and the
capacity of a PTP channel with non-causal state
\cite{1980MMPCT_GelPin}. We only describe the relevant additive
channel herein and refer the interested reader to either to
\cite{1980MMPCT_GelPin} or \cite[Chapter 7]{201201NIT_ElgKim} for a
detailed study. The notation employed in this section and appendix
\ref{AppSec:TheBinaryAdditiveDirtyPointToPointChannelSuffersARateLoss}
are specific to these sections. 

Consider the  discrete memoryless PTP channel with binary input and output alphabets
$\mathcal{X}=\mathcal{Y}=\left\{0,1\right\}$. The channel transition
probabilities depend on a random parameter, called state that takes
values in the binary alphabet $\mathcal{S}=\left\{ 0,1 \right\}$. 
The channel is additive, i.e., if $S,X$ and $Y$ denote
channel state, input and output respectively, then $P(Y=x \oplus s |
X=x, S=s)=1-\delta$, where $\oplus$ denotes addition in binary field
and $\delta \in (0,\frac{1}{2})$. The state is independent and
identically distributed across time with $P(S=1)=\epsilon \in
(0,1)$.\footnote{Through appendices
  \ref{AppSec:CharacterizationForNoRateLossInPTP-STx},\ref{AppSec:TheBinaryAdditiveDirtyPointToPointChannelSuffersARateLoss}
  we prove if $\delta,\tau \in (0,\frac{1}{2})$ and $\epsilon \in
  (0,1)$, then $\alpha_{T}(\tau,\eta,\epsilon) < h_b(\tau *
  \eta)-h_b(\eta)$. This implies statement of lemma
  \ref{Lem:CharacterizationOfConditionThatEnsuresNoRateLoss}.} The
input is constrained by an additive Hamming cost, i.e., the cost of
transmitting $x^{n} \in \InputAlphabet^{n}$ is
$\sum_{t=1}^{n}1_{\left\{x_{t}=1\right\}}$ and average cost of input
per symbol is constrained to be $\tau \in (0,\frac{1}{2})$.  

The quantities of interest - left and right hand sides of (\ref{Eqn:PluggingInImportOfRateLossLemma})(i) - are related to two scenarios with regard to knowledge of state for the above channel. In the first scenario we assume the state sequence is available to the encoder non-causally and the decoder has no knowledge of the same. In the second scenario, we assume knowledge of state is available to both the encoder and decoder non-causally. Let $\mathcal{C}_{T}(\tau,\delta,\epsilon),\mathcal{C}_{TR}(\tau,\delta,\epsilon)$ denote the capacity of the channel in the first and second scenarios respectively. It turns out, the left hand side of (\ref{Eqn:PluggingInImportOfRateLossLemma})(i) is upper bounded by $\mathcal{C}(\tau,\delta,\epsilon)$ and the right hand side of (\ref{Eqn:PluggingInImportOfRateLossLemma})(i) is $\mathcal{C}_{TR}(\tau,\delta,\epsilon)$. A necessary condition for (\ref{Eqn:PluggingInImportOfRateLossLemma})(i) to hold, is therefore $\mathcal{C}_{T}(\tau,\delta,\epsilon)=\mathcal{C}_{TR}(\tau,\delta,\epsilon)$. For the PTP channel with non-causal state, this equality is popularly referred to as \textit{no rate loss}. We therefore seek the condition for no rate loss.

The objective of this section and appendix \ref{AppSec:TheBinaryAdditiveDirtyPointToPointChannelSuffersARateLoss} is to study the condition under which $\mathcal{C}_{T}(\tau,\delta,\epsilon)=\mathcal{C}_{TR}(\tau,\delta,\epsilon)$. In this section, we characterize each of these quantities, in the standard information theoretic way, in terms of a maximization of an objective function over a particular collection of probability mass functions.

We begin with a characterization of $\mathcal{C}_{T}(\tau,\delta,\epsilon)$ and $\mathcal{C}_{TR}(\tau,\delta,\epsilon)$.

\begin{definition}
 \label{Defn:TestChannelsForPTP-STx}
Let $\SetOfDistributions_{T}(\tau,\delta,\epsilon)$ denote the set of all probability mass functions $p_{USXY}$ defined on $\AuxiliaryAlphabet \times \StateAlphabet \times \InputAlphabet \times \OutputAlphabet$ that satisfy (i) $p_{S}(1) = \epsilon$, (ii) $p_{Y|XSU}(x\oplus s|x,s,u)=p_{Y|XS}(x\oplus s | x,s)=1-\delta$, (iii) $P(X=1)\leq \tau$. For $p_{USXY} \in \SetOfDistributions_{T}(\tau,\delta,\epsilon)$, let $\alpha_{T}(p_{USXY}) = I(U;Y)-I(U;S)$ and $\alpha_{T}(\tau,\delta,\epsilon) = \underset{p_{USXY} \in \SetOfDistributions_{T}(\tau,\delta,\epsilon)}{\sup} \alpha_{T}(p_{USXY})$.
\end{definition}

\begin{thm}
 \label{Thm:CapacityOfBinaryAdditivePTP-STx}
$\mathcal{C}_{T}(\tau,\delta,\epsilon) = \alpha_{T}(\tau,\delta,\epsilon)$
\end{thm}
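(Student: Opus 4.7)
This theorem is an instance of the classical Gelfand--Pinsker capacity theorem \cite{1980MMPCT_GelPin} specialized to the binary additive channel with a Hamming cost constraint on the input. My plan is to give the usual two-sided argument: achievability via random binning and converse via the Csisz\'ar--K\"orner sum identity, taking care that the cost constraint is respected.

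For achievability, fix any $p_{USXY} \in \SetOfDistributions_{T}(\tau,\delta,\epsilon)$, pick a rate $R < I(U;Y) - I(U;S)$ and a small $\eta > 0$. Generate $2^{n(I(U;S)+3\eta)}$ codewords $u^{n}(m,l)$, indexed by $(m,l) \in [2^{nR}] \times [2^{n(I(U;S)+3\eta)}]$, i.i.d.\ according to $\prod_{t=1}^{n}p_{U}$. Given message $m$ and state $s^{n}$, the encoder searches for some $l$ with $(u^{n}(m,l),s^{n})$ jointly $\delta$-typical under $p_{US}$; by the covering lemma this succeeds with probability approaching one. The channel input is then generated symbol-by-symbol according to $p_{X|US}$. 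A standard calculation using the constraint $\mathbb{E}\{1_{\{X=1\}}\} \leq \tau$ and a typicality-based truncation shows the average input cost exceeds $\tau+\eta$ only with vanishing probability, so the code can be trimmed to satisfy the cost constraint with negligible rate penalty. The decoder searches for the unique $(\hat m,\hat l)$ with $(u^{n}(\hat m,\hat l),y^{n})$ jointly typical under $p_{UY}$; the packing lemma gives vanishing error probability as long as $R + I(U;S) < I(U;Y) - 2\eta$. Taking the supremum over $p_{USXY} \in \SetOfDistributions_{T}(\tau,\delta,\epsilon)$ yields $\mathcal{C}_{T}(\tau,\delta,\epsilon) \geq \alpha_{T}(\tau,\delta,\epsilon)$.

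For the converse, consider any sequence of codes achieving rate $R$ with vanishing error probability and average cost at most $\tau$. By Fano's inequality, $nR \leq I(M;Y^{n}) + n\varepsilon_{n}$ with $\varepsilon_{n} \to 0$. Since $S^{n}$ is independent of $M$, we have $I(M;Y^{n}) = I(M;Y^{n}) - I(M;S^{n})$, and the Csisz\'ar sum identity gives
\begin{equation}
I(M;Y^{n}) - I(M;S^{n}) = \sum_{t=1}^{n} \left[ I(M,S_{t+1}^{n},Y^{t-1};Y_{t}) - I(M,S_{t+1}^{n},Y^{t-1};S_{t}) \right] \nonumber
\end{equation}
after the standard regrouping. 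Define $U_{t} \define (M,S_{t+1}^{n},Y^{t-1})$ and observe that $(U_{t},S_{t},X_{t},Y_{t})$ satisfies the Markov structure $U_{t} \to (X_{t},S_{t}) \to Y_{t}$, so the joint law lies in $\SetOfDistributions_{T}(\tau_{t},\delta,\epsilon)$ where $\tau_{t} \define P(X_{t}=1)$. A standard time-sharing random variable $T$ uniform on $[n]$ and independent of everything converts the per-letter bound into a single joint distribution with $P(X=1) = \frac{1}{n}\sum_{t=1}^{n}\tau_{t} \leq \tau$, and the resulting pmf lies in $\SetOfDistributions_{T}(\tau,\delta,\epsilon)$. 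Concavity of the Gelfand--Pinsker functional in the auxiliary distribution is not needed since we absorb $T$ into $U$, giving $R \leq \alpha_{T}(\tau,\delta,\epsilon) + \varepsilon_{n}$, from which the converse follows by letting $n \to \infty$.

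The main obstacle is not the high-level structure but the bookkeeping for the cost constraint: one must verify that restricting the codebook to cost-typical codewords and inputs does not destroy the achievability calculation, and on the converse side that the per-letter constraint $\frac{1}{n}\sum_{t}\mathbb{E}\{1_{\{X_{t}=1\}}\} \leq \tau$ correctly reduces to the single-letter constraint $\mathbb{E}\{1_{\{X=1\}}\} \leq \tau$ under the time-sharing step. A standard cardinality bound on $|\AuxiliaryAlphabet|$ via Fenchel--Bunt can be appended if one desires a computable supremum, but this is not needed for the identity $\mathcal{C}_{T}(\tau,\delta,\epsilon) = \alpha_{T}(\tau,\delta,\epsilon)$ itself.
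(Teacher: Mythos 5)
The paper offers no proof of its own here; it simply cites \cite{1980MMPCT_GelPin} and \cite[Section 7.6, Theorem 7.3]{201201NIT_ElgKim}. Your proposal reconstructs precisely the standard Gelfand--Pinsker argument those references contain (covering/packing for achievability, Fano plus the Csisz\'ar sum identity plus time-sharing for the converse, with the Hamming cost constraint handled by expurgation on the direct side and by averaging over $t$ on the converse side), so you have taken the intended route and the argument is sound. Two small corrections are in order: the $u^{n}$-codebook should contain $2^{n(R+I(U;S)+3\eta)}$ codewords total, not $2^{n(I(U;S)+3\eta)}$ as stated, although your index set $(m,l)\in[2^{nR}]\times[2^{n(I(U;S)+3\eta)}]$ already encodes the intended bin structure; and the Csisz\'ar-sum regrouping yields the \emph{inequality} $I(M;Y^{n})-I(M;S^{n})\le\sum_{t=1}^{n}\bigl[I(M,S_{t+1}^{n},Y^{t-1};Y_{t})-I(M,S_{t+1}^{n},Y^{t-1};S_{t})\bigr]$ rather than the equality you displayed (a nonnegative term $\sum_{t}I(Y^{t-1};Y_{t})$ is discarded, and the i.i.d.\ state lets one replace $I(M,Y^{t-1};S_{t}|S_{t+1}^{n})$ with $I(M,Y^{t-1},S_{t+1}^{n};S_{t})$), but the inequality points in the useful direction so the converse still closes.
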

This is a well known result in information theory and we refer the reader to \cite{1980MMPCT_GelPin} or \cite[Section 7.6, Theorem 7.3]{201201NIT_ElgKim} for a proof.
\begin{definition}
 \label{Defn:TestChannelsForPTP-STxRx}
Let $\SetOfDistributions_{TR}(\tau,\delta,\epsilon)$ denote the set of all probability mass functions $p_{SXY}$ defined on $\StateAlphabet \times \InputAlphabet \times \OutputAlphabet$ that satisfy (i) $p_{S}(1) = \epsilon$, (ii) $p_{Y|XS}(x\oplus s|x,s)=1-\delta$, (iii) $P(X=1)\leq \tau$. For $p_{SXY} \in \SetOfDistributions_{TR}(\tau,\delta,\epsilon)$, let $\alpha_{TR}(p_{SXY}) = I(X;Y|S)$ and $\alpha_{TR}(\tau,\delta,\epsilon) = \underset{p_{SXY} \in \SetOfDistributions_{TR}(\tau,\delta,\epsilon)}{\sup} \alpha_{TR}(p_{SXY})$.
\end{definition}

\begin{thm}
 \label{Thm:CapacityOfBinaryAdditivePTP-STxRx}
$\mathcal{C}_{TR}(\tau,\delta,\epsilon) = \alpha_{TR}(\tau,\delta,\epsilon)$
\end{thm}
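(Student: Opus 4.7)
The plan is to establish the capacity formula $\mathcal{C}_{TR}(\tau,\delta,\epsilon) = \alpha_{TR}(\tau,\delta,\epsilon)$ by the standard achievability/converse pair for a point-to-point channel with noncausal state information at both encoder and decoder, carrying the Hamming cost constraint through both directions.

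For achievability, I would fix $p_{SXY} \in \SetOfDistributions_{TR}(\tau,\delta,\epsilon)$ with $\Expectation\{\kappa(X)\} \leq \tau - \eta'$ for some small $\eta' > 0$ and $I(X;Y|S) \geq \alpha_{TR}(\tau,\delta,\epsilon) - \eta'$. Because both encoder and decoder observe $S^n$ (which is i.i.d.\ $\sim p_S$ and independent of the message $M$), I propose a state-dependent random codebook: for each $m \in [2^{nR}]$ and each realization $s^n \in \mathcal{S}^n$, draw a codeword $X^n(m,s^n)$ conditionally i.i.d.\ according to $\prod_{t=1}^n p_{X|S}(\cdot|s_t)$. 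The encoder, on $(M,S^n) = (m,s^n)$, transmits $X^n(m,s^n)$; the decoder outputs the unique $\hat m$ such that $(X^n(\hat m, S^n), S^n, Y^n)$ is jointly $\delta$-typical with respect to $p_{SXY}$. Standard joint-typicality analysis yields vanishing error probability whenever $R < I(X;Y|S) - \eta''$, and a standard expurgation/typicality argument on the cost (using $|\kappa| \leq 1$ and the LLN) ensures the per-symbol cost is at most $\tau$ with high probability.

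For the converse, I would invoke Fano's inequality and memorylessness. Since $M$ is independent of $S^n$,
\begin{align*}
nR = H(M) = H(M|S^n) &\leq I(M;Y^n|S^n) + n\epsilon_n \\
&\leq I(X^n;Y^n|S^n) + n\epsilon_n,
\end{align*}
where the last step uses the Markov chain $M - (X^n,S^n) - Y^n$. Expanding the mutual information via the chain rule and exploiting memorylessness of the channel (so $H(Y_t|X^n,S^n,Y^{t-1}) = H(Y_t|X_t,S_t)$) together with the fact that conditioning reduces entropy (so $H(Y_t|S^n,Y^{t-1}) \leq H(Y_t|S_t)$) gives $I(X^n;Y^n|S^n) \leq \sum_{t=1}^n I(X_t;Y_t|S_t)$. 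Introducing a time-sharing variable $T$ uniform on $[n]$ and setting $X \define X_T$, $S \define S_T$, $Y \define Y_T$, the i.i.d.\ structure of the state keeps $p_S(1) = \epsilon$, the channel kernel $p_{Y|XS}(x\oplus s|x,s) = 1-\delta$ is preserved by memorylessness, and the per-symbol cost constraint $\frac{1}{n}\sum_t \Expectation\{\kappa(X_t)\} \leq \tau + \eta$ yields $\Expectation\{\kappa(X)\} \leq \tau + \eta$ by linearity. Therefore $R - \epsilon_n \leq \frac{1}{n}\sum_t I(X_t;Y_t|S_t) \leq I(X;Y|S) \leq \alpha_{TR}(\tau, \delta, \epsilon + \text{vanishing})$, and sending $n \to \infty$ (and $\eta \to 0$) completes the converse.

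No real obstacle is anticipated here; this is a textbook result for channels with state at both ends. The only subtlety is bookkeeping around the cost constraint: on the achievability side, translating expected cost into a hard per-block cost via typicality (and possibly a mild expurgation of the few atypical codewords), and on the converse side, confirming that the single-letterized distribution obtained from the time-sharing variable lies in $\SetOfDistributions_{TR}(\tau,\delta,\epsilon)$ as $\eta \downarrow 0$. A reference to \cite[Section~7.4.1]{201201NIT_ElgKim}, where the identical argument is carried out for the Shannon strategy/state-at-both-ends setting, would suffice for the full details.
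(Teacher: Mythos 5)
Your proof is correct and matches the paper's approach exactly: the paper simply cites \cite[Section 7.4.1]{201201NIT_ElgKim}, which carries out precisely the state-dependent random codebook achievability plus Fano/single-letterization converse that you spell out. One small slip in your converse: the relaxation should read $\alpha_{TR}(\tau + \eta, \delta, \epsilon)$ rather than $\alpha_{TR}(\tau,\delta,\epsilon+\text{vanishing})$, since the state parameter $\epsilon$ is fixed by the problem and it is the cost budget $\tau$ that absorbs the vanishing slack.
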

This can be argued using Shannon's characterization of PTP channel capacity \cite{194807BSTJ_Sha} and we refer the reader to \cite[Section 7.4.1]{201201NIT_ElgKim} for a proof.
\begin{remark}
\label{Rem:SideInformationAtBothEncoderAndDecoderResultsInLargerCapacity}
 From the definition of $\mathcal{C}_{T}(\tau,\delta,\epsilon)$ and $\mathcal{C}_{TR}(\tau,\delta,\epsilon)$, it is obvious that $\mathcal{C}_{T}(\tau,\delta,\epsilon) \leq \mathcal{C}_{TR}(\tau,\delta,\epsilon)$, we provide an alternative argument based on theorems \ref{Thm:CapacityOfBinaryAdditivePTP-STx}, \ref{Thm:CapacityOfBinaryAdditivePTP-STxRx}. For any $p_{USXY} \in \SetOfDistributions_{T}(\tau,\delta,\epsilon)$, it is easy to verify the corresponding marginal $p_{SXY} \in \SetOfDistributions_{TR}(\tau,\delta,\epsilon)$ and moreover $\alpha_{T}(p_{USXY})=I(U;Y)-I(U;S) \leq I(U;YS)-I(U;S) = I(U;Y|S)=H(Y|S)-H(Y|US) \leq H(Y|S)-H(Y|USX) \overset{(a)}{=} H(Y|S)-H(Y|SX)=I(X;Y|S)=\alpha_{TR}(p_{SXY})\leq \mathcal{C}_{TR}(\tau,\delta,\epsilon)$, where (a) follows from Markov chain $U-(S,X)-Y$ ((ii) of definition \ref{Defn:TestChannelsForPTP-STx}). Since this this true for every $p_{USXY} \in \SetOfDistributions_{T}(\tau,\delta,\epsilon)$, we have $\mathcal{C}_{T}(\tau,\delta,\epsilon) \leq \mathcal{C}_{TR}(\tau,\delta,\epsilon)$.
\end{remark}

We provide an alternate characterization for $\mathcal{C}_{TR}(\tau,\delta,\epsilon)$.
\begin{lemma}
 \label{Lem:AlternateCharacterizationForCapacityOfBinaryAdditivePTP-STxRx}
For $p_{USXY} \in \SetOfDistributions_{T}(\tau,\delta,\epsilon)$, let $\beta_{TR}(p_{USXY}) = I(U;Y|S)$ and $\beta_{TR}(\tau,\delta,\epsilon) = \underset{p_{USXY} \in \SetOfDistributions_{T}(\tau,\delta,\epsilon)}{\sup} \beta_{TR}(p_{USXY})$. Then $\beta_{TR}(\tau,\delta,\epsilon) = \alpha_{TR}(\tau,\delta,\epsilon) = \mathcal{C}_{TR}(\tau,\delta,\epsilon)$.
\end{lemma}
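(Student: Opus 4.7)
Since Theorem \ref{Thm:CapacityOfBinaryAdditivePTP-STxRx} already gives $\alpha_{TR}(\tau,\delta,\epsilon) = \mathcal{C}_{TR}(\tau,\delta,\epsilon)$, the only new content is the identity $\beta_{TR}(\tau,\delta,\epsilon) = \alpha_{TR}(\tau,\delta,\epsilon)$. The plan is to prove this by two matching inequalities, both short and essentially information-theoretic bookkeeping.

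For the lower bound $\beta_{TR}(\tau,\delta,\epsilon) \geq \alpha_{TR}(\tau,\delta,\epsilon)$, I would take an arbitrary $p_{SXY} \in \SetOfDistributions_{TR}(\tau,\delta,\epsilon)$ achieving (or approaching) $\alpha_{TR}$, and lift it to a pmf $p_{USXY}$ on $\AuxiliaryAlphabet \times \StateAlphabet \times \InputAlphabet \times \OutputAlphabet$ by setting $\AuxiliaryAlphabet = \InputAlphabet$ and choosing $U = X$ deterministically, i.e.\ $p_{USXY}(u,s,x,y) = p_{SXY}(s,x,y)1_{\{u=x\}}$. One checks that (i) the marginal of $S$ is unchanged (so $p_{S}(1)=\epsilon$), (ii) the input marginal is unchanged (so $P(X=1)\leq \tau$), and (iii) the channel law $p_{Y|USX}(x\oplus s|u,s,x) = p_{Y|SX}(x\oplus s|s,x) = 1-\delta$ holds, since the $U=X$ construction trivially respects the Markov chain $U-(S,X)-Y$. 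Hence $p_{USXY} \in \SetOfDistributions_{T}(\tau,\delta,\epsilon)$ and $\beta_{TR}(p_{USXY}) = I(X;Y|S) = \alpha_{TR}(p_{SXY})$, proving $\beta_{TR}(\tau,\delta,\epsilon) \geq \alpha_{TR}(\tau,\delta,\epsilon)$.

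For the upper bound $\beta_{TR}(\tau,\delta,\epsilon) \leq \alpha_{TR}(\tau,\delta,\epsilon)$, I would take an arbitrary $p_{USXY} \in \SetOfDistributions_{T}(\tau,\delta,\epsilon)$ and examine its $(S,X,Y)$-marginal $p_{SXY}$, which lies in $\SetOfDistributions_{TR}(\tau,\delta,\epsilon)$ by inspection of the defining constraints. Using the Markov chain $U-(S,X)-Y$ guaranteed by condition (ii) of Definition \ref{Defn:TestChannelsForPTP-STx}, I get $H(Y|U,S,X) = H(Y|S,X)$, so that
\begin{equation}
I(U;Y|S) = H(Y|S)-H(Y|U,S) \leq H(Y|S)-H(Y|U,S,X) = H(Y|S)-H(Y|S,X) = I(X;Y|S). \nonumber
\end{equation}
Hence $\beta_{TR}(p_{USXY}) \leq \alpha_{TR}(p_{SXY}) \leq \alpha_{TR}(\tau,\delta,\epsilon)$, and taking the supremum over $p_{USXY}$ finishes the upper bound.

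The two bounds together yield $\beta_{TR}(\tau,\delta,\epsilon) = \alpha_{TR}(\tau,\delta,\epsilon)$, and combining with Theorem \ref{Thm:CapacityOfBinaryAdditivePTP-STxRx} completes the chain of equalities in the lemma. There is no genuine obstacle here: both directions are one-line manipulations, the only thing to check carefully is that the $U=X$ lift and the marginalization respect the precise constraints that define the two families $\SetOfDistributions_{T}$ and $\SetOfDistributions_{TR}$, most importantly the additive binary-symmetric channel law and the input-cost bound, both of which are preserved in a trivial way.
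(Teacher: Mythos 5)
Your proof is correct and follows essentially the same route as the paper's: the upper bound via the Markov chain $U-(S,X)-Y$ giving $I(U;Y|S)\leq I(X;Y|S)$ after marginalization, and the lower bound via the deterministic lift $U=X$, i.e.\ $q_{USXY}(u,s,x,y)=p_{SXY}(s,x,y)1_{\{u=x\}}$. The only cosmetic difference is the order of the two inequalities, and your version even sidesteps a small typo in the paper (which writes $q_{USXY}\in\SetOfDistributions_{TR}$ where $\SetOfDistributions_{T}$ is meant).
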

\begin{proof}
 We first prove $\beta_{TR}(\tau,\delta,\epsilon) \leq \alpha_{TR}(\tau,\delta,\epsilon)$. Note that for any $p_{USXY} \in \SetOfDistributions_{T}(\tau,\delta,\epsilon)$, the corresponding marginal $p_{SXY} \in \SetOfDistributions_{TR}(\tau,\delta,\epsilon)$. Moreover, $\beta_{TR}(p_{USXY})=I(U;Y|S) = H(Y|S)-H(Y|US) \leq H(Y|S)-H(Y|USX) \overset{(a)}{=} H(Y|S)-H(Y|SX)=I(X;Y|S)=\alpha_{TR}(p_{SXY})$, where (a) follows from Markov chain $U-(S,X)-Y$ ((ii) of definition \ref{Defn:TestChannelsForPTP-STx}). Therefore, $\beta_{TR}(\tau,\delta,\epsilon) \leq \alpha_{TR}(\tau,\delta,\epsilon)$. Conversely, given $p_{SXY} \in \SetOfDistributions_{TR}(\tau,\delta,\epsilon)$, define $\AuxiliaryAlphabet=\left\{  0,1\right\}$ and a probability mass function $q_{USXY}$ defined on $\AuxiliaryAlphabet \times \StateAlphabet \times \InputAlphabet \times \OutputAlphabet$ as $q_{USXY}(u,s,x,y)=p_{SXY}(s,x,y)1_{\left\{ u=x \right\}}$. Clearly $q_{SXY}=p_{SXY}$ and hence (i) and (iii) of definition \ref{Defn:TestChannelsForPTP-STx} are satisfied. Note that $q_{USX}(x,s,x)=p_{SX}(s,x)$, and hence $q_{Y|XSU}(y|x,s,x)=p_{Y|XS}(y|x,s)=W_{Y|XS}(y|x,s)$. Hence $q_{USXY} \in \SetOfDistributions_{TR}(\tau,\delta,\epsilon)$. It is easy to verify $\beta_{TR}(q_{USXY}) = \alpha_{TR}(p_{SXY})$ and therefore $\beta_{TR}(\tau,\delta,\epsilon) \geq \alpha_{TR}(\tau,\delta,\epsilon)$.
\end{proof}
We now derive a characterization of the condition under which $\mathcal{C}_{TR}(\tau,\delta,\epsilon)=\mathcal{C}_{T}(\tau,\delta,\epsilon)$. Towards that end, we first prove uniqueness of the PMF that achieves $\mathcal{C}_{TR}(\tau,\delta,\epsilon)$.
\begin{lemma}
 \label{Eqn:UniquenessOfPMFThatAchievesCapacityOfPTP-STxRx}
Suppose $p_{SXY},q_{SXY} \in \SetOfDistributions_{TR}(\tau,\delta,\epsilon)$ are such that $\alpha_{TR}(p_{SXY})=\alpha_{TR}(q_{SXY})=\mathcal{C}_{TR}(\tau,\delta,\epsilon)$, then $p_{SXY}=q_{SXY}$. Moreover, if $\alpha_{TR}(p_{SXY})=\mathcal{C}_{TR}(\tau,\delta,\epsilon)$, then $p_{SX}=p_{S}p_{X}$, i.e., $S$ and $X$ are independent.
\end{lemma}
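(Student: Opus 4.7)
The plan is to reduce the optimization defining $\mathcal{C}_{TR}(\tau,\delta,\epsilon)$ to a one-variable maximization and read off the unique maximizer from strict concavity. Start from the channel structure: since $Y = X \oplus S \oplus N$ with $N\sim\mathrm{Bern}(\delta)$ independent of $(S,X)$, we have
\begin{equation}
\alpha_{TR}(p_{SXY}) \;=\; I(X;Y|S) \;=\; H(Y|S) - h_b(\delta).\nonumber
\end{equation}
So the question is equivalent to: which $p_{SX}\in\SetOfDistributions_{TR}(\tau,\delta,\epsilon)$ (ignoring the channel part, which is fixed by $\delta$) maximizes $H(Y|S)$, and is it unique?

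Next, parameterize by $\alpha_s \define p_{X|S}(1|s)$ for $s\in\{0,1\}$. Conditioned on $\{S=s\}$, $Y=x\oplus s\oplus N$ is Bernoulli with parameter $\alpha_s * \delta$, so
\begin{equation}
H(Y|S) \;=\; \epsilon\, h_b(\alpha_1 * \delta) + (1-\epsilon)\, h_b(\alpha_0 * \delta).\nonumber
\end{equation}
Define $g(\alpha)\define h_b(\alpha * \delta) = h_b(\delta + \alpha(1-2\delta))$. Since $\delta\in(0,\tfrac{1}{2})$, the inner map $\alpha\mapsto \delta + \alpha(1-2\delta)$ is affine and strictly increasing, and $h_b$ is strictly concave on $[0,1]$, so $g$ is strictly concave on $[0,1]$ and strictly increasing on $[0,\tfrac{1}{2}]$.

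Now bound $H(Y|S)$ in two steps. First, by strict concavity of $g$ and $\epsilon\in(0,1)$, Jensen's inequality gives
\begin{equation}
H(Y|S) \;=\; \epsilon\, g(\alpha_1) + (1-\epsilon)\, g(\alpha_0) \;\leq\; g\bigl(\epsilon\alpha_1 + (1-\epsilon)\alpha_0\bigr) \;=\; g(P(X=1)),\nonumber
\end{equation}
with equality iff $\alpha_0 = \alpha_1$, i.e.\ iff $X$ is independent of $S$. Second, since $P(X=1)\leq \tau < \tfrac{1}{2}$ and $g$ is strictly increasing on $[0,\tfrac{1}{2}]$, we get $g(P(X=1))\leq g(\tau) = h_b(\tau*\delta)$, with equality iff $P(X=1)=\tau$. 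Combining yields $\alpha_{TR}(p_{SXY})\leq h_b(\tau*\delta) - h_b(\delta)$, and equality holds iff $\alpha_0=\alpha_1=\tau$.

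Finally, conclude both claims. If $\alpha_{TR}(p_{SXY}) = \mathcal{C}_{TR}(\tau,\delta,\epsilon)$ then the maximum in the above chain is attained, which by the equality condition in Jensen forces $p_{X|S}(1|0) = p_{X|S}(1|1) = \tau$; this is precisely $p_{SX} = p_S p_X$ with $p_X(1)=\tau$, proving the independence assertion. In particular, $p_{SX}$ is uniquely determined (its marginals are fixed as $S\sim\mathrm{Bern}(\epsilon)$, $X\sim\mathrm{Bern}(\tau)$, independent), and since $p_{Y|SX}$ is the fixed channel kernel $W_{Y|XS}$ in definition \ref{Defn:TestChannelsForPTP-STxRx}, the joint $p_{SXY} = p_{S}p_{X}W_{Y|XS}$ is uniquely determined. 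Hence any $q_{SXY}$ with $\alpha_{TR}(q_{SXY}) = \mathcal{C}_{TR}(\tau,\delta,\epsilon)$ must coincide with this same $p_{SXY}$. I do not anticipate any serious obstacle: the only point requiring care is invoking \emph{strict} concavity (valid since $\delta,\tau\in(0,\tfrac12)$ and $\epsilon\in(0,1)$, so all Jensen weights are nontrivial and the argument stays in the strict-concavity regime of $h_b$) and \emph{strict} monotonicity of $g$ on $[0,\tfrac12)$ to pin down equality in both inequalities simultaneously.
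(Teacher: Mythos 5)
Your proof is correct and follows essentially the same route as the paper's: rewrite $\alpha_{TR}$ as $H(Y|S)-h_b(\delta)$, parameterize by the conditional laws $p_{X|S}(1|s)$, apply Jensen with strict concavity of $h_b$ (the paper applies it to $h_b$ evaluated at $\chi_s*\delta$; you apply it to the composite $g(\alpha)=h_b(\alpha*\delta)$, which is the same inequality), then use monotonicity of $g$ on $[0,\tfrac12]$ together with the cost constraint to force $P(X=1)=\tau$, and finally observe that independence of $S$ and $X$ together with the fixed marginals and channel kernel determine $p_{SXY}$ uniquely.
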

\begin{proof}
Clearly, if $q_{SXY} \in \SetOfDistributions_{TR}(\tau,\delta,\epsilon)$ satisfies $q_{SX}=q_{S}q_{X}$ with $q_{X}(1)=\tau$, then $\alpha_{TR}(q_{SXY})=h_{b}(\tau * \delta)-h_{b}(\delta)$ and since $\mathcal{C}_{TR}(\tau,\delta,\epsilon) \leq h_{b}(\tau * \delta)-h_{b}(\delta)$,\footnote{This can be easily verified using standard information theoretic arguments.} we have $\mathcal{C}_{TR}(\tau,\delta,\epsilon) = h_{b}(\tau * \delta)-h_{b}(\delta)$. Let $p_{SXY} \in \SetOfDistributions_{TR}(\tau,\delta,\epsilon)$ be another PMF for which $\alpha_{TR}(p_{SXY})=h_{b}(\tau * \delta)-h_{b}(\delta)$. Let $\chi_{0}\define p_{X|S}(1|0)$ and $\chi_{1}\define p_{X|S}(1|1)$. $\alpha_{TR}(p_{SXY})=I(X;Y|S)=H(Y|S)-H(Y|X,S)=H(X\oplus S \oplus N|S)-h_{b}(\delta)$. We focus on the first term
\begin{eqnarray}
\lefteqn{H(X\oplus S \oplus N|S) = (1-\epsilon)H(X\oplus 0 \oplus N|S=0)+\epsilon H(X\oplus 1 \oplus N|S=1)}\nonumber\\
&=&(1-\epsilon)h_{b}(\chi_{0}(1-\delta)+(1-\chi_{0})\delta)+\epsilon h_{b}(\chi_{1} (1-\delta)+(1-\chi_{1})\delta)
 \nonumber\\
\label{Eqn:UniquenessOfPMFThatAchievesCapacityOfPTPWithStateFollowsFromConcavity}
&\leq& h_{b}((1-\epsilon)\chi_{0}(1-\delta)+(1-\epsilon)(1-\chi_{0})\delta+\epsilon\chi_{1} (1-\delta)+\epsilon(1-\chi_{1})\delta)\\
\label{Eqn:UniquenessOfPMFThatAchievesCapacityOfPTPWithStateFollowsFromRangeOfEta}
&=&h_{b}(p_{X}(1)(1-\delta)+(1-p_{X}(1))\delta)=h_{b}(\delta+p_{X}(1)(1-2\delta)) \leq h_{b}(\delta+\tau(1-2\delta))=h_{b}(\tau * \delta)
\end{eqnarray}
where (\ref{Eqn:UniquenessOfPMFThatAchievesCapacityOfPTPWithStateFollowsFromConcavity}) follows from concavity of binary entropy function $h_{b}(\cdot)$ and inequality in (\ref{Eqn:UniquenessOfPMFThatAchievesCapacityOfPTPWithStateFollowsFromRangeOfEta}) follows from $\delta \in (0,\frac{1}{2})$. We therefore have $\alpha_{TR}(p_{SXY})=h_{b}(\tau * \delta)-h_{b}(\delta)$ if and only if equality holds in (\ref{Eqn:UniquenessOfPMFThatAchievesCapacityOfPTPWithStateFollowsFromConcavity}), (\ref{Eqn:UniquenessOfPMFThatAchievesCapacityOfPTPWithStateFollowsFromRangeOfEta}). $h_{b}(\cdot)$ being strictly concave, equality holds in (\ref{Eqn:UniquenessOfPMFThatAchievesCapacityOfPTPWithStateFollowsFromConcavity}) if and only if $\epsilon \in \left\{ 0,1\right\}$ or $\chi_{0}=\chi_{1}$. The range of $\epsilon$ precludes the former and therefore $\chi_{0}=\chi_{1}$. This proves $p_{SX}=p_{S}p_{X}$ and $p_{X}(1)=\tau$. Given $p_{SXY} \in \SetOfDistributions_{TR}(\tau,\delta,\epsilon)$, these constrains completely determine $p_{SXY}$ and we have $p_{SXY}=q_{SXY}$.
\end{proof}
Following is the main result of this section.
\begin{lemma}
 \label{Lem:CharacterizationOfConditionThatEnsuresNoRateLoss}
$\mathcal{C}_{TR}(\tau,\delta,\epsilon)=\mathcal{C}_{T}(\tau,\delta,\epsilon)$ if and only if there exists a PMF $p_{USXY} \in \SetOfDistributions_{T}(\tau,\delta,\epsilon)$ such that
\begin{enumerate}
 \item the corresponding marginal achieves $\mathcal{C}_{TR}(\tau,\delta,\epsilon)$, i.e., $\alpha_{TR}(p_{SXY})=\mathcal{C}_{TR}(\tau,\delta,\epsilon)$,
\item $S-Y-U$ is a Markov chain.
\item $X-(U,S)-Y$ is a Markov chain.
\end{enumerate}
\end{lemma}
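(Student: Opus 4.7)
The approach is to distill everything to the single identity
\[
\alpha_{T}(p_{USXY}) \;=\; I(U;Y)-I(U;S) \;=\; I(U;Y|S)-I(U;S|Y),
\]
which follows from writing $I(U;Y,S)$ via the chain rule two different ways. Combined with Lemma \ref{Lem:AlternateCharacterizationForCapacityOfBinaryAdditivePTP-STxRx}, which identifies $\mathcal{C}_{TR}$ with $\beta_{TR}=\sup I(U;Y|S)$, this yields the pointwise sandwich
\[
\alpha_{T}(p) \;=\; \beta_{TR}(p) - I(U;S|Y) \;\leq\; \mathcal{C}_{TR} - I(U;S|Y) \;\leq\; \mathcal{C}_{TR}
\]
for every $p\in\SetOfDistributions_{T}(\tau,\delta,\epsilon)$. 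Hence $\mathcal{C}_{T}=\mathcal{C}_{TR}$ is equivalent to the existence of a $p$ for which both inequalities are simultaneously tight, and the three stated conditions should just be a re-expression of this joint tightness.

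For the ``only if'' direction, I would first invoke a Caratheodory-type cardinality bound on $|\AuxiliaryAlphabet|$ to restrict $\SetOfDistributions_{T}$ to a compact set, and conclude that the supremum defining $\mathcal{C}_{T}$ is attained at some $p^{*}$. Tightness of the right inequality above forces $I(U;S|Y)=0$, i.e., the Markov chain $S-Y-U$ of item~2. Tightness of the left inequality forces $\beta_{TR}(p^{*})=I(U;Y|S)=\mathcal{C}_{TR}$. To peel items~1 and~3 out of this, I would use the channel Markov chain $U-(X,S)-Y$ (condition (ii) of Definition \ref{Defn:TestChannelsForPTP-STx}) and the chain rule
\[
I(XU;Y|S) \;=\; I(X;Y|S)+I(U;Y|X,S) \;=\; I(U;Y|S)+I(X;Y|U,S),
\]
which with $I(U;Y|X,S)=0$ gives $I(U;Y|S) \leq I(X;Y|S)=\alpha_{TR}(p^{*}_{SXY}) \leq \mathcal{C}_{TR}$. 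Equality throughout forces $\alpha_{TR}(p^{*}_{SXY})=\mathcal{C}_{TR}$ (item~1) and $I(X;Y|U,S)=0$, i.e., $X-(U,S)-Y$ (item~3).

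For the ``if'' direction, given any $p\in\SetOfDistributions_{T}$ satisfying items 1--3, the same chain-rule identity with item~3 reads $I(U;Y|S)=I(X;Y|S)$, item~1 reads $I(X;Y|S)=\mathcal{C}_{TR}$, and item~2 reads $I(U;S|Y)=0$. Substituting into the key identity gives $\alpha_{T}(p)=\mathcal{C}_{TR}$, whence $\mathcal{C}_{T}\geq \mathcal{C}_{TR}$, and combining with the reverse inequality from Remark \ref{Rem:SideInformationAtBothEncoderAndDecoderResultsInLargerCapacity} yields the desired equality.

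I do not anticipate a serious technical obstacle: the identity $I(U;Y)-I(U;S)=I(U;Y|S)-I(U;S|Y)$ does the structural work, and the rest is routine bookkeeping of when each non-negative slack term is forced to zero. The only mildly delicate point is the attainment of the sup in $\mathcal{C}_{T}$, which is dispatched by the standard cardinality-bounding compactification of $\SetOfDistributions_{T}$.
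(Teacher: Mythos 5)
Your proposal is correct and follows essentially the same chain of inequalities as the paper: both derive items 1--3 as the equality conditions for the two non-negative slack terms $I(U;S|Y)$ and $I(X;Y|U,S)$ together with $\alpha_{TR}(p_{SXY})\leq\mathcal{C}_{TR}$, and your explicit identity $I(U;Y)-I(U;S)=I(U;Y|S)-I(U;S|Y)$ merely makes the structure of the paper's direct chaining more visible. You also rightly flag that attainment of the supremum defining $\mathcal{C}_{T}$ requires a compactness argument, which the paper invokes without comment in the lemma proof and supports only via Lemma \ref{Lem:CardinalityBoundOnCostConstrainedGelfandPinskerAuxiliaryRV}.
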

\begin{proof}
We first prove the reverse implication, i.e., the if statement. Note that $\mathcal{C}_{TR}(\tau,\delta,\epsilon)=\alpha_{TR}(p_{SXY})=I(X;Y|S)=H(Y|S)-H(Y|XS)\overset{(a)}{=}H(Y|S)-H(Y|XSU)\overset{(b)}{=}H(Y|S)-H(Y|US)=I(U;Y|S)=I(U;YS)-I(U;S)\overset{(c)}{=}I(U;Y)-I(U;S) \leq \mathcal{C}_{T}(\tau,\delta,\epsilon)$, where (a) follows from (ii) of definition \ref{Defn:TestChannelsForPTP-STx}, (b) follows from hypothesis 3) and (c) follows from hypothesis 2). We therefore have $\mathcal{C}_{TR}(\tau,\delta,\epsilon)\leq \mathcal{C}_{T}(\tau,\delta,\epsilon)$, and the reverse inequality follows from remark \ref{Rem:SideInformationAtBothEncoderAndDecoderResultsInLargerCapacity}.

Conversely, let $p_{USXY} \in \SetOfDistributions_{T}(\tau,\delta,\epsilon)$ achieve $\mathcal{C}_{T}(\tau,\delta,\epsilon)$, i.e., $\alpha_{T}(p_{USXY})=\mathcal{C}_{T}(\tau,\delta,\epsilon)$. We have $\mathcal{C}_{T}(\tau,\delta,\epsilon)=\alpha_{T}(p_{USXY}) = I(U;Y)-I(U;S) \overset{(b)}{\leq} I(U;YS)-I(U;S) = I(U;Y|S)=H(Y|S)-H(Y|US) \overset{(c)}{\leq} H(Y|S)-H(Y|USX) \overset{(a)}{=} H(Y|S)-H(Y|SX)=I(X;Y|S)=\alpha_{TR}(p_{SXY})\leq \mathcal{C}_{TR}(\tau,\delta,\epsilon)$, where (a) follows from Markov chain $U-(S,X)-Y$ ((ii) of definition \ref{Defn:TestChannelsForPTP-STx}). Equality of $\mathcal{C}_{TR}(\tau,\delta,\epsilon),\mathcal{C}_{T}(\tau,\delta,\epsilon)$ implies equality in (b), (c) and thus $I(U;S|Y)=0$ and $H(Y|US)=H(Y|USX)$ and moreover $\alpha_{TR}(p_{SXY})=\mathcal{C}_{TR}(\tau,\delta,\epsilon)$.
\end{proof}

For the particular binary additive PTP channel with state, we strengthen the condition for no rate loss in the following lemma.
\begin{lemma}
 \label{Lem:XShouldBeAFunctionOfUAndS}
If $p_{USXY} \in \SetOfDistributions_{T}(\tau,\delta,\epsilon)$ satisfies
(i) $S-Y-U$ is a Markov chain, and (ii) $X-(U,S)-Y$ is a Markov chain, 
then $H(X|U,S)=0$, or in other words, there exists a function $f : \AuxiliaryAlphabet \times \StateAlphabet \rightarrow \InputAlphabet$ such that $P(X=f(U,S))=1$.
\end{lemma}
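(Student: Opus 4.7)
The plan is to argue that the Markov hypothesis $X-(U,S)-Y$, combined with the channel structure built into membership in $\SetOfDistributions_{T}(\tau,\delta,\epsilon)$, forces $X$ to be a deterministic function of $(U,S)$ on the support of $p_{US}$. The $S-Y-U$ hypothesis will turn out to be unnecessary for this particular conclusion; it was used in the previous lemma and is carried along here only for context.

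First I would unpack the channel law. Membership of $p_{USXY}$ in $\SetOfDistributions_{T}(\tau,\delta,\epsilon)$ means $p_{Y|XSU}(y|x,s,u)=p_{Y|XS}(y|x,s)$, which, written out, is the Bernoulli law $p_{Y|XS}(x\oplus s|x,s)=1-\delta$ and $p_{Y|XS}(x\oplus s\oplus 1|x,s)=\delta$. In particular, for each fixed $s$, the two conditional pmfs $p_{Y|XS}(\cdot|0,s)$ and $p_{Y|XS}(\cdot|1,s)$ are \emph{distinct} distributions on $\{0,1\}$, since $\delta\in(0,\tfrac12)$ implies $1-\delta\neq \delta$.

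Next I would combine this with hypothesis 2. The Markov chain $X-(U,S)-Y$ asserts $p_{Y|XUS}(y|x,u,s)=p_{Y|US}(y|u,s)$ for every $(x,u,s)$ with $p_{XUS}(x,u,s)>0$. Using the channel law above, this identity becomes
\begin{equation}
p_{Y|XS}(y|x,s)\;=\;p_{Y|US}(y|u,s)\quad\text{for all }y, \nonumber
\end{equation}
valid for every $(x,u,s)$ in the support. Now fix any $(u,s)$ with $p_{US}(u,s)>0$. The right-hand side depends only on $(u,s)$, so for every $x$ with $p_{X|US}(x|u,s)>0$, the distribution $p_{Y|XS}(\cdot|x,s)$ takes the same value. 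By the distinctness observation in the previous paragraph, at most one value of $x\in\{0,1\}$ can satisfy this, so the conditional pmf $p_{X|US}(\cdot|u,s)$ is a point mass.

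Finally, a point mass conditional means $H(X|U=u,S=s)=0$ for every $(u,s)$ in the support of $p_{US}$, and averaging gives $H(X|U,S)=0$. Equivalently, defining $f(u,s)$ to be the unique $x$ for which $p_{X|US}(x|u,s)>0$ (with an arbitrary extension off the support) yields the required function with $P(X=f(U,S))=1$. The argument is short and I do not anticipate a substantive obstacle; the only point requiring care is the use of $\delta\neq \tfrac12$ to separate the two channel output distributions, without which the implication fails.
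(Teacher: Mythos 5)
Your proof is correct and takes a genuinely different route from the paper's. The paper argues by contradiction using entropy: writing $N\define X\oplus S\oplus Y$, it observes that $N$ is $\mathrm{Bern}(\delta)$ noise independent of $(U,S,X)$, so if $H(X|U,S)>0$ then $H(X\oplus S|U,S)>0$, and then strict concavity of $h_b(\cdot)$ forces $H(Y|U,S)=H((X\oplus S)\oplus N|U,S)>h_b(\delta)=H(Y|U,S,X)$, which violates $X-(U,S)-Y$. Your argument instead works directly at the level of conditional pmfs: hypothesis~2 together with the channel law forces $p_{Y|XS}(\cdot|x,s)=p_{Y|US}(\cdot|u,s)$ for every $x$ in the conditional support, and since $p_{Y|XS}(\cdot|0,s)\neq p_{Y|XS}(\cdot|1,s)$ when $\delta\neq\tfrac12$, the conditional support of $X$ given $(u,s)$ must be a singleton. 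Both proofs hinge on exactly the same fact (the channel's genuine dependence on $X$, i.e.\ $\delta\neq\tfrac12$); yours avoids entropy machinery and is shorter and more elementary, while the paper's phrasing isolates the reusable observation that independent nondegenerate noise strictly increases conditional entropy. You are also right that hypothesis~1 ($S-Y-U$) is not used in either argument; the paper likewise never invokes it in its proof of this lemma.
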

\begin{proof}
We prove this by contradiction. In particular, we prove $H(X|U,S) > 0$ violates Markov chain $X-(U,S)-Y$.
If $H(X|U,S) > 0$, then $H(X \oplus S |U,S) >0$. Indeed, $0 < H(X|U,S)\leq H(X,S|U,S)=H(X\oplus S, S |U,S) = H(S|U,S)+H(X\oplus S | U,S)=H(X \oplus S | U, S)$. Since $(U,S,X)$ is independent of $X\oplus S \oplus Y$ and in particular, $(U,S,S\oplus X)$ is independent of $X\oplus S \oplus Y$, we have $H((X\oplus S)\oplus(X\oplus S \oplus Y)|U,S)> H(X\oplus S \oplus Y|U,S)=h_{b}(\delta)=H(Y|U,S,X)$, where the first inequality follows from concavity of binary entropy function. But $(X\oplus S)\oplus(X\oplus S \oplus Y)=Y$ and we have therefore proved $H(Y|U,S)>H(Y|U,S,X)$ contradicting Markov chain $X-(U,S)-Y$.
\end{proof}
We summarize the conditions for no rate loss below.
\begin{thm}
 \label{Thm:CharacterizationOfConditionThatEnsuresNoRateLossForBinaryAdditive}
$\mathcal{C}_{TR}(\tau,\delta,\epsilon)=\mathcal{C}_{T}(\tau,\delta,\epsilon)$ if and only if there exists a PMF $p_{USXY} \in \SetOfDistributions_{T}(\tau,\delta,\epsilon)$ such that
\begin{enumerate}
 \item the corresponding marginal achieves $\mathcal{C}_{TR}(\tau,\delta,\epsilon)$, i.e., $\alpha_{TR}(p_{SXY})=\mathcal{C}_{TR}(\tau,\delta,\epsilon)$, and in particular $S$ and $X$ are independent,
\item $S-Y-U$ is a Markov chain.
\item $X-(U,S)-Y$ is a Markov chain,
\item $H(X|U,S)=0$, or in other words, there exists a function $f : \AuxiliaryAlphabet \times \StateAlphabet \rightarrow \InputAlphabet$ such that $P(X=f(U,S))=1$.
\end{enumerate}
\end{thm}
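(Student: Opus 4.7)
The plan is to assemble this theorem directly from the three lemmas already proved in this appendix, namely Lemma \ref{Lem:CharacterizationOfConditionThatEnsuresNoRateLoss}, Lemma \ref{Eqn:UniquenessOfPMFThatAchievesCapacityOfPTP-STxRx}, and Lemma \ref{Lem:XShouldBeAFunctionOfUAndS}, since the statement is essentially a consolidation of their conclusions. The main work has been done; what remains is simply threading the implications together and verifying that the stronger condition 1 (independence of $S$ and $X$) and the additional condition 4 (the function property) can be secured in a single witness pmf.

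For the forward implication, I would begin by assuming $\mathcal{C}_{TR}(\tau,\delta,\epsilon)=\mathcal{C}_{T}(\tau,\delta,\epsilon)$. Lemma \ref{Lem:CharacterizationOfConditionThatEnsuresNoRateLoss} then immediately produces a pmf $p_{USXY} \in \SetOfDistributions_{T}(\tau,\delta,\epsilon)$ whose marginal $p_{SXY}$ satisfies $\alpha_{TR}(p_{SXY})=\mathcal{C}_{TR}(\tau,\delta,\epsilon)$ and for which both $S-Y-U$ and $X-(U,S)-Y$ are Markov chains, giving items 2 and 3. To upgrade item 1 to independence of $S$ and $X$, I would invoke the second assertion of Lemma \ref{Eqn:UniquenessOfPMFThatAchievesCapacityOfPTP-STxRx}, which says that any $p_{SXY}\in\SetOfDistributions_{TR}(\tau,\delta,\epsilon)$ achieving $\mathcal{C}_{TR}$ must satisfy $p_{SX}=p_{S}p_{X}$. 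Finally, item 4 follows directly by applying Lemma \ref{Lem:XShouldBeAFunctionOfUAndS} to the same $p_{USXY}$, since that lemma takes exactly items 2 and 3 as hypothesis and outputs $H(X|U,S)=0$.

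For the reverse implication, the hypotheses 1, 2, 3 are already sufficient: condition 1 implies $\alpha_{TR}(p_{SXY})=\mathcal{C}_{TR}(\tau,\delta,\epsilon)$ (the independence $S \perp X$ is an additional structural fact that only makes the hypothesis stronger, not weaker), and conditions 2 and 3 are precisely the Markov chain hypotheses of Lemma \ref{Lem:CharacterizationOfConditionThatEnsuresNoRateLoss}, which then yields $\mathcal{C}_{TR}(\tau,\delta,\epsilon)=\mathcal{C}_{T}(\tau,\delta,\epsilon)$. Condition 4 plays no role in this direction and is simply a bonus structural consequence carried along.

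There is essentially no hard step here; the real technical work lies in the supporting lemmas, where (i) the contradiction argument in Lemma \ref{Lem:XShouldBeAFunctionOfUAndS} uses independence of $(U,S,X)$ from the noise $X\oplus S \oplus Y$ together with strict concavity of $h_b(\cdot)$, and (ii) the forward direction of Lemma \ref{Lem:CharacterizationOfConditionThatEnsuresNoRateLoss} chases equality through the sequence $I(U;Y)-I(U;S)\le I(U;Y|S)=H(Y|S)-H(Y|US)\le H(Y|S)-H(Y|USX)=I(X;Y|S)$. The only mild subtlety in writing the consolidated proof is to be explicit that the \emph{same} pmf $p_{USXY}$ simultaneously witnesses all four conditions, which is immediate because each lemma is applied to this single pmf without modification.
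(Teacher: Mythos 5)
Your proof is correct and takes precisely the route the paper intends: the paper presents this theorem as a summary immediately after Lemmas \ref{Eqn:UniquenessOfPMFThatAchievesCapacityOfPTP-STxRx}, \ref{Lem:CharacterizationOfConditionThatEnsuresNoRateLoss}, and \ref{Lem:XShouldBeAFunctionOfUAndS}, with the forward direction upgrading the witness from Lemma \ref{Lem:CharacterizationOfConditionThatEnsuresNoRateLoss} via the other two lemmas and the reverse direction using only conditions 1--3. Your observation that one must apply all the lemmas to the same witness pmf, and that condition 4 is a consequence rather than a hypothesis in the reverse direction, is exactly the right bookkeeping.
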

\section{The binary additive dirty point-to-point channel suffers a rate loss}
\label{AppSec:TheBinaryAdditiveDirtyPointToPointChannelSuffersARateLoss}
This section is dedicated to proving proposition \ref{Prop:PropositionRegardindRateLoss}. We begin with an upper bound on cardinality of auxiliary set involved in characterization of $\mathcal{C}_{T}(\tau,\delta,\epsilon)$.
\begin{lemma}
 \label{Lem:CardinalityBoundOnCostConstrainedGelfandPinskerAuxiliaryRV}
Consider a PTP channel with state information available at transmitter. Let $\mathcal{S}, \mathcal{X}$ and $\mathcal{Y}$ denote state, input and output alphabets respectively. Let $W_{S},W_{Y|XS}$ denote PMF of state, channel transition probabilities respectively. The input is constrained with respect to a cost function $\kappa : \mathcal{X} \times \mathcal{S} \rightarrow [0,\infty)$. Let $\mathbb{D}_{T}(\tau)$ denote the collection of all probability mass functions $p_{UXSY}$ defined on $\mathcal{U} \times \mathcal{X} \times \mathcal{S} \times \mathcal{Y}$, where $\mathcal{U}$ is an arbitrary set, such that (i) $p_{S}=W_{S}$, (ii) $p_{Y|XSU}=p_{Y|XS}=W_{Y|XS}$ and (iii) $\Expectation \left\{ \kappa(X,S) \right\} \leq \tau$. Moreover, let \[\overline{\mathbb{D}_{T}}(\tau) = \left\{ p_{UXSY} \in \mathbb{D}_{T}(\tau): |\mathcal{U}| \leq \min \left\{ \substack{|\mathcal{X}|\cdot |\mathcal{S}|, \\|\mathcal{X}|+|\mathcal{S}|+|\mathcal{Y}|-2 }\right\} \right\}.\] For $p_{UXSY} \in \mathbb{D}_{T}(\tau)$, let $\alpha(p_{UXSY})=I(U;Y)-I(U;S)$. Let 
\begin{eqnarray} 
 \alpha_{T}(\tau)=\sup_{p_{UXSY} \in \mathbb{D}_{T}(\tau)}\alpha(p_{UXSY}) ,~~~~ \overline{\alpha_{T}}(\tau)=\sup_{p_{UXSY} \in \overline{\mathbb{D}_{T}}(\tau)}\alpha(p_{UXSY}). \nonumber
\end{eqnarray}
Then $\alpha_{T}(\tau)=\overline{\alpha_{T}}(\tau)$.
\end{lemma}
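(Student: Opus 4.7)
Both cardinality bounds follow from the Fenchel-Eggleston strengthening of Carath\'eodory's theorem applied to the auxiliary random variable $U$, with two different choices of invariants. Given $p_{UXSY} \in \mathbb{D}_{T}(\tau)$, identify each $u \in \mathcal{U}$ with the conditional law $\pi_u \define p_{XS|U}(\cdot|u) \in \mathcal{P}(\mathcal{X}\times\mathcal{S})$; the Markov relation $p_{Y|XSU}=W_{Y|XS}$ forces $p_{XSY|U}(\cdot|u)$ to be entirely determined by $\pi_u$. Decompose the objective as
\[ \alpha(p_{UXSY}) = [H(Y)-H(S)] - \Expectation_{p_U}[g(\pi_U)], \]
where $g(\pi) \define H(Y|U=u) - H(S|U=u)$ evaluated at $\pi_u=\pi$ is continuous in $\pi$, and the bracketed term depends only on the marginal $p_{XS}=\Expectation_{p_U}[\pi_U]$ through the fixed channel.

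For the bound $|\mathcal{U}|\le|\mathcal{X}|\cdot|\mathcal{S}|$, apply Fenchel-Eggleston to the connected image $A_1 \define \{(\pi, g(\pi)) : \pi \in \mathcal{P}(\mathcal{X}\times\mathcal{S})\}$, which is the continuous image of a connected simplex and lies in an affine subspace of dimension $|\mathcal{X}|\cdot|\mathcal{S}|$ (the simplex of $\pi$ contributes $|\mathcal{X}|\cdot|\mathcal{S}|-1$ dimensions, augmented by the scalar $g$). The point $(p_{XS}, \Expectation_{p_U}[g(\pi_U)])$ lies in the convex hull of $A_1$, so Fenchel-Eggleston realizes it as a convex combination of at most $|\mathcal{X}|\cdot|\mathcal{S}|$ points of $A_1$; relabelling the indices of this combination produces $p'_{U'XSY} \in \overline{\mathbb{D}_{T}}(\tau)$ with $|\mathcal{U}'|\leq|\mathcal{X}|\cdot|\mathcal{S}|$, unchanged $p_{XS}$ (hence unchanged $p_S=W_S$ and $\Expectation\{\kappa(X,S)\}$) and unchanged $\alpha$.

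For the bound $|\mathcal{U}|\le|\mathcal{X}|+|\mathcal{S}|+|\mathcal{Y}|-2$, repeat the argument with the coarser image
\[ A_2 \define \{(\pi_X, \pi_Y, \pi_S, \Expectation_{\pi}[\kappa], g(\pi)) : \pi \in \mathcal{P}(\mathcal{X}\times\mathcal{S})\}, \]
where $\pi_X, \pi_S$ are the marginals of $\pi$ and $\pi_Y$ is the output marginal induced through $W_{Y|XS}$. Preserving these averages fixes feasibility (through $\pi_S$ and the cost scalar) as well as the value of $\alpha$ (through $H(Y)$, $H(S)$, and $g$). A careful affine-dimension count of $A_2$, in which the externally imposed equality $p_S=W_S$ acts as a hard affine slice reducing the effective working dimension by one beyond the naive simplex-plus-scalar tally $(|\mathcal{X}|-1)+(|\mathcal{Y}|-1)+(|\mathcal{S}|-1)+1+1$, then yields $|\mathcal{U}'|\leq|\mathcal{X}|+|\mathcal{S}|+|\mathcal{Y}|-2$.

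\textbf{Expected difficulty.} The routine parts are the continuity of $g$, the connectedness of $A_1$ and $A_2$ as continuous images of the convex simplex, and the verification that the reduced test channel indeed belongs to $\overline{\mathbb{D}_{T}}(\tau)$. The main delicate point is the refined dimension count underpinning the second bound: one must pin down precisely which degree of freedom in the ambient affine space is eliminated by the hard constraint $p_S=W_S$ so that the naive tally of $|\mathcal{X}|+|\mathcal{S}|+|\mathcal{Y}|-1$ descends to the claimed $|\mathcal{X}|+|\mathcal{S}|+|\mathcal{Y}|-2$, a step most prone to off-by-one errors.
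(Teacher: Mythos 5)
Your proof of the first bound, $|\mathcal{U}| \leq |\mathcal{X}|\cdot|\mathcal{S}|$, is correct and follows the paper's route: preserve the full conditional $p_{XS|U}(\cdot|u)$ (hence $p_{XS}$, hence $p_S$, $p_Y$, $H(Y)$, $H(S)$, and the cost, all automatically) plus the single scalar $H(S|U=u)-H(Y|U=u)$, giving $(|\mathcal{X}||\mathcal{S}|-1)+1$ invariants.

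The second bound is where your argument breaks. The ``hard affine slice'' step is not a valid use of Fenchel--Eggleston--Carath\'eodory. The constraint $p_S = W_S$ is a constraint on the \emph{averaged} marginal $\Expectation_{p_U}[p_{S|U}(\cdot|U)]$, not on each conditional $p_{S|U}(\cdot|u)$ individually. Consequently the image set $A_2 = \{(\pi_X,\pi_Y,\pi_S,\Expectation_{\pi}[\kappa],g(\pi)):\pi\in\mathcal{P}(\mathcal{X}\times\mathcal{S})\}$ is \emph{not} confined to any lower-dimensional affine subspace by this constraint; only the barycenter you wish to represent is. A target point's membership in an affine hyperplane tells you nothing about how many extreme points of $A_2$ are needed to represent it. (Simple counterexample: $A$ the boundary of $[0,1]^2$ and target $(1/2,1/2)$ lying on $x=1/2$ still requires two points of $A$.) So your naive tally of $|\mathcal{X}|+|\mathcal{S}|+|\mathcal{Y}|-1$ functions does not descend to $|\mathcal{X}|+|\mathcal{S}|+|\mathcal{Y}|-2$ by this reasoning.

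The paper arrives at $|\mathcal{X}|+|\mathcal{S}|+|\mathcal{Y}|-2$ differently: it never introduces the cost as a separate invariant. Its function family consists of $(|\mathcal{S}|-1)$ coordinates of $\pi_S$, $(|\mathcal{Y}|-1)$ coordinates of the induced $\pi_Y$, $(|\mathcal{X}|-1)$ coordinates of $\pi_X$, and the one nonlinear scalar $H(S)-H(Y)$ --- exactly $|\mathcal{X}|+|\mathcal{S}|+|\mathcal{Y}|-2$ of them. Preserving $p_X$ and $p_S$ takes care of the cost \emph{provided} $\kappa$ factors through the marginals of $X$ and $S$ (as in the paper's application, where $\kappa$ depends only on $x$). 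If you insist on a genuinely joint cost and add it as a fifth invariant, you are stuck with $|\mathcal{X}|+|\mathcal{S}|+|\mathcal{Y}|-1$, and no amount of slicing rescues the extra unit; you would instead need to exhibit a linear dependence among $(\pi_X,\pi_S,\pi_Y,\Expectation_\pi[\kappa])$ viewed as functions on $\mathcal{P}(\mathcal{X}\times\mathcal{S})$, which does not hold for general $\kappa$ and alphabets.
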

\begin{proof}
 The proof is based on Fenchel-Eggelston-Carath\'eodory \cite{Egg-Convexity1958}, \cite[Appendix C]{201201NIT_ElgKim} theorem which is stated here for ease of reference.
\begin{lemma}
\label{Lem:FenchelEgglestonCaratheodory}
 let $\mathcal{A}$ be a finite set and $\mathcal{Q}$ be an arbitrary set. Let $\mathcal{P}$ be a connected compact subset of PMF's on $\mathcal{A}$ and $p_{A|Q}(\cdot | q) \in \mathcal{P}$ for each $q \in \mathcal{Q}$. For $j=1,2,\cdots,d$ let $g_{j}: \mathcal{P} \rightarrow \reals$ be continuous functions. Then for every $Q \sim F_{Q}$ defined on $\mathcal{Q}$, there exist a random variable $\overline{Q} \sim p_{\overline{Q}}$ with $|\overline{\mathcal{Q}}| \leq d$ and a collection of PMF's $p_{A|\overline{Q}}(\cdot|\overline{q}) \in \mathcal{P}$, one for each $\overline{q} \in \overline{\mathcal{Q}}$, such that
\begin{eqnarray}
 \int_{\mathcal{Q}}g_{j}(p_{A|Q}(a|q))dF_{Q}(q) = \sum_{\overline{q} \in \overline{\mathcal{Q}}}g_{j}(p_{A|\overline{Q}}(a|\overline{q}))p_{\overline{Q}}(\overline{q}).\nonumber
\end{eqnarray}
\end{lemma}
The proof involves identifying $g_{j}:j=1,2\cdots,d$ such that rate achievable and cost expended are preserved. We first prove the bound $|\mathcal{U}| \leq |\mathcal{X}|\cdot|\mathcal{S}|$.

Set $\mathcal{Q}=\mathcal{U}$ and $\mathcal{A}=\mathcal{X} \times \mathcal{S}$ and $\mathcal{P}$ denote the connected compact subset of PMF's on $\mathcal{X} \times \mathcal{S}$. Without loss of generality, let $\mathcal{X} = \left\{1,2,\cdots, |\mathcal{X}|  \right\}$ and $\mathcal{S} = \left\{1,2,\cdots, |\mathcal{S}|  \right\}$. For $i=1,2, \cdots, |\mathcal{X}|$ and $k=1,2,\cdots,|\mathcal{S}|-1$, let $g_{i,k}(\pi_{X,S})=\pi_{X,S}(i,k)$ and $g_{l,|\mathcal{S}|}(\pi_{X,S})=\pi_{X,S}(l,|\mathcal{S}|)$ for $l=1,2,\cdots, |\mathcal{X}|-1$. Let $g_{|\mathcal{X}|\cdot |\mathcal{S}|}(\pi_{X,S})=H(S)-H(Y)$. It can be verified that
\begin{flalign}
 g_{|\mathcal{X}|\cdot |\mathcal{S}|}(\pi_{X,S}) =& -\sum_{s \in \mathcal{S}} (\sum_{x \in \mathcal{X}}\pi_{X,S}(x,s))\log_{2}(\sum_{x \in \mathcal{X}}\pi_{X,S}(x,s))+ \sum_{y \in \mathcal{Y}} \theta(y)\log_{2}(\theta(y)),\mbox{ where }\nonumber\\
\label{Eqn:ThetaOfy}
\!\!\!\!\!\!\!\theta(y)=&\sum_{(x,s) \in \mathcal{X}\times \mathcal{S}}\pi_{X,S}(x,s)W_{Y|XS}(y|x,s)
\end{flalign}
where, is continuous. An application of lemma \ref{Lem:FenchelEgglestonCaratheodory} using the above set of functions, the upper bound $|\mathcal{X}|\cdot|\mathcal{S}|$ on $|\mathcal{U}|$ can be verified.

We now outline proof of upper bound $|\mathcal{X}|+|\mathcal{S}|+|\mathcal{Y}|-2$ on $|\mathcal{U}|$. Without loss of generality, we assume $\mathcal{X}=\left\{ 1,\cdots, |\mathcal{X}| \right\}$, $\mathcal{S}=\left\{ 1,\cdots, |\mathcal{S}| \right\}$ and $\mathcal{Y}=\left\{ 1,\cdots, |\mathcal{Y}| \right\}$. As earlier, set $\mathcal{Q}=\mathcal{U}$ and $\mathcal{A}=\mathcal{X} \times \mathcal{S}$ and $\mathcal{P}$ denote the connected compact subset of PMF's on $\mathcal{X} \times \mathcal{S}$. For $j=1,\cdots,|\mathcal{S}|-1$, let $g_{j}(\pi_{X,S})=\sum_{x \in \mathcal{X}} \pi_{X,S}(x,j)$. For $j=|\mathcal{S}|, \cdots, |\mathcal{S}|+|\mathcal{Y}|-2$, let $g_{j}(\pi_{X,S})=\sum_{(x,s) \in \mathcal{X} \times \mathcal{S}}\pi_{X,S}(x,s)W_{Y|X,S}(j-|\mathcal{S}|+1|x,s)$. For $j=|\mathcal{S}|+|\mathcal{Y}|-1, \cdots, |\mathcal{S}|+|\mathcal{Y}|+|\mathcal{X}|-3$, let $g_{j}(\pi_{X,S})=\sum_{s \in \mathcal{S}}\pi_{X,S}(j-|\mathcal{S}|-|\mathcal{Y}|+2,s)$. Let $g_{t}(\pi_{X,S})=H(S)-H(Y)$, i.e.,
\begin{eqnarray}
g_{t}(\pi_{X,S}) =  -\sum_{s \in \mathcal{S}} (\sum_{x \in \mathcal{X}}\pi_{X,S}(x,s))\log_{2}(\sum_{x \in \mathcal{X}}\pi_{X,S}(x,s))+ \sum_{y \in \mathcal{Y}} \theta(y)\log_{2}(\theta(y)),\nonumber
\end{eqnarray}
where $t=|\mathcal{S}|+|\mathcal{Y}|+|\mathcal{X}|-2$, and $\theta(y)$ as is in (\ref{Eqn:ThetaOfy}). The rest of the proof follows by simple verification.
\end{proof}

\begin{prop}
 \label{Prop:PropositionRegardindRateLoss}
There exists no probability mass function $p_{UXSY}$ defined on $\AuxiliaryAlphabet \times \StateAlphabet \times \InputAlphabet \times \OutputAlphabet$ where $\AuxiliaryAlphabet = \left\{  0,1,2,3 \right\}, \InputAlphabet = \StateAlphabet = \OutputAlphabet = \left\{ 0,1 \right\}$, such that
\begin{enumerate}
 \item $X$ and $S$ are independent with $P(S=1)=\epsilon$, $P(X=1)=\tau$, where $\epsilon \in (0,1)$, $\tau \in (0,\frac{1}{2})$,
\item $p_{Y|X,S,U}(x \oplus s|x,s,u)=p_{Y|X,S}(x\oplus s |x,s)=1-\delta$ for every $(u,x,s,y) \in \AuxiliaryAlphabet \times \StateAlphabet \times \InputAlphabet \times \OutputAlphabet$, where $\delta \in (0,\frac{1}{2})$,
\item $U-Y-S$ and $X-(U,S)-Y$ are Markov chains, and
\item $p_{X|US}(x|u,s) \in \left\{ 0,1 \right\}$ for each $(u,s,x) \in \AuxiliaryAlphabet \times \StateAlphabet \times \InputAlphabet$.
\end{enumerate}
\end{prop}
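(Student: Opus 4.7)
I argue by contradiction, assuming such $p_{UXSY}$ exists. Condition~4 gives $X = f(U,S)$ for a deterministic $f : \AuxiliaryAlphabet \times \StateAlphabet \to \InputAlphabet$; for each $u$ in the support of $U$ the restriction $s\mapsto f(u,s)$ is one of the four Boolean functions constant-$0$, constant-$1$, identity, or negation.

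\emph{Step 1 (reduce to $X = h(U)$).} I would first rule out, for every $u$ in the support of $U$, that $f(u,\cdot)$ is the identity or the negation. If it were, then conditional on $U=u$ one has $Y = f(u,S)\oplus S \oplus N \in \{N,\, 1\oplus N\}$, which is independent of $S$ given $U=u$; and since $\delta\in(0,\tfrac12)$ both outcomes of $Y$ carry positive conditional probability. The Markov chain $U-Y-S$ of condition~3 then forces $P(S\mid Y=y)=P(S\mid U=u,Y=y)=P(S\mid U=u)$ for both $y$, and hence $S\perp Y$. But using $X\perp S$ together with the independence of $N$ from $(X,S)$, one computes
\[
P(S=1,Y=1) - P(S=1)\,P(Y=1) \;=\; \epsilon(1-\epsilon)(1-2\tau)(1-2\delta),
\]
which is nonzero under the hypotheses $\epsilon\in(0,1)$, $\tau,\delta\in(0,\tfrac12)$, contradicting $S\perp Y$. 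Therefore $X = h(U)$ for some $h : \AuxiliaryAlphabet \to \InputAlphabet$; writing $U_i := h^{-1}\{i\}\cap \mathrm{supp}(U)$, both $U_0,U_1$ are nonempty (else $\tau\in\{0,1\}$).

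\emph{Step 2 (log-odds contradiction).} For $u$ in the support set $\beta_u := P(S=1\mid U=u)$ and write $L(p):=p/(1-p)$ for the log-odds. Since $Y = h(u)\oplus S\oplus N$ given $U=u$ with $N$ independent of $(U,S)$, a direct Bayes computation yields
\[
L\!\left(P(S=1\mid U=u,\,Y=y)\right) \;=\; L(\beta_u)\cdot \left(\frac{\delta}{1-\delta}\right)^{1-2(y\oplus h(u))}.
\]
(If some $\beta_u\in\{0,1\}$, then one already has $\epsilon\in\{0,1\}$, contradiction; so all log-odds are finite.) The Markov chain $U-Y-S$ makes the left side depend only on $y$; call this common value $L(p_y)$. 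Computing the ratio $L(p_1)/L(p_0)$ via any $u\in U_0$ yields $\bigl((1-\delta)/\delta\bigr)^{2}$, while doing so via any $u'\in U_1$ yields $\bigl(\delta/(1-\delta)\bigr)^{2}$. Equating them forces $\delta = 1-\delta$, i.e.\ $\delta=\tfrac12$, contradicting $\delta\in(0,\tfrac12)$.

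\emph{Main obstacle.} The conceptual crux is Step~1: one must use the Markov chain $U-Y-S$ to upgrade the conditional independence of $S$ and $Y$ at a single atom $\{U=u\}$ to the global $S\perp Y$, and then dismiss the latter via the explicit cross-moment formula above. Once past this step the log-odds identity in Step~2 makes the final contradiction essentially a one-line comparison, and the hypothesis $|\AuxiliaryAlphabet|=4$ is not actually exploited (the argument works for any finite $\AuxiliaryAlphabet$, with the cardinality bound only needed to invoke this proposition from the preceding reduction).
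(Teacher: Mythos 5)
Your proof is correct, and it takes a genuinely different --- and considerably cleaner --- route than the paper's. The paper parametrizes $p_{U|Y}(i|y)$ by $(\beta_i,\gamma_i)$, encodes the deterministic map $X=f(U,S)$ by eight binary indicators $z_0,\dots,z_7$ (Tables~\ref{Table:p_USY}--\ref{Table:P_SXY}), and then exhaustively enumerates assignments grouped by the counts $m=|\{i\le 3:z_i=1\}|$ and $l=|\{i\ge 4:z_i=1\}|$, extracting a contradiction in each case from the two invariants $\psi_1\neq\psi_2$ and $\psi_1+\psi_2>1$. You replace this enumeration by two structural observations. First, a cross-moment computation $P(S=1,Y=1)-P(S=1)P(Y=1)=\epsilon(1-\epsilon)(1-2\tau)(1-2\delta)\neq 0$, combined with the Markov chain $U-Y-S$, rules out any $u$ in the support for which $f(u,\cdot)$ is $s$-dependent (identity or negation), since such a $u$ would make $Y\perp S$ conditionally on $\{U=u\}$ and hence --- propagating through $U-Y-S$ --- globally, contradicting the nonzero cross moment. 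Second, once $X=h(U)$, the odds ratio $L(P(S=1|Y=1))/L(P(S=1|Y=0))$ is forced by $U-Y-S$ to be a single $u$-independent constant, yet a Bayes computation gives it the value $((1-\delta)/\delta)^2$ for $u$ with $h(u)=0$ and $(\delta/(1-\delta))^2$ for $u$ with $h(u)=1$; both sets are nonempty since $\tau\in(0,\tfrac12)$, so $\delta=\tfrac12$, contradiction. (The degenerate $\beta_u\in\{0,1\}$ case is correctly dispatched via $U-Y-S$ forcing $\epsilon\in\{0,1\}$.) This odds-ratio invariant is essentially the structural content hidden behind the paper's $\psi_1,\psi_2$ manipulations, but made explicit and case-free. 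What your route buys: no casework, uniformity in $|\mathcal U|$ (so the hypothesis $|\AuxiliaryAlphabet|=4$ is indeed inert here, though the cardinality bound of Lemma~\ref{Lem:CardinalityBoundOnCostConstrainedGelfandPinskerAuxiliaryRV} is still used earlier in the paper to argue the supremum defining $\alpha_T$ is attained), and a transparent accounting of exactly which hypotheses bite ($\epsilon\notin\{0,1\}$, $\tau\neq\tfrac12$, $\delta\neq\tfrac12$). What the paper's route buys, arguably, is that it stays entirely at the level of solving linear systems of pmf entries; but it is substantially longer, and the exhaustiveness of its case list for the $2^8$ assignments (modulo symmetry) requires the reader to verify.
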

\begin{proof}
 The proof is by contradiction. If there exists such a PMF $p_{USXY}$ then conditions 1) and 2) completely specify it's marginal on $\StateAlphabet \times \InputAlphabet \times \OutputAlphabet$ and it maybe verified that $p_{SY}(0,0)=(1-\epsilon)(1-\theta), p_{SY}(0,1)=(1-\epsilon)\theta, p_{SY}(1,0)=\epsilon\theta,p_{SY}(1,1)=\epsilon(1-\theta)$, where $\theta \define \delta(1-\tau)+(1-\delta)\tau$ takes a value in $(0,1)$. Since $\epsilon \in (0,1)$, $p_{SY}(s,y) \in (0,1)$ for each $(s,y) \in \StateAlphabet \times \OutputAlphabet$.
If we let $\beta_{i}\define p_{U|Y}(i|0) :i=0,1,2,3$ and $\gamma_{j}\define p_{U|Y}(j|1):j=0,1,2,3$, then Markov chain $U-Y-S$ implies $p_{USY}$ is as in table \ref{Table:p_USY}.
\begin{table} \begin{center}
\begin{tabular}{|c|c|c|c|} \hline
USY & $p_{USY}$&USY & $p_{USY}$\\
\hline
000 & $(1-\epsilon)(1-\theta)\beta_{0}$&200 & $(1-\epsilon)(1-\theta)\beta_{2}$ \\
\hline
001 & $(1-\epsilon)\theta\gamma_{0}$&201 & $(1-\epsilon)\theta \gamma_{2}$ \\
\hline
010 & $\epsilon\theta\beta_{0}$&210 & $\epsilon\theta\beta_{2}$ \\
\hline
011 & $\epsilon(1-\theta) \gamma_{0}$ &211 & $\epsilon(1-\theta)\gamma_{2}$ \\
\hline
100 & $(1-\epsilon)(1-\theta)\beta_{1}$ &300 & $(1-\epsilon)(1-\theta)\beta_{3}$ \\
\hline
101 & $(1-\epsilon)\theta \gamma_{1}$ &301 & $(1-\epsilon)\theta \gamma_{3}$ \\
\hline
110 & $\epsilon\theta\beta_{1}$ &310 & $\epsilon\theta\beta_{3}$ \\
\hline
111 & $\epsilon(1-\theta) \gamma_{1}$ &311 & $\epsilon(1-\theta)\gamma_{3}$ \\
\hline
\end{tabular} \end{center}\caption{$p_{USY}$} \label{Table:p_USY} 
\end{table}  
Since $X$ is a function of $(U,S)$\footnote{With probability $1$}, there exist $z_{i} \in \left\{ 0,1 \right\}:i=0,1,\cdots,7$ such that entries of table \ref{Table:p_USX} hold true.
\begin{table} \begin{center}
\begin{tabular}{|c|c|} \hline
$p_{USX}(0,0,0) =p_{US}(0,0)z_{0}$&$p_{USX}(0,1,0) =p_{US}(0,1)z_{4}$\\
\hline
$p_{USX}(1,0,0) =p_{US}(1,0)z_{1}$&$p_{USX}(1,1,0) =p_{US}(1,1)z_{5}$\\
\hline
$p_{USX}(2,0,0) =p_{US}(2,0)z_{2}$&$p_{USX}(2,1,0) =p_{US}(2,1)z_{6}$\\
\hline
$p_{USX}(3,0,0) =p_{US}(3,0)z_{3}$&$p_{USX}(3,1,0) =p_{US}(3,1)z_{7}$\\
\hline
\end{tabular} \end{center}\caption{$p_{USX}$} \label{Table:p_USX} 
\end{table}  
Moreover, condition 4) and Markov chain $X-(U,S)-Y$ implies $p_{USXY}$ is completely determined in terms of entries of table \ref{Table:p_USY} and $z_{i}:i=0,1,\cdots,7$. For example $p_{USXY}(3,0,1,1)=p_{USY}(3,0,1)(1-z_{3})$. This enables us to compute the marginal $p_{SXY}$ in terms of entries of table \ref{Table:p_USY} and $z_{i}:i=0,1,\cdots,7$. This marginal must satisfy conditions 1) and 2) which implies that the last two columns of table \ref{Table:P_SXY} are equal.
\begin{eqnarray}
\label{Eqn:P_SYXOf000}
p_{SYX}(0,0,0)&=&(1-\epsilon)(1-\theta)\left[\beta_0z_0+\beta_1 z_1+\beta_2  z_2+\beta_{3}z_{3} \right] =(1-\tau)(1-\epsilon)(1-\delta)\\
\label{Eqn:P_SYXOf001}
p_{SYX}(0,0,1)&=& (1-\epsilon)(1-\theta)\left[1-\beta_0z_0-\beta_1 z_1-\beta_2  z_2-\beta_{3}z_{3} \right] = \tau(1-\epsilon)\delta\nonumber\\
\label{Eqn:P_SYXOf010}
p_{SYX}(0,1,0)&=& (1-\epsilon)\theta \left[\gamma_0z_0+\gamma_1 z_1+\gamma_2  z_2 +\gamma_{3} z_{3} \right]= (1-\tau)(1-\epsilon)\delta\\
\label{Eqn:P_SYXOf011}
p_{SYX}(0,1,1)&=& (1-\epsilon)\theta\left[1-\gamma_0z_0-\gamma_1 z_1-\gamma_2  z_2-\gamma_{3} z_{3} \right]
= \tau(1-\epsilon)(1-\delta)\nonumber\\
\label{Eqn:P_SYXOf100}
p_{SYX}(1,0,0)&=& \epsilon \theta\left[\beta_0z_4+\beta_1 z_5+\beta_2  z_6+\beta_{3}z_{7} \right] = (1-\tau)\epsilon \delta\\
\label{Eqn:P_SYXOf101}
p_{SYX}(1,0,1)&=& \epsilon \theta\left[1-\beta_0z_4-\beta_1 z_5-\beta_2 z_6-\beta_{3}z_{7} \right] = \tau \epsilon(1-\delta)\nonumber\\
\label{Eqn:P_SYXOf110}
p_{SYX}(1,1,0)&=& \epsilon(1-\theta) \left[\gamma_0z_4+\gamma_1 z_5+\gamma_2  z_6+\gamma_{3}z_{7} \right] = (1-\tau) \epsilon (1-\delta)\\
\label{Eqn:P_SYXOf111}
p_{SYX}(1,1,1)&=& \epsilon(1-\theta) \left[1-\gamma_0z_4-\gamma_1 z_5-\gamma_2  z_6-\gamma_{3}z_{7} \right]= \tau \epsilon \delta\nonumber
\end{eqnarray}
Since $\epsilon \notin \left\{ 0,1\right\}$, (\ref{Eqn:P_SYXOf000}),(\ref{Eqn:P_SYXOf110}) imply 
\begin{eqnarray}
 \label{Eqn:ConsequenceEquatingP_SYX000AndP_SYX110}
\beta_0z_0+\beta_1 z_1+\beta_2
  z_2+\beta_{3}z_{3} =\gamma_0z_4+\gamma_1 z_5+\gamma_2
  z_6+\gamma_{3}z_{7}=:\psi_{1}\nonumber
\end{eqnarray}
Similarly (\ref{Eqn:P_SYXOf010}),(\ref{Eqn:P_SYXOf100}) imply
\begin{eqnarray}
 \label{Eqn:ConsequenceEquatingP_SYX010AndP_SYX100}
\gamma_0z_0+\gamma_1 z_1+\gamma_2
  z_2 +\gamma_{3} z_{3} =\beta_0z_4+\beta_1 z_5+\beta_2
  z_6+\beta_{3}z_{7}=:\psi_{2}\nonumber
\end{eqnarray}
\begin{table} \begin{center}
\begin{tabular}{|c|c|c|} \hline
SYX & $p_{SYX}$ &\\
\hline
000 & $(1-\epsilon)(1-\theta)\left[\beta_0z_0+\beta_1 z_1+\beta_2
  z_2+\beta_{3}z_{3} \right]$ & $(1-\tau)(1-\epsilon)(1-\delta)$ \\
\hline
001 & $(1-\epsilon)(1-\theta)\left[1-\beta_0z_0-\beta_1 z_1-\beta_2
  z_2-\beta_{3}z_{3} \right] $ & $\tau(1-\epsilon)\delta$\\
\hline
010 & $(1-\epsilon)\theta \left[\gamma_0z_0+\gamma_1 z_1+\gamma_2
  z_2 +\gamma_{3} z_{3} \right]$ & $(1-\tau)(1-\epsilon)\delta$\\
\hline
011 & $(1-\epsilon)\theta\left[1-\gamma_0z_0-\gamma_1 z_1-\gamma_2
  z_2-\gamma_{3} z_{3} \right]$ & $\tau(1-\epsilon)(1-\delta)$\\
\hline
100 & $ \epsilon \theta\left[\beta_0z_4+\beta_1 z_5+\beta_2
  z_6+\beta_{3}z_{7} \right] $ & $(1-\tau)\epsilon \delta$\\
\hline
101 & $\epsilon \theta\left[1-\beta_0z_4-\beta_1 z_5-\beta_2
  z_6-\beta_{3}z_{7} \right]$ & $\tau \epsilon(1-\delta)$\\
\hline
110 & $\epsilon(1-\theta) \left[\gamma_0z_4+\gamma_1 z_5+\gamma_2
  z_6+\gamma_{3}z_{7} \right]$ & $(1-\tau) \epsilon (1-\delta)$\\
\hline
111 & $\epsilon(1-\theta) \left[1-\gamma_0z_4-\gamma_1 z_5-\gamma_2
  z_6-\gamma_{3}z_{7} \right]$ & $\tau \epsilon \delta$\\
\hline
\end{tabular} \end{center}
\caption{Enforcing conditions 1) and 2) for $p_{SXY}$} \label{Table:P_SXY}
\end{table}
We now argue there exists no choice of values for $z_{i}:i=0,1\cdots,7$. Towards that end, we make a couple of observations. Firstly, we argue $\psi_{1} \neq \psi_{2}$. Since $\epsilon \neq 1$ and $\theta \in (0,1)$, we have $\psi_{1}=\frac{(1-\tau)(1-\delta)}{(1-\theta)}$ and $\psi_{2} = \frac{(1-\tau)\delta}{\theta}$ from (\ref{Eqn:P_SYXOf000}) and (\ref{Eqn:P_SYXOf010}) respectively. Equating $\psi_{1}$ and $\psi_{2}$, we obtain either $\tau=1$ or $\tau=0$ or $\delta=\frac{1}{2}$. Since none of the conditions hold, we conclude $\psi_{1}\neq \psi_{2}$. Secondly, one can verify that $\psi_{1}+\psi_{2}-1=\frac{\delta (1-\delta)(1-2\tau)}{\theta(1-\theta)}$. Since $\delta \in (0,\frac{1}{2}),\theta \in (0,1)$ and $\tau \in (0,\frac{1}{2})$, $\psi_{1}+\psi_{2} > 1$.
We now eliminate the possible choices for $z_{i}:i=0,1\cdots,7$ through the following cases. let $m \define \left| \left\{ i \in \left\{ 0,1,2,3 \right\} :z_{i}=1\right\} \right|$ and $l \define \left| \left\{ i \in \left\{ 4,5,6,7 \right\} :z_{i}=1\right\} \right|$.

\noindent \textit{Case 1:} All of $z_{0},z_{1},z_{2},z_{3}$ or all of $z_{4},z_{5},z_{6},z_{7}$ are equal to 0, i.e., $m=0$ or $l=0$. This implies $\psi_{1}=\psi_{2}=0$ contradicting $\psi_{1} \neq \psi_{2}$.

\noindent \textit{Case 2:} All of $z_{0},z_{1},z_{2},z_{3}$ or all of $z_{4},z_{5},z_{6},z_{7}$ are equal to 1, i.e., $m=4$ or $l=4$. This implies $\psi_{1}=\psi_{2}=1$ contradicting $\psi_{1} \neq \psi_{2}$.

Cases 1 and 2 imply $m,l \in \left\{1,2,3\right\}$.

\noindent \textit{Case 3: $m=l=3$}. If $i_{1},i_{2},i_{3}$ are distinct indices in $\left\{ 0,1,2,3 \right\}$ such that $z_{i_{1}}=z_{i_{2}}=z_{i_{3}}=1$, then one among $z_{i_{1}+4},z_{i_{2}+4},z_{i_{3}+4}$ has to be $0$. Else $\psi_{1}=\beta_{i_{1}}+\beta_{i_{2}}+\beta_{i_{3}}$ and $\psi_{2}=\beta_{i_{1}}z_{i_{1}+4}+\beta_{i_{2}}z_{i_{2}+4}+\beta_{i_{3}}z_{i_{3}+4}=\beta_{i_{1}}+\beta_{i_{2}}+\beta_{i_{3}}=\psi_{1}$ contradicting $\psi_{1} \neq \psi_{2}$. Let us consider the case $z_{0}=z_{1}=z_{2}=1$, $z_{3}=z_{4}=0$ and $z_{5}=z_{6}=z_{7}=1$. Table \ref{Table:p_UXSY} tabulates $p_{USXY}$ for this case.
\begin{table} \begin{center}
\begin{tabular}{|c|c|c|c|} \hline
UXSY & $p_{UXSY}$&UXSY & $p_{UXSY}$\\
\hline
0000 & $(1-\epsilon)(1-\theta)\beta_{0}$&2000 & $(1-\epsilon)(1-\theta)\beta_{2}$ \\
\hline
0001 & $(1-\epsilon)\theta\beta_{3}$&2001 & $(1-\epsilon)\theta \gamma_{2}$ \\
\hline
0110 & $\epsilon\theta\beta_{0}$&2010 & $\epsilon\theta\beta_{2}$ \\
\hline
0111 & $\epsilon(1-\theta) \beta_{3}$ &2011 & $\epsilon(1-\theta)\gamma_{2}$ \\
\hline
1000 & $(1-\epsilon)(1-\theta)\beta_{1}$ &3100 & $(1-\epsilon)(1-\theta)\beta_{3}$ \\
\hline
1001 & $(1-\epsilon)\theta \gamma_{1}$ &3101 & $(1-\epsilon)\theta \beta_{0}$ \\
\hline
1010 & $\epsilon\theta\beta_{1}$ &3010 & $\epsilon\theta\beta_{3}$ \\
\hline
1011 & $\epsilon(1-\theta) \gamma_{1}$ &3011 & $\epsilon(1-\theta)\beta_{0}$ \\
\hline
\end{tabular} \end{center}\caption{$p_{UXSY}$} \label{Table:p_UXSY} 
\end{table}
We have $\psi_{1}=\beta_{0}+\beta_{1}+\beta_{2}=\gamma_{1}+\gamma_{2}+\gamma_{3}$ or equivalently $\psi_{1}=1-\beta_{3}=1-\gamma_{0}$ and $\psi_{2}=\gamma_{0}+\gamma_{1}+\gamma_{2}=\beta_{1}+\beta_{2}+\beta_{3}$ or equivalently $\psi_{2}=1-\gamma_{3}=1-\beta_{0}$. These imply $\gamma_{3}=\beta_{0}$, $\gamma_{0}=\beta_{3}$ which further imply $\gamma_{1}+\gamma_{2}=\beta_{1}+\beta_{2}$ (since $1=\gamma_{0}+\gamma_{1}+\gamma_{2}+\gamma_{3}=\beta_{0}+\beta_{1}+\beta_{2+}\beta_{3}$). From table \ref{Table:p_UXSY}, one can verify 
\begin{eqnarray}
\label{Eqn:P_UGivenXSYOf0Given001}
&p_{U|XSY}(0|0,0,1)=\frac{\beta_{3}(1-\epsilon)\theta}{(1-\epsilon)\theta(\beta_{3}+\gamma_{1}+\gamma_{2})} = \frac{\beta_{3}}{\beta_{1}+\beta_{2}+\beta_{3}},
\nonumber\\
\lefteqn{p_{U|XS}(0|0,0)=\frac{(1-\theta)\beta_{0}+\theta\beta_{3}}{(1-\theta)(\beta_{0}+\beta_{1}+\beta_{2})+\theta(\beta_{3}+\gamma_{1}+\gamma_{2})}}.\nonumber
\end{eqnarray}
The Markov chain $U-(X,S)-Y$ implies $p_{U|XSY}(0|0,0,1)=p_{U|XS}(0|0,0)$. Equating the right hand sides of the above equations, we obtain $(1-\theta)(\beta_{0}-\beta_{3})(\beta_{1}+\beta_{2})=0$. Since $\theta \neq 0$, $\beta_{1}+\beta_{2}=0$ or $\beta_{0}=\beta_{3}$. If $\beta_{0}=\beta_{3}$, then $1-\beta_{3}=\psi_{1}=\psi_{2}=1-\beta_{0}$ thus contradicting $\psi_{1} \neq \psi_{2}$. If $\beta_{1}+\beta_{2}=0$, then $\beta_{0}+\beta_{3}=1$ implying $\psi_{1}+\psi_{2}=1$ contradicting $\psi_{1}+\psi_{2}>1$.

\noindent \textit{Case 4: $m=3, l=2$}. Let us assume $z_{0}=z_{1}=z_{2}=z_{6}=z_{7}=1,z_{3}=z_{4}=z_{5}=0$. We then have $\psi_{1}=\beta_{0}+\beta_{1}+\beta_{2}=\gamma_{2}+\gamma_{3}$ and $\psi_{2}=\gamma_{0}+\gamma_{1}+\gamma_{2}=\beta_{2}+\beta_{3}$. Since $\beta_{0}+\beta_{1}+\beta_{2}=1-\beta_{3}$ and $\gamma_{0}+\gamma_{1}+\gamma_{2}=1-\gamma_{3}$, we have $\gamma_{2}+\gamma_{3}=1-\beta_{3}$ and $\beta_{2}+\beta_{3}=1-\gamma_{3}$ and therefore $\gamma_{2}=\beta_{2}$.Table \ref{Table:p_UXSYCasemIs3lIs2} tabulates $p_{USXY}$ for this case.
\begin{table} \begin{center}
\begin{tabular}{|c|c|c|c|} \hline
UXSY & $p_{UXSY}$&UXSY & $p_{UXSY}$\\
\hline
0000 & $(1-\epsilon)(1-\theta)\beta_{0}$&2000 & $(1-\epsilon)(1-\theta)\beta_{2}$ \\
\hline
0001 & $(1-\epsilon)\theta\gamma_{0}$&2001 & $(1-\epsilon)\theta \gamma_{2}$ \\
\hline
0110 & $\epsilon\theta\beta_{0}$&2010 & $\epsilon\theta\beta_{2}$ \\
\hline
0111 & $\epsilon(1-\theta) \gamma_{0}$ &2011 & $\epsilon(1-\theta)\gamma_{2}$ \\
\hline
1000 & $(1-\epsilon)(1-\theta)\beta_{1}$ &3100 & $(1-\epsilon)(1-\theta)\beta_{3}$ \\
\hline
1001 & $(1-\epsilon)\theta \gamma_{1}$ &3101 & $(1-\epsilon)\theta \gamma_{3}$ \\
\hline
1110 & $\epsilon\theta\beta_{1}$ &3010 & $\epsilon\theta\beta_{3}$ \\
\hline
1111 & $\epsilon(1-\theta) \gamma_{1}$ &3011 & $\epsilon(1-\theta)\gamma_{3}$ \\
\hline
\end{tabular} \end{center}\caption{$p_{UXSY}$} \label{Table:p_UXSYCasemIs3lIs2} 
\end{table}
From table \ref{Table:p_UXSYCasemIs3lIs2}, one can verify 
\begin{eqnarray}
\label{Eqn:P_UGivenXSYOf0Given001}
&p_{U|XSY}(2|0,0,1)=\frac{\beta_{2}(1-\epsilon)\theta}{(1-\epsilon)\theta(\beta_{2}+\gamma_{0}+\gamma_{1})} = \frac{\beta_{2}}{\beta_{2}+\gamma_{0}+\gamma_{1}},
\nonumber\\
\lefteqn{p_{U|XS}(2|0,0)=\frac{\beta_{2}}{(1-\theta)(\beta_{0}+\beta_{1})+\theta(\gamma_{0}+\gamma_{1})+\beta_{2}}}.\nonumber
\end{eqnarray}
The Markov chain $U-(X,S)-Y$ implies $p_{U|XSY}(2|0,0,1)=p_{U|XS}(2|0,0)$. Equating the RHS of the above equations, we obtain $\beta_{0}+\beta_{1}=\gamma_{0}+\gamma_{1}$. This implies $\beta_{2}+\beta_{3}=\gamma_{2}+\gamma_{3}$. However $\psi_{2}=\beta_{2}+\beta_{3}$ and $\psi_{1}=\gamma_{2}+\gamma_{3}$, this contradicting $\psi \neq \psi_{2}$.

Let us assume $z_{0}=z_{1}=z_{2}=z_{5}=z_{6}=1$ and $z_{3}=z_{4}=z_{7}=0$. It can be verified that $\psi_{1}=\beta_{0}+\beta_{1}+\beta_{2}=\gamma_{1}+\gamma_{2}$ and $\psi_{2}=\gamma_{0}+\gamma_{1}+\gamma_{2}=\beta_{1}+\beta_{2}$. This implies $\psi_{1}-\psi_{2}=\beta_{0}=-\gamma_{0}$. Since $\beta_{0}$ and $\gamma_{0}$ are non-negative, $\beta_{0}=\gamma_{0}=0$ implying $\psi_{1}-\psi_{2}=0$, contradicting $\psi_{1} \neq \psi_{2}$.

\noindent \textit{Case 5: $m=3, l=1$}. Assume $z_{0}=z_{1}=z_{2}=z_{4}=1$, $z_{3}=z_{5}=z_{6}=z_{7}=0$. It can be verified that $\psi_{1}=\beta_{0}+\beta_{1}+\beta_{2}=\gamma_{0}$ and $\psi_{2}=\gamma_{0}+\gamma_{1}+\gamma_{2}=\beta_{0}$. Therefore $\psi_{1}-\psi_{2}=\beta_{1}+\beta_{2}$ and $\psi_{2}-\psi_{1}=\gamma_{1}+\gamma_{2}$. Since $\beta_{i},\gamma_{i}:i \in \left\{ 0,1,2,3 \right\}$ are non-negative, $\psi_{1}-\psi_{2} \geq 0$ and $\psi_{2}-\psi_{1} \geq 0$ contradicting $\psi_{1} \neq \psi_{2}$.

Assume $z_{0}=z_{1}=z_{2}=z_{7}=1$ and $z_{3}=z_{4}=z_{5}=z_{6}=0$. In this case, $\psi_{1}=\beta_{0}+\beta_{1}+\beta_{2}=\gamma_{3}$, $\psi_{2}=\gamma_{0}+\gamma_{1}+\gamma_{2}=1-\gamma_{3}$. We have $\psi_{1}+\psi_{2}=1$ contradicting $\psi_{1}+\psi_{2}>1$.

\noindent \textit{Case 6: $m=2, l=2$}. Assume $z_{0}=z_{1}=z_{4}=z_{5}=1$, $z_{2}=z_{3}=z_{6}=z_{7}=0$. Note that $\psi_{1}=\beta_{0}+\beta_{1}=\gamma_{0}+\gamma_{1}$, $\psi_{2}=\gamma_{0}+\gamma_{1}=\beta_{0}+\beta_{1}$ contradicting $\psi_{1} \neq \psi_{2}$.

Assume $z_{0}=z_{1}=z_{6}=z_{7}=1$, $z_{2}=z_{3}=z_{4}=z_{5}=0$. Note that $\psi_{1}=\beta_{0}+\beta_{1}=\gamma_{2}+\gamma_{3}$, $\psi_{2}=\gamma_{0}+\gamma_{1}=\beta_{2}+\beta_{3}$ contradicting $\psi_{1} + \psi_{2}>1$.

Assume $z_{0}=z_{1}=z_{5}=z_{6}=1$, $z_{2}=z_{3}=z_{4}=z_{7}=0$. Note that $\psi_{1}=\beta_{0}+\beta_{1}=\gamma_{1}+\gamma_{2}$, $\psi_{2}=\gamma_{0}+\gamma_{1}=\beta_{1}+\beta_{2}$ and therefore $\beta_{2}+\beta_{3}=\gamma_{0}+\gamma_{3}$ and $\beta_{0}+\beta_{3}=\gamma_{2}+\gamma_{3}$. We observe
\begin{equation}
 \label{Eqn:CasemIs2lIs2Psi1-Psi2}
\psi_{1}-\psi_{2}= \beta_{0}-\beta_{2}=\gamma_{2}-\gamma_{0}.
\end{equation}

PMF $p_{UXSY}$ is tabulated in \ref{Table:p_UXSYCasemIs2lIs2} for this case. Table \ref{Table:p_UXSYCasemIs2lIs2} enables us to compute conditional PMF $p_{U|XSY}$ which is tabulated in table \ref{Table:p_UGivenXSYCasemIs2lIs2}.
\begin{table} \begin{center}
\begin{tabular}{|c|c|c|c|} \hline
UXSY & $p_{UXSY}$&UXSY & $p_{UXSY}$\\
\hline
0000 & $(1-\epsilon)(1-\theta)\beta_{0}$&2100 & $(1-\epsilon)(1-\theta)\beta_{2}$ \\
\hline
0001 & $(1-\epsilon)\theta\gamma_{0}$&2101 & $(1-\epsilon)\theta \gamma_{2}$ \\
\hline
0110 & $\epsilon\theta\beta_{0}$&2010 & $\epsilon\theta\beta_{2}$ \\
\hline
0111 & $\epsilon(1-\theta) \gamma_{0}$ &2011 & $\epsilon(1-\theta)\gamma_{2}$ \\
\hline
1000 & $(1-\epsilon)(1-\theta)\beta_{1}$ &3100 & $(1-\epsilon)(1-\theta)\beta_{3}$ \\
\hline
1001 & $(1-\epsilon)\theta \gamma_{1}$ &3101 & $(1-\epsilon)\theta \gamma_{3}$ \\
\hline
1010 & $\epsilon\theta\beta_{1}$ &3110 & $\epsilon\theta\beta_{3}$ \\
\hline
1011 & $\epsilon(1-\theta) \gamma_{1}$ &3111 & $\epsilon(1-\theta)\gamma_{3}$ \\
\hline
\end{tabular} \end{center}\caption{$p_{UXSY}$} \label{Table:p_UXSYCasemIs2lIs2} 
\end{table}
\begin{table} \begin{center}
\begin{tabular}{|c|c|c|c|} \hline
UXSY & $p_{U|XSY}$&UXSY & $p_{U|XSY}$\\
\hline
0000 & $\frac{\beta_{0}}{\beta_{0}+\beta_{1}}$&0001 & $\frac{\gamma_{0}}{\gamma_{0}+\gamma_{1}}$ \\
\hline
0110 & $\frac{\beta_{0}}{\beta_{0}+\beta_{3}}$&0111 & $\frac{\gamma_{0}}{\gamma_{0}+\gamma_{3}}$ \\
\hline
1000 & $\frac{\beta_{1}}{\beta_{0}+\beta_{1}}$ &1001 & $\frac{\gamma_{1}}{\gamma_{0}+\gamma_{1}}$ \\
\hline
1010 & $\frac{\beta_{1}}{\beta_{1}+\beta_{2}}$ &1011 & $\frac{\gamma_{1}}{\gamma_{1}+\gamma_{2}}$ \\
\hline
 2100& $\frac{\beta_{2}}{\beta_{2}+\beta_{3}}$&2101 & $\frac{\gamma_{2}}{\gamma_{2}+\gamma_{3}}$ \\
\hline
 2010& $\frac{\beta_{2}}{\beta_{1}+\beta_{2}}$ &2011 & $\frac{\gamma_{2}}{\gamma_{1}+\gamma_{2}}$ \\
\hline
 3100& $\frac{\beta_{3}}{\beta_{2}+\beta_{3}}$ &3101 & $\frac{\gamma_{3}}{\gamma_{2}+\gamma_{3}}$ \\
\hline
3110 & $\frac{\beta_{3}}{\beta_{0}+\beta_{3}}$ &3111 & $\frac{\gamma_{3}}{\gamma_{0}+\gamma_{3}}$ \\
\hline
\end{tabular} \end{center}\caption{$p_{U|XSY}$} \label{Table:p_UGivenXSYCasemIs2lIs2} 
 \end{table}
Markov chain $U-(X,S)-Y$ implies columns 2 and 4 of table \ref{Table:p_UGivenXSYCasemIs2lIs2} are identical. This implies
\begin{eqnarray}
 \label{Eqn:Relation1ComingOutOfU-XS-YMarkovChain}
\frac{\beta_{0}}{\gamma_{0}}\overset{(a)}{=}\frac{\beta_{0}+\beta_{1}}{\gamma_{0}+\gamma_{1}}\overset{(b)}{=}\frac{\beta_{1}}{\gamma_{1}}, \frac{\beta_{2}}{\gamma_{2}}\overset{(c)}{=}\frac{\beta_{2}+\beta_{3}}{\gamma_{2}+\gamma_{3}}\overset{(d)}{=}\frac{\beta_{3}}{\gamma_{3}}, ~~\mbox{ and }~~ \frac{\beta_{0}}{\gamma_{0}}\overset{(e)}{=}\frac{\beta_{0}+\beta_{3}}{\gamma_{0}+\gamma_{3}}\overset{(f)}{=}\frac{\beta_{3}}{\gamma_{3}},
\end{eqnarray}
where (a),(b),(c),(d) in (\ref{Eqn:Relation1ComingOutOfU-XS-YMarkovChain}) is obtained by equating rows 1, 3, 5, 7 of columns 2 and 4 respectively and (e) and (f) in (\ref{Eqn:Relation1ComingOutOfU-XS-YMarkovChain}) are obtained by equating rows 2 and 8 of columns 2 and 4 respectively. (\ref{Eqn:Relation1ComingOutOfU-XS-YMarkovChain}), enables us to conclude
\begin{equation}
 \label{Eqn:BetaIsEqualToGamma}
\frac{\beta_{0}}{\gamma_{0}}=\frac{\beta_{1}}{\gamma_{1}}=\frac{\beta_{2}}{\gamma_{2}}=\frac{\beta_{3}}{\gamma_{3}}. \nonumber
\end{equation}
Since $\beta_{0}+\beta_{1}+\beta_{2}+\beta_{3}=\gamma_{0}+\gamma_{1}+\gamma_{2}+\gamma_{3}=1$, we have $\beta_{i}=\gamma_{i}$ for each $i \in \left\{ 0,1,2,3\right\}$ which yields $\psi_{1}=\psi_{2}$ in (\ref{Eqn:CasemIs2lIs2Psi1-Psi2}) contradicting $\psi_{1} \neq \psi_{2}$.

\noindent \textit{Case 7: $m=2, l=1$}. Assume $z_{0}=z_{1}=z_{4}=1, z_{2}=z_{3}=z_{5}=z_{6}=z_{7}=0$. Note that $\psi_{1}=\beta_{0}+\beta_{1}=\gamma_{0}, \psi_{2}= \gamma_{0}+\gamma_{1}=\beta_{0}$ and hence $\psi_{1}-\psi_{2}=\beta_{1}$ and $\psi_{2}-\psi_{1}=\gamma_{1}$. Since $\gamma_{1}$ and $\beta_{1}$ are non-negative, we have $\psi_{1}=\psi_{2}$ contradicting $\psi_{1} \neq \psi_{2}$.

Assume $z_{0}=z_{1}=z_{7}=1, z_{2}=z_{3}=z_{4}=z_{5}=z_{6}=0$. Note that $\psi_{1}=\beta_{0}+\beta_{1}=\gamma_{3}, \psi_{2}=\gamma_{0}+\gamma_{1}=\beta_{3}$ and hence $\psi_{1}+\psi_{2} = \beta_{0}+\beta_{1}+\beta_{3} \leq 1$ contradicting $\psi_{1}+\psi_{2} >1$.

\noindent \textit{Case 6: $m=1, l=1$}. Assume $z_{0}=z_{4}=1, z_{1}=z_{2}=z_{3}=z_{5}=z_{6}=z_{7}=0$. Note that $\psi_{1}=\beta_{0}=\gamma_{0},\psi_{2}=\gamma_{0}=\beta_{0}$, thus contradicting $\psi_{1} \neq \psi_{2}$.

Assume $z_{0}=z_{5}=1, z_{1}=z_{2}=z_{3}=z_{4}=z_{6}=z_{7}=0$. Note that $\psi_{1}=\beta_{0}=\gamma_{1},\psi_{2}=\gamma_{0}=\beta_{1}$, and hence $\psi_{1}+\psi_{2} = \beta_{0}+\beta_{1} \leq 1$, thus contradicting $\psi_{1} + \psi_{2} >1$.
\end{proof}

\section{Proof of lemma \ref{Lem:ForBSCMarokovChainWithOutputImpliesMarkovChainWithInput}}
\label{AppSec:ForBSCMarokovChainWithOutputImpliesMarkovChainWithInput}
Since $A-B-Y$ and $AB-X-Y$ are Markov chains, to prove $A-B-XY$ is a Markov chain, it suffices to prove $A-B-X$ is a Markov chain. We therefore need to prove $p_{XA|B}(x_{k},a_{i}|b_{j})=p_{X|B}(x_{k}|b_{j})p_{A|B}(a_{i}|b_{j})$ for every $(x_{k},a_{i},b_{j}) \in \left\{ 0,1\right\}\times \mathcal{A} \times \mathcal{B}$ such that $p_{B}(b_{j})>0$. It suffices to prove $p_{XA|B}(0,a_{i}|b_{j})=p_{X|B}(0|b_{j})p_{A|B}(a_{i}|b_{j})$ for every $(a_{i},b_{j}) \in \mathcal{A} \times \mathcal{B}$ such that $p_{B}(b_{j})>0$.\footnote{Indeed, $p_{XA|B}(1,a_{i}|b_{j})=p_{A|B}(a_{i}|b_{j})-p_{XA|B}(0,a_{i}|b_{j})=p_{A|B}(a_{i}|b_{j})(1-p_{X|B}(0|b_{j}))=p_{A|B}(a_{i}|b_{j})p_{X|B}(1|b_{j})$.}

Fix a $b_{j}$ for which $p_{B}(b_{j})>0$. Let $p_{A|B}(a_{i}|b_{j})=\alpha_{i}$ for each $i \in \naturals$ and $p_{XA|B}(0,a_{i}|b_{j})=\chi_{i}$ for each $(i,j) \in \naturals\times \naturals$. It can be verified $p_{XYA|B}(\cdot,\cdot,\cdot|b_{j})$ is as in table \ref{Table:p_XYAGivenB}. From table \ref{Table:p_XYAGivenB}, we infer $p_{AY|B}(a_{i}0|b_{j})=\chi_{i}(1-\eta)+(\alpha_{i}-\chi_{i})\eta=\alpha_{i}\eta+\chi_{i}(1-2\eta)$. From the Markov chain $A-B-Y$, we have $p_{AY|B}(a_{i}0|b_{j})=p_{A|B}(a_{i}|b_{j})p_{Y|B}(0|b_{j})= \alpha_{i}p_{Y|B}(0|b_{j})$. Therefore, $ \alpha_{i}p_{Y|B}(0|b_{j})=\alpha_{i}\eta+\chi_{i}(1-2\eta)$. Since $1-2\eta \neq 0$, we substitute for $\chi_{i}$ and $\alpha_{i}$ in terms of their definitions to conclude
\begin{equation}
 \label{Eqn:ProbOfXAGivenB}
p_{XA|B}(0,a_{i}|b_{j})=\chi_{i}= \alpha_{i}\cdot \frac{p_{Y|B}(0|b_{j})-\eta}{1-2\eta}=p_{A|B}(a_{i}|b_{j})\frac{p_{Y|B}(0|b_{j})-\eta}{1-2\eta}.\nonumber
\end{equation}
Since $\frac{p_{Y|B}(0|b_{j})-\eta}{1-2\eta}$ is independent of $i$ and $b_{j}$ was an arbitrary element in $\mathcal{B}$ that satisfies $p_{B}(b_{j})>0$, we have established Markov chain $A-B-X$.
\begin{table} \begin{center}
\begin{tabular}{|c|c|c|c|c|c|c|c|} \hline
AXY & $p_{AXY|B}(\cdot,\cdot,\cdot|b_{j})$&AXY & $p_{AXY|B}(\cdot,\cdot,\cdot|b_{j})$&AXY & $p_{AXY|B}(\cdot,\cdot,\cdot|b_{j})$&AXY & $p_{AXY|B}(\cdot,\cdot,\cdot|b_{j})$\\
\hline
$a_{i}00$& $\chi_{i}(1-\eta)$&$a_{i}01$ & $\chi_{i}\eta$&$a_{i}10$ & $(\alpha_{i}-\chi_{i})\eta$&$a_{i}11$ & $(\alpha_{i}-\chi_{i})(1-\eta)$ \\
\hline
\end{tabular} \end{center}\caption{$p_{AXY|B}(\cdot,\cdot,\cdot|b_{j})$} \label{Table:p_XYAGivenB} 
\end{table}
\section{Upper bound on Marton's coding technique for example \ref{Ex:A3To1-OR-BC}}
\label{AppSec:UpperBoundMartonCodingTechniuque}
We begin with a characterization of a test channel
$p_{\TimeSharingRV\PublicRV\underlineSemiPrivateRV\underlinePrivateRV
  X\underlineY}$ for which $(R_{1},h_{b}(\tau_{2} *
\delta_{2})-h_{b}(\delta_{2}),h_{b}(\tau_{3} *
\delta_{3})-h_{b}(\delta_{3})) \in
\alpha_{\mathscr{U}}(p_{\TimeSharingRV\PublicRV\underlineSemiPrivateRV\underlinePrivateRV
  X\underlineY})$. Since independent information needs to be
communicated to users $2$ and $3$ at their respective PTP capacities,
it is expected that their codebooks are not precoded for each other's
signal, and moreover none of users $2$ and $3$ decode a part of the
other users' signal. The following lemma establishes this. 
We remind the reader that $X_{1}X_{2}X_{3}=X$ denote the three binary
digits at the input.

\begin{lemma}
 \label{Lem:CharTstChnlThatContainsTheRtTrpl}
If there exists a test channel $p_{\TimeSharingRV\PublicRV\underlineSemiPrivateRV\underlinePrivateRV X\underlineY} \in \SetOfDistributions_{\mathscr{U}}(\tau)$ and nonnegative numbers $K_{i},S_{ij},K_{ij},L_{ij},S_{i},T_{i}$ that satisfy (\ref{Eqn:3BCSourceCodingBoundNonnegativity})-(\ref{Eqn:3BCChannelCodingSextupleBound}) for each triple $(i,j,k) \in \left\{ (1,2,3),(2,3,1),(3,1,2) \right\}$ such that $R_{2}=K_{2}+K_{23}+L_{12}+T_{2}=h_{b}(\tau_{2}*\delta_{2})-h_{b}(\delta_{2}),R_{3}=K_{3}+K_{31}+L_{23}+T_{3}=h_{b}(\tau_{3}*\delta_{3})-h_{b}(\delta_{3})$, then
\begin{enumerate}
 \item \label{Item:RatesOfPublicAndSemiPrivateCodebooks}$K_1=K_2=K_3=K_{23}=L_{23}=K_{12}=L_{31}=S_2=S_3=0$ and $I(\SemiPrivateRV_{31} V_{1}V_{3};Y_{2}|\TimeSharingRV WU_{23}U_{12}V_{2})=0$,
\item \label{Item:BinningRatesOfSemiPrivateCodebooks}$S_{31}=I(U_{31};U_{23}|\TimeSharingRV W),S_{12}=I(U_{12};U_{23}|\TimeSharingRV W)$, $S_{23}=I(U_{12};U_{31}|\TimeSharingRV WU_{23})=0$,
\item \label{Item:WU23IsConditionallyIndependentOfY2Y3GivenQ}$I(V_{2}U_{12};V_{3}U_{31}|\TimeSharingRV WU_{23})=0$, $I(WU_{23};Y_{j}|\TimeSharingRV )=0:j=2,3,I(V_{2}U_{12};Y_{2}|\TimeSharingRV W U_{23})=h_{b}(\tau_{2}*\delta_{2})-h_{b}(\delta_{2})$ and $I(V_{3}U_{31};Y_{3}|\TimeSharingRV W U_{23})=h_{b}(\tau_{3}*\delta_{3})-h_{b}(\delta_{3})$, $p_{X_{j}|QWU_{23}}(1|q,w,u_{23})=\tau_{j}$ for $j=2,3$.
\item \label{ItemNumber:MarkovChains} $(V_3,X_3,V_1,U_{31}) - (\TimeSharingRV WU_{23}U_{12}V_2) - (X_2,Y_2) $, $(V_2,X_2,V_1,U_{12}) - (\TimeSharingRV WU_{23}U_{31}V_3) - (X_3,Y_3)$ and $V_{1}-QW\underline{U}V_{2}V_{3}-X_{2}X_{3}$ are Markov chains,
\item \label{Item:X2AndX3AreConditionallyIndependentGivenQWU12U23U31}$X_{2} - \TimeSharingRV \PublicRV \SemiPrivateRV_{12} \SemiPrivateRV_{23} \SemiPrivateRV_{31} - X_{3}$ is a Markov chain,
\item \label{Item:U31AndX2AreConditionallyIndependentGivenQWU12U23}$\SemiPrivateRV_{12}-\TimeSharingRV \PublicRV \SemiPrivateRV_{23} \SemiPrivateRV_{31}-X_{3}$ and $\SemiPrivateRV_{31}-\TimeSharingRV \PublicRV \SemiPrivateRV_{23} \SemiPrivateRV_{12}-X_{2}$ are Markov chains.
\end{enumerate}
\end{lemma}
The proof of this lemma is similar to that of lemma \ref{Lem:CharacterizationOfTestChannelThatContainsTheRateTriple} and is therefore omitted. Lemma \ref{Lem:CharTstChnlThatContainsTheRtTrpl} enables us to simplify the bounds (\ref{Eqn:3BCSourceCodingBoundNonnegativity})-(\ref{Eqn:3BCChannelCodingSextupleBound}) for the particular test channel under consideration. The following bounds may be verified. If $(R_{1},h_{b}(\tau_{2}*\delta_{2})-h_{b}(\delta_{2}),h_{b}(\tau_{3}*\delta_{3})-h_{b}(\delta_{3})) \in \alpha_{\mathscr{U}}(p_{\TimeSharingRV\PublicRV\underlineSemiPrivateRV\underlinePrivateRV X\underlineY})$, then there exists nonnegative numbers $S_{1},T_{1},L_{12},K_{31}$ that satisfy $R_{1}=T_{1},R_{2}=L_{12}+T_{2}=h_{b}(\tau_{2}*\delta_{2})-h_{b}(\delta_{2}),R_{3}=K_{31}+T_{3}=h_{b}(\tau_{3}*\delta_{3})-h_{b}(\delta_{3})$,
%
\begin{eqnarray}
 \label{Eqn:ORBCBndsOnUser1CdbkAsAResultOfUser1Dec}
&S_1  \geq I(V_1;U_{23}V_2V_3|\TimeSharingRV WU_{12}U_{31}),~~~T_1+S_1\leq I(V_1;Y_1|\TimeSharingRV WU_{12}U_{31})\\
\label{Eqn:ORBCBndsOnRatesOfUser123CodebooksAsAResultOfUser1Dec}
&L_{12}+K_{31}+T_1+S_1 \leq I(U_{12};U_{31}|\TimeSharingRV W)-I(U_{23};U_{12}|\TimeSharingRV W)+I(V_1U_{12}U_{31};Y_1|\TimeSharingRV W)-I(U_{23};U_{31}|\TimeSharingRV W)\\
\label{Eqn:ORBCBndsOnRatesOfUser2CodebookAsAResultOfUser2Dec}
&0 \leq T_{2} \leq I(V_2;Y_2|\TimeSharingRV WU_{12}U_{23}),~~~h_{b}(\delta_{2}*\tau_{2})-h_{b}(\delta_{2})=T_{2}+L_{12} = I(U_{12}V_2;Y_2|\TimeSharingRV WU_{23})\\
\label{Eqn:ORBCBndsOnRatesOfUser3CodebookAsAResultOfUser3Dec}
&0 \leq T_{3}\leq I(V_3;Y_3|\TimeSharingRV WU_{31}U_{23}),~~~h_{b}(\delta_{3}*\tau_{3})-h_{b}(\delta_{3})=T_{3}+K_{31}= I(U_{31}V_3;Y_3|\TimeSharingRV WU_{23}).
\end{eqnarray}
Following arguments similar to section \ref{Sec:StrictSubOptimalityOfMartonCodingTechnique}, we obtain
\begin{eqnarray}
\label{Eqn:ORBCUpBndOnRateOfUser1ThatContainsRateLoss}
R_{1} \leq  I(V_1;Y_1U_{23}|\TimeSharingRV W\SemiPrivateRV_{12}\SemiPrivateRV_{31})
 -I(V_1;U_{23}V_2V_3|\TimeSharingRV WU_{12}U_{31})=I(V_1;Y_1|\TimeSharingRV W\underlineSemiPrivateRV)
 -I(V_1;V_2V_3|\TimeSharingRV W\underlineSemiPrivateRV),\\
\label{Eqn:ORBCBndOnUser1RateAsAConsqDecoding3Codebooks}
R_1 \leq\!I(V_1;Y_1|\TimeSharingRV W\underlineSemiPrivateRV) \!-\!I(V_1;V_2V_3|\TimeSharingRV W\underlineSemiPrivateRV)\! +\!I(U_{12}U_{31};Y_1|\TimeSharingRV W U_{23}) \!-\!I(U_{12};Y_2|\TimeSharingRV WU_{23})\!-\!I(U_{31};Y_3|\TimeSharingRV WU_{23}),\\
R_{1}+R_{2}+R_{3} \leq I(V_{2};Y_{2}|QWU_{12}U_{23})+I(V_{3};Y_{3}|QWU_{31}U_{23}) +I(U_{12};U_{31}|\TimeSharingRV W)-I(U_{23};U_{12}|\TimeSharingRV W)\nonumber\\\label{Eqn:BoundOnR1PlusR2PlusR3}+I(V_1U_{12}U_{31};Y_1|\TimeSharingRV W)-I(U_{23};U_{31}|\TimeSharingRV W)
\end{eqnarray}
The bound (\ref{Eqn:BoundOnR1PlusR2PlusR3}) is obtained by (i) adding bounds (\ref{Eqn:ORBCBndsOnRatesOfUser123CodebooksAsAResultOfUser1Dec}) and the bounds on $T_{2}$ and $T_{3}$ present in (\ref{Eqn:ORBCBndsOnRatesOfUser2CodebookAsAResultOfUser2Dec}) and (\ref{Eqn:ORBCBndsOnRatesOfUser3CodebookAsAResultOfUser3Dec}) respectively, and (ii) identifying $T_{2}+L_{12} = R_{2}, T_{3}+K_{31}=R_{3}, T_{1} = R_{1}$ and (iii) employing the lower bound on $S_{1}$ found in (\ref{Eqn:ORBCBndsOnUser1CdbkAsAResultOfUser1Dec}). Combining (\ref{Eqn:ORBCUpBndOnRateOfUser1ThatContainsRateLoss}) and (\ref{Eqn:ORBCBndOnUser1RateAsAConsqDecoding3Codebooks}), we have
\begin{equation}
 \label{Eqn:3To1DBCAnalyticalConvTwoUpperBoundsCombined}
R_{1} \leq I(V_1;Y_1|\TimeSharingRV W\underlineSemiPrivateRV)  -I(V_1;V_2V_3|\TimeSharingRV W\underlineSemiPrivateRV) + \min \left\{ \begin{array}{c} 0,I(U_{12}U_{31};Y_1|\TimeSharingRV W U_{23}) -I(U_{12};Y_2|\TimeSharingRV WU_{23})\\-I(U_{31};Y_3|\TimeSharingRV WU_{23})                                                                                                                                     \end{array}
  \right\}.
\end{equation}
From (\ref{Eqn:ORBCUpBndOnRateOfUser1ThatContainsRateLoss}) and the Markov chain $V_{1}-QW\underline{U}V_{2}V_{3}-X_{2}X_{3}$ proved in lemma \ref{Lem:CharTstChnlThatContainsTheRtTrpl}, it can be verified that
\begin{eqnarray}
\label{Eqn:ExtractingBoundOnR1}
R_1 &\leq& I(V_1;Y_1|\TimeSharingRV W\underlineSemiPrivateRV) -I(V_1;V_2V_3|\TimeSharingRV W\underlineSemiPrivateRV) =I(V_1;Y_1|\TimeSharingRV W\underlineSemiPrivateRV) -I(V_1;V_2V_3X_2X_3|\TimeSharingRV W\underlineSemiPrivateRV) \\
\label{Eqn:RateLossBound}
&\leq& I(V_1;Y_1|\TimeSharingRV W\underlineSemiPrivateRV) -
I(V_1;X_2,X_3|\TimeSharingRV W\underlineSemiPrivateRV) \leq I(V_1;Y_1
|\TimeSharingRV W\underlineSemiPrivateRV) -I(V_1;X_2\vee
X_3|\TimeSharingRV W\underlineSemiPrivateRV)\\ 
&\leq& I(V_1;Y_1,X_{2}\vee X_{3} |\TimeSharingRV W\underlineSemiPrivateRV) -I(V_1;X_2\vee X_3|\TimeSharingRV W\underlineSemiPrivateRV) = I(V_1;Y_1 |\TimeSharingRV W\underlineSemiPrivateRV,X_{2}\vee X_{3}) \nonumber\\
\label{Eqn:RateLossBoundLastIneq}
&\leq&H(X_{1}\oplus N_{1}|Q,W,\underline{U},X_{2}\vee X_{3})-h_{b}(\delta_{1}) \leq h_{b}(\tau_{1}*\delta_{1})-h_{b}(\delta_{1})
\end{eqnarray}
with equality above if and only if $p_{X_{1}|Q,W,\underline{U},X_{2}\vee X_{3}}(1|q,w,\underline{u},x)=\tau_{1}$ and $p_{X_{2}\vee X_{3}|Q,W, \underline{U}}(x|q,w,\underline{u}) \in \{ 0,1 \}$ for all $(q,w,\underline{u},x)$ with positive probability. Note that this follows from lemma \ref{lem:GP}. Using (\ref{Eqn:BoundOnR1PlusR2PlusR3}), we now show that $H(V_{2}|QWU_{23}U_{12}) > 0$ or $H(V_{3}|QWU_{23}U_{31}) > 0$. We prove this by contradiction. Suppose $H(V_{2}|QWU_{23}U_{12}) =H(V_{3}|QWU_{23}U_{31}) = 0$, then one can substitute this in the right hand side of (\ref{Eqn:BoundOnR1PlusR2PlusR3}) to obtain the same to be $h_{b}(\tau_{1}*\beta)-h_{b}(\delta_{1})$. The left hand side of (\ref{Eqn:BoundOnR1PlusR2PlusR3}) being $R_{1}+R_{2}+R_{3}$, this condition violates the hypothesis (\ref{Eqn:3To1-OR-ICConditionForStrictSubOptimalityOfMarton}) if $R_{j}=h_{b}(\delta_{j}*\tau_{j})-h_{b}(\delta_{j})$. We therefore have $H(V_{2}|QWU_{23}U_{12}) > 0$ or $H(V_{3}|QWU_{23}U_{31}) > 0$.

Using the Markov chains $U_{31}-QWU_{23}U_{12}-V_{2}$, $U_{12}-QWU_{23}U_{31}-V_{3}$, $QWU_{23}-U_{12}V_{2}-Y_{2}$, $QWU_{23}-U_{31}V_{3}-Y_{3}$, $QW\underline{U}V_{2}-X_{2}-Y_{2}$ and $QW\underline{U}V_{3}-X_{3}-Y_{3}$ proved in lemma \ref{Lem:CharTstChnlThatContainsTheRtTrpl} and standard information theoretic arguments\footnote{These arguments are illustrated in \cite[Proof of fourth claim, Appendix B]{201403arXiv_PadPra} for an analogous setting therein.}, it can be verified that $H(X_{2} \vee X_{3}|Q,W,\underline{U}) > 0$. Referring back to the condition for equality in the inequalities (\ref{Eqn:RateLossBound}) -(\ref{Eqn:RateLossBoundLastIneq}), we conclude $R_{1} < h_{b}(\tau_{1}*\delta_{1})-h_{b}(\delta_{1})$.

We now appeal to the bound (\ref{Eqn:RateLossBound}) containing the rate loss. Clearly lemma \ref{lem:GP} proves that the above condition implies $R_{1} < h_{b}(\tau_{1}*\delta_{1})-h_{b}(\delta_{1})$. This concludes the proof.
\ifProofForORDBC{
\section{Proof of lemma \ref{Lem:ConditionsForSub-OptimalityOfMartonTechnique}}
We begin by characterizing $p_{QW\underline{UV}X\underline{Y}} \in \mathbb{D}_{\mathscr{U}}(\underline{\tau})$ for which $\underline{R}^{*} \define (R_{1}^{*},R_{2}^{*},R_{3}^{*}) = (R_{1}^{*},h_{b}(\tau * \delta)-h_{b}(\delta),h_{b}(\tau * \delta)-h_{b}(\delta)) \in \alpha_{\mathscr{U}}(p_{QW\underline{UV}X\underline{Y}})$.

\begin{lemma}
 \label{Lem:Or3DBCUnstrctTstChnlChar}
If there exists a test channel $p_{\TimeSharingRV\PublicRV\underlineSemiPrivateRV\underlinePrivateRV X\underlineY} \in \SetOfDistributions_{\mathscr{U}}(\tau)$ and nonnegative numbers $K_{i},S_{ij},K_{ij},L_{ij},S_{i},T_{i}$ that satisfy (\ref{Eqn:3BCSourceCodingBoundNonnegativity})-(\ref{Eqn:3BCChannelCodingSextupleBound}) for each triple $(i,j,k) \in \left\{ (1,2,3),(2,3,1),(3,1,2) \right\}$ such that $R_{2}=K_{2}+K_{23}+L_{12}+T_{2}=1-h_{b}(\delta_{2}),R_{3}=K_{3}+K_{31}+L_{23}+T_{3}=h_{b}(\tau * \delta)-h_{b}(\delta)$, then
\begin{enumerate}
 \item \label{Item:RatesOfPublicAndSemiPrivateCodebooks}$K_1=K_2=K_3=K_{23}=L_{23}=K_{12}=L_{31}=S_2=S_3=0$ and $I(\SemiPrivateRV_{31} V_{1}V_{3};Y_{2}|\TimeSharingRV WU_{23}U_{12}V_{2})=0$,
\item \label{Item:BinningRatesOfSemiPrivateCodebooks}$S_{31}=I(U_{31};U_{23}|\TimeSharingRV W),S_{12}=I(U_{12};U_{23}|\TimeSharingRV W)$, $S_{23}=I(U_{12};U_{31}|\TimeSharingRV WU_{23})=0$,
\item \label{Item:DistX2AndX3}$p_{X_{j}|QWU_{23}}(1|q,w,u_{23})=\tau$ for $j=2,3$, and all $(q,w,u_{23}) \in \mathcal{Q} \times \mathcal{W} \times \mathcal{U}_{23}$,
\item \label{Item:WU23IsConditionallyIndependentOfY2Y3GivenQ}$I(V_{2}U_{12};V_{3}U_{31}|\TimeSharingRV WU_{23})=0$, $I(WU_{23};Y_{j}|\TimeSharingRV )=0:j=2,3,I(V_{2}U_{12};Y_{2}|\TimeSharingRV W U_{23})=h_{b}(\tau * \delta)-h_{b}(\delta)$ and $I(V_{3}U_{31};Y_{3}|\TimeSharingRV W U_{23})=h_{b}(\tau * \delta)-h_{b}(\delta)$, 
\item \label{ItemNumber:MarkovChains} $(V_3,X_3,V_1,U_{31}) - (\TimeSharingRV WU_{23}U_{12}V_2) - (X_2,Y_2) $, $(V_2,X_2,V_1,U_{12}) - (\TimeSharingRV WU_{23}U_{31}V_3) - (X_3,Y_3)$ and $V_{1}-QW\underline{U}V_{2}V_{3}-X_{2}X_{3}$ are Markov chains,
\item \label{Item:X2AndX3AreConditionallyIndependentGivenQWU12U23U31}$X_{2} - \TimeSharingRV \PublicRV \SemiPrivateRV_{12} \SemiPrivateRV_{23} \SemiPrivateRV_{31} - X_{3}$ is a Markov chain,
\item \label{Item:U31AndX2AreConditionallyIndependentGivenQWU12U23}$\SemiPrivateRV_{12}-\TimeSharingRV \PublicRV \SemiPrivateRV_{23} \SemiPrivateRV_{31}-X_{3}$ and $\SemiPrivateRV_{31}-\TimeSharingRV \PublicRV \SemiPrivateRV_{23} \SemiPrivateRV_{12}-X_{2}$ are Markov chains.
\end{enumerate}
\end{lemma}
\begin{proof}
 Observe that
\begin{eqnarray}
\label{Eqn:OrDBCR2} 
\lefteqn{h_{b}(\tau * \delta)-h_{b}(\delta) = R_{2} = K_{2}+K_{23}+L_{12}+T_{2}} \nonumber\\
\label{Eqn:OrDBCEmployingMartonSumBound}
&\leq& K_{2}+K_{3}+K_{1}+K_{23}+L_{23}+K_{12}+L_{12}+T_{2}+S_{2}\\
\label{Eqn:OrDBCRHSOfMartonSumBound}
&\leq& I(WU_{23}U_{12}V_{2};Y_{2}|Q) + I(U_{23};U_{12}|QW)-S_{23}-S_{12} \leq I(WU_{23}U_{12}V_{2};Y_{2}|Q)\\
\label{Eqn:OrDBCUpperBoundPriorToMarkovChains}
&\leq& I(W\underline{UVX}Y_{1}Y_{3};Y_{2}|Q) = H(X_{2}\oplus N_{2}|Q) - H(X_{2}\oplus N_{2} | QW\underline{UVX}Y_{1}Y_{3})\\
\label{Eqn:OrDBCUsingJensenIneq}
&= & H(X_{2}\oplus N_{2}|Q) - h_{b}(\delta) \leq h_{b}(\tau * \delta) - h_{b}(\delta),
\end{eqnarray}
where (\ref{Eqn:OrDBCEmployingMartonSumBound}) follows from non-negativity of the terms involved, (\ref{Eqn:OrDBCRHSOfMartonSumBound}) follows from substituting $(2,3,1)$ for $(i,j,k)$ in (\ref{Eqn:3BCChannelCodingSextupleBound}) and $(1,2,3)$ for $(i,j,k)$ in (\ref{Eqn:3BCSourceCodingPairwiseBound}), (\ref{Eqn:OrDBCUsingJensenIneq}) follows from employing Jensen's inequality and the cost constriant on $X_{2}$. It is clear that equality holds through all the above inequalities. Observe that equality holds in (i) (\ref{Eqn:OrDBCEmployingMartonSumBound}) if and only if $K_{3}=K_{1}=L_{23}=K_{12}=S_{2}=0$, (ii) (\ref{Eqn:OrDBCRHSOfMartonSumBound}) if and only if $S_{23}+S_{12} = I(U_{23};U_{12}|Q,W)$, (iii) $I(U_{31}V_{1}V_{3}\underline{X}Y_{1}Y_{3};Y_{2}|QWU_{23}U_{12}V_{2}) = 0$, and (iv) (\ref{Eqn:OrDBCUpperBoundPriorToMarkovChains}) if and only if $p_{X_{2}|Q}(1|q) = \tau$ for all $q \in \mathcal{Q}$. An analogous analysis for $R_{3}$ yields $K_{2}=K_{1}=K_{23}=L_{31}=S_{3}=0$, $S_{23}+S_{31} = I(U_{23};U_{31}|Q,W)$ and $p_{X_{3}|Q}(1|q) = \tau$ for all $q \in \mathcal{Q}$.

From (\ref{Eqn:3BCSourceCodingTripleBound}), we have
\begin{eqnarray}
 \label{Eqn:OrBCSourceCodingTripleBound}
I(U_{12};U_{23};U_{31}|QW) &\leq& S_{12}+S_{23}+S_{31} \leq S_{12}+S_{23}+S_{23}+S_{31} = I(U_{23};U_{12}|Q,W) + I(U_{23};U_{31}|Q,W) \nonumber\\ &\leq&  I(U_{12};U_{23};U_{31}|QW). \nonumber
\end{eqnarray}
Equality holds in the above set of inequalities to yield $S_{23} = 0$, $S_{12} = I(U_{23};U_{12}|Q,W)$ and $S_{31} = I(U_{23};U_{31}|Q,W)$ and $I(U_{31};U_{12}|QWU_{23}) = 0$.

Substituting values of $S_{2}, S_{3}, S_{12},S_{23},S_{31}$ in (\ref{Eqn:3BCSourceCodingQuadrapleBound}) and (\ref{Eqn:3BCSourceCodingPentaBound}) with $(i,j,k)=(2,3,1)$, we have $I(V_{2};U_{31}|Q,W,U_{23},U_{12}) I(V_{3};U_{12}|Q,W,U_{23},U_{31})= 0$ and $I(V_{2};V_{3}|Q,W,U_{12},U_{23},U_{31}) = 0$. Combining these with $I(U_{12};U_{31}|QWU_{23}) = 0$, we have $I(V_{2}U_{12};V_{3}U_{31}|\TimeSharingRV WU_{23})=0$.

Substituting $(2,3,1)$ for $(i,j,k)$ in (\ref{Eqn:3BCChannelCodingDoubleSecondBound}) and employing the above results, we have
\begin{eqnarray}
 \label{Eqn:OrDBCUser2InfoInU12AndV2}
h_{b}(\tau * \delta) - h_{b}(\delta) &=& R_{2} = L_{12}+ T_{2} \leq I(U_{12}V_{2};Y_{2}|QWU_{23}) \leq I(U_{31}U_{12}\underline{VX}Y_{1}Y_{3};Y_{2}|QWU_{23}) \nonumber\\ &\leq& H(X_{2}\oplus N_{2}|QWU_{23}) - h_{b}(\delta) \leq  h_{b}(\tau * \delta) - h_{b}(\delta),\nonumber
\end{eqnarray}
where the last inequality follows from Jensen's inequality and constraint on $X_{2}$. Since equality holds throughout, the condition for equality in Jensen's inequality implies $p_{X_{2}|QWU_{23}}(1|q,w,u_{23})=\tau$ for all $(q,w,u_{23}) \in \mathcal{Q} \times \mathcal{W} \times \mathcal{U}_{23}$. We therefore have $I(U_{12}V_{2};Y_{2}|QWU_{23}) = h_{b}(\tau * \delta) - h_{b}(\delta)$. This coupled with equalities in (\ref{Eqn:OrDBCEmployingMartonSumBound}) - (\ref{Eqn:OrDBCUsingJensenIneq}), we have $I(WU_{23};Y_{2}|Q) =0$. Analogously, one can prove $I(U_{31}V_{3};Y_{3}|QWU_{23}) = h_{b}(\tau * \delta) - h_{b}(\delta)$ and $I(WU_{23};Y_{3}|Q) =0$.

We have earlier concluded $I(U_{31}V_{1}V_{3}\underline{X}Y_{1}Y_{3};Y_{2}|QWU_{23}U_{12}V_{2}) = I(U_{12}V_{1}V_{2}\underline{X}Y_{1}Y_{2};Y_{3}|QWU_{23}U_{31}V_{3}) = 0$. This yields the Markov chains $U_{31}V_{1}V_{3}X_{3} - QWU_{23}U_{12}V_{2} - Y_{2}$ and $U_{12}V_{1}V_{2}X_{2} - QWU_{31}V_{1}V_{3} - Y_{3}$. Appealing to lemma \ref{Lem:ForBSCMarokovChainWithOutputImpliesMarkovChainWithInput}, we conclude the Markov chains $(V_3,X_3,V_1,U_{31}) - (\TimeSharingRV WU_{23}U_{12}V_2) - (X_2,Y_2) $ and $(V_2,X_2,V_1,U_{12}) - (\TimeSharingRV WU_{23}U_{31}V_3) - (X_3,Y_3)$. These Markov chains imply $V_{1}-QW\underline{U}V_{2}V_{3}X_{3}-X_{2}$ and $V_{1}-QW\underline{U}V_{2}V_{3}-X_{3}$ are Markov chains. As a consequence, we have $V_{1}-QW\underline{U}V_{2}V_{3}-X_{2}X_{3}$ is a Markov chain.

We now establish the Markov chain $V_{3}X_{3}-QW\underline{U} - V_{2}X_{2}$. Recall $U_{12}V_{2}- QWU_{23} - U_{31}V_{3}$ is a Markov chain. We therefore have Markov chains $V_{2}- QW\underline{U} - V_{3}$, $V_{3}-QW\underline{U}V_{2}-X_{2}$. This implies $V_{3}-QW\underline{U}-V_{2}X_{2}$ is a Markov chain. Using the Markov chain $X_{3} - QW\underline{U}V_{3}-V_{2}X_{2}$, we conclude $V_{3}X_{3}-QW\underline{U} - V_{2}X_{2}$ and $X_{3}-QW\underline{U} - X_{2}$ are Markov chain.
\end{proof}

Lemma \ref{Lem:Or3DBCUnstrctTstChnlChar} enables us to simplify the bounds (\ref{Eqn:3BCSourceCodingBoundNonnegativity})-(\ref{Eqn:3BCChannelCodingSextupleBound}) for the particular test channel under consideration. Employing statements of lemma \ref{Lem:Or3DBCUnstrctTstChnlChar}, we conclude that if $(R_{1},h_{b}(\tau * \delta)-h_{b}(\delta),h_{b}(\tau * \delta)-h_{b}(\delta)) \in \alpha_{\mathscr{U}}(p_{\TimeSharingRV\PublicRV\underlineSemiPrivateRV\underlinePrivateRV X\underlineY})$, then there exists nonnegative numbers $S_{1},T_{1},L_{12},K_{31}$ that satisfy $R_{1}=T_{1},R_{2}=L_{12}+T_{2}=h_{b}(\tau * \delta)-h_{b}(\delta),R_{3}=K_{31}+T_{3}=h_{b}(\tau * \delta)-h_{b}(\delta)$,
\begin{eqnarray}
 \label{Eqn:Or3DBCBoundsOnUser1CodebookAsAResultOfUser1Decoding}
&S_1  \geq I(V_1;U_{23}V_2V_3|\TimeSharingRV WU_{12}U_{31}),~~~T_1+S_1\leq I(V_1;Y_1|\TimeSharingRV WU_{12}U_{31})\\
\label{Eqn:Or3DBCBoundsOnRatesOfUser123CodebooksAsAResultOfUser1Decoding}
&L_{12}+K_{31}+T_1+S_1 \leq I(U_{12};U_{31}|\TimeSharingRV W)-I(U_{23};U_{12}|\TimeSharingRV W)+I(V_1U_{12}U_{31};Y_1|\TimeSharingRV W)-I(U_{23};U_{31}|\TimeSharingRV W)\\
\label{Eqn:Or3DBCBoundsOnRatesOfUser2CodebookAsAResultOfUser2Decoding}
&0 \leq T_{2} \leq I(V_2;Y_2|\TimeSharingRV WU_{12}U_{23}),~~~h_{b}(\tau * \delta)-h_{b}(\delta)=T_{2}+L_{12} = I(U_{12}V_2;Y_2|\TimeSharingRV WU_{23})\\
\label{Eqn:Or3DBCBoundsOnRatesOfUser3CodebookAsAResultOfUser3Decoding}
&0 \leq T_{3}\leq I(V_3;Y_3|\TimeSharingRV WU_{31}U_{23}),~~~h_{b}(\tau * \delta)-h_{b}(\delta)=T_{3}+K_{31}= I(U_{31}V_3;Y_3|\TimeSharingRV WU_{23}).
\end{eqnarray}
(\ref{Eqn:Or3DBCBoundsOnRatesOfUser2CodebookAsAResultOfUser2Decoding}), (\ref{Eqn:Or3DBCBoundsOnRatesOfUser3CodebookAsAResultOfUser3Decoding}) imply
\begin{equation}
 \label{Eqn:Or3DBCLowerBoundsOnRateOfCodebooksSharedWithUser1}
L_{12}\geq I(U_{12};Y_{2}|\TimeSharingRV \PublicRV U_{23}),~~~~~~~~~~~ K_{31} \geq I(U_{31};Y_3|\TimeSharingRV WU_{23}),
\end{equation}
(\ref{Eqn:Or3DBCBoundsOnUser1CodebookAsAResultOfUser1Decoding}) implies
\begin{eqnarray}
T_{1}\!\!\!\!&=&\!\!\!\!R_1 \leq I(V_1;Y_1|\TimeSharingRV W\SemiPrivateRV_{12}\SemiPrivateRV_{31})
 -I(V_1;U_{23}V_2V_3|\TimeSharingRV WU_{12}U_{31}), \nonumber\\
\label{Eqn:Or3DBCUpperBoundOnRateOfUser1ThatContainsRateLoss}
&\leq&\!\!\!\! I(V_1;Y_1U_{23}|\TimeSharingRV W\SemiPrivateRV_{12}\SemiPrivateRV_{31})
 -I(V_1;U_{23}V_2V_3|\TimeSharingRV WU_{12}U_{31})=I(V_1;Y_1|\TimeSharingRV W\underlineSemiPrivateRV)
 -I(V_1;V_2V_3|\TimeSharingRV W\underlineSemiPrivateRV),
\end{eqnarray}
and (\ref{Eqn:Or3DBCBoundsOnRatesOfUser123CodebooksAsAResultOfUser1Decoding}) in conjunction with (\ref{Eqn:Or3DBCLowerBoundsOnRateOfCodebooksSharedWithUser1}), and the lower bound on $S_{1}$ in (\ref{Eqn:Or3DBCBoundsOnUser1CodebookAsAResultOfUser1Decoding}) imply
\begin{eqnarray}
R_1 \!\!\!\!&\leq&\!\!\!\! I(U_{12}U_{31}V_1;Y_1|\TimeSharingRV W)
-I(V_1;U_{23}V_2V_3|\TimeSharingRV WU_{12}U_{31})-I(U_{12};Y_2|\TimeSharingRV WU_{23})-I(U_{31};Y_3|\TimeSharingRV WU_{23}) \nonumber\\
&&\!\!\!\!+I(U_{12};U_{31}|\TimeSharingRV W)-I(U_{23};U_{12}|\TimeSharingRV W)-I(U_{23};U_{31}|\TimeSharingRV W)\nonumber\\
\!\!\!\!&\leq&\!\!\!\! I(U_{12}U_{31}V_{1};Y_1 U_{23}|\TimeSharingRV W)
-I(V_1;U_{23}V_2V_3|\TimeSharingRV WU_{12}U_{31})-I(U_{12};Y_2|\TimeSharingRV WU_{23})-I(U_{31};Y_3|\TimeSharingRV WU_{23})\nonumber\\
&&\!\!\!\!+I(U_{12};U_{31}|\TimeSharingRV W)-I(U_{23};U_{12}|\TimeSharingRV W)-I(U_{23};U_{31}|\TimeSharingRV W)\nonumber\\\label{Eqn:Or3DBCBoundOnUser1RateAsAConsequenceOfDecoderDecoding3Codebooks}
\!\!\!\!&=&\!\!\!\!I(V_1;Y_1|\TimeSharingRV W\underlineSemiPrivateRV) \!-\!I(V_1;V_2V_3|\TimeSharingRV W\underlineSemiPrivateRV)\! +\!I(U_{12}U_{31};Y_1|\TimeSharingRV W U_{23}) \!-\!I(U_{12};Y_2|\TimeSharingRV WU_{23})\!-\!I(U_{31};Y_3|\TimeSharingRV WU_{23}),
\end{eqnarray}
where (\ref{Eqn:Or3DBCBoundOnUser1RateAsAConsequenceOfDecoderDecoding3Codebooks}) follows from $I(U_{31};U_{12}|QWU_{23})=0$. Combining (\ref{Eqn:Or3DBCUpperBoundOnRateOfUser1ThatContainsRateLoss}) and (\ref{Eqn:Or3DBCBoundOnUser1RateAsAConsequenceOfDecoderDecoding3Codebooks}), we have
\begin{equation}
 \label{Eqn:Or3DBCAnalyticalConvTwoUpperBoundsCombined}
R_{1} \leq I(V_1;Y_1|\TimeSharingRV W\underlineSemiPrivateRV)  -I(V_1;V_2V_3|\TimeSharingRV W\underlineSemiPrivateRV) + \min \left\{ \begin{array}{c} 0,I(U_{12}U_{31};Y_1|\TimeSharingRV W U_{23}) -I(U_{12};Y_2|\TimeSharingRV WU_{23})\\-I(U_{31};Y_3|\TimeSharingRV WU_{23})                                                                                                                                     \end{array}
  \right\}.
\end{equation}
From (\ref{Eqn:Or3DBCUpperBoundOnRateOfUser1ThatContainsRateLoss}) and the Markov chain $V_{1}-QW\underline{U}V_{2}V_{3}-X_{2}X_{3}$ proved in lemma \ref{Lem:Or3DBCUnstrctTstChnlChar}, it can be verified that
\begin{eqnarray}
\label{Eqn:Or3DBCExtractingBoundOnR1}
R_1 &\leq& I(V_1;Y_1|\TimeSharingRV W\underlineSemiPrivateRV) -I(V_1;V_2V_3|\TimeSharingRV W\underlineSemiPrivateRV) =I(V_1;Y_1|\TimeSharingRV W\underlineSemiPrivateRV) -I(V_1;V_2V_3X_2X_3|\TimeSharingRV W\underlineSemiPrivateRV) \\
\label{Eqn:Or3DBCRateLossBound}
&\leq& I(V_1;Y_1|\TimeSharingRV W\underlineSemiPrivateRV) - I(V_1;X_2,X_3|\TimeSharingRV W\underlineSemiPrivateRV) \leq I(V_1;Y_1 |\TimeSharingRV W\underlineSemiPrivateRV) -I(V_1;X_2\vee X_3|\TimeSharingRV W\underlineSemiPrivateRV)\nonumber\\
&=& \sum_{q,w,\underline{u}}p_{QW\underline{U}}(q,w,\underline{u})\left[ I(V_{1};Y_{1}|(Q,W,\underline{U}) = (q,w,\underline{u})) - I(V_{1};X_{2}\vee X_{3}|(Q,W,\underline{U}) = (q,w,\underline{u}))\right] \nonumber\\
&=& \sum_{q,w,\underline{u}}p_{QW\underline{U}}(q,w,\underline{u})\left[ I(V_{1};Y_{1},X_{2}\vee X_{3}|(Q,W,\underline{U}) = (q,w,\underline{u})) - I(V_{1};X_{2}\vee X_{3}|(Q,W,\underline{U}) = (q,w,\underline{u}))\right] \nonumber\\
&=& \sum_{q,w,\underline{u}}p_{QW\underline{U}}(q,w,\underline{u})\left[ I(V_{1};Y_{1}|(Q,W,\underline{U}) = (q,w,\underline{u}),X_{2}\vee X_{3})\right] \nonumber\\
&=& \sum_{q,w,\underline{u}}p_{QW\underline{U}}(q,w,\underline{u})\left[ H(Y_{1}|(Q,W,\underline{U}) = (q,w,\underline{u}),X_{2}\vee X_{3})-H(Y_{1}|(Q,W,\underline{U}) = (q,w,\underline{u}),X_{2}\vee X_{3},V_{1})\right] \nonumber\\
&\leq& \sum_{q,w,\underline{u},x}p_{QW\underline{U}}(q,w,\underline{u})\left[ H(Y_{1}|(Q,W,\underline{U}) = (q,w,\underline{u}),X_{2}\vee X_{3})-h_{b}(\delta_{1})\right] \nonumber\\
&=& \sum_{q,w,\underline{u}}p_{QW\underline{U}X_{2}\vee X_{3}}(q,w,\underline{u},x)\left[ H(X_{1}\oplus N_{1}|(Q,W,\underline{U}) = (q,w,\underline{u}),X_{2}\vee X_{3}=x)-h_{b}(\delta_{1})\right] \nonumber\\
\label{Eqn:Or3DBCUppBndGelPinRate}
&= & \sum_{q,w,\underline{u},x}p_{QW\underline{U}X_{2}\vee X_{3}}(q,w,\underline{u},x) \left[ h_{b}(\tau_{q,w,\underline{u},x}*\delta_{1})-h_{b}(\delta_{1}) \right] \leq h_{b}(\tau_{1}*\delta_{1})-h_{b}(\delta_{1}).
\end{eqnarray}
The above chain of inequalities, with the aid of Jensen's inequality enables us conclude the following. $R_{1} = h_{b}(\tau_{1}*\delta_{1})-h_{b}(\delta_{1})$ if and only if $p_{X_{1}|QW\underline{U}X_{2}\vee X_{3}}(1|q,w,\underline{u},x) = \tau_{q,w,\underline{u},x} = \tau_{1}$ and
\begin{eqnarray}
\label{Eqn:Or3DBCNoRateLoss}
I(V_{1};Y_{1}|(Q,W,\underline{U}) = (q,w,\underline{u})) - I(V_{1};X_{2}\vee X_{3}|(Q,W,\underline{U}) = (q,w,\underline{u})) = h_{b}(\tau_{q,w,\underline{u}}*\delta_{1})-h_{b}(\delta_{1})
\end{eqnarray}
for all $(q,w,\underline{u})$ of positive probability. We now appeal to lemma \ref{lem:GP}. Since $\tau_{1},\delta_{1} \in (0,\frac{1}{2})$, (\ref{Eqn:Or3DBCNoRateLoss}) implies $H(X_{2} \vee X_{3}|Q,W,\underline{U}) = 0$. We therefore conclude $R_{1} = h_{b}(\tau_{1}*\delta_{1})-h_{b}(\delta_{1})$ if and only if $p_{X_{1}|QW\underline{U}X_{2}\vee X_{3}}(1|q,w,\underline{u},x) = \tau_{q,w,\underline{u},x} = \tau_{1}$ and $H(X_{2} \vee X_{3}|Q,W,\underline{U}) = 0$.

We now prove $H(X_{2} \vee X_{3}|Q,W,\underline{U}) = 0$ if and only if $H(X_{2}|Q,W,\underline{U}) = H(X_{3}|Q,W,\underline{U}) =0$. We begin with the `if statement'. Since $X_{2}-Q,W,\underline{U} - X_{3}$ is a Markov chain, $0 \leq H(X_{2} \vee X_{3}|Q,W,\underline{U}) \leq H(X_{2},X_{3}|Q,W,\underline{U}) = H(X_{2}|Q,W,\underline{U})+H(X_{3}|Q,W,\underline{U},X_{2}) = H(X_{2}|Q,W,\underline{U})+H(X_{3}|Q,W,\underline{U}) = 0$.

We now prove the `only if' statement through a contradiction. Suppose $H(X_{2}|QW\underline{U}) > 0$. The Markov chains $U_{31}-QWU_{23}U_{12}-X_{2}$ implies $H(X_{2}|QWU_{23}U_{12}) > 0$. There exists $(q^{*},w^{*},u_{23}^{*})$ of positive probability and $u_{12}^{*} \in \mathcal{U}_{12}$ such that $p_{U_{12}|QWU_{23}}(u_{12}^{*}|q^{*},w^{*},u_{23}^{*}) > 0$ and $H(X_{2}|(Q,W,U_{23},U_{12}) = (q^{*},w^{*},u_{23}^{*},u_{12}^{*}))>0$. This implies $p_{X_{2}|QWU_{23}U_{12}}(0|q^{*},w^{*},u_{23}^{*},u_{12}^{*}) \in (0,1)$. Since $0 < 1-\tau_{1} = \sum_{u_{31}}p_{X_{3}U_{31}|QWU_{23}}(0,u_{31}|q^{*},w^{*},u_{23}^{*})$, there exists $u_{31}^{*}$ for which $p_{X_{3}U_{31}|QWU_{23}}(0,u_{31}^{*}|q^{*},w^{*},u_{23}^{*})>0$. We therefore have $p_{X_{3}|QWU_{23}U_{31}}(0|q^{*},w^{*},u_{23}^{*},u_{31}^{*})>0$. From the Markov chains $U_{31}-QWU_{23}U_{12}-X_{2}$, $U_{12}-QWU_{23}U_{31}-X_{3}$ and $X_{2}-QW\underline{U}-X_{3}$, we have $p_{X_{2}|QW\underline{U}}(0|q^{*},w^{*}\underline{u}^{*}) \in (0,1)$, $p_{X_{3}|QW\underline{U}}(0|q^{*},w^{*}\underline{u}^{*}) > 0$ and therefore $p_{X_{2}\vee X_{3}|QW\underline{U}}(0|q^{*},w^{*},\underline{u}^{*}) \in (0,1)$. We have thus proved $H(X_{2}\vee X_{3}|QW\underline{U}) > 0$. One can similarly assume $H(X_{3}|QW\underline{U}) > 0$ and prove $H(X_{2}\vee X_{3}|QW\underline{U}) > 0$.

If $R_{1} = h_{b}(\tau_{1} * \delta_{1}) - h_{b}(\delta_{1})$, then $H(X_{2}|QW\underline{U}) = H(X_{3}|QW\underline{U}) = 0$. We therefore have $I(U_{12};Y_{2}|QWU_{23}) = I(U_{12}X_{2};Y_{2}|QWU_{23})$ and $I(U_{31};Y_{3}|QWU_{23}) = I(U_{31}X_{3};Y_{3}|QWU_{23})$. Since $p_{X_{j}|QWU_{23}}(1|q,w,u_{23})=\tau$ for $j=2,3$, and all $(q,w,u_{23}) \in \mathcal{Q} \times \mathcal{W} \times \mathcal{U}_{23}$, we have $I(U_{12};Y_{2}|QWU_{23}) = I(U_{12}X_{2};Y_{2}|QWU_{23})= h_{b}(\tau*\delta)-h_{b}(\delta)$ and $I(U_{31};Y_{3}|QWU_{23}) = I(U_{31}X_{3};Y_{3}|QWU_{23}) = h_{b}(\tau*\delta)-h_{b}(\delta)$.

Moreover, if $R_{1} = h_{b}(\tau_{1} * \delta_{1}) - h_{b}(\delta_{1})$, we have $p_{X_{1}|QW\underline{U}X_{2}\vee X_{3}}(1|q,w,\underline{u},x) = \tau_{q,w,\underline{u},x} = \tau_{1}$ for all $(q,w,u_{23},x) \in \mathcal{Q} \times \mathcal{W} \times \mathcal{U}_{23} \times \{0,1\}$. Using the Markov chain $X_{2}-QWU_{23}-X_{3}$ and the distribution $p_{X_{j}|QWU_{23}}$, it can be verified that $p_{X_{1},X_{2}\vee X_{3}|QWU_{23}}$ is given as in table \ref{Table:Or3DBCp_X1X2ORX3GivenQWU23}. It can be verified that $I(U_{12}U_{31};Y_{1}|QWU_{23}) = I(Y_{1};X_{2}X_{3}U_{12}U_{31}|QWU_{23}) = h_{b}(\tau_{1}(1-\beta)+(1-\tau_{1})\beta)-h_{b}(\tau_{1}*\delta_{1})$.

\begin{table} \begin{center}
\begin{tabular}{|c|c|c|c|c|} \hline
&$a=0,b=0$ & $a=0,b=1$ & $a=1,b=0$ & $a=1,b=1$\\
\hline
$p_{X_{1},X_{2}\vee X_{3}|QWU_{23}}(a,b|q,w,u_{23})$&$(1-\tau_{1})(1-\tau)^{2}$ & $(1-\tau_{1})(2\tau-\tau^{2})$ & $\tau_{1}(1-\tau)^{2}$ & $\tau_{1}(2\tau-\tau^{2})$ \\
\hline
\end{tabular} \end{center}\caption{$p_{AXY|B}(\cdot,\cdot,\cdot|b_{j})$} \label{Table:Or3DBCp_X1X2ORX3GivenQWU23} 
\end{table}

We therefore have
\begin{eqnarray}
 I(U_{12}U_{31};Y_{1}|QWU_{23})-I(U_{12};Y_{2}|QWU_{23})-I(U_{31};Y_{3}|QWU_{23}) &=& h_{b}(\tau_{1}(1-\beta)+(1-\tau_{1})\beta)-h_{b}(\tau_{1}*\delta_{1}) \nonumber\\&&-2[h_{b}(\tau*\delta)-h_{b}(\delta)] < 0\nonumber
\end{eqnarray}
from hypothesis (\ref{Eqn:3To1-OR-ICConditionForStrictSubOptimalityOfMarton}). An appeal to the bounds in (\ref{Eqn:Or3DBCAnalyticalConvTwoUpperBoundsCombined}) enables us conclude $R_{1} < h_{b}(\delta_{1}*\tau_{1})-h_{b}(\delta_{1})$ contradicting the condition $R_{1} = h_{b}(\delta_{1}*\tau_{1})-h_{b}(\delta_{1})$.

}\fi

\bibliographystyle{IEEEtran}
{
\bibliography{wisl}
}

\end{document}

Equation 4 in equation(119) must be an inequality
Following Equation (121), the (q,w,u) must be (q,w,u,x)
Verify the if and only if statement after equation (121)

fourth line in the fourth paragraph from the last. "We now prove ...."

Give numbers

Pg 21. appropriate triple